\documentclass[12pt,a4paper]{article}

\usepackage[margin=2.58cm]{geometry}
\usepackage{setspace}

\usepackage[utf8]{inputenc}
\usepackage{amssymb}
\usepackage{mathtools,amsfonts,amsthm, mathrsfs}
\usepackage{nth}

\usepackage{dcolumn}

\usepackage{csquotes}

\usepackage{xr-hyper}

\usepackage{authblk}

\newcolumntype{d}[1]{D{.}{.}{#1} } 

\usepackage{comment}
\usepackage{bbm}

\usepackage{graphicx,wrapfig, tikz, cancel, hyperref, multirow,rotating}
\usetikzlibrary{fit,shapes.geometric}
\usetikzlibrary{shapes.misc}

\usepackage{caption, subcaption}
\usepackage{natbib}
\usepackage{enumerate}
\usepackage{colortbl}
\definecolor{blue(pigment)}{rgb}{0.2, 0.2, 0.6}

\hypersetup{
	colorlinks=true,
	citecolor=blue,
	linkcolor=blue,
	filecolor=magenta,      
	urlcolor=blue,
	pdftitle={Sharelatex Example},
	pdfpagemode=FullScreen,
}

\usepackage{blkarray}
\usepackage{diagbox}
\usetikzlibrary{positioning}
\usetikzlibrary{calc}
\usetikzlibrary{patterns,decorations.pathreplacing}
\usetikzlibrary{intersections, angles, quotes}
\definecolor{green}{rgb}{0.0, 0.5, 0.0}

\usepackage{pdflscape} 

\newtheoremstyle{break}
{}
{}
{\itshape}
{}
{\bfseries}
{.}
{\newline}
{}

\theoremstyle{break}
\newtheorem{teo}{Theorem}
\newtheorem{defi}{Definition}[section]
\newtheorem{prop}{Proposition}

\theoremstyle{plain}

\theoremstyle{definition}

\newtheorem{defi*}{Definition}
\newtheorem{example}{Example}
\newtheorem{cor}{Corollary}[section]

\newtheorem{lemma}{Lemma}[section]
\newtheorem{oss}{Remark}[section]

\usetikzlibrary{positioning}
\usetikzlibrary{calc}

\newcommand{\dd}{\mathrm{d}}

\newcommand{\EE}{\mathbb{E}}
\newcommand{\RR}{\mathbb{R}}

\newcommand{\eps}{\boldsymbol{\varepsilon}}

\newcommand{\xv}{\boldsymbol {x}}

\newcommand{\pv}{\boldsymbol {p}}
\newcommand{\qv}{\boldsymbol {q}}
\newcommand{\vv}{\boldsymbol {v}}

\newcommand{\ev}{\boldsymbol {e}}

\newcommand{\oomega}{\boldsymbol{\omega}}

\usepackage{comment}
\usepackage{mathtools}
\usepackage{subfiles}
\usepackage{wrapfig}

\begin{document}

\title{\vspace{-0mm}Multilateral Market Power in Input-Output Networks\thanks{I wish to thank Fernando Vega-Redondo for his guidance throughout this project. I wish to thank for valuable comments Christoph Carnehl, Vasco Carvalho, Jér\^ome Dollinger, Matt Elliott, Dino Gerardi, Ben Golub, Basile Grassi, P\"ar Holmberg, Margaret Meyer,  Pavel Molchanov, Ignacio Monz\'on, Marco Ottaviani, Marco Pagnozzi, Marzena Rostek, Ariel Rubinstein, Alex Teytelboym, Flavio Toxvaerd, Cole Williams, Ji He Yoon. Refine.ink was used to check the paper for consistency and clarity. 
\\
This study was funded by the European Union -
NextGenerationEU, in the framework of the GRINS -Growing Resilient,
INclusive and Sustainable project (GRINS PE00000018 - CUP: E63C22002140007). The views and opinions expressed are solely those of the authors and do not necessarily reflect those of the European Union, nor can the European Union be held responsible for them.
	}
}

\author{\vspace{-3mm}Matteo Bizzarri\thanks{Universit\`a  Federico II di Napoli and CSEF. matteo.bizzarri@unina.it}
}

\date{November 2025
}

\maketitle

\begin{abstract}
This paper models firm-to-firm trade in a production network as a set of double auctions. Firms have \enquote{multilateral market power}, namely, can affect prices in both input and output markets. The size and division of surplus are endogenous and depend only on technology, network position, and consumer preferences. The standard simplifying assumption of price-taking on input markets (\enquote{unilateral market power}) has systematic effects: it underestimates the final price and overestimates the surplus going upstream. 
These phenomena affect the model predictions for the welfare impact of mergers.


\end{abstract}

\bigskip

\noindent
\textbf{Keywords:} production networks, oligopoly, double auction, supply function equilibrium

\noindent
\textbf{JEL Classification:} L13, D43, D44, D57


\section*{Introduction}

In a network of supplier-customer relationships, who has the power to decide the prices? How does this power depend on the network position and the other primitives? 
The amount and heterogeneity of market power across firms is at the core of our understanding of many economic phenomena, and consequent policy responses, e.g., the welfare impact of mergers and the diffusion of shocks.
Market power is present in output markets \citep{de2020rise}, and the interaction with the input-output network is a fundamental source of amplification, because distortions accumulate through the supply chain \citep{baqaee2020productivity}.
Moreover, also input-market power is sizable, both in labor \citep{azar2024monopsony} and intermediate inputs \citep{morlacco2019market, dhyne2022imperfect,alviarez2023two}. Input-market power has recently received attention from competition authorities, particularly in relation to labor \citep{kariel2024competition}.
However, with the exceptions noted in the literature, most customarily used models of firm-to-firm trade impose the simplifying assumption that in each market some firms (e.g., the sellers) are price-setters, while others (typically the buyers) are price-takers.


The main contributions of this paper are three. The first is to show how competition in schedules can offer a model of oligopolistic firms connected by an input-output network with the ability to affect prices both in input and output markets, in an endogenously determined way: I call this \emph{multilateral market power}. The model can accommodate general network structures and is relatively tractable thanks to nice technical properties discussed below. The second contribution is to show that competition in schedules, with proper restrictions on the price impact, can offer a unifying framework for many standard models of oligopoly in networks.
The third contribution is to characterize the direction of the bias generated by constraining market power to be unilateral, as often done. This generates two systematic biases. It decreases the market power distortions, and it predicts too much surplus going upstream.
Both these characteristics stem from a basic economic mechanism: the ability to charge high markups (on the output) or markdowns (on the inputs) stems from the perceived elasticity of the residual demand (when selling) and supply (when buying). If firms are price-takers in input markets, the elasticity of the supply is larger, and so distortions are smaller. In principle, buyer power can decrease upstream prices: however, in equilibrium, the distortion is passed to the consumer. So, ignoring buyer power leads to underestimate the final price. Moreover, unilateral market power creates a systematic asymmetry, forcing firms to have zero markdown. Under some conditions, this means that the model with unilateral market power predicts more profit going to upstream firms than it should.



I briefly describe the model. Firms have a set of input goods and produce a single output. Some outputs are the input of other firms, and these trade relationships, or \emph{input-output links}, are exogenous. 
Firms trade using a uniform-price double auction for each good, as in models of the financial market \citep{malamud2017decentralized}.  
More specifically, firms play a simultaneous game in which the available actions are supply and demand schedules, relating quantities of the traded goods to prices. The realized price on every trade relationship is the one where demand and supply cross. The key feature generating market power is that firms, being non-infinitesimal, fully internalize the mechanism and choose their schedules to affect prices in their favor. The classic metaphor for the price-taking general equilibrium behavior is that a \enquote{Walrasian} auctioneer proposes prices and collects supply and demand \enquote{bids}, until all markets clear. The approach followed in this paper takes this metaphor one step further, applying it to non-infinitesimal firms. The auctioneer acts as a market maker in financial markets, collecting firms' conditional schedules. The competition in schedules is not meant as a literal description of the workings of the market (although they are in some cases, e.g., the electricity or financial markets), but as an abstraction of a bargaining procedure, parsimonious but powerful enough for the complexity of the problem. In practice (as \cite{klemperer1989supply} suggest), the choice of a schedule can be thought of as representing all the organizational or managerial decisions that affect how a firm responds to different market conditions. In some industries, e.g., natural gas \citep{hubbard1991efficient},
 contracts commonly include clauses allowing price/quantity changes conditional on some events.



The model is tractable thanks to parametric assumptions that ensure that profits are linear-quadratic in quantities. This is classic in models of competition with schedules (\cite{klemperer1989supply}, \cite{malamud2017decentralized}), because it means that there exists an equilibrium where schedules are linear. For simplicity of exposition, in the main text, I set up the game directly as a choice of a linear schedule: the analyzed equilibrium remains an equilibrium even if firms are allowed to choose general schedules. 
Moreover, the linear equilibrium remains an equilibrium even when the intercept of the linear schedules is a random variable and, moreover, it is an ex-post equilibrium. These technical properties are classic in similar settings, and I discuss them in the Supplemental Appendix.

The key technical departure of my model from financial market models such as \cite{malamud2017decentralized} is that intermediate inputs are perfect complements. This leads to a key simplification because the choice of a linear schedule boils down to the choice of a single number, representing the slope of both the supply and all demand schedules. This makes the firms' optimization problem unidimensional and allows me to prove that the game is a supermodular potential game. In turn, the concavity of the potential implies the uniqueness of the linear equilibrium, which, to the best of my knowledge, was not known in general networks. This is the content of Theorem \ref{thm:existence}. However, perfect complementarity is by no means essential for all the results: the Supplemental Appendix shows that existence, the centrality interpretation of markups, and the comparison with local market power are still valid when intermediate inputs are imperfect complements or substitutes.


In this model, market power cannot be captured by a single index, because firms charge both a markup when selling and potentially heterogeneous markdowns when buying. Theorem \ref{thm:markups} is the main result connecting markup and markdowns to the network position. It compares two firms $i$ and $j$ such that $i$ is a supplier to $j$, with the simplifying assumption that price impact matrices are diagonal, ruling out indirect effects such as, e.g., some supplier of $i$ that trades with $j$. The result says, in words, that the upstream firm $i$ has a larger markup than $j$ if the other customers are \enquote{small} with respect to the importance of the input-output connection with $j$. Other details of the network do not matter. The mechanism is the pass-through effect mentioned above: the upstream firm perceives a smaller elasticity of demand, because the elasticity compounds the elasticity perceived by $j$ plus the imperfect pass-through from $i$ to $j$.
Markdowns behave symmetrically: if the vertical connection is the most important for the customer $j$, then the downstream firm $j$ has a larger markdown. 
The two effects push in different directions, so which firm has higher profit depends on the details of the network and the parameters.
For example, in the symmetric situation of a supply chain with homogeneous layers, each firm makes the same profit (Proposition \ref{prop:markup}).

In general, the vector of markup and markdowns of each firm can be seen in terms of a Bonacich-like centrality measure in the \emph{goods network}, a network in which the nodes are the goods, and two goods are linked if some firms trade both goods. 
In particular, if the downstream markets are disconnected from the upstream markets, the price impact on the output is proportional to the number of loops out of the output in the goods network. The number of direct and indirect connections measures the strength of the pass-through effect.

To single out the effect of multilateral market power, I then generalize the model to the \enquote{Generalized SDFE}, in which firms still choose schedules, but where the price impact functions are a given primitive. The generalization also has independent interest because, for different choices of price impact functions, we obtain as special cases not only the model of the previous paragraphs, but also other standard models: the classic Cournot oligopoly (without input-output dimension), the sequential monopoly \`a la \cite{spengler1950vertical}, the sequential Cournot, models that adopt price-taking assumptions on some markets, and, in general, all the combinations thereof. I study in detail two relevant special cases: (i) firms take input prices as given (which I call unilateral market power) and (ii) firms take as given all prices of markets where they are not directly involved (local market power). These two sets of assumptions are relevant because they are often used in quantitative models of input-output networks, as discussed in the Literature below. Proposition \ref{teo_comparative} shows that the key property of strategic complementarity is still valid: this is the key technical tool allowing the following results.

  

Theorem \ref{thm:main_comparative} characterizes the effect of constraining market power to be local, or unilateral. Part 1 shows that the constraints affect the total size of surplus: the final price is larger with multilateral market power. Part 2 shows that unilateral market power also has specific effects on the distribution of surplus: when the vertical dimension is the most prominent, profit is mechanically larger for upstream firms.
 
The intuition for both results stems again from the effect of the assumptions on the pass-through effects. Part 1 shows that buyer (or countervailing) market power has the effect of increasing, rather than decreasing, distortions.
Whenever a firm takes input prices as given (unilateral), it perceives a higher elasticity of supply. Strategic complementarity means that, in equilibrium, the firm chooses a higher slope, so this translates to higher elasticity of demand for the buyers too. As a consequence, if the network is symmetric enough, the seller charges a lower markup and the buyer charges a lower markdown. So, without buyer power, distortions are actually smaller.
The effect is analogous in the case of local market power: the firm takes as given prices in downstream or upstream markets, which implies smaller pass-through effects and higher elasticities, leading again to lower distortions. 
Concerning part 2, the mechanism is the same as Theorem \ref{thm:markups}: if one vertical connection is important enough with respect to other customers, the markup is be higher upstream. However, with unilateral market power, there is no opposite effect on markdowns. As a consequence, the upstream firm has higher profits, purely as a result of the unilateral assumption. Again, a homogeneous supply chain with layers is the clearest example: all firms have the same profit with multilateral market power, but, with unilateral market power, profits are higher the more upstream firms are.



The feedback effects generated by the adjustment of markups (and markdowns) are among the key channels that mediate the model's predictions of more complex economic phenomena. To illustrate this, I explore the role of multilateral market power in a classic applied supply chain question: the evaluation of the welfare impact of vertical mergers. Constraining market power to be unilateral changes qualitatively the model's predictions (Proposition \ref{prop_vertical}). The reason is that with unilateral market power, the model leads to underestimating the inefficiency generated by double marginalization, thus predicting a smaller efficiency gain from the merger. This is true both using the unilateral SDFE and the sequential Cournot.

Whether multilateral market power is quantitatively important depends on the specifics of the application. If the object of interest is a specific market where the details of the competition mechanism (e.g., timing, negotiation procedures, etc) can be precisely observed, then the model should reflect those details.
 However, if the analyst is trying to derive insights from a large input-output network,  it is presumably much harder to observe a lot of details on bargaining procedures and the timing of the offers.
For example, in competition policy, it is already standard to analyze vertical connections, but considering more detailed information on input-output interconnections may generate sharper insights: this is argued, e.g., by \cite{elliott2019networks} and especially \cite{elliott2019role}. 
The message of the paper is that in these contexts, assumptions that exogenously restrict firms' market power push results in a specific direction, and so may be misleading. In such cases, the Supply and Demand function equilibrium may be a useful tool.



\subsection*{Related literature}

This paper contributes to three lines of literature: the literature on competition in supply and demand functions, the literature on production networks or networked markets, and the literature on countervailing market power.

My contribution to the literature on competition on supply and demand functions is to introduce the technique to the modeling of general equilibrium oligopoly with firm-to-firm trade. \cite{flynn2025theory} studies a general equilibrium model of monopolistic competition with supply functions, without firm-to-firm trade. 
The closest setting to firm-to-firm trade is trading of heterogeneous financial assets in decentralized markets: \cite{malamud2017decentralized} show how the strategic complementarity property extends to the network setting, and characterizes an equilibrium in a general network. As discussed above, the main technical departure is the assumption of perfect complementarity, which buys the uniqueness of the equilibrium. The assumption 
is standard for intermediate inputs, but less natural for assets. 
Moreover, my focus is on the effect of multilateral market power, a question that comes from the comparison with standard supply chain models, whereas \cite{malamud2017decentralized} study the effect of decentralization (as \cite{wittwer2021connecting}).
In a similar setting, \cite{rostek2021exchange} and \cite{rostek2025financial} study restrictions on the clearing mechanism, rather than different assumptions on firms' market power. Technically, clearing restrictions mean that the equilibrium is not ex-post, which means that price impacts are affected by inference. Instead, in the present paper the equilibrium is always ex-post (in the version with uncertainty of Appendix \ref{sec:full_blown}).
The implications are different: restrictions on the clearing mechanism may increase or decrease the price impact, depending on how they affect inference, 
while both unilateral and local market power always decrease the price impact (by Theorem \ref{thm:main_comparative}).  

The literature has studied the situation where the demand firms receive comes from a network structure, in \cite{wilson2008supply} and \cite{holmberg2018supply}. \cite{holmberg2025multi} study the case of multi-product firms. These papers do not consider firm-to-firm trade. 
 \cite{ausubel2014demand} also study uniform price auctions in different settings. \cite{vives2011strategic} studies market power arising from asymmetric information, rather than network position.

The paper also contributes to the literature studying the effect of buyer and countervailing power. \cite{azar2021general} study the effect of firms having market power both on the output good and labor, showing a non-trivial welfare effect of ownership. I focus on multilateral market power, specifically on firm-to-firm trade, keeping the ownership structure standard.
The literature on countervailing market power typically uses Nash bargaining, or Nash-in-Nash models: to this category belong \cite{hart1990vertical}, \cite{alviarez2023two}. \cite{avignon2025markups} and \cite{demirer2025welfare}  also explore the effect of buyer power in a supply chain. Nash bargaining models differ from the supply and demand function competition because the division of the surplus is determined (or heavily affected) by the exogenously chosen bargaining parameters. In the supply and demand function equilibrium, both the size and the division of the surplus are endogenously determined without the need for additional parameters.
In particular, Nash bargaining selects by construction a constrained efficient allocation, given the constraints on what can or cannot be bargained on.\footnote{
This modeling choice already generates a lot of implications, as made clear by the review of \cite{toxvaerd2024bilateral}. Nash-in-Nash models feature typically multiple interrelated bargaining problems, so the overall equilibrium is typically inefficient} Instead, in the SDFE, ability to extract higher surplus comes at the cost of reduction in production: the standard monopolistic effect. This explains the different results: \cite{avignon2025markups} and \cite{demirer2025welfare} study the efficient mix of buyer and seller power, whereas in the SDFE the only efficient configuration is the one without any market power.
Countervailing market power between buyers and sellers without Nash bargaining is studied in \cite{weretka2011endogenous} and \cite{hendricks2010theory}. They do not consider a general production network, and do not focus on the effect of multilateral market power. 

My contribution to the production networks literature is to provide a model of competition in an input-output network in which all firms have market power on both input and output markets and are fully strategic, internalizing their position in the supply chain. Many models explicitly assume that firms have the power to decide/affect prices only on one side of the market (the assumption I call below \emph{unilateral market power}).
To this class belong the workhorse sequential oligopoly games in 
 \cite{salinger1988vertical} and \cite{ordover1990equilibrium}, and many others. 
Sometimes the assumption takes the form of firms choosing output price, with a fixed marginal cost. This implicitly assumes unilateral market power: first, because the marginal cost is typically taken as given, and also because it considers the relative input prices fixed and independent of demand.
The familiar formula of the price equal to a (possibly constant) markup as a constant markup times a marginal cost is a prominent example.
Examples in this category are \cite{grassi2017io}, \cite{bernard2022origins}, \cite{baqaee2019macroeconomic}, \cite{baqaee2020productivity}, 
\cite{magerman2020pecking}, \cite{dhyne2022imperfect}, \cite{bizzarri2024common}. 
An exception is \cite{acemoglu2025macroeconomics}, who use Nash bargaining. As discussed above, Nash bargaining is conceptually different and, in particular, it is efficient among the allocations satisfying the constraints. Indeed, in the benchmark of \cite{acemoglu2025macroeconomics} with no exit, the equilibrium is efficient. Instead, the SDFE can generate markup and markdown distortions even without exit.  
Except for \cite{acemoglu2025macroeconomics}, all mentioned production networks papers also feature the implicit or explicit assumption that firms do not internalize the effect of their decisions on sectors/firms further downstream besides the direct customers (an assumption I call \emph{local market power}).  Sometimes this is a consequence of the assumption of a continuum of firms in each sector (so that sector-level aggregates are taken as given by every individual firm, as in \cite{baqaee2018cascading}), other times it is explicitly assumed (e.g., in \cite{grassi2017io}, \cite{dhyne2022imperfect}).
More in general, many models of networked markets have studied the network defined by the demand: \cite{galeotti2024robust}, \cite{pellegrino2025product}, \cite{bimpikis2019cournot}, but they do not focus on input-output connections. In all these contexts, prices have some centrality interpretation, analogous to Equation \eqref{prices_centrality}. However, the markup-markdown \eqref{markup_centrality} is more novel because, since most models only have unilateral market power, the markdowns are zero. Moreover, markups have a centrality interpretation (e.g., in \cite{bimpikis2019cournot}, \cite{pellegrino2025product}) only in models in which market power is not local; otherwise, by construction they cannot depend on global network characteristics.


The rest of the paper is organized as follows. Section \ref{sec:simple} illustrates the model through a simple example of a vertical economy.
Section \ref{model} defines the benchmark model, the Supply and Demand Function Equilibrium (SDFE). Section \ref{solution} describes the solution and the existence theorem. Section \ref{sec:network} characterizes the connection of markups, markdowns, and network position. Section \ref{sec:multilateral} introduces the Generalized SDFE and explores the effect of multilateral market power. 
Section \ref{conclusion} concludes. The main proofs are in the Appendix.

\section{A simple example}

\label{sec:simple}

In this section, I illustrate the model and the main take-aways in the simplest network where the concept of multilateral market power is non-trivial: a supply chain consisting of one intermediate producer, $U$, and a final good producer $D$. This is represented in Figure \ref{Fig:line_simple}. The intermediate good producer $U$ produces good $U$ using only labor, and sells it to both the final producer $D$ and consumers. In turn, the final producer $D$ uses good $U$ to produce the final output $D$. The consumers consume both goods $U$ and $D$.\footnote{The fact that consumers also consume good $U$ is necessary for the model to have a non-trivial equilibrium. This is a well-known technical feature of competition in schedules: a non-trivial linear equilibrium exists if goods are traded by at least 3 agents \citep{malamud2017decentralized}.}

\begin{wrapfigure}{l}{0.35\textwidth}

	\begin{tikzpicture}[mycircle/.style={
		circle,
		draw=black,
		fill=gray,
		fill opacity = 0.3,
		text opacity=1,},
	node distance=  2cm
	]

	\node [mycircle] (1) {$U$};

	
	\node [mycircle, below of =1] (0) {$D$};
	\node [ below of =0] (c) {Consumers};

	\foreach \i/\j in {
		1/0,
		0/c}
	\draw [->] (\i) -- node[above] {} (\j);
	\draw [->] (1.east) to [out=-45,in=45] (c.north);
	
\end{tikzpicture}

\caption{A simple vertical economy.}

\label{Fig:line_simple}
\end{wrapfigure}

Firms have a standard linear technology: $F_U(q_U)=q_U$ and $F_D(q_D)=q_D$, so that profits are: $\pi_{U}=p_Uq_U$, $\pi_D=(p_D-p_U)q_D$. Consumer demand for both goods is linear: $D_{c,D}(p_D)=1-p_D$, $D_{c,U}(p_U)=1-p_U$. 
The firms play a simultaneous game in which the strategic variables are the (slopes of the) linear schedules connecting prices and quantities. Formally:
\begin{enumerate}
	
	\item  firm $U$ submits a supply function $S_U(p_U)=B_{U}p_U$, where $B_{U}$ is any positive real number;
	
	\item firm $D$ submits a function $S_{D}(p_U,p_D)=	B_{D}(p_D-p_U)
$	 indicating both its supply of output, and its demand for the input, where $B_{D}$ is, again, any positive real number.
	 	 
\end{enumerate} 

Whichever choice of the firms, the prices $p_U$, $p_D$ and quantities $q_U$, $q_D$ must satisfy the market-clearing conditions:
\begin{align}
	q_D&=1-p_D=B_D(p_D-p_U)  	\label{simple_market_clearing_D}\\
	q_U&=1-p_U+B_D(p_D-p_U)=B_Up_U
	\label{simple_market_clearing_U}
\end{align}
We look for the Nash equilibrium of this game.

Focus on firm $U$. It is going to be convenient to express the best reply problem of firm $U$ using the inverse demand $p^r_U(q_U)$. To obtain it, first use the market-clearing equation for good $D$ \eqref{simple_market_clearing_D} to express $p_D$ as a function of $p_U$, then substitute into \eqref{simple_market_clearing_U}. Doing so, we obtain:
\[
p^r_{U}(q_U)=1-\left(1+\left(\dfrac{1}{B_D}+1\right)^{-1}\right)^{-1}q_U.
\]
Similarly, we obtain the inverse demand and supply faced by firm $D$, denoted $p^r_{D,U}(q_D)$ and $p^r_{D,D}(q_D)$.
 When taking the FOC for firm $U$, we get:
 \[
\dfrac{\partial}{\partial B_U}\pi_U= \dfrac{\partial q_U}{\partial B_U}\left(p_U+q_U\dfrac{\partial p_{U,U}}{\partial q_U}\right)=0
 \]
From the market-clearing conditions, it is easy to conclude that $\dfrac{\partial q_U}{\partial B_U}>0$, and so the FOC are equivalent to: $p_U+q_U\dfrac{\partial p_{U,U}}{\partial q_U}=0$. Doing the analogue for firm $D$, we obtain the equilibrium equations:
\begin{subequations}
\begin{align}
&p_U+q_U\dfrac{\partial p_{U,U}}{\partial q_U}=0 \label{eq:simple_U}\\
& p_D-p_U+q_D\left(\dfrac{\partial p_{D,D}}{\partial q_D}-\dfrac{\partial p_{D,U}}{\partial q_D}\right)=0 \label{eq:simple_D}
\end{align}
\label{eq:simple}
\end{subequations}
Since schedules are linear, the derivatives are just constants. So, the above equations imply the following schedules relating prices and quantities:
\begin{align*}
	S_U(p_U)&=\left(-\dfrac{\partial p_{U,U}}{\partial q_U}\right)^{-1}p_U\\
	S_D(p_D,p_U)&=\left(\dfrac{\partial p_{D,D}}{\partial q_D}-\dfrac{\partial p_{D,U}}{\partial q_D}\right)^{-1}(p_D-p_U)&
\end{align*}
So, the slopes $B_D^*$, $B_U^*$ that constitute an equilibrium of the game must be equal to the slope of the above functions, and satisfy:
\begin{align}
B_U^*& = \left(-\dfrac{\partial p_{U,U}}{\partial q_U}\right)^{-1}=1+\left(\dfrac{1}{B_D^*}+1\right)^{-1} \label{foc_simple_U}\\
B_D^*& = \left(\dfrac{\partial p_{D,D}}{\partial q_D}-\dfrac{\partial p_{D,U}}{\partial q_D}\right)^{-1}=\left(1+\dfrac{1}{B^*_U+1}\right)^{-1}.
\label{foc_simple_D}
\end{align}
The expression highlights the role of the price impacts (the derivatives of the inverse demand/supply), and in particular, the fact that firm $D$ has a price impact on both the input and the output market. The equations can be solved analytically (eliminating denominators, we are left with a quadratic equation), and it can be checked that the solution is: $B_D^*=1/\sqrt{2}$, $B_U^*=\sqrt{2}$, which using the market-clearing equations \eqref{simple_market_clearing_U}, \eqref{simple_market_clearing_D} yield prices $p_U=1/2$ and $p_D=(3-\sqrt{2})/2$. 

In this model both firms have multilateral market power. Now suppose that for the sake of simplicity we want to adopt the more standard assumption that firm $D$ is a price-taker on the input market. This means that, from $D$'s perspective, its choice does not affect the input price, so $\dfrac{\partial p^r_{D,U}}{\partial q_D}=0$. The equilibrium equations \eqref{foc_simple_D},\eqref{foc_simple_U} become:
\begin{align*}
	B_U^{**}& =1+\left(1+\dfrac{1}{B_D^{**}}\right)^{-1} \\
	B_D^{**}& =1.
\end{align*}
It turns out that this solution is exactly the same we would get solving the model as a standard sequential monopoly, as shown in Example \ref{ex:vertical_merger}.
 The solution in this case is $B_D^{**}=1$, $B_U^{**}=3/2$, yielding prices of $p_U=1/2$ and $p_D=3/4$. Comparing to the solution with multilateral market power, we see that the final price is larger. This is a general fact (see Theorem \ref{thm:main_comparative} below): constraining market power to be unilateral leads to underestimate distortions due to market power.
 

The situation with multilateral market power is represented graphically in Figure \ref{Fig:equilibrium}: firm $U$ faces a residual demand $D_{U}^r(p_U)$, that is the blue line in the graph, depending on the slope of consumers and the slope chosen by $D$. This residual demand induces a profit as a function of the price $p_U$. Firm $U$ wants to charge $p_U^*$, the monopoly price for this residual demand, and so sets a slope that achieves that price: this is the red line in Figure \ref{Fig:equilibrium1}.  But, in doing so, it affects the slope of the
\emph{residual supply} that firm $D$ faces. As a consequence, firm $D$ changes its choice of schedule, changing the transaction price to $(p^*_U)_2$, the optimal \emph{monopsony} price for firm $D$. This, in turn, leads to a new residual demand and a new profit function for firm $U$ (as in Figure \ref{Fig:equilibrium2}): as a consequence, the previous optimal price $p_U^*$ is not optimal anymore, and firm $U$ adjusts its slope again. This adjustment process continues until the slopes are such that the optimal price sellers want to charge is equal to the optimal price for the buyers.
 In a sense, both the buyer and the seller of good $U$ set their \enquote{optimal price}. This seems a contradiction, since sellers would want to raise $p_U$ while buyers would want to decrease it. The tension is resolved by the fact that firms \enquote{implement} a price by modifying the slope of their schedule that, in turn, \emph{changes other firms' incentives to raise prices}. Without multilateral market power, firm $D$  takes the input price as given: this means that the slope of the residual demand $D^r$ is fixed, and the equilibrium slopes are higher: this leads to lower prices.

\begin{figure}[h!]
		
		\begin{subfigure}{0.5\textwidth}

\hspace*{-6em}
\begin{tikzpicture}[scale=0.9,rotate=-90, xscale=-1]		
	
	\coordinate (O) at (0,0);
	\draw[->, name path=x] (-0.3,0) -- (7,0) coordinate[label = {above: price}] (xmax);
	\draw[->, name path=y] (1,-0.3) -- (1,7) coordinate[label = {below: quantity}] (ymax);
	
	\node[] at ([yshift=10pt] current bounding box.north) {\textcolor{green}{\textbf{Market for good $U$}}};
	
		\draw[blue,-, name path=res] (1,5) -- (6,0);
		\node[blue, below left, align=center] at (1,6) {Residual\\demand};
		
		\draw [blue, domain=3:6,name path=Rev1 
		] plot(\x, {(6-\x)*\x-(6-\x)*(6-\x)});
		\draw[dashed,-, name path=horizontal1] (4.5,0)--+(0,4.5);			
	
	\node[red,left] at (4.5,0) {$p_U^*$};

		
		\path[name path=PC] (1,0) --   (6,5);
		
		\path[name intersections ={of= res and horizontal1, by=Q1}];	
		
			\draw[shorten >=-2cm, red,-, name path=SD] (1,0) -- (Q1);
			\node[red, above=10pt] at (6,2.5) {Best reply of U};
		
		\path[name path=horizontal2] (Q1)--+(-4,0);
		\path[name intersections ={of=horizontal2 and PC, by =MC1}];
		
		\path[name path=vertical11] (MC1)--+(0,-4);		
		\path[name path=vertical12] (Q1)--+(0,-4);	
		\path[name intersections ={of=vertical11 and x, by =mc1}];	
		\path[name intersections ={of=vertical12 and x, by =P1}];
		
			\draw[dashed] (Q1)--(P1);
			

\end{tikzpicture}
\caption{}
\label{Fig:equilibrium1}
		\end{subfigure}%
\begin{subfigure}{0.5\textwidth}
	
	\hspace*{-1em}
		\begin{tikzpicture}[scale=0.9,rotate=-90, xscale=-1]		
			
			\coordinate (O) at (0,0);
			\draw[->, name path=x] (-0.3,0) -- (7,0) coordinate[label = {above: price}] (xmax);
			\draw[->, name path=y] (1,-0.3) -- (1,7) coordinate[label = {below: quantity}] (ymax);
			
			\node[] at ([yshift=10pt] current bounding box.north) {\textcolor{green}{\textbf{Market for good $U$}}};
			
				\draw[blue!35,-, name path=res] (1,5) -- (6,0);
				
				\draw [blue!35, domain=3:6,name path=Rev1 
				] plot(\x, {(6-\x)*\x-(6-\x)*(6-\x)});
							
			

				
				
				\path[name path=horizontal1] (4.5,0)--+(0,5);
				\path[name intersections ={of= res and horizontal1, by=Q1}];	
				
					\draw[shorten >=-2cm, red!35,-, name path=SD] (1,0) -- (Q1);
					\node[red, above=10pt] at (6,2.5) {Best reply of U};
				
				\path[name path=horizontal2] (Q1)--+(-4,0);
				

			
				
				\draw[blue,-, name path=res2] (1,1.5) -- (6,0);
				\draw[dashed,->, blue] (0.8,5)-- (0.8,1.5);
				\node[blue, below] at (0.8,2.2) {Best reply of $D$} ;	
				
				\path[name intersections ={of= SD and res2, by=Q2}];
				\path[name path=horizontal1] (Q2)--+(0,-2);
				\path[name intersections ={of= horizontal1 and x, by=p2}];
				\node[blue,left] at (p2) {$(p_U^*)_2$};	
				\draw[dashed,-, name path=horizontal1] (Q2)--(p2);
			
				
				\draw [blue, domain=2:6,name path=Rev2 
				] plot(\x, {0.6*(6-\x)*\x-0.36*(6-\x)*(6-\x)});
				
			
				\draw[dashed,-, name path=horizontal3] (4.1,0)--+(0,3.37);					
				\node[red, left] at (4.1,0) {$(p_U^*)_3$};
				
			
				\path[name intersections ={of= res2 and horizontal3, by=Q3}];	
				
				\draw[shorten >=-2cm, red,-, name path=SD] (1,0) -- (Q3);
				\draw[dashed,->, red] (6, 2) -- (6,1);
			
		\end{tikzpicture}
\caption{}
\label{Fig:equilibrium2}		
\end{subfigure}	
			
\caption{Graphical representation of the choice of schedule by firm $U$. On the left (a): the best reply for firm $U$ to the residual demand given by the blue line. On the right (b): the optimal choice of firm $U$ leads other firms to adjust, modifying firm $U$ residual demand and optimal price: so firm $U$ further adjusts its best reply. }		
\label{Fig:equilibrium}
		
	\end{figure}
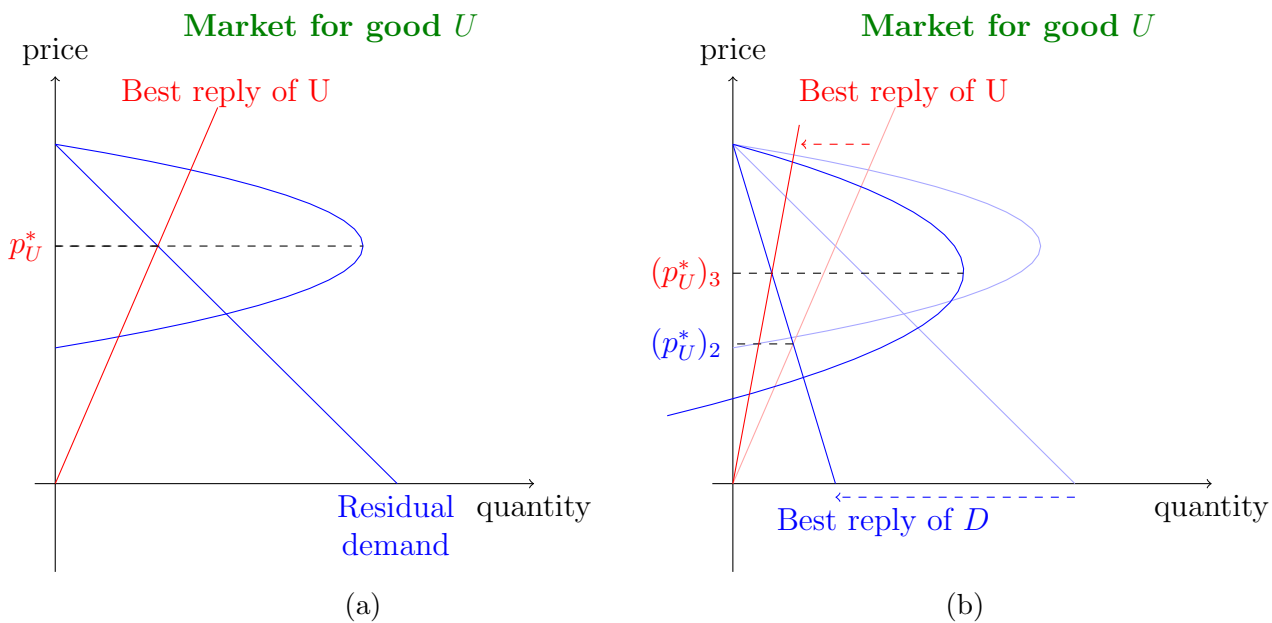

\section{The model}
\label{model}


\subsection{Setting}

\paragraph{Notation}  

Bold symbols are used to denote vectors: $\pv\in \mathbb{R}^m$ is the vector of all prices of goods. Labor is the numeraire, so the wage is normalized to 1. If $\vv$ is a vector indexed on goods and $\mathcal{A}$ is a subset of goods, $\vv_{\mathcal{A}}=((v_g)_{g \in \mathcal{A}})$ denotes the vector that selects only the entries relative to goods belonging to $\mathcal{A}$. If $\mathcal{A}=\mathcal{N}(i)$, for brevity, I use the notation $\pv_i:=\pv_{\mathcal{N}(i)}$ (the prices of all input and output goods of firm $i$). For quantities, $\qv_i:=\qv_{\mathcal{N}(i)}$ is the vector of input and output \emph{net} quantities traded by firm $i$: positive quantities represent outputs and negative quantities represent inputs. Similarly, $\pv_c:=(p_g)_{g \in \mathcal{C}}$ and $\qv_c:=(q_g)_{g \in \mathcal{C}}$ are the vectors of prices and quantities consumed by the consumer. Also for simplicity, I write $p_i^{out}:=p_{out(i)}$ and $q_i^{out}=q_{out(i)}$ for the price and quantity of the output good of $i$. I use $\qv_i$ for quantities when they are the argument of a function. When I am talking about quantities as functions of prices, I denote the function $\mathcal{S}_i(\pv_i)$ and call it a \emph{schedule}. Similarly, $\pv_i$ are prices as variables, while $\pv_i^r(\qv_i)$ is the \emph{residual schedule} (defined below) representing prices as a function of quantities. 

The quantity of labor used by firm $i$ is $\ell_i$.
If $M$ is a matrix, $M_{\mathcal{A}}$ denotes the square submatrix where both row and columns are indexed by goods in $\mathcal{A}$; $M_{\mathcal{A}^c}$ denotes the square submatrix indexed by the complement; and $M_{\mathcal{A},\mathcal{A}^c}$ denotes the off-diagonal block.
If $B_i\in \mathbb{R}^{d_i\times d_i}$ is a coefficient matrix, the notation $\hat{B}_i$ denotes the \emph{lifting} of the matrix $B_i$, namely the matrix in $\mathbb{R}^{m\times m}$ such that the nonzero elements are exactly those corresponding to the elements of matrix $B_i$, and the rest is filled with zeros. Analogously, if $\vv_i\in \mathbb{R}^{d_i}$, the notation $\hat{\vv}_i$ denotes the vector in $\mathbb{R}^m$ in which the additional elements are filled with zeros.

\paragraph{Firms and Production Network} There are $n$ firms and $m$ goods: their sets are respectively denoted $\mathcal{N}$ and $\mathcal{M}$. Each good might be produced by more firms, but each firm produces only one good. Each firm produces using labor and a set of inputs produced by other firms, which I denote as $\mathcal{N}^{in}(i)\subseteq \mathcal{M}$. If firm $i$ has a unique input, I denote it $in(i)$. If good $g$ is the output of firm $i$, I denote it as $out(i)$. The set of all goods traded by firm $i$ is $\mathcal{N}(i)=\mathcal{N}(i)^{in}\cup\{out(i)\} \subseteq \mathcal{M}$. The number of non-labor goods traded by $i$ is $d_i=|\mathcal{N}(i)|$, so the number of intermediate inputs is $d_i-1$. 
The consumers' utility depends potentially only on a subset of goods, denoted $\mathcal{C} \subseteq \mathcal{M}$.  The connections $(\mathcal{N}(i))_{i\in \mathcal{N}}$ defined above define a bilateral network between firms and goods, which is the \emph{input-output network} of this economy.  

\paragraph{Technology}
\label{technology}

Intermediate inputs are perfect complements, so that to produce a quantity of output $q_i$  firm $i$ needs $-q_{ij}=f_{ih}q_i$ units of input $h$, with $f_{ih}\ge 0$ (remember that input quantities are negative, $q_{ij}<0$). Denote $F\in \mathbb{R}^{n\times m}$ the matrix with entries $f_{ih}$. The technology has to be \emph{viable}, namely
there must exist a positive vector $\pv\in \mathbb{R}^m_+$ such that $p_i>\sum_h f_{ih}p_h$ for each $i$. 
\footnote{If $n=m$ so that $F$ is square, this is equivalent to the standard condition that $I-F$ must be invertible with nonnegative inverse \citep{horn1994topics}. With $n>m$, this is reminiscent of the \emph{multi-activity} model from standard input-output analysis, with the important difference that the classic input-output analysis is done with perfect competition.} Moreover, each good must be connected to consumers, that is, for each good $g$ there must be a sequence of goods $g_1,\ldots, g_k$ such that each pair of consecutive goods represents the input and output of some firm $i$, and the final good $g_k$ is consumed by the consumer, $g_k\in \mathcal{C}$.
To produce $q^{out}_i$ units of output, the firm also needs $\ell_i=f_{i,L}q^{out}_i+\dfrac{1}{2k_i}(q_i^{out})^2$ labor units, with $f_{i,L}>0$. This slightly generalizes the Leontief technology to allow decreasing returns: as illustrated below, this facilitates the existence of an equilibrium. Formally, the production function is: $q_i^{out}=\min\{F_i(\ell_i),(-q_{ij})/f_{ij}\}$, where $F_i(\ell_i):=f_{i,L}k_i\left(\sqrt{2\ell_i/(f_{i,L}^2k_i)+1}-1\right)$ is the inverse of the labor expression above.
If $k_i\to \infty$, the technology becomes the standard Leontief. Denote the vector of labor requirements $f_{i,L}$ as $\boldsymbol{f}_L$. It is convenient to define the vector $\vv_i=(1,-(f_{ig})_{g \in \mathcal{N}^{in}(i)}) \in \mathbb{R}^{d_i}$.
So, we can write the technology constraint of firm $i$ as:
\begin{align}
\qv_i&=q_i^{out}\vv_i \nonumber\\
\ell_i&=f_{i,L}q^{out}_i+\dfrac{1}{2k_i}(q_i^{out})^2
 \label{tech_constraint_min}
 \end{align}

\paragraph{Consumers} The utility function of the consumers is quadratic in consumption and (quasi-)linear in the disutility of labor $L$:
\begin{equation}
	U(\qv_c,L)=\boldsymbol{A}'B_c^{-1}\qv_c-\frac{1}{2}\qv_c'B_c^{-1}\qv_c-L,
	\label{utility}
\end{equation}
where $\boldsymbol{A}$ is a positive vector and $B_c\in \mathbb{R}^{|\mathcal{C}|\times |\mathcal{C}|}$ is a symmetric positive definite matrix. Throughout the paper, prices $\pv$ are always in labor terms, so the wage is normalized to 1.
This means that the consumer demand has the form: $D_{c}(\pv_c)=\boldsymbol{A}-B_c\pv_{c}$.

\subsection{The game}
\label{sec:thegame}
\paragraph{Schedules}

Firms compete by choosing a supply function for the output, and demand functions for intermediate inputs and labor, respecting the technology constraint \eqref{tech_constraint_min}. The players of the game are the firms: $i=1,\ldots, N$, and the actions available to each firm $i$ are \emph{schedules} mapping prices to quantities. Denote the schedule for the intermediate inputs and output as: $\mathcal{S}_i:\mathbb{R}^{d_i}\to \mathbb{R}^{d_i}$ and the one for labor as $\mathcal{S}_{\ell,i}:\mathbb{R}^{d_i}\to \mathbb{R}$.  
The key simplifying assumption is that the intermediate input schedule $\mathcal{S}_i$ is linear. Linearity means that there exists a matrix of coefficients $B_i\in \mathbb{R}^{d_i\times d_i}$ and a vector $B_{i,f}\in \mathbb{R}^{d_i}$, such that the intermediate input schedule is linear:
\begin{equation}
\mathcal{S}_i(\pv_i)=B_i\pv_i-B_{i,f}
\label{linear_schedule}
\end{equation}
Notice that I do not restrict prices or quantities to be positive. With the linear functional form conditions that guarantee that equilibrium values are nonnegative are complicated (even if they normally are). For simplicity, I do not restrict the sign of prices and quantities, and simply interpret a negative sign as trade happening in the opposite direction (as customarily done in the financial-market interpretation of the model). 
However, when discussing the results relating market power and network position in Sections \ref{sec:network} and \ref{sec:multilateral}, I assume that output quantities are positive.

The technology constraints \eqref{tech_constraint_min} imply that the supply function $\mathcal{S}^{out}_i$ determines the whole input schedule, including labor: $\mathcal{S}_i=\mathcal{S}_i^{out}\vv_i$ and $\mathcal{S}_{\ell,i}=f_{i,L}\mathcal{S}_i^{out}+\dfrac{1}{2k_i}(\mathcal{S}_i^{out})^2$. Moreover, it turns out (proven in Lemma \ref{linear-residual} below) that for the schedule \eqref{linear_schedule} to be a best reply, it must be that for some $\overline{B}_i \in [0,k_i]$:
\begin{equation}
B_i=\overline{B}_i\vv_i\vv_i',\quad B_{i,f}=\overline{B}_i\vv_if_{i,L},
\label{coefficient_equation}
\end{equation}
so that the schedule must have the form:
\begin{equation}
\qv_i=\mathcal{S}_i(\pv_i)=\overline{B}_i\left(\boldsymbol{v}'_i\pv_i-f_{i,L}\right)\vv_i.
    \label{schedule_expression}
\end{equation}
It follows that, to analyze the equilibrium, it is sufficient to restrict firms' choice to the choice of the coefficient $\overline{B}_i$ in $[0,k_i]$. In the main text, I set up the game directly as the choice of this coefficient $\overline{B}_i$, because it simplifies the analysis.
 Sometimes, for convenience of notation, I still use the $B_i$, $B_{i,f}$ matrices as shortcuts for the expressions in \eqref{coefficient_equation}.
Denote $B=(B_i)_{i\in \mathcal{N}}$ a profile of coefficient matrices, $B_f=(B_{i,f})_{i\in \mathcal{N}}$ a profile of intercept vectors, and $\overline{B}=(\overline{B}_i)_{i\in \mathcal{N}}$ a profile of slope coefficients.
When needed, we index the matrices $B_i$ directly with the relevant goods, for simplicity. So, $B_{i,gh}$ means the entries of the schedule relative to the effect of the price of $h$ on demand for the input goods $g$. 

 Most proofs also only use properties of $B_i$ directly, and so they generalize to the case of imperfect complements and substitutes, explored in the Supplemental Appendix.

\paragraph{Prices}
\label{sec:pricing}

The market prices are, by assumption, those satisfying the market-clearing conditions. The market-clearing equations for non-labor goods are:
\begin{align}
	\sum_{i\,:\,g \in \mathcal{N}_i}\mathcal{S}_{ig}(\pv_i)&=D_{c,g}(\pv_{c})\quad \forall g \in \mathcal{C} \quad \sum_{i\,:\,g \in \mathcal{N}_i}\mathcal{S}_{ig}(\pv_i)=0 \quad \forall g \in \mathcal{M}\setminus \mathcal{C}
    \label{sub_mktclear}
\end{align}
Since the demand derived by \eqref{utility} satisfies Walras's law, it is standard that one of the market-clearing conditions is redundant: we leave out the labor market-clearing equation $\sum_i \mathcal{S}_{\ell,i}(\pv_i)=L(\pv_c)$.

Lemma \ref{pricingfunction} below shows that the system has a unique solution, a \emph{pricing function} mapping coefficient matrices to prices: $\pv:(\overline{B})\to \pv(\overline{B})$. This function is crucial: it embeds the information about competition and network interconnections.

\paragraph{Payoffs}

To complete the definition of the game, it remains to define the payoffs. These are, in short, the profits, calculated in the prices that satisfy the market-clearing conditions \eqref{sub_mktclear} (remember that the wage is normalized to 1):
\begin{align}
	\pi_{i}(\overline{B})&:=\pv_i'(\overline{B})\mathcal{S}_i(\pv_i(\overline{B}))-\mathcal{S}_{\ell,i}(\pv_i(\overline{B})) \nonumber\\
    &=\overline{B}_{i}\left(1-\dfrac{1}{2k_i}\overline{B}_{i}\right)(\vv_i'\pv_i(\overline{B})-f_{iL})^2
    \label{profits}
\end{align}
In particular, since $\overline{B}_{i}\le k_i$, we get that $\pi_i(\overline{B})$ is always nonnegative, so we do not need to worry about firm exit.

\begin{defi}
A Supply and Demand Function Equilibrium (SDFE) is a Nash equilibrium of the game $G=(\mathcal{N}, ([0,k_i], \pi_{i})_{i \in \mathcal{N}})$, where the players are the firms, actions are slopes, and the payoffs are defined in \eqref{profits}.

\end{defi}

\begin{example}{\textbf{Horizontal economy/Standard Supply Function Equilibrium}}
	
	\label{example1}
Consider the case of $n$ firms, producing the same output good, without input-output connections (producing using only labor): $\boldsymbol{v}_{i}=1$ for $i=1,2, \ldots, n$. The demand function in this case is $D_c=A_c-B_cp_c$, where $A_c$, $B_c\in \mathbb{R}_+$. This is an instance  of the Supply Function competition by \cite{klemperer1989supply} (in the parametric case of the quadratic cost function). 
	
\end{example}	

\begin{example}{\textbf{The vertical economy}}
    \label{example0}
The vertical economy of Section \ref{sec:simple} is an example with two goods, $U$ and $D$, and two firms with the same indices. The network is defined by: $\mathcal{N}(U)=\{D\}$, $\mathcal{N}(D)=\{U,D\}$ and $\mathcal{C}=\{D\}$. For general parameters the schedules are: $\mathcal{S}_U=\overline{B}_U(p_U-f_{U,L})$ and $\mathcal{S}_D=\overline{B}_D(p_D-f_{DU}p_U-f_{D,L})$. Section \ref{sec:simple} explores the case in which the parameters satisfy: $\boldsymbol{v}_U=1, \boldsymbol{v}_D=(1,-1)$, $B_c=\begin{pmatrix}
    1 & 0\\
    0& 1
\end{pmatrix}$, $f_{i,L}=0$ and $k_i\to \infty$, so that the marginal costs are zero. T
    
\end{example}

The next example is also very useful because it is very tractable, so I am going to use it intensively to illustrate the model in the next sections.

\begin{example}[\textbf{Supply chain with layers}]
	\label{ex:line}

	A layered supply chain is a production structure in which firms are divided in $N$ \emph{layers} $i=1,\ldots, N$, There are $n_i$ firms in layer $i$. Each layer $i$ produces good $i$, as in Figure \ref{Fig:line}.  
	Firms in layer $i$ use as intermediate input the good produced in layer $i+1$ and sell to firms in layer $i-1$, except firms in layer 1, which sell to the consumers ($\mathcal{C}=\{1\}$). Firms in layer $N$ only use labor for production. 
    If $N=1$, we obtain the standard Supply Function equilibrium as in \cite{klemperer1989supply}, and the example above.
For simplicity, assume that for all firms $f_{j,L}=0$, $k_j=k$, and $\vv_j=(1,-1)$ for each firm in layers $i<N$, and $\vv_j=1$ for firms in the last layer $N$. With these assumptions, all the firms in each layer are ex-ante identical. It turns out that they will also choose identical schedules in equilibrium, so we can index schedules directly by the layer index: $\mathcal{S}_i$ is the schedule of all firms in layer $i$, for $i<N$, and it is equal to:
\begin{align*}
    \mathcal{S}_i(\pv_i)=\begin{pmatrix}
        \overline{B}_i(p_i-p_{i+1})\\
        -\overline{B}_i(p_i-p_{i+1})
    \end{pmatrix}
\end{align*}

\end{example}

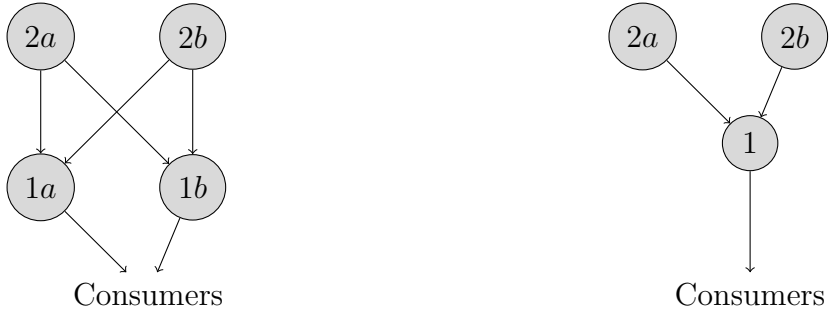
\begin{figure}[t]
	
	\centering

	\begin{subfigure}{0.5\textwidth}
		
		\hspace{2cm}
		\begin{tikzpicture}[mycircle/.style={
				circle,
				draw=black,
				fill=gray,
				fill opacity = 0.3,
				text opacity=1,},
			node distance=  2cm
			]

			\node [mycircle] (1) {$2a$};
			\node [mycircle, right of=1] (2) {$2b$};		
			
			\node [mycircle, below of =1] (0) {$1a$};
			\node [mycircle, below of =2] (3) {$1b$};		
			\node [ below right of =0] (c) {Consumers};

			\foreach \i/\j in {
				1/0,
				2/0,
				1/3,
				2/3,
				3/c,
				0/c}
			\draw [->] (\i) -- node[above] {} (\j);

		\end{tikzpicture}
		\caption{homogeneous case, with $n_1=n_2=2$.}
	\label{Fig:linea}

	\end{subfigure}%
    \begin{subfigure}{0.5\textwidth}

		\hspace{2cm}
		\begin{tikzpicture}[mycircle/.style={
				circle,
				draw=black,
				fill=gray,
				fill opacity = 0.3,
				text opacity=1,},
			node distance=  2cm
			]

			\node [mycircle] (1) {$2a$};
			\node [mycircle, right of=1] (2) {$2b$};		
			
			\node [mycircle, below right of =1] (0) {$1$};
			\node [ below of =0] (c) {Consumers};

			\foreach \i/\j in {
				1/0,
				2/0,
				0/c}
			\draw [->] (\i) -- node[above] {} (\j);

		\end{tikzpicture}
		
		\caption{Asymmetric layers, with $n_1=1$, $n_2=2$.}
	\label{Fig:lineb}
	
	\end{subfigure}
	
	\caption{A supply chain with 2 layers and 1 consumer good: $\mathcal{C}=\{1\}$.  }
	
	\label{Fig:line}
	
\end{figure}

\section{Solution: existence and uniqueness}
\label{solution}

\subsection{Residual schedule and price impact}

First of all, I show the pricing function  and the payoffs defined in \ref{sec:pricing} are well-defined, so the payoffs are indeed uniquely defined as a function of the slope coefficients. 
We need the lifting notation introduced in the previous section: if $B_i\in\mathbb{R}^{d_i\times d_i}$ is a coefficient matrix of a schedule, $\hat{B_i}\in \mathbb{R}^{m\times m}$ extends the same matrix to dimension $m\times m$ by filling in zeros; the same for vectors $\hat{B}_{i,f}$.  

\begin{lemma}

Define the matrices $M:=\sum_j\hat{B}_j+\hat{B}_c$ and $M_{f}:=\sum_j \hat{B}_{j,f}$. The market-clearing conditions \eqref{sub_mktclear} have a unique solution:
\begin{equation}
\pv=M^{-1}\overline{\boldsymbol{A}}
\label{mktclear_linear}
\end{equation}
where $\overline{\boldsymbol{A}}:=\boldsymbol{\hat{A}}+M_f$.
  

    \label{pricingfunction}
\end{lemma}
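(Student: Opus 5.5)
The plan is to rewrite the market-clearing system \eqref{sub_mktclear} as a single linear system in $\pv\in\mathbb{R}^m$, and then show that its coefficient matrix $M$ is positive definite, hence invertible. Substitute the linear schedules \eqref{schedule_expression}, i.e. $\mathcal{S}_i(\pv_i)=B_i\pv_i-B_{i,f}$ with $B_i=\overline{B}_i\vv_i\vv_i'$ and $B_{i,f}=\overline{B}_i\vv_if_{i,L}$. Use the lifting notation to write $\mathcal{S}_i(\pv_i)$ as $\hat{B}_i\pv-\hat{B}_{i,f}$ in $\mathbb{R}^m$, with zeros in the coordinates of goods that $i$ does not trade. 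Similarly, write the consumer demand as $\hat{\boldsymbol{A}}-\hat{B}_c\pv$.

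Summing over firms, the conditions for goods in $\mathcal{C}$ and for goods in $\mathcal{M}\setminus\mathcal{C}$ combine into one vector equation, $\sum_j(\hat{B}_j\pv-\hat{B}_{j,f})=\hat{\boldsymbol{A}}-\hat{B}_c\pv$. The consumer terms vanish automatically on non-consumed goods. Rearranging gives $M\pv=\hat{\boldsymbol{A}}+M_f=\overline{\boldsymbol{A}}$. The sign convention needs one check: a firm's net supply enters positively, consumer demand sits on the other side, and the required sign is produced by moving $\hat{B}_c\pv$ to the left.

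The main step, and the only real obstacle, is invertibility of $M$. Each $\hat{B}_j=\overline{B}_j\hat{\vv}_j\hat{\vv}_j'$ is positive semidefinite because $\overline{B}_j\ge 0$, and $\hat{B}_c$ is positive semidefinite because $B_c$ is positive definite. So $M\succeq 0$, and it suffices to show $x'Mx=0$ implies $x=0$.

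\begin{itemize}
\item Suppose $x'Mx=0$. Then every term vanishes: $x_c=0$ on $\mathcal{C}$, and $\overline{B}_j(\hat{\vv}_j'x)^2=0$ for every $j$.
\item Use the assumption that every good is connected to consumers through a chain $g=g_1,\dots,g_k\in\mathcal{C}$, where consecutive goods are an input and the output of some firm. Propagate zeros backward along the chain.
\item If firm $i$ has output $g_{l+1}$ with $x_{g_{l+1}}=0$ and $\overline{B}_i>0$, then $\vv_i'x=0$ gives $\sum_h f_{ih}x_h=0$.
\item This alone does not force $x_{g_l}=0$ when firm $i$ has several inputs. It also fails if $\overline{B}_i=0$ is allowed.
\end{itemize}

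I would therefore try first to get a strict inequality, or else use the route below. The alternative is to note that positive semidefiniteness plus positive definiteness of the consumer block gives invertibility of the Schur-type system restricted to the relevant subspace. Where the argument genuinely fails, for example when some $\overline{B}_i=0$ disconnects a good, I would argue uniqueness of the traded quantities and profits instead. Concretely, $\pv_i$ enters payoffs only through $\vv_i'\pv_i$, and this quantity is uniquely determined because $M$ is injective on the range that matters: $M\pv=\overline{\boldsymbol{A}}$ with $M\succeq0$ and $\overline{\boldsymbol{A}}$ in the range of $M$. For generic interior slopes ($\overline{B}_i>0$), I expect the propagation argument combined with viability to close the gap. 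I would take the generic interior case as the statement's intended scope and note the degenerate case separately.
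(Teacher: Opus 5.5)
Your reduction of market clearing to $M\pv=\overline{\boldsymbol{A}}$ matches the paper. There is, however, a genuine gap at the step you yourself call the main obstacle: you never show that $M$ is nonsingular.

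\textbf{What fails.} Your first deduction is right: $\xv'M\xv=0$ forces $\xv_{\mathcal{C}}=0$ and $\hat{\vv}_j'\xv=0$ for every $j$ with $\overline{B}_j>0$. But propagating zeros backward from consumed goods is the wrong tool, and it stalls at any firm with several inputs, as you note. The Schur-complement and ``range'' remarks do not repair this. At best they give uniqueness of some functionals of $\pv$, whereas the lemma asserts that $\pv$ itself is unique. You also give no reason why $\overline{\boldsymbol{A}}$ lies in the range of $M$.

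\textbf{The missing idea.} Viability must be used globally, not locally.
\begin{enumerate}
\item Pick one producer of each good. Their vectors $\hat{\vv}_j$ are the rows of $I-\overline{F}$, where $\overline{F}$ is the square block of technology coefficients of these firms.
\item Viability gives a positive $\pv$ with $\overline{F}\pv<\pv$ entrywise. Rescaling by $\mathrm{diag}(\pv)$ yields a nonnegative matrix with row sums below one, so $\overline{F}$ has spectral radius below one. Hence $I-\overline{F}$ is invertible, with $(I-\overline{F})^{-1}=\sum_k\overline{F}^k\ge 0$.
\item The conditions $\hat{\vv}_j'\xv=0$ for these firms say $(I-\overline{F})\xv=0$, so $\xv=0$ at once. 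Consumer connectedness is not needed for $M$ at all.
\end{enumerate}

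\textbf{What the paper does instead.} It proves the stronger fact, needed later for the price impacts, that $M-\hat{B}_i$ is positive definite for each $i$. It then deduces $M=(M-\hat{B}_i)+\hat{B}_i\succ 0$. After removing $i$'s row, the orthogonal complement of the remaining rows is one-dimensional. It is spanned by the column of the Leontief inverse $(I-\overline{F})^{-1}$ corresponding to $i$'s output. Consumer connectedness makes that column nonzero on some consumed good, contradicting $\xv_{\mathcal{C}}=0$. That is where connectedness enters, not in a chain-propagation argument.

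\textbf{On $\overline{B}_j=0$.} Your worry is legitimate. The paper's argument also tacitly uses positive slopes when it infers orthogonality to every $\hat{\vv}_j$. The statement genuinely fails if, for example, all slopes vanish and some good is not consumed, since then $M=\hat{B}_c$ is singular. The right fix is to state the hypothesis rather than weaken the conclusion. For $M$ it suffices that every good has at least one producer with $\overline{B}_j>0$, and that producer is chosen when forming $\overline{F}$.
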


 The result follows from the assumption of viability of the network, and the linear functional form of \eqref{linear_schedule} and \eqref{coefficient_equation}. The
 proof is in the Appendix  \ref{proof_pricingfunction}.

 As in Section \ref{sec:simple}, it is  convenient to express the optimization in terms of a monopolist optimizing against a (inverse) residual demand and supply (or schedule). This formulation is useful for clarifying the role of multilateral market power, both intuitively and in the generalization of Section \ref{sec:multilateral}. Moreover, it is also useful for the generalizations in which firms are allowed to choose general coefficient matrices and general schedules (in the Supplemental Appendix). 
   
\begin{defi}
Define the (inverse) \emph{residual schedule} of firm $i$ as the function:
\begin{equation}
\pv^r_i(\qv_i; \overline{B}_{-i})=\Lambda_i(\overline{B}_{-i})(\boldsymbol{\tilde{A}}_i(\overline{B}_{-i})-\qv_i)
\label{residual_schedule}
\end{equation}
where $\Lambda_i(\overline{B}_{-i}):-\partial_{\qv_i} \pv^r_i=[(M-\hat{B}_i)^{-1}]_{\mathcal{N}(i)}$ and $\boldsymbol{\tilde{A}}_i$ is a vector that is also a function of $\overline{B}_{-i}$. 
    \label{def_residual}
\end{defi}
The coefficient matrix $\Lambda_i$ is the submatrix of $(M-\hat{B}_i)^{-1}$ relative to the goods traded by firm $i$ (the set $\mathcal{N}(i)$). It collects the slope coefficients of the (inverse) supply and demand schedules, describing the effect of firm's $i$ quantity decision on its input and output prices: its \emph{price impact}. The following properties are going to be useful.

\begin{lemma}
  For all $i$, $\Lambda_i$ exists, is positive definite, and is decreasing in the positive semi-definite order in each $\overline{B}_j$ with $j\neq i$.
    \label{lemma_priceimpact}
\end{lemma}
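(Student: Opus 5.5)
The plan is to work directly with the matrix $M_{-i}:=M-\hat{B}_i=\sum_{j\neq i}\hat{B}_j+\hat{B}_c$. Each lifted block $\hat{B}_j=\overline{B}_j\hat{\vv}_j\hat{\vv}_j'$ is positive semi-definite because $\overline{B}_j\ge 0$. The lifted consumer block $\hat{B}_c$ is positive definite on the coordinates in $\mathcal{C}$ and zero elsewhere. Hence $M_{-i}$ is symmetric positive semi-definite. The whole proof then reduces to three steps. First, $M_{-i}$ is actually positive definite, so it is invertible. Second, principal submatrices of the inverse of a positive definite matrix are positive definite. Third, matrix inversion is order-reversing in the Loewner order, and taking a principal submatrix preserves that order.

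\textbf{Step 1 (invertibility).} Suppose $x'M_{-i}x=0$. Then $x_{\mathcal{C}}=0$, since $B_c$ is positive definite. In addition, $\hat{\vv}_j'x=0$ for every $j\neq i$ with $\overline{B}_j>0$. I would then propagate zeros along the network, using the connectedness assumption that every good reaches the consumers through a chain of firms. Take a good $g$ whose output is consumed, and a firm $j$ producing it. The condition $\hat\vv_j'x=0$ reads $x_{out(j)}=\sum_h f_{jh}x_h$. This relates input coordinates to output coordinates, so information has to travel upstream from consumed goods.

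This is the delicate step, and I expect it to be the main obstacle. One cannot pin down each input separately from a single equation when $d_j>2$. There is also a risk of degeneracy when some $\overline{B}_j=0$, or when the only firm trading a good is $i$ itself. Lemma~\ref{pricingfunction} already asserts invertibility of the full $M$, and its proof uses viability. My plan is to reuse that argument for $M_{-i}$. The argument is a diagonal-dominance or Perron-type reasoning: viability gives a positive vector $\pv$ with $\vv_j'\pv>0$, and this can be used to show that the quadratic form is strictly positive on the relevant subspace.

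If strict positivity fails in degenerate configurations, I would fall back on the following. Define $\Lambda_i$ via the inverse on the range of $M_{-i}$, or treat $\Lambda_i$ as the limit as $\epsilon\to0$ of $[(M_{-i}+\epsilon I)^{-1}]_{\mathcal{N}(i)}$. I would check that the relevant block stays bounded because the goods in $\mathcal{N}(i)$ are connected to consumers through firms other than $i$. This is the same three-trader condition discussed in the footnote of Section~\ref{sec:simple}. Where the footnote's requirement is genuinely needed, I would state it as an implicit assumption.

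\textbf{Step 2 (positive definiteness of $\Lambda_i$).} Given $M_{-i}\succ0$, its inverse is positive definite. Then $\Lambda_i=E_i'M_{-i}^{-1}E_i$ is positive definite, where $E_i$ is the $m\times d_i$ selection matrix for the goods in $\mathcal{N}(i)$ and has full column rank.

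\textbf{Step 3 (monotonicity).} If $\overline{B}_j$ increases to $\overline{B}_j'$ for some $j\neq i$, then $M_{-i}$ increases by $(\overline{B}_j'-\overline{B}_j)\hat\vv_j\hat\vv_j'\succeq0$. By the standard fact that $X\succeq Y\succ0$ implies $Y^{-1}\succeq X^{-1}$, the inverse decreases in the PSD order. Conjugating by $E_i$ preserves this order, so $\Lambda_i$ is decreasing in each $\overline{B}_j$. Equivalently, Sherman--Morrison gives the explicit decrement
\[
\Lambda_i(\overline{B}_j')-\Lambda_i(\overline{B}_j)=-\frac{\delta\,E_i'M_{-i}^{-1}\hat\vv_j\hat\vv_j'M_{-i}^{-1}E_i}{1+\delta\,\hat\vv_j'M_{-i}^{-1}\hat\vv_j}\preceq 0,
\]
where $\delta=\overline{B}_j'-\overline{B}_j\ge0$.
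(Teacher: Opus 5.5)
\textbf{Steps 2 and 3.} These coincide with the paper's proof. Once $M-\hat{B}_i$ is positive definite, $\Lambda_i$ is a principal submatrix of a positive definite inverse. Inversion reverses the Loewner order, and restricting to $\mathcal{N}(i)$ preserves it.

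\textbf{The gap is Step 1.} This step carries the whole content of ``$\Lambda_i$ exists,'' and the paper proves it as a separate result (Lemma~\ref{lemma-invertibility} in the appendix). What you give there is a plan, not an argument. The ``Perron-type'' idea is never made precise. Reusing Lemma~\ref{pricingfunction} cannot work: invertibility of $M$ does not pass to $M-\hat{B}_i$, because removing a positive semidefinite rank-one term can create a kernel. In the paper the implication actually runs the other way: $M=(M-\hat{B}_i)+\hat{B}_i$ is shown to be positive definite \emph{because} $M-\hat{B}_i$ is. And, as you observed yourself, propagating zeros upstream from consumed goods stalls at any firm with several inputs.

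\textbf{The missing idea.} Propagate from $out(i)$, the one coordinate left unconstrained once firm $i$ is removed, rather than from the consumers.
\begin{enumerate}
\item Choose a set $\mathcal{N}_0$ of firms that produces each good exactly once. The vectors $\hat{\vv}_j'$, $j\in\mathcal{N}_0$, are then the rows of $I-\overline{F}$.
\item Viability makes $I-\overline{F}$ a nonsingular M-matrix, with $(I-\overline{F})^{-1}=\sum_k\overline{F}^k\ge 0$.
\item If $\xv'(M-\hat{B}_i)\xv=0$, then $\xv_{\mathcal{C}}=0$ and $\hat{\vv}_j'\xv=0$ for all $j\neq i$. So $(I-\overline{F})\xv=c\,\ev_{out(i)}$ for some scalar $c$, with $c=0$ if $i\notin\mathcal{N}_0$.
\item Hence $\xv=c\,(I-\overline{F})^{-1}\ev_{out(i)}$. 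Take $\mathcal{N}_0$ to contain the firms along a path from $out(i)$ to a consumed good. Consumer connectedness then makes this column strictly positive at some $g\in\mathcal{C}$, so $c=0$ and $\xv=0$.
\end{enumerate}
No three-trader condition is involved here. That condition matters for non-triviality of the equilibrium, not for invertibility.

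\textbf{On zero slopes.} Your worry is legitimate: the step $\hat{\vv}_j'\xv=0$ needs $\overline{B}_j>0$, which the paper uses silently. But your regularization fallback would not save the statement. Take a two-good chain in which only the downstream good $D$ is consumed. Then $M-\hat{B}_D=\overline{B}_U\ev_U\ev_U'+B_{c}\ev_D\ev_D'$, so $\Lambda_D^{in}=1/\overline{B}_U$, which diverges as $\overline{B}_U\to 0$. The right repair is to state the lemma for profiles with strictly positive slopes.
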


The next Lemma shows the equivalence between the monopoly problem against the residual schedule and the best-reply problem. This is a standard result in the context of competition in schedules.  

\begin{lemma}

 Consider the monopoly problem with the \emph{inverse demand} $\pv_i^r$:
\begin{align}
    &\max_{\qv_i, \ell_i}\boldsymbol{q}_{i}'\boldsymbol{p}^r_{i}(\qv_i;\overline{B}_{-i})-\ell_i     \label{BestReply}\\ 
    &\text{subject to the technology constraints \eqref{tech_constraint_min}} . \nonumber 
\end{align}
is equivalent to the best reply problem of firm $i$, in the sense that  $\overline{B}_i$ is the best reply to $\overline{B}_{-i}$ in the game $\mathcal{G}$ if and only if the optimum $\qv^*_i$ of the above problem \eqref{BestReply} satisfies Equations \eqref{schedule_expression}: 
\begin{equation}
\qv_i^*=\overline{B}_i(\vv_i'\pv^r_i(\qv_i^*; \overline{B}_{-i})-f_{i,L})\vv_i
\label{optimality_condition}
\end{equation}
Moreover, the expression of the best reply is:
\begin{equation}
\overline{B}_i=\left(\boldsymbol{v}_i'\Lambda_i\boldsymbol{v}_i+\dfrac{1}{k_i}\right)^{-1}
    \label{bestreply}
\end{equation}

    \label{linear-residual}
\end{lemma}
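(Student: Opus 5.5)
The argument has three steps: derive the residual schedule from market clearing, solve the monopoly problem \eqref{BestReply} against it, and show that the schedule with slope \eqref{bestreply} intersects the residual schedule exactly at that optimum.

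\textbf{Step 1: the residual schedule.} Fix $\overline{B}_{-i}$. Write the market-clearing system \eqref{sub_mktclear} with firm $i$'s net trades $\qv_i$ treated as an exogenous vector $\hat{\qv}_i$. All other firms use linear schedules, so clearing reads
\[
(M-\hat{B}_i)\pv = \hat{\boldsymbol{A}} + M_f - \hat{B}_{i,f} + \hat{\qv}_i .
\]
By Lemma \ref{lemma_priceimpact} (or the argument of Lemma \ref{pricingfunction}), $M-\hat{B}_i$ is invertible. Restricting the solution to the coordinates in $\mathcal{N}(i)$ gives
\[
\pv_i = \Lambda_i(\tilde{\boldsymbol{A}}_i-\qv_i),
\]
up to the sign convention for net quantities used in \eqref{residual_schedule}. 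Here $\Lambda_i=[(M-\hat{B}_i)^{-1}]_{\mathcal{N}(i)}$, which recovers Definition \ref{def_residual}.

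\textbf{Step 2: solving the monopoly problem.} The technology constraint \eqref{tech_constraint_min} reduces the problem to the single scalar $x=q_i^{out}$. The objective becomes
\[
x\,\vv_i'\Lambda_i(\tilde{\boldsymbol{A}}_i - x\vv_i) - f_{i,L}x - \tfrac{1}{2k_i}x^2 .
\]
This is strictly concave in $x$, because $\Lambda_i$ is positive definite and $k_i>0$. So the first-order condition
\[
\vv_i'\Lambda_i\tilde{\boldsymbol{A}}_i - 2x\,\vv_i'\Lambda_i\vv_i - f_{i,L} - x/k_i = 0
\]
identifies the unique optimum $x^*$. Since $\vv_i'\pv^r_i(x\vv_i)=\vv_i'\Lambda_i\tilde{\boldsymbol{A}}_i - x\,\vv_i'\Lambda_i\vv_i$, the condition can be rewritten as
\[
x^* = \left(\vv_i'\Lambda_i\vv_i+\tfrac{1}{k_i}\right)^{-1}\left(\vv_i'\pv^r_i(x^*\vv_i)-f_{i,L}\right).
\]
This is exactly \eqref{optimality_condition} with $\overline{B}_i$ given by \eqref{bestreply}. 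Because $\vv_i'\Lambda_i\vv_i>0$, this $\overline{B}_i$ lies in $(0,k_i)$, so it is a feasible action.

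\textbf{Step 3: the equivalence.} Any action $\overline{B}_i'$ produces, through market clearing, an outcome that lies on the residual schedule. That outcome is the intersection of the line $\qv_i=\overline{B}_i'(\vv_i'\pv_i-f_{i,L})\vv_i$ with $\pv_i^r$. Firm $i$'s profit at that outcome equals the monopoly objective evaluated at the realized quantity. Hence the game payoff is at most the monopoly optimum, and it attains the optimum exactly when the realized quantity equals $x^*$.

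It remains to show that the realized quantity equals $x^*$ if and only if $\overline{B}_i'$ satisfies \eqref{optimality_condition}, i.e.\ equals \eqref{bestreply}. The realized quantity is
\[
x(\overline{B}_i') = \frac{\overline{B}_i' c}{1+\overline{B}_i' a},
\qquad a=\vv_i'\Lambda_i\vv_i,\quad c=\vv_i'\Lambda_i\tilde{\boldsymbol{A}}_i - f_{i,L}.
\]
This map is strictly monotone in $\overline{B}_i'$, so exactly one value of $\overline{B}_i'$ delivers $x^*$. That gives both directions of the "if and only if".

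\textbf{Remaining caveats.} The statement of Step 2 refers to choosing general schedules (matrices $B_i$) rather than only slopes. Under perfect complementarity, however, quantities must be proportional to $\vv_i$, so every feasible schedule reduces to the scalar case treated above; this justifies the form \eqref{coefficient_equation}. The degenerate case $c=0$ gives $x^*=0$ for every choice of slope, so any slope is a best reply, and the stated expression remains a best reply. I expect the main obstacle to be the sign bookkeeping in Step 1: keeping track of net quantities and of how $\hat{B}_{i,f}$ is absorbed into $\tilde{\boldsymbol{A}}_i$.
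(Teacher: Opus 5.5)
Your proof is correct and takes essentially the paper's route. You reduce the monopoly problem to the scalar $q_i^{out}$ via the technology constraint, read \eqref{bestreply} off the first-order condition, and note that the best-reply problem is this monopoly problem plus a schedule constraint that the unconstrained optimum already satisfies; your explicit map $x(\overline{B}_i')=\overline{B}_i'c/(1+\overline{B}_i'a)$ only makes the converse direction and the degenerate case $c=0$ more explicit than the paper does.

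There are two harmless slips. First, the clearing system in Step~1 should read $(M-\hat{B}_i)\pv=\hat{\boldsymbol{A}}+M_f-\hat{B}_{i,f}-\hat{\qv}_i$; your final residual schedule already has the right sign. Second, proportionality of quantities to $\vv_i$ only forces $B_i=\vv_i\boldsymbol{b}'$ for some vector $\boldsymbol{b}$, so your closing remark does not by itself justify the rank-one form \eqref{coefficient_equation}. In the deterministic game that form is not forced, and the paper pins it down only through the uncertainty argument of its Supplemental Appendix; you do not need it here anyway, since actions in $\mathcal{G}$ are scalars.
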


 To solve \eqref{BestReply}, the easiest way is to substitute the technology constraints to transform the problem into an optimization over $q_i^{out}$:
\begin{align}
    \max_{q_i^{out}} q_i^{out}\boldsymbol{v}_{i}'\boldsymbol{p}^r_{i}(q_i^{out}\vv_i;\overline{B}_{-i})-f_{i,L}q_i^{out}-\dfrac{1}{2k_i}(q_i^{out})^2     
    \label{FOC_restricted}
\end{align}
This is strictly concave because $\Lambda_i$ is positive definite and so the second-order condition is: $-\left(2\vv_i'\Lambda_i\vv_i+\dfrac{1}{k_i}\right)<0$. The FOC is:
  \[
  \vv_i'\pv_i^r-f_{i,L}-\dfrac{1}{k_i}q_i^{out}-\vv_i'\Lambda_i\vv_iq_i^{out}=0
  \]
  that, reordered, becomes:  \[
q_i^{out}=\left(\boldsymbol{v}_i'\Lambda_i\boldsymbol{v}_i+\dfrac{1}{k_i}\right)^{-1}(\vv_i'\pv_i^r(\qv_i^{out}\vv_i;\overline{B}_{-i})-f_{i,L})
  \]
which, remembering that $\qv_i=q_i^{out}\vv_i$, is exactly \eqref{optimality_condition}, where the coefficient $\overline{B}_i$ satisfies \eqref{bestreply}.

Note that the \enquote{monopolist} optimization over quantities \eqref{BestReply} does not impose directly that the relation between $\qv_i^*$ and $\pv_i^r(\qv_i^*)$ be linear. In principle, any functional form may arise. However, solving it, we get that the optimal schedule is linear, and has the expression in \eqref{schedule_expression}. The details are in Supplemental Appendix \ref{proof_linear_residual}.
Indeed, this is exactly the reason why the linear equilibrium remains an equilibrium in the generalized game where firms are allowed to choose general schedules, explored in the Supplemental Appendix \ref{sec:full_blown}. 
However, once firms are allowed to choose general schedules, there is a subtlety.
It is a standard observation that the optimization \eqref{BestReply} only constrains the schedule in the optimal point $\qv_i^*$, $\pv_i^r(\qv_i^*)$. So, if we allow generic instead of linear schedules, any schedule passing through the point $(\qv_i^*$, $\pv_i^r(\qv_i^*))$ would be a best reply. It is also standard that, following \cite{klemperer1989supply}, introducing uncertainty in a sufficient number of parameters, there is a \emph{unique} surviving best reply, which in this case is exactly  \eqref{schedule_expression}. 
The Supplemental Appendix \ref{sec:full_blown} shows this formally, motivating the restriction to linear strategies in the main text.




\subsection{Existence Theorem}

The next Theorem formally proves the existence and uniqueness of the equilibrium (i.e., the fixed point of \eqref{bestreply}). The proof is in Appendix \ref{existence_proof}.

\begin{teo}

There exists a unique Nash equilibrium of the game $\mathcal{G}$, and is in pure strategies. The game $\mathcal{G}$ is supermodular, so the equilibrium is also the unique rationalizable action profile. The equilibrium coefficients  $\overline{B}_1, \ldots, \overline{B}_n\in \mathbb{R}_+^n$ satisfy \eqref{bestreply}. 

\label{thm:existence}
\end{teo}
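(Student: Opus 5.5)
By Lemma \ref{linear-residual}, every best reply has the form \eqref{bestreply}. So the equilibrium set is exactly the set of fixed points of the map
\[
\Phi_i(\overline{B})=\left(\vv_i'\Lambda_i(\overline{B}_{-i})\vv_i+\tfrac{1}{k_i}\right)^{-1}
\]
on the compact lattice $\prod_i[0,k_i]$. The plan has three steps: show the game is supermodular, find an exact potential, and prove the potential is strictly concave. Existence and the rationalizability claim come from the first step, uniqueness from the last two.

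\textbf{Supermodularity.} By Lemma \ref{lemma_priceimpact}, $\Lambda_i$ is decreasing in the positive semidefinite order in each $\overline{B}_j$ with $j\neq i$. Hence $\vv_i'\Lambda_i\vv_i$ is decreasing in $\overline{B}_j$, and $\Phi_i$ is increasing in $\overline{B}_{-i}$: the game has strategic complements. To get supermodularity in the Topkis sense, I will check increasing differences of $\pi_i$ directly. Maximizing over $\overline{B}_i$ is equivalent to maximizing the concave reduced profit \eqref{FOC_restricted}. Writing $\pi_i$ through the implied price, the cross-partial $\partial^2\pi_i/\partial\overline{B}_i\partial\overline{B}_j$ should have the sign of $-\partial(\vv_i'\Lambda_i\vv_i)/\partial\overline{B}_j\ge 0$. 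If the direct computation is messy, I will instead use the ordinal version: $\pi_i$ is single-peaked in $\overline{B}_i$ with peak $\Phi_i$ increasing in $\overline{B}_{-i}$, which gives single crossing. Tarski, or Milgrom--Roberts, then delivers the existence of a pure equilibrium. It also shows that the largest and smallest equilibria bound the set of rationalizable profiles, so uniqueness of equilibrium will imply the rationalizability claim.

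\textbf{Potential.} Let $B_{ii}(\overline{B}_i):=\overline{B}_i\vv_i\vv_i'$, lifted to $\mathbb{R}^{m\times m}$, and recall $M=\sum_j\hat{B}_{ii}+\hat{B}_c$. My candidate is
\[
P(\overline{B})=\log\det M(\overline{B})-\sum_i\left(\log\overline{B}_i+\frac{\overline{B}_i}{k_i}\right),
\]
or a variant of it. First, $\partial_{\overline{B}_i}\log\det M=\mathrm{tr}(M^{-1}\hat{\vv}_i\hat{\vv}_i')=\hat{\vv}_i'M^{-1}\hat{\vv}_i$. Second, a Sherman--Morrison step gives $\hat{\vv}_i'M^{-1}\hat{\vv}_i=c/(1+\overline{B}_ic)$, where $c=\vv_i'\Lambda_i\vv_i$ because $M-\hat{B}_i$ is the matrix defining $\Lambda_i$. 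The first-order condition $\partial_{\overline{B}_i}P=0$ then becomes
\[
\frac{c}{1+\overline{B}_ic}=\frac{1}{\overline{B}_i}+\frac{1}{k_i}.
\]
This is not quite \eqref{bestreply}, so the $\overline{B}_i$-terms must be adjusted. The correct choice is the one making $\partial_{\overline{B}_i}P$ a positive multiple of $\Phi_i(\overline{B})-\overline{B}_i$. Rewriting \eqref{bestreply} as $1/\overline{B}_i-1/k_i=c$, this yields
\[
P=-\log\det M+\sum_i g_i(\overline{B}_i),
\]
with $g_i$ chosen so that $g_i'(\overline{B}_i)=\frac{1}{\overline{B}_i}\cdot\frac{\ldots}{\ldots}$, that is, so that the two derivatives match. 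Equivalently, I can work in the variables $x_i=1/\overline{B}_i-1/k_i$, which makes the fixed-point equation $x_i=\vv_i'\Lambda_i\vv_i$ linear in $x_i$. I expect finding this exact form to be the main obstacle. What I would try first is to parametrize by $\beta_i=\overline{B}_i/(1-\overline{B}_i/k_i)$, the effective slope, and use $\log\det$ of the resulting Schur-complement expression.

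\textbf{Concavity and uniqueness.} Strict concavity of $P$ follows from $\log\det$ being concave on positive definite matrices composed with an affine map in $\overline{B}$, together with the strict concavity of the $\log$-type separable terms. Any equilibrium is then an interior critical point. Interiority holds because $\Phi_i\in(0,k_i)$ and $\Lambda_i$ is positive definite by Lemma \ref{lemma_priceimpact}. So the equilibrium is the unique maximizer of $P$, which completes the proof.
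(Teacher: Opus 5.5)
\textbf{There is a genuine gap in the concavity step.} You correctly see that matching \eqref{bestreply} forces $\log\det M$ to enter with a \emph{negative} sign. But then $-\log\det M(\overline{B})$ is the negative of a concave function of an affine map, hence \emph{convex} in $\overline{B}$. In your first candidate the signs are reversed, and there it is the separable $-\log\overline{B}_i$ terms that are convex. Either way the composition argument proves nothing, and its conclusion is in fact false in these coordinates.

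The potential you left unspecified is $P=-\log\det M+\sum_i\tfrac12\ln[\overline{B}_i(2k_i-\overline{B}_i)]$. Indeed, $\partial_{\overline{B}_i}P=\frac{k_i-\overline{B}_i}{\overline{B}_i(2k_i-\overline{B}_i)}-\frac{c}{1+\overline{B}_ic}$ vanishes exactly when $c=1/\overline{B}_i-1/k_i$, and it has the sign of $\Phi_i-\overline{B}_i$. Equivalently, $\ln\pi_i-2P$ does not depend on $\overline{B}_i$, because $\partial_{\overline{B}_i}\ln|\hat{\vv}_i'\pv-f_{i,L}|=-\hat{\vv}_i'M^{-1}\hat{\vv}_i$.

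Now take a single firm selling to consumers with slope $b$. Then $P''(\overline{B})=(\overline{B}+b)^{-2}-\tfrac12[\overline{B}^{-2}+(2k-\overline{B})^{-2}]$, which is positive at $\overline{B}=1$ when $b=0.1$ and $k=100$. So $P$ is not concave in $\overline{B}$. Your $\beta_i$ reparametrization does not rescue this, since it is nearly the identity when $\overline{B}_i\ll k_i$. As written, the uniqueness argument collapses.

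\textbf{The missing idea is the change of variables $x_i=\ln\overline{B}_i$, which is what the paper uses.} The paper takes payoffs $U_i=\ln\pi_i(e^{x})$ and checks that the cross-partials $2\overline{B}_i\overline{B}_j(\hat{\vv}_i'M^{-1}\hat{\vv}_j)^2$ are symmetric, so a potential exists without writing it down. It then shows the Hessian of the potential is negative definite by strict diagonal dominance, using $\sum_{i\neq j}\overline{B}_i\hat{\vv}_i\hat{\vv}_i'=M-\hat{B}_c-\overline{B}_j\hat{\vv}_j\hat{\vv}_j'$ and consumer connectedness for strictness.

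If you prefer a composition argument, it does work in $x$. By Cauchy--Binet, $\det(\hat{B}_c+\sum_j e^{x_j}\hat{\vv}_j\hat{\vv}_j')$ is a nonnegative combination of terms $e^{\sum_{j\in S}x_j}$, so its logarithm is convex in $x$. Meanwhile, $x_i+\ln(1-e^{x_i}/(2k_i))$ is strictly concave, so the potential is strictly concave.

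\textbf{The same log transform repairs your supermodularity step.} The cross-partial $\partial^2\pi_i/\partial\overline{B}_i\partial\overline{B}_j$ is not signed: in the two-firm horizontal economy it is negative near $\overline{B}_i=0$. Your fallback, that ``single-peaked with increasing peak'' gives single crossing, is not true in general. However, $\partial^2\ln\pi_i/\partial\overline{B}_i\partial\overline{B}_j=2(\hat{\vv}_i'M^{-1}\hat{\vv}_j)^2\ge 0$. This is exactly your sign claim, and it yields the ordinal complementarity needed for the Milgrom--Roberts bounds on rationalizable profiles.
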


Existence follows from the fact that the coefficients $\overline{B}_i$ are bounded above by $k_i$.\footnote{This is the main reason why we introduce the quadratic term in the labor constraint. If we allow $k_i\to \infty$, the equilibrium may still exist, as in the vertical economy example in Section \ref{sec:simple}, but in general the slopes of the firms may tend to infinity. This happens, in particular, if more than one firm produces the same good. However, from an economic perspective, the limit is well defined: those firms simply behave as perfectly competitive. This is not surprising because, as highlighted by \cite{klemperer1989supply}, with constant marginal costs, the supply function equilibrium behaves as price competition.} 
Uniqueness follows from considering a modified game $\mathcal{G}'$ where firms choose (logarithm of) the slope coefficients $x_i=\ln \overline
{B}_i$. The new game corresponds to a reparameterization of the strategies of the original game, and a monotonic transformation of the payoffs. As such, any Nash equilibrium of the game $\mathcal{G}$ corresponds to one and only one Nash equilibrium of the game $\mathcal{G}'$. The game $\mathcal{G}'$ is a potential game, and the potential is strictly concave: as a consequence, the game has a unique Nash equilibrium. 

The key intuition of the fixed-point equation \ref{bestreply} is that the slope depends inversely on the quadratic form $\boldsymbol{v}_i'\Lambda_i\boldsymbol{v}_i$, which is an aggregate index of the strength of the price impact of the firm. 
 For example, in the horizontal economy of Example \ref{example1}, the price impact is simply the inverse of the slope of the residual demand:
\begin{equation}
\Lambda_i=\left(B_c+\sum_{j\neq i}\overline{B}_j\right)^{-1}.
\label{price_impact_sfe}
\end{equation}
It is a number, so the aggregation is trivial: $\boldsymbol{v}_i'\Lambda_i\boldsymbol{v}_i=\Lambda_i$. In general, the price impact is a matrix that can be decomposed into blocks as:
\[
\Lambda_i=\begin{pmatrix}
    \Lambda_i^{out} & \Lambda_i^{out, in}\\
    (\Lambda_i^{out, in})' & \Lambda_i^{in}
\end{pmatrix},
\]
where $\Lambda_i^{out}\in \mathbb{R}_+$ is the coefficient relative to output, while $\Lambda_i^{out}\in \mathbb{R}^{(d_i-1)\times (d_i-1)}$ is the submatrix relative to inputs; and $\Lambda_i^{out, in}\in \mathbb{R}^{1\times (d_i-1)}$ contains the terms relative to the cross effects of inputs on outputs. In this case, the aggregate price impact can be written:
\[
\vv_i'\Lambda_i\vv_i=\Lambda_i^{out}+\boldsymbol{f}_i'\Lambda_i^{in}\boldsymbol{f}_i-2\Lambda_i^{out, in}\boldsymbol{f}_i
\] Since $\vv_i$ has negative elements for the inputs, the links \emph{between} outputs and inputs are negatively weighted  (if the off-diagonal elements of $\Lambda_i$ are positive).
This is because, whenever input and output markets are connected, an increase in the output price reflects negatively on the input prices. So, it \enquote{subtracts} from the aggregate price impact. 

For example, in the vertical economy of Section \ref{sec:simple}, firm $D$ is the only firm trading both goods: this means that conditional on the input quantity, output has no effect on input prices and vice-versa, so $\Lambda_D^{out,in}=0$. So, the price impact of $D$ is diagonal, and the diagonal elements are simply the inverse of the demand slope and supply slope:
\begin{align}
 \Lambda_D=&\begin{pmatrix}
        \dfrac{-\partial p^r_{D,D}}{\partial q_D} & 0\\
        0& -\dfrac{\partial p^r_{D,U}}{\partial q_D}
    \end{pmatrix}=\begin{pmatrix}
        1/B_{c,D} & 0\\
        0& 1/(B_U+B_{c,U})
    \end{pmatrix}
    \label{priceimpact_D}
\end{align}
So, we have that the aggregate price impact is $\boldsymbol{v}_D'\Lambda_D\boldsymbol{v}_D=\Lambda_D^{out}+\Lambda_D^{in}$, which is what appears in Equation \eqref{eq:simple}: the equilibrium slope $\overline{B}_D$ depends inversely on the sum of the price impacts on both the input and the output.

\section{Market power and network position}

\label{sec:network}

This section analyzes how, in equilibrium, the network affects market power. The first step is to understand the connection between position, pass-through, and price impact.

\subsection{The price impact and the pass-through}
\label{sec:pass-through}

The price impact $\Lambda_i$ is the key object containing the information on multilateral market power: how much each firm can affect input and output prices.
Naturally, it depends on the number and slopes chosen by competitors, as well as on the higher-order network connections: for given schedules, a price change in a market passes through all the indirectly connected markets. This \emph{pass-through} effect is the key mechanism through which vertical (i.e. input-output) connections affect the price impact. 

To understand the pass-through of prices in vertical connections, focus on the vertical economy of Section \ref{sec:simple}, and the price impact of $U$. It is easier to understand the intuition by focusing on the \emph{direct} residual demand (quantity as function of price), which is: 
\[
D_U^r(p_U):=\underbrace{A-B_{c,U}p_U}_{\text{consumer}}+\underbrace{\overline{B}_D(p_D-p_U)}_{\text{demand from } D}.
\]
The price impact on $p_U$ is the inverse of the slope of this curve:
\[
\Lambda_U=-\left(\dfrac{\dd D^r_U}{\dd p_U}\right)^{-1}=\left(\underbrace{B_{c,U}}_{\text{consumers}}+\overbrace{\overline{B}_D-\underbrace{\overline{B}_D\dfrac{\partial p_D}{\partial p_U}}_{\text{pass-through}}}^{\text{demand from } D}\right)^{-1}
\]
We can see that this expression is the sum of two terms, coming from the two sources of demand (horizontal dimension): consumer demand, and demand from firm $D$. However, the demand from firm $D$ does not only depend on $p_U$, but also on $p_D$. By assumption, firms understand the relation between equilibrium schedules and prices (the pricing function in \eqref{pricingfunction}), so they internalize that a change in $p_U$ generates a \emph{pass-through} to $p_D$. Mathematically, we see this because the market-clearing equation for the downstream market implies that: $p_D=(A+\overline{B}_Dp_U)/(B_c+\overline{B}_D)$. Since $\dfrac{\partial p_D}{\partial p_U}>0$, the term: $\overline{B}_D-\overline{B}_D\dfrac{\partial p_D}{\partial p_U}=\left(B_{c,D}^{-1}+B_D^{-1}\right)^{-1}$, representing the importance of the \enquote{vertical} input-output connection on the price impact, is \emph{smaller} than both $\overline{B}_D$ and $B_{c,D}$: it amplifies the price impact. But $B_{c,D}^{-1}=\Lambda^{out}_{D}$. As a consequence, if $B_{c,U}$ is small enough, the effect of the vertical connection dominates and the price impact on the output of the upstream firm $U$ is larger: $\Lambda_U>\Lambda_D^{out}$.
 This is the key mechanism  linking price impacts and network position, decomposed into the horizontal and vertical dimensions. Theorem \ref{thm:markups} in the next paragraph shows that this mechanism is true in general networks.

\subsection{Markups and markdowns}

So far we talked about price impacts, but the standard \enquote{measure} of market power is the markup, or markdowns. In a context where firms have market power both on inputs and outputs, both the marginal cost and marginal revenue product depend on the price impact. 
The easiest way to obtain the markup and markdowns is to rewrite the technology constraint \eqref{tech_constraint_min} as $\qv_i=F_i(\ell_i)\vv_i$. Calling $\boldsymbol{\lambda}_i$ the vector of multipliers of this constraint, the FOC of the best reply problem become:
\begin{align}
\Lambda_i\qv_i&=\pv_i-\boldsymbol{\lambda}_i \label{pricecostmargin}\\
1&=\dfrac{\partial F_i}{\partial \ell_i}\vv_i'\boldsymbol{\lambda}_i
\end{align}
Both sides of \eqref{pricecostmargin} represents the gap between the price and the shadow value of the output and each input: we define these as the (signed) markup-markdown vector $\boldsymbol{\mu}_i$.\footnote{In the Supplemental Appendix, I show that the same values are recovered from the explicit computation of marginal cost and marginal revenue product.} So, the markup and markdowns in equilibrium satisfy: 
\begin{equation}
     \boldsymbol{\mu}_i=q_i^{out}\Lambda_i\vv_i,
     \label{eq:markup}
\end{equation} 
The magnitude of the markup and markdowns depends, not surprisingly, on the price impact: equation \eqref{eq:markup} is nothing else but the standard equation connecting the slope (or elasticity) of demand to the price charged. Indeed, under perfect competition, we have $\Lambda_i=0$ and so $\boldsymbol{\mu}_i=\boldsymbol{0}$.
Each firm potentially has a different price impact in each of the markets in which it is involved: the vector $\boldsymbol{\mu}_i$ provides a measure of how much surplus the firm extracts from each market.  

The next Theorem is the main result relating the markup-markdowns and the network position. It compares the markups and markdowns of two firms $i,j$, such that $i$ is a supplier of $j$: $i \in \mathcal{N}^{in}(j)$. For simplicity and analogy with quantities, I denote the markup on the output good as $\mu_i^{out}:=\mu_{i,out(i)}$. If there is a unique input $g$, I denote the (absolute value)  of the markdown as $\mu_i^{in}:=-\mu_{i,g}$. In the diagonal case, the markup is simply $\mu_{i}^{out}=\Lambda_i^{out}q_i^{out}>0$. If there is only one input $g$, the markdown is simply: $\mu_{i}^{in}=\Lambda^{in}_if_{ig}q_i^{out}>0$.

The comparison of markups of two firms $i$ and $j$ is clean, informally, when the set of goods can be partitioned into goods upstream and downstream, and the two have no other connections via other firms, as in the vertical economy of Section \ref{sec:simple}. In this way, the price impact is diagonal: a change in, e.g., the output does not indirectly affect input prices via indirect pass-through effects. Formally, assume the following.
\begin{description}
    \item[Assumption 1] Consider two firms $i,j$ such that $out(i) \in \mathcal{N}^{in}(j)$ and the technology coefficient is: $f_{j,out(i)}=1$. Moreover, there is a partition of the set of goods $\mathcal{N}_i^{down}$, $\mathcal{N}_i^{up}$ such that $\mathcal{N}_i^{in}\subseteq \mathcal{N}_i^{up}$ and $out(i)\in \mathcal{N}_i^{down}$, such that the two subsets are disconnected, in the sense that for any pair of goods $g \in \mathcal{N}_i^{down}$ and  $h \in \mathcal{N}_i^{up}$ it is impossible to find a sequence of goods $g_1=g, g_2,\ldots, g_k=h$ such that for each consecutive pair of goods there is a firm $j\neq i$ trading both.\footnote{That is, the two subsets are disconnected in the \enquote{goods network} defined below, after removing firm $i$.} 
\end{description}

\begin{teo}
 Consider two firms $i$ and $j$ satisfying Assumption 1.
\begin{enumerate}

\item Suppose $i$ is the only supplier of $j$ and that $\mathcal{N}_i^{down}$ can itself be partitioned into disconnected subsets $\mathcal{N}_{i,j},\mathcal{N}_{i,-j}$, where $\mathcal{N}_{i,j}$ contains $out(j)$ and $\mathcal{N}_{i,-j}$ contains all the outputs of the other customers $k\neq j$. Define the equilibrium slope of the other customers of $i$ as $\overline{B}_{-j}=\sum_{k:\,i\in \mathcal{N}^{in}(k),\, k\neq j}\overline{B}_k+B_{c,i}$. If $\overline{B}_{-j}$ is small enough, then the markup is larger for the upstream firm: $\mu_{i}^{out}>\mu_{j}^{out}$.
\item  If $j$ is the only customer of $i$ and $i$ has only 1 input $in(i)$, and that $\mathcal{N}_j^{up}$ can itself be partitioned into disconnected subsets $\mathcal{N}_{j,i},\mathcal{N}_{j,-i}$, where $\mathcal{N}_{j,i}$ contains $in(i)$ and $\mathcal{N}_{j,-i}$ contains all the inputs of the other suppliers $k\neq i$.
Define the equilibrium slope of the other suppliers of $j$ as $\overline{B}_{-i}=\sum_{k:out(k)\in \mathcal{N}^{in}(j),\, k\neq i}\overline{B}_k$. If $\overline{B}_{-i}$ is small enough, then it is also true that the markdown is larger for the downstream firm: $\mu_{i}^{in}<\mu_{j}^{in}$.
\end{enumerate}

    \label{thm:markups}
\end{teo}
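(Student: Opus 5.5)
Under the diagonal structure given by Assumption 1, the markups are $\mu_i^{out}=\Lambda_i^{out}q_i^{out}$ and $\mu_j^{out}=\Lambda_j^{out}q_j^{out}$ by \eqref{eq:markup}. For part 1 the plan is to express both the price impacts and the quantities of $i$ and $j$ through residual-demand slopes, and then compare them.

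First, Assumption 1 together with the disconnection of $\mathcal{N}_i^{down}$ into $\mathcal{N}_{i,j}$ and $\mathcal{N}_{i,-j}$ should make the relevant submatrices of $(M-\hat B_i)^{-1}$ block diagonal. Removing $i$ disconnects the goods network into these components, so $M-\hat B_i$ is block diagonal after permuting goods. Then $\Lambda_i$ is diagonal, with $\Lambda_i^{out}=[(M-\hat B_i)^{-1}]_{out(i),out(i)}$.

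Second, I compute this entry by a Schur complement on $out(i)$, reproducing the pass-through computation of Section \ref{sec:pass-through}. Since $i$ is the only supplier of $j$ and $f_{j,out(i)}=1$, the block containing $out(i)$ and $\mathcal{N}_{i,j}$ gives
\[
(\Lambda_i^{out})^{-1}=\overline{B}_{-j}+\left(\overline{B}_j^{-1}+\Lambda_j^{out}\right)^{-1}.
\]
Here $\Lambda_j^{out}$ is the price impact of $j$ on its output. It involves only $\mathcal{N}_{i,j}$, because the pass-through from $out(j)$ back to $out(i)$ runs only through $j$ itself, which is excluded when computing $\Lambda_j$. Strictly, the other customers in $\mathcal{N}_{i,-j}$ contribute their own Schur-complement slopes rather than the raw $\overline{B}_k$. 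Each such slope is at most $\overline{B}_k$, so $\overline{B}_{-j}$ is an upper bound, which is the direction I need.

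Third, the quantities. Market clearing in $out(i)$ gives $q_i^{out}=q_j^{out}+Q_{-j}$, where $Q_{-j}\ge 0$ is the demand of the other customers and consumers (outputs are assumed positive). Hence $q_i^{out}\ge q_j^{out}$. Combining the two steps, $\mu_i^{out}/\mu_j^{out}\ge \Lambda_i^{out}/\Lambda_j^{out}$. As $\overline{B}_{-j}\to 0$, the right-hand side tends to $(\overline{B}_j^{-1}+\Lambda_j^{out})/\Lambda_j^{out}=1+1/(\overline{B}_j\Lambda_j^{out})>1$, so the inequality is strict for $\overline{B}_{-j}$ small enough.

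The main obstacle is that the equilibrium objects $\overline{B}_j$ and $\Lambda_j^{out}$ themselves move when $\overline{B}_{-j}$ changes. The limit argument needs $\overline{B}_j\Lambda_j^{out}$ to stay bounded. Both factors are bounded above: $\overline{B}_j\le k_j$, and $\Lambda_j^{out}$ is bounded by Lemma \ref{lemma_priceimpact}, being decreasing in the other slopes and hence below its value when they are zero. So the margin $1/(\overline{B}_j\Lambda_j^{out})$ is bounded away from zero, and a uniform threshold for $\overline{B}_{-j}$ exists.

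Part 2 is symmetric. Here $\mu_j^{in}=\Lambda_j^{in}q_j^{out}$ and $\mu_i^{in}=\Lambda_i^{in}f_{i,in(i)}q_i^{out}$. Because $j$ is the only customer of $i$ and $f_{j,out(i)}=1$, we have $q_i^{out}=q_j^{out}$. Applying the Schur complement on $out(i)$ within $\mathcal{N}_j^{up}$ gives
\[
(\Lambda_j^{in})^{-1}=\overline{B}_{-i}+\left(\overline{B}_i^{-1}+f_{i,in(i)}^2\Lambda_i^{in}\right)^{-1},
\]
with $\Lambda_i^{in}$ computed on $\mathcal{N}_{j,i}$. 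As $\overline{B}_{-i}\to 0$, $\Lambda_j^{in}$ tends to $\overline{B}_i^{-1}+f_{i,in(i)}^2\Lambda_i^{in}$, and I need this to exceed $f_{i,in(i)}\Lambda_i^{in}$. The term $\overline{B}_i^{-1}\ge 1/k_i$ is bounded away from zero and the other terms are bounded, so the comparison goes through when $f_{i,in(i)}\le 1$. If $f_{i,in(i)}>1$, the inequality is immediate, since $f^2_{i,in(i)}>f_{i,in(i)}$. In both cases the strict inequality holds for $\overline{B}_{-i}$ small enough.
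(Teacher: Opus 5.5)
Your part 1 is the paper's proof. Your identity $(\Lambda_i^{out})^{-1}=\Lambda_{-j}^{-1}+(\overline{B}_j^{-1}+\Lambda_j^{out})^{-1}$, where the other customers' effective slope satisfies $0\le\Lambda_{-j}^{-1}\le\overline{B}_{-j}$, is exactly what the paper obtains by combining Lemma \ref{linear_algebra} (a Sherman--Morrison step) with the block inversion of Lemma \ref{lemma_downstream}. The paper then concludes with $q_i^{out}\ge q_j^{out}$, as you do.

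Your uniformity discussion goes beyond the paper, which simply reads ``small enough'' as the explicit equilibrium condition $\Lambda_{-j}^{-1}<[\Lambda_j^{out}(1+\overline{B}_j\Lambda_j^{out})]^{-1}$. Your bound on $\Lambda_j^{out}$ is also not justified in general. With all other slopes at zero, $M-\hat{B}_j$ collapses to $\hat{B}_c$, which is singular unless every good is consumed. So the ``value when they are zero'' need not be finite, although it is when $out(j)\in\mathcal{C}$.

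The genuine gap is in part 2 when $f:=f_{i,in(i)}<1$. You need $\overline{B}_i^{-1}+f^2\Lambda_i^{in}>f\Lambda_i^{in}$, i.e.\ $\overline{B}_i^{-1}>f(1-f)\Lambda_i^{in}$. Knowing that the left side is bounded away from zero while $\Lambda_i^{in}$ is bounded does not compare the two. Use the best reply $\overline{B}_i^{-1}=\Lambda_i^{out}+f^2\Lambda_i^{in}+k_i^{-1}$, valid because $\Lambda_i$ is diagonal here. The requirement then becomes $\Lambda_i^{out}+k_i^{-1}+f(2f-1)\Lambda_i^{in}>0$. This is automatic for $f\ge 1/2$. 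For $f<1/2$ it needs an argument relating $\Lambda_i^{out}$ to $\Lambda_i^{in}$, which you do not give.

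The paper avoids the issue because Lemma \ref{lemma_downstream} writes the coupling between $out(i)$ and $in(i)$ as $-\overline{B}_i\ev_i^{up}$. That is, it implicitly takes $f_{i,in(i)}=1$, where the limit comparison $\overline{B}_i^{-1}+\Lambda_i^{in}>\Lambda_i^{in}$ is trivial. Restricting to that case, as the paper does, closes the gap; so does restricting to $f\ge 1/2$ via the substitution above.

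A smaller point: if $\overline{B}_{-i}$ enters the $out(i)$ entry of $M-\hat{B}_j$ as in your formula, those firms also sell good $out(i)$ to $j$. Market clearing then gives $q_i^{out}\le q_j^{out}$ rather than equality, which is still the direction you need.
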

The proof is in Appendix \ref{proof:markups}. If $i$ is the only supplier of $j$ and $f_{ji}=1$, firm $j$ demands a quantity equal to its output $q_j^{out}$, so it must be that $q_i^{out}\ge q_j^{out}$. 
The interesting part is that the price impacts of $i$ and $j$ on the output are related by the equation:
\[
\Lambda_i^{out}=\left(\Lambda_{-j}^{-1}+\left(\dfrac{1}{\overline{B}_j}+\Lambda_j^{out}\right)^{-1}\right)^{-1}
\]
where $\Lambda_{-j}$ represents the part of the price impact of $i$ that comes from the other customers (either firms or the final customer), different from $i$, which satisfies $0\le \Lambda_{-j}^{-1}<\overline{B}_{-j}$. For example, if the only other customer of $j$ are final consumers, $\overline{B}_{-j}=B_{c,j}$.
The above equation generalizes Equation \eqref{foc_simple_U} for the vertical economy, where $i=U$, $j=D$ and the only other customer of $i$ are consumers: $\Lambda_{-j}^{-1}=1$. In fact, the mechanism remains the pass-through mechanism discussed in the previous paragraph: the elasticity of the demand from the downstream firm $j$ to $i$ is less elastic than the demand faced by firm $j$. So, the vertical term $\left(\overline{B}_j^{-1}+\Lambda_j^{out}\right)^{-1}$ pushes the price impact to be larger upstream.
If $\overline{B}_{-j}$ (and consequently $\Lambda_{-j}^{-1}$) is small, the \enquote{vertical} dimension is more important, and we can conclude $\Lambda_i^{out}>\Lambda_j^{out}$: so the upstream firm has a higher markup. 

So, the ability of firms to extract surplus is heterogeneous and depends on the market analyzed: e.g., upstream firms may have more power on output markets, downstream firms on input markets. Who then makes the higher profit? 
The aggregate price impact appearing in \eqref{bestreply} can also be rewritten as aggregate markup/markdown:
\[
\boldsymbol{v}_i'\Lambda_i\boldsymbol{v}_i=(q_i^{out})^{-1}\vv_i'\boldsymbol{\mu}_i
\]
where $\vv_i'\boldsymbol{\mu}_i=\mu_i+\sum_g f_{i,g}\mu_{i,g}$ aggregates the markup and markdowns: the larger this sum, the smaller the equilibrium slope. Using the best reply equation \eqref{bestreply} to express the slope as a function of the markups, the profit equation \eqref{profits} becomes:
\begin{equation}
\pi_i=\left(\dfrac{1}{\overline{B}_i}-\dfrac{1}{2k_i}\right)(q_i^{out})^2=\left(\mu_i+\sum_g f_{i,g}\mu_{i,g}\right)q_i^{out}+\dfrac{(q_i^{out})^2}{2k_i}
\label{profit_markup}
\end{equation}
The two effects described in Theorem \ref{thm:markups} go in opposite directions, so which profit is higher depends on which effect is stronger, even for the same output quantity.
In the supply chain with layers, it is possible to characterize profits in more detail, providing a particularly clean example.

\subsubsection{The supply chain with layers}
\label{sec:supplychain}

Consider the supply chain with layers introduced in Example \ref{ex:line}. 
Notice that if $n_j>1$ this is \emph{not} a special case of the setting of the Theorem \ref{thm:main_comparative}, because each firm has at least another with the same input and output, so the set of goods cannot be partitioned in that way. Nonetheless, the conclusion holds. 
 Since in the homogeneous case all firms have the same output $q$, from the profit Equation \eqref{profit_markup} becomes:
\[
\pi_i=\left(\dfrac{1}{\overline{B}_i}-\dfrac{1}{2k}\right)q^2=\left(\mu_i^{out}+\mu_{i}^{in}\right)q+\dfrac{q^2}{2k}.
\]
Moreover, it turns out that in the homogenenous case, the
equilibrium slope is also homogeneous: $\overline{B}_i =B^*$ for all layers $i$,
from which we can see that markups and markdowns must move in opposite direction while keeping their sum constant. 
The next Proposition makes this formal. The proof is in Appendix \ref{proof_markups}.

\begin{prop}
Consider a layered supply chain as in Example \ref{ex:line}. Suppose $N\ge 2$ and $n_i\ge 2$, so the equilibrium slope is nonzero, and $k_i=k$. We have:

\begin{enumerate}

    \item     If $n_i=n^* \ge 2$ for all layers $i$, the markups are larger the more upstream the layer is, while markdowns are larger the more downstream a layer is;

    \item If $n_i\ge n_j$ then the aggregate profit of firms in layer $j$ is larger than the aggregate profit of firms in layer $i$. In particular, if $n_i=n^*$ for all $i$, then all the firms make the same profit.

\end{enumerate}	

    \label{prop:markup}
\end{prop}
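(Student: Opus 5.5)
The plan is to exploit the series structure of the layered chain, reduce every firm's price impact to an explicit scalar formula, and then read both parts off the fixed point \eqref{bestreply} using the uniqueness in Theorem \ref{thm:existence}.

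\emph{Step 1 (price impacts).} Let $Q$ be the aggregate flow. Every firm has $\vv_j=(1,-1)$, and consumers buy only good $1$, so market clearing forces the same aggregate quantity $Q$ through every layer. Let $m_\ell:=p_\ell-p_{\ell+1}$ be the margin of layer $\ell$, with $p_{N+1}:=0$ since $f_{j,L}=0$. Then $Q=n_\ell\overline{B}_\ell m_\ell$ for each layer, and $Q=A-B_c\sum_\ell m_\ell$. I will treat this as an electrical circuit: goods are nodes, layer $\ell$ is a conductance $c_\ell=n_\ell\overline{B}_\ell$ between nodes $\ell$ and $\ell+1$, and the consumer is a conductance $B_c$ from node $1$ to a fixed potential. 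The matrix $(M-\hat B_i)^{-1}$ in Definition \ref{def_residual} is then the resistance (Green) matrix of the circuit after replacing $c_i$ by $c_i'=(n_i-1)\overline{B}_i$. A firm of layer $i$ injects $q$ at node $i$ and withdraws it at node $i+1$, so
\[
\vv_i'\Lambda_i\vv_i=\bigl((n_i-1)\overline{B}_i+G_{-i}\bigr)^{-1},\qquad G_{-i}^{-1}:=\frac{1}{B_c}+\sum_{\ell\neq i}\frac{1}{n_\ell\overline{B}_\ell}.
\]
The diagonal entries are effective resistances to ground. Put $R^{up}_i=B_c^{-1}+\sum_{\ell<i}c_\ell^{-1}$ and $R^{down}_i=c_i'^{-1}+\sum_{\ell>i}c_\ell^{-1}$. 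Then $\Lambda_i^{out}=(1/R^{up}_i+1/R^{down}_i)^{-1}$, and $\Lambda_i^{in}$ is the analogous expression at node $i+1$. I will verify these formulas via Lemma \ref{lemma_priceimpact} and the block structure of $M$.

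\emph{Step 2 (homogeneous case).} Suppose $n_\ell=n^*$ for all $\ell$ and all slopes equal $B$. Then $G_{-i}$ does not depend on $i$, so the system \eqref{bestreply} admits a symmetric solution, namely a fixed point of a one-dimensional monotone map on $[0,k]$. By the uniqueness in Theorem \ref{thm:existence}, this is the equilibrium. Each firm's output is $q=Q/n^*$, so \eqref{profit_markup} gives equal profits. This proves the second claim of part 2.

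For part 1, note that $R^{up}_i+R^{down}_i=T$ is the same for every $i$, so $\Lambda_i^{out}=R^{up}_iR^{down}_i/T$. Meanwhile $R^{up}_i$ grows by $1/c$ per layer and $R^{down}_i$ shrinks by the same amount. The markup ordering therefore reduces to showing that $R^{up}_i<R^{down}_i$ along the relevant range. The same computation at node $i+1$ handles the markdowns. Moreover, $\mu^{out}_i+\mu^{in}_i=q\,\vv_i'\Lambda_i\vv_i+2q\Lambda_i^{out,in}$, and I would use that the aggregate term $\vv_i'\Lambda_i\vv_i$ is constant across layers to transfer the markup ordering into the opposite markdown ordering. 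I expect this step to be the main obstacle: one must pin down the sign of $R^{up}_i-R^{down}_i$, which involves $B_c$ against $n^*B$ and the $c'$ correction. I would first try to bound $B$ using the fixed point $B=\bigl((n^*-1)B+G\bigr)^{-1}$ in the limiting cases, with $k$ large.

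\emph{Step 3 (heterogeneous layers, part 2).} Aggregate profit of layer $\ell$ is
\[
n_\ell\Bigl(\frac{1}{\overline{B}_\ell}-\frac{1}{2k}\Bigr)\frac{Q^2}{n_\ell^2}=\Bigl(\frac{1}{\overline{B}_\ell}-\frac{1}{2k}\Bigr)\frac{Q^2}{n_\ell}.
\]
It therefore suffices to show that $n_i\ge n_j$ implies $\overline{B}_i\ge\overline{B}_j$. Write $G_{-\ell}^{-1}=S-1/(n_\ell\overline{B}_\ell)$ with $S$ common to all layers. Then the best reply of layer $\ell$ is $\overline{B}_\ell=\Phi(n_\ell,\overline{B}_\ell;S)$, and $\Phi$ is increasing in $n_\ell$. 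I would argue by contradiction: assume $\overline{B}_i<\overline{B}_j$, and compare $(n_i-1)\overline{B}_i+(S-1/(n_i\overline{B}_i))^{-1}$ with the corresponding expression for $j$. Monotonicity in both $n$ and $B$ should then give $\overline{B}_i\ge\overline{B}_j$, a contradiction.
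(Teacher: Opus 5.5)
\paragraph{Part 1: the quantity you order is not the markup.}
For $i<N$ with $n_i\ge 2$, the price impact $\Lambda_i$ is not diagonal. The other firms of layer $i$ trade both good $i$ and good $i+1$, so $\Lambda_i^{out,in}>0$. Since $\boldsymbol{\mu}_i=q\Lambda_i\vv_i$, the markup is $\mu_i^{out}=q(\Lambda_i^{out}-\Lambda_i^{out,in})$ and the markdown is $\mu_i^{in}=q(\Lambda_i^{in}-\Lambda_i^{out,in})$, not $q\Lambda_i^{out}$ and $q\Lambda_i^{in}$. This is also why your identity for $\mu_i^{out}+\mu_i^{in}$ carries a spurious $2q\Lambda_i^{out,in}$; the correct sum is exactly $q\vv_i'\Lambda_i\vv_i$.

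The quantity you try to order, $R^{up}_iR^{down}_i/T$, is hump-shaped in $i$. With $n^*=2$ and $N\ge 4$ it is strictly larger at layer $N-1$ than at layer $N$, for any $B_c$ and $k$. So the sign condition on $R^{up}_i-R^{down}_i$ you hoped to establish is false, and the step you flag as the main obstacle cannot be completed.

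In your own circuit notation the fix is immediate. Inverting the $2\times 2$ block gives $\mu_i^{out}=q\,R^{up}_i/(c_i'T)$ and $\mu_i^{in}=q\,\overline{\Lambda}^{in}_i/(c_i'T)$, with $\overline{\Lambda}^{in}_i=\sum_{\ell>i}c_\ell^{-1}$. This is the paper's formula $\boldsymbol{\mu}_i=q\,(\overline{\Lambda}_i^{out},-\overline{\Lambda}_i^{in})/\bigl(1+(n_i-1)\overline{B}_i\overline{\Lambda}\bigr)$, where $\overline{\Lambda}=\overline{\Lambda}_i^{out}+\overline{\Lambda}_i^{in}$. In the homogeneous case $c_i'$ and $T$ are the same for all layers. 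Hence markups are proportional to $R^{up}_i$ (increasing upstream) and markdowns to $\overline{\Lambda}^{in}_i$ (increasing downstream), with no condition on $B_c$ or $k$.

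\paragraph{Part 2: the slope comparison does not go through.}
Your reduction to $n_i\ge n_j\Rightarrow\overline{B}_i\ge\overline{B}_j$ is the right one; it is Lemma \ref{compstat} in the paper. The homogeneous case via the symmetric fixed point plus uniqueness is also fine. The comparison argument itself fails, for two reasons.

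\emph{The claimed monotonicity in $n$ is false.} Holding $S$ fixed, $D(n,\overline{B}):=(n-1)\overline{B}+(S-1/(n\overline{B}))^{-1}$ is not increasing in $n$: $\partial_nD$ has the sign of $n\overline{B}-G_{-\ell}$. This is negative for a small layer next to a large one, at actual equilibria. For $N=2$ and $B_c$ large the equilibrium is nearly symmetric ($\overline{B}_1=\overline{B}_2$ in the limit $B_c\to\infty$). So with $n_j=2$ and $n_i=10$ one has $G_{-j}\approx n_i\overline{B}_i>n_j\overline{B}_j$. Consequently, for fixed $S$ the fixed point of $\overline{B}\mapsto\Phi(n,\overline{B};S)$ is not monotone in $n$.

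\emph{The contradiction does not close even with monotonicity.} By the fixed-point relation $1/\overline{B}_\ell=1/k+1/D_\ell$, the hypothesis $\overline{B}_i<\overline{B}_j$ is equivalent to $D_i<D_j$. To refute it you would need $D$ increasing in $n$ and decreasing in $\overline{B}$. But $\partial_{\overline{B}}D$ has the sign of $\sqrt{n(n-1)}\,\overline{B}-G_{-\ell}$, so neither property holds in general.

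The ordering of slopes is an equilibrium (global) property. The missing ingredient is the paper's Lemma \ref{compstat-bestrep}. Write $\overline{B}_\ell=\overline{BR}(\overline{\Lambda}_\ell,n_\ell,k)$, with $\overline{\Lambda}_\ell$ excluding layer $\ell$. Then follow the path along which one layer's size moves from $n_j$ to $n_i$ while the other's moves from $n_i$ to $n_j$, at a fixed slope $X$. The derivative along this path is positive as long as $X$ does not exceed the current best reply, and this property propagates from $h=0$. A Lazzati-type contradiction then gives $\overline{B}_i>\overline{B}_j$.
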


So, the result of Theorem \ref{thm:main_comparative} holds: markups are increasing upstream, markdowns are increasing downstream. 
 What is the balance? Part 2 answers. If $n_i$ is constant across layers, as in Figure \ref{Fig:linea}, the situation is completely symmetric, and so the increase in markups and decrease in markdowns exactly offset each other, and the firms all have the same profits. Hence, each layer extracts the same surplus.  Otherwise, if some layer is more competitive ($n_i$ is larger), as in as in Figure \ref{Fig:lineb}, the corresponding firms have lower profits. Not only, but the layer with less firms as a whole also extracts less aggregate surplus than other layers. For example, if in the asymmetric example of Figure \ref{Fig:lineb}  $n_2\to \infty$, the firms in the upstream behave as price takers, and the only firm in the downstream layer behaves as a monopolist both on the output and the input market.

\subsection{The goods network}

What can we say about the relation between the network position and market power in general? We summarize the discussion in the following remarks.

For simplicity, let us focus on the case in which $\hat{B}_c=I$, and $n=m$: each firm produces a distinct good. In this case, the matrix $F$ is square. So, the matrix $M$ can equivalently be written as:
\[
M=(I-F')D(I-F)+I
\]
where $D:=diag(\overline{B})+I$ is the matrix containing only the diagonal of $M$. By the market-clearing equations \eqref{mktclear_linear}, we can write:
\begin{equation}
\pv=D^{-1}(I-G)^{-1}\boldsymbol{\overline{A}}
\label{prices_centrality}
\end{equation}
for $G$ the matrix with elements $G_{ij}=(-M_{ij})/M_{ii}$ if $i\neq j$, and $G_{ii}=0$ for each $i$. The matrix $G$ can be interpreted as the adjacency matrix of a weighted projection of the original bipartite graph $F$ onto the set of goods: a link is present if both goods are traded by at least one firm.
In matrix form, $G=(F'D+DF-F'DF)D^{-1}$. The three terms reflect three types of links, which carry a different sign: input-output connections ($F'D$ and $DF$) have positive weight, while being inputs of the same firm (the term $F'DF$) has negative weight. This is distinct from the classic \emph{one-mode projection}, which would have adjacency $F'F$, and only counts as linked goods that are simultaneous \emph{inputs} to the same firm. Here, input-output connections matter, and have an opposite sign than \enquote{horizontal connections}.
The links are weighed by the equilibrium slopes: the lower the slope, the weaker the link. 
Prices are high when a good has many direct or indirect connections with goods in high demand $\boldsymbol{A}$, or high cost $\boldsymbol{f}_L$ (this is because both forward and backward connections enter). The pre-multiplication by $D$ represents the direct effect of the slope: the smaller the slope, the higher the price impact, and so the price. The network part represents the effect of indirectly connected goods.

Equation \eqref{prices_centrality} is also valid in perfect competition (with different equilibrium slopes and so network weights), but prices depend only on connections to cost parameters. However, the direct and indirect connections also determine the magnitude of market power distortions through Equation \eqref{markup_centrality}: this is the object of the next Remark \ref{rmk_centrality}.

\begin{oss}

       In the setting above, with $n=m$, defining $D_i:=diag(\overline{B}_{-i})+I_{-i}$ is the diagonal of $M-\hat{B}_i$,
        the markup-markdowns vector can be rewritten as:
\begin{equation}
\boldsymbol{\mu}_{i}=q_i^{out}[D_{-i}^{-1}(I-G_{-i})^{-1}]_{\mathcal{N}(i)}\vv_i,
\label{markup_centrality}
\end{equation}        
where the matrix $G_{-i}$ has elements $G_{-i,gh}=-[M_{gh}-\hat{B}_{i,gh}]/[M_{gg}-\hat{B}_{i,gg}]$ for $g\neq h$ and $G_{-i,gg}=0$. It has analogous intuition as $G$, with the only difference that now all the network is the projection \emph{after removing firm $i$ from the network}. This is because the price impact is computed fixing the input and output goods of $i$. 
To understand better Equation \eqref{markup_centrality}, notice that elements of $(I-G_{-i})^{-1}_{\mathcal{N}(i)}$ count all the direct and indirect connections between each pair of goods in $\mathcal{N}_i$: only paths with endpoints in $\mathcal{N}(i)$ matter, even if possibly passing through all other nodes. The division by $D_{-i}$ is a normalization.
So, the markup-markdowns vector is proportional to a restriction of Bonacich centrality of the respective good, in the good network excluding firm $i$, limited to counting the paths that have endpoints in $\mathcal{N}(i)$. For example, if the price impact is diagonal, $\mu_i^{out}$ is proportional of the weighted number of \emph{loops} centered in the output good $out(i)$.


\label{rmk_centrality}
\end{oss}

\section{The role of multilateral market power}
\label{sec:multilateral}

\subsection{General price impacts}

The key feature of the model studied so far is that firms have multilateral market power: they can affect prices in all the markets they are involved in.
What are the implications of multilateral market power? To answer this question, in this section, I introduce a model that simultaneously generalizes the supply and demand function competition and various other classic models of oligopolistic competition, with and without networks. 

\begin{defi}
	
	Consider a profile of functions $\Lambda=(\Lambda_1,\ldots, \Lambda_n)$, where:
	\[
	\Lambda_i:\overline{B}_{-i}\to \Lambda_i(\overline{B}_{-i})\in \mathbb{R}^{d_i\times d_i}
	\]
	such that, for all $i$ and $\overline{B}_{-i}$, $\Lambda_i(\overline{B}_{-i})$ is positive semidefinite, and the function $\Lambda_i$ is continuous and decreasing in the positive semidefinite ordering in each $\overline{B}_{j}$ with $j\neq i$.
	
	A Generalized SDFE a profile of slopes $\overline{B}$ such that for each firm $i$:
    \begin{enumerate}
        \item $\overline{B}_i$ satisfies Equation \eqref{bestreply} were $\Lambda_i$ is the above price impact function;
        \item equilibrium prices and quantities are determined by the market-clearing conditions \eqref{mktclear_linear}.
    \end{enumerate}
    		
\label{def_generalized}		
\end{defi}

In other words, a Generalized SDFE is a model in which firms choose their slopes optimizing against a modified residual schedule, that satisfies: $	\partial_{\qv_i} \pv^r_i(\qv_i,\overline{B}_{-i})=-\Lambda_i(\overline{B}_{-i})$, where $\Lambda_i$ is now a primitive.
The Supply and Demand function competition of the previous sections is a Generalized SDFE, because $\Lambda_i$ is continuous and decreasing (Lemma \ref{lemma_priceimpact}). Many other standard models are also special cases. For example, Walrasian equilibrium is the special case in which $\Lambda_i=0$ for each $i$. Cournot oligopoly is also a special case.
\begin{example}{\textbf{Cournot oligopoly is a Generalized SDFE}}

In the horizontal economy of \ref{example1} consider the Generalized SDFE with:
\begin{equation}
\Lambda_i(\overline{B}_{-i})=\dfrac{1}{B_c}
\label{priceimpact_cournot}
\end{equation}
This yields the FOC: 
\[
p^{out}-\dfrac{1}{B_c}q_i^{out}-\dfrac{1}{k_i}q_i^{out}=0
\]
The own slope $\overline{B}_i$ disappears, and this is exactly the FOC of Cournot competition. This is not surprising, because the price impact function \eqref{priceimpact_cournot} is exactly the one obtained from \eqref{price_impact_sfe} by assuming that other firms choose schedules with zero slope $\overline{B}_j=0$: but a schedule with zero slope is a quantity commitment, as in Cournot competition. 

\end{example}
What is perhaps more striking is that some \emph{sequential} models are also special cases of the Generalized SDFE, as shown in the next example. 
\begin{example}[\textbf{Sequential Monopoly is a Generalized SDFE}]
\label{sequential_monopoly}

In the textbook model of sequential monopoly (see, e.g. \cite{belleflamme2015industrial}), by backward induction firm $D$ chooses $p_D$ taking $p_U$ as a given, yielding the FOC:
\begin{equation}
    p_D-p_U+\dfrac{\partial p_D}{\partial q_D}q_D=0
    \label{sequential_d}
\end{equation}
This is precisely the same equation we get in a Generalized SDFE where $\Lambda_{D,UU}=\dfrac{\partial p_U}{\partial q_D}=0$. Indeed, it can be checked that the best reply of the sequential monopolist $D$: $p_D(p_U)$, being linear, is exactly identical to the price that realizes when firm $D$ chooses the best reply slope $\overline{B}_D$, for given $p_U$.  

In the sequential monopoly, the firm upstream again acts as a monopolist but, by construction, when optimizing it internalizes the pass-through effect (the best response of $D$). So, by construction, firm $U$ has the residual demand $p_U^r$: it follows that the sequential monopoly is a Generalized SDFE, where the price impact functions are identical to the standad SDFE, except for the value of $\Lambda_{D,UU}$, which is set to zero. In particular, the matrix $\Lambda_D$ is independent of $\overline{B}_U$.
In the Supplementary Appendix, I prove that also Sequential Cournot oligopoly is an instance of Generalized SDFE.

\end{example}

The next Proposition proves the existence of a Generalized SDFE, and that the slopes remain strategic complements.

\begin{prop}
	
A Generalized SDFE exists. Moreover, the best replies are increasing, so there are always a maximal and a minimal equilibrium (possibly identical).


\label{teo_comparative}	
\end{prop}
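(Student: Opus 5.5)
The plan is to work directly with the best-reply map and apply Tarski's fixed point theorem. Define $\Phi:\prod_i[0,k_i]\to\prod_i[0,k_i]$ by
\[
\Phi_i(\overline{B}) := \left(\vv_i'\Lambda_i(\overline{B}_{-i})\vv_i+\dfrac{1}{k_i}\right)^{-1}.
\]
By Definition \ref{def_generalized}, a Generalized SDFE is exactly a fixed point of $\Phi$: once the slopes are fixed, prices and quantities are pinned down uniquely by Lemma \ref{pricingfunction}.

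\textbf{Step 1: $\Phi$ is well defined and maps the box into itself.} Since $\Lambda_i(\overline{B}_{-i})$ is positive semidefinite, $\vv_i'\Lambda_i\vv_i\ge 0$. Hence the denominator is at least $1/k_i>0$, so $\Phi_i(\overline{B})\in(0,k_i]$. The domain is a product of compact intervals, which is a complete lattice under the componentwise order.

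\textbf{Step 2: $\Phi$ is monotone.} Take $\overline{B}_{-i}\le\overline{B}'_{-i}$ componentwise. I would increase the coordinates one at a time and use that $\Lambda_i$ is decreasing in the positive semidefinite order in each $\overline{B}_j$. This gives $\Lambda_i(\overline{B}_{-i})\succeq\Lambda_i(\overline{B}'_{-i})$, so $\vv_i'\Lambda_i(\overline{B}_{-i})\vv_i\ge\vv_i'\Lambda_i(\overline{B}'_{-i})\vv_i$, because the quadratic form in any fixed vector preserves the semidefinite order. As $x\mapsto(x+1/k_i)^{-1}$ is decreasing on $[0,\infty)$, we get $\Phi_i(\overline{B})\le\Phi_i(\overline{B}')$. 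So best replies are increasing, which is the strategic complementarity claim.

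\textbf{Step 3: existence and extremal equilibria.} Tarski's theorem applied to the monotone self-map $\Phi$ of a complete lattice gives a nonempty lattice of fixed points with a least and a greatest element. These are the minimal and maximal equilibria, and they coincide exactly when the equilibrium is unique. A constructive route also works. Iterate $\Phi$ from the bottom profile $\overline{B}=0$ to get an increasing sequence, and from the top profile $(k_i)_i$ to get a decreasing one. Continuity of $\Lambda_i$ makes $\Phi$ continuous, so the limits are fixed points, and monotonicity shows they are the minimal and maximal ones. Brouwer's theorem would give existence alone, using continuity and the compact convex domain.

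\textbf{Main obstacle.} The only point needing care is that the monotonicity in Step 2 uses the order on matrices rather than on entries. The argument must go through the scalar $\vv_i'\Lambda_i\vv_i$, not through individual entries of $\Lambda_i$. This matters because $\vv_i$ has negative components, so monotonicity of single entries (for example off-diagonal cross effects) would not be enough. Since the definition assumes decreasingness in the semidefinite order, this step goes through.
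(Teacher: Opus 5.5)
Your proof is correct and follows essentially the paper's route. The paper also treats the best-reply map \eqref{bestreply} as a continuous, increasing self-map of $\prod_i[0,k_i]$, gets existence from Brouwer, and gets the extremal equilibria from Topkis' lattice result. Your use of Tarski gives existence and the least and greatest fixed points in one step, without needing continuity. Your Loewner-order argument through the scalar $\vv_i'\Lambda_i\vv_i$ supplies the monotonicity step that the paper only asserts.
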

The key idea for the proof comes once again by strategic complementarity: a lower price impact means higher slopes that, in turn, trigger higher best responses, and equilibrium slopes. This is the fundamental tool that allows to do comparisons of different restrictions on market power in the next section.
The proof is in Appendix \ref{proof_teo_comparative}.


\subsection{Comparison with unilateral and local market power}

As discussed in the Literature section, many papers in the production network literature assume, as a simplification, that input prices are taken as given, and that prices in other markets are taken as given: in the next definition, I characterize them, in words. The formal definition can be found in the proof of Proposition \ref{thm:specialcases}, in the Supplemental Appendix.

\begin{defi}

\begin{enumerate}
    \item The SDFE with \emph{unilateral market power} is the model in which firms optimize \eqref{BestReply} with the additional constraint that they take input prices as given. To be more precise, there is a partition of the set of goods $(\mathcal{N}^{down}(i),\mathcal{N}^{up}(i))$ such that $\mathcal{N}^{in}(i)\subseteq \mathcal{N}^{up}(i)$ and $out(i)\in \mathcal{N}^{down}_i$, and the firm takes the prices of goods in $\mathcal{N}^{up}(i)$ as given. 

\item The SDFE with \emph{local market power} is the model in which firms optimize \eqref{BestReply} with the additional constraint that they take as given the prices of all goods that are not their output or inputs. 
\end{enumerate}
\label{special_cases}
\end{defi}  
In the unilateral case, the choice of $\mathcal{N}^{up}(i)$ is not unique. It turns out that these details do not really matter for the following results. For example, in the supply chain with layers, it turns out that all the choices satisfying the above assumption generate the same price impact function \eqref{impact_unilateral}. The intuition is simple: if firms take inputs as given in a chain, then they automatically also take as given the further upstream prices. If the network is more complicated, this is not necessarily true anymore, but the results below hold anyway.

The next Proposition shows that these two are Generalized SDFEs. The proof is in Appendix \ref{proof:thm:specialcases}.
\begin{prop}

The models of Definition \ref{special_cases} are special cases of the Generalized SDFE, for different choices of the price impact functions $\Lambda_i$. 
  
\begin{enumerate}
\item \textbf{Unilateral market power.} 
\begin{equation}
\Lambda_i^{\text{unilateral}}(\overline{B}_{-i})=\begin{pmatrix}
    [(M-\hat{B}_i)^{-1}_{\mathcal{N}^{down}(i)}]_{out(i),out(i)} & \boldsymbol{0}' \\
    \boldsymbol{0} & O
\end{pmatrix},
\label{impact_unilateral}
\end{equation}
where $\boldsymbol{0}$ is a vector of length $d_i-1$ (if $d_i=1$ then the matrix reduces to the upper left element).

\item \textbf{Local market power}: 
\begin{equation}
\Lambda^{\text{local}}_i(\overline{B}_{-i})=([M-\hat{B}_i]_{\mathcal{N}(i)})^{-1}
\end{equation}
	
\end{enumerate}

\label{thm:specialcases}
\end{prop}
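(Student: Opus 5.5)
The plan is to compute, in each case, the slope of firm $i$'s residual schedule under the stated restriction, and then to check the regularity conditions in Definition \ref{def_generalized}: positive semidefiniteness, continuity, and monotone decrease in each $\overline{B}_j$, $j\neq i$. The main tool is the residual-schedule construction of Definition \ref{def_residual}. There, $\Lambda_i$ is obtained by fixing $\overline{B}_{-i}$, letting firm $i$ trade an arbitrary $\qv_i$, and solving the market-clearing system $(M-\hat{B}_i)\pv = \overline{\boldsymbol{A}}_{-i}-\hat{\qv}_i$ for the prices of $\mathcal{N}(i)$. The restrictions in Definition \ref{special_cases} amount to removing some equations from this system and freezing the corresponding prices at constants. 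So in each case the derivative $\partial_{\qv_i}\pv_i^r$ is an inverse of a principal submatrix of $M-\hat{B}_i$, possibly padded with zeros.

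\emph{Local market power.} Firm $i$ takes every price outside $\mathcal{N}(i)$ as given. Holding $\pv_{\mathcal{N}(i)^c}$ fixed, only the market-clearing rows indexed by $\mathcal{N}(i)$ remain relevant for $i$. They read $[M-\hat{B}_i]_{\mathcal{N}(i)}\pv_{\mathcal{N}(i)} = \text{const}-\qv_i$. Differentiating gives $\Lambda_i^{\text{local}}=([M-\hat{B}_i]_{\mathcal{N}(i)})^{-1}$. This inverse exists because $M-\hat{B}_i=\sum_{j\neq i}\hat{B}_j+\hat{B}_c$ is positive semidefinite, and Lemma \ref{lemma_priceimpact} shows it is invertible on the relevant block, using viability and the requirement that every good connects to consumers. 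The same argument gives positive definiteness of the principal submatrix.

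\emph{Unilateral market power.} Firm $i$ takes the prices in $\mathcal{N}^{up}(i)$ as given, so the $\Lambda_i$ entries for inputs and the cross terms vanish. For the output, the prices in $\mathcal{N}^{down}(i)$ adjust through the clearing rows indexed by $\mathcal{N}^{down}(i)$, with the upstream prices frozen. This gives $\partial \pv_{\mathcal{N}^{down}(i)}/\partial q_i^{out} = -([M-\hat{B}_i]_{\mathcal{N}^{down}(i)})^{-1}\boldsymbol{e}_{out(i)}$. Its $out(i)$ entry is the upper-left block in \eqref{impact_unilateral}, and the rest is zero. I would also verify the equivalence with the best-reply problem, mirroring Lemma \ref{linear-residual}: with frozen prices, the objective \eqref{FOC_restricted} is still concave in $q_i^{out}$, and the FOC yields \eqref{bestreply} with the modified $\Lambda_i$.

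\emph{Regularity.} Continuity is immediate, since matrix inversion is continuous on positive definite matrices. Positive semidefiniteness holds because the inverse of a positive definite principal submatrix is positive definite, and padding with zeros preserves semidefiniteness. For monotonicity, each $\hat{B}_j$ is positive semidefinite (it equals $\overline{B}_j\hat{\vv}_j\hat{\vv}_j'$). So increasing $\overline{B}_j$ increases $M-\hat{B}_i$, and hence every principal submatrix, in the semidefinite order. Matrix inversion is order-reversing on positive definite matrices, so the inverse decreases. A scalar diagonal entry of a decreasing matrix also decreases, which settles the unilateral case.

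I expect the main obstacle to be invertibility of the principal submatrices, especially $[M-\hat{B}_i]_{\mathcal{N}^{down}(i)}$ when some $\overline{B}_j=0$ or when $\mathcal{N}^{down}(i)$ contains goods not consumed. My first attempt would mirror the proof of Lemma \ref{lemma_priceimpact}: show that $\boldsymbol{x}'(M-\hat{B}_i)\boldsymbol{x}=0$ forces $\boldsymbol{x}=0$ on the subset, using the consumer term together with the chain of firms linking each good to $\mathcal{C}$. If that fails in degenerate cases, I would handle zero slopes by a limiting argument.
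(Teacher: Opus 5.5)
Your proposal is correct and follows essentially the paper's route: freeze the prices the firm takes as given, keep only the market-clearing rows for the internalized goods, and read off $\Lambda_i$ as the inverse of the corresponding principal submatrix of $M-\hat{B}_i$, padded with zeros; your extra verification of the regularity conditions of Definition \ref{def_generalized} (positive semidefiniteness, continuity, monotonicity) is correct and is something the paper's proof leaves implicit. The invertibility you flag as the main obstacle is not really one, because Lemma \ref{lemma-invertibility} (rather than Lemma \ref{lemma_priceimpact}) gives positive definiteness of the whole matrix $M-\hat{B}_i$, and every principal submatrix of a positive definite matrix is positive definite, whichever goods it contains; the zero-slope degeneracy is glossed over in the paper as well.
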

The Proposition generalizes the calculation done naively in Section \ref{sec:simple}, where we simply defined the price impact on inputs to be zero: the above expression $\Lambda_i^{\text{unilateral}}$ shows that this is true in general.
We can see that $\Lambda_D^{\text{unilateral}}\le \Lambda_D^{\text{multilateral}}$ in the p.s.d order. This turns out to be a general fact.

In principle, one could also define unilateral market power in the reverse way, assuming that firms are price-takers on the \emph{output} market, but price-setters in \emph{input} markets. This is very rarely done, so I present the analysis with price-taking on inputs: the mechanisms would be analogous to the reverse assumption, complicated by the fact that there may be many inputs. In Proposition \ref{prop:markup_generalized}
I analyze the reverse case for the layered supply chain.

Now we show the main result of the section, illustrating the qualitative implications of the assumptions of local and unilateral market power.
\begin{teo}
\begin{enumerate}
    \item   In both the unilateral and local SDFE, the equilibrium slopes are higher (so the aggregate price impacts are smaller) than with multilateral market power: for all $ i \in \mathcal{N}$, in any equilibrium $\overline{B}_i^{\text{unilateral}}>\overline{B}_i^{\text{multilateral}}$ and $\overline{B}_i^{\text{local}}>\overline{B}_i^{\text{multilateral}}$.
    \item Consider two firms, $i$ and $j$, with $i$ selling to $j$, and consider all the other assumptions of Theorem \ref{thm:markups} part 1. 
The result of Theorem \ref{thm:markups} remains true in the (minimal equilibrium of the) unilateral SDFE, so that $i$ has a higher markup: $\mu_{i}^{out}>\mu_{j}^{out}$. Moreover,  $i$ also has a higher profit: $\pi_{i}>\pi_{j}$. 
Moreover, the same result remains true if there are $n^{*}$ identical firms producing good $i$, and their only customers are $n^{*}$ identical firms producing good $j$, and in the multilateral SDFE $\overline{B}_{i}<\overline{B}_{j}$. 

\end{enumerate}
  \label{thm:main_comparative}
\end{teo}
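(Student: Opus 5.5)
For part 1, I would reduce the comparison to a pointwise ordering of price impact functions and then use the strategic complementarity of Proposition \ref{teo_comparative}.

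\textbf{Step 1: pointwise ordering of aggregate price impacts.} The first step is to show that for every profile $\overline{B}_{-i}$,
\[
\vv_i'\Lambda_i^{\text{unilateral}}(\overline{B}_{-i})\vv_i\le \vv_i'\Lambda_i(\overline{B}_{-i})\vv_i,
\]
and similarly for $\Lambda_i^{\text{local}}$. Write $N:=M-\hat{B}_i$, which is positive definite (this underlies Lemma \ref{lemma_priceimpact}).

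For the local case, $\Lambda_i=[N^{-1}]_{\mathcal{N}(i)}$ is the inverse of a Schur complement, $(N_{\mathcal{N}(i)}-N_{\mathcal{N}(i),\mathcal{N}(i)^c}N_{\mathcal{N}(i)^c}^{-1}N_{\mathcal{N}(i)^c,\mathcal{N}(i)})^{-1}$. Since the subtracted term is p.s.d., this dominates $(N_{\mathcal{N}(i)})^{-1}=\Lambda_i^{\text{local}}$ in the p.s.d. order. The inequality should be strict whenever the Schur correction is nonzero, i.e., whenever some good of $i$ is connected to other goods.

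For the unilateral case, I would use the same Schur-complement fact twice. First, $[(N_{\mathcal{N}^{down}(i)})^{-1}]_{out(i)}\le [N^{-1}]_{out(i),out(i)}$, because restricting to a principal block before inverting lowers the inverse. Second, the quadratic form on the lower block is zero for $\Lambda^{\text{unilateral}}$, so
\[
\vv_i'\Lambda^{\text{unilateral}}_i\vv_i=[(N_{\mathcal{N}^{down}})^{-1}]_{out,out}.
\]
This must be compared with $\vv_i'[N^{-1}]_{\mathcal{N}(i)}\vv_i$, which is not immediate because of the negative cross terms. I would handle it by a variational characterization. Write $\vv_i'[N^{-1}]\vv_i=\max_x\,(2\hat{\vv}_i'x-x'Nx)$, and restrict $x$ to be supported on $\mathcal{N}^{down}(i)$. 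On that support, $\hat{\vv}_i'x=x_{out}$, so the restricted maximum equals $[(N_{\mathcal{N}^{down}})^{-1}]_{out,out}$. This gives the inequality cleanly.

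\textbf{Step 2: monotone comparison of equilibria.} With these orderings, the best-reply map $\Phi(\overline{B})_i=(\vv_i'\Lambda_i\vv_i+1/k_i)^{-1}$ is pointwise dominated by $\Phi^{\text{unil}}$ and by $\Phi^{\text{local}}$. All these maps are increasing by Proposition \ref{teo_comparative} and map the lattice $\prod[0,k_i]$ into itself. By Tarski-type comparative statics, the minimal fixed point of $\Phi^{\text{unil}}$ lies above the fixed point of $\Phi$, which is unique by Theorem \ref{thm:existence}; hence every equilibrium does. To get strict inequality for each $i$, I would use the strictness from Step 1 at the multilateral equilibrium, combined with monotonicity.

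The main obstacle is strictness for every firm, including firms whose own impact is unchanged, such as a most-upstream firm with no inputs. For such a firm $i$, I would argue that some neighbour's slope strictly rises. Then $\Lambda_i$ strictly decreases, because it is strictly decreasing in the p.s.d. order in connected $\overline{B}_j$. The strict increase then propagates along the connectedness of goods to consumers.

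\textbf{Part 2.} In the unilateral SDFE, $\Lambda_j^{\text{unil}}$ coincides with the multilateral output impact of $j$ computed on the downstream block. The identity
\[
\Lambda_i^{out}=\Bigl(\Lambda_{-j}^{-1}+\bigl(\overline{B}_j^{-1}+\Lambda_j^{out}\bigr)^{-1}\Bigr)^{-1}
\]
from the proof of Theorem \ref{thm:markups} still holds, because the downstream block is unaffected by unilaterality. So the markup argument carries over verbatim for small $\overline{B}_{-j}$, using $q_i^{out}\ge q_j^{out}$.

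For profits I would use \eqref{profit_markup} with zero markdowns. This gives $\pi_i=\mu_i^{out}q_i^{out}+(q_i^{out})^2/2k_i$, and with equal $k$ this exceeds $\pi_j$ because both factors are larger.

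For the symmetric $n^*$ case, I would show directly that unilateral slopes satisfy $\overline{B}_i^{\text{unil}}<\overline{B}_j^{\text{unil}}$. The idea is that unilaterality raises $j$'s slope, which removes its markdown, while $i$'s impact remains at least as large. Since $\pi=(1/\overline{B}-1/2k)q^2$ with $q_i\ge q_j$, this gives $\pi_i>\pi_j$.
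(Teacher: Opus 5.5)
\textbf{Part 1 and the single-pair case.} Your Part 1 follows the paper's route: a pointwise ordering of price impacts, then a monotone comparison of fixed points. Your least-fixed-point step is equivalent to the paper's downward iteration of $BR^m$ from a constrained equilibrium. The only difference is that you order just the aggregate form $\vv_i'\Lambda_i\vv_i$, using a Schur complement and a variational bound. The paper instead orders the full matrices, writing the constrained impacts as the $T\to\infty$ limit of the inverse after scaling the block of taken-as-given goods by $T$. Your version suffices, because the best reply depends only on the aggregate form.

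Your strictness-by-propagation sketch cannot be completed in this generality. In the horizontal economy of Example \ref{example1} the three price-impact functions coincide, so the equilibrium slopes are equal. Strictness needs a nondegeneracy condition, such as nonzero cross terms $\hat{\vv}_i'(M-\hat{B}_i)^{-1}\hat{\vv}_j$ along a chain of firms. The paper simply asserts $BR(\overline{B})>BR^m(\overline{B})$ at every profile, so this is a defect of the statement rather than of your route.

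Your single-pair argument in Part 2 is correct, and it matches the informal argument the paper gives after the theorem. The output block is the same function of the profile in both models. With zero markdowns, \eqref{profit_markup} reduces to $\pi=\mu^{out}q^{out}+(q^{out})^2/(2k)$.

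\textbf{The gap: the symmetric $n^*$ case.} Here Theorem \ref{thm:markups} is not available, because same-good competitors destroy the disconnected partition. Also $q_i=q_j$, and under unilateral impacts $\overline{B}=(\Lambda^{out}+1/k)^{-1}$ and $\mu^{out}=\Lambda^{out}q$. So, with equal $k$, both the markup ranking and the profit ranking are equivalent to $\overline{B}_i^{uni}<\overline{B}_j^{uni}$; that ordering is the whole content. Your sketch does not deliver it. Unilaterality raises \emph{all} slopes, $i$'s included: $i$ loses its own markdown if it has inputs, and $\Lambda_i^{out}$ falls as $\overline{B}_j$ and the slopes of $i$'s $n^*-1$ competitors rise. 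So ``$j$'s slope rises while $i$'s impact stays at least as large'' does not fix the relative order. The hypothesis $\overline{B}_i^m<\overline{B}_j^m$ is also never used.

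What is missing is an order-preservation property of the unilateral best reply (Lemma \ref{lemma_BR_unilateral}): at \emph{any} profile with $\overline{B}_i\le\overline{B}_j$, one has $BR_i^{uni}(\overline{B}_{-i})<BR_j^{uni}(\overline{B}_{-j})$. The paper writes the downstream block of $i$ with corner entry $(n^*-1)\overline{B}_i+n^*\overline{B}_j$, coupled to $\mathcal{N}^{down}(j)$ through $-n^*\overline{B}_j$. Using Lemma \ref{linear_algebra}, it reduces $\Lambda_i^{out}>\Lambda_j^{out}$ to positivity of a quadratic in $x=(M^{-1}_{\mathcal{N}^{down}(j)})_{out(j),out(j)}$. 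When $\overline{B}_i\le\overline{B}_j$, that quadratic is bounded below by the nonzero square $(n^*\overline{B}_jx-1)^2$.

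One then iterates $BR^{uni}$ starting from the multilateral equilibrium:
\begin{enumerate}
\item the hypothesis $\overline{B}_i^m<\overline{B}_j^m$ supplies the initial ordering;
\item the pointwise ordering of Part 1 makes the sequence increase to the minimal unilateral equilibrium, which is why the statement singles it out;
\item the lemma preserves $\overline{B}_i\le\overline{B}_j$ at every step;
\item one more application of the lemma at the limit makes the inequality strict.
\end{enumerate}
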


The proof is in Appendix \ref{proof:main_comparative}. In words, the Theorem illustrates the consequences of constraining market power to be unilateral, or local. The consequences affect both the size and the distribution of surplus. 

Part 1 shows that, in both the local and unilateral cases, the equilibrium aggregate price impacts are smaller, meaning that market power distortions are less powerful. 
This shows that countervailing market power actually harms consumers. Out of equilibrium, buyer power indeed makes the seller less willing to increase the price. However, this does not consider the fact that the increase in price is passed through to consumers. 
The mechanism is the one illustrated in \ref{sec:pass-through}: when the firm takes some prices as given, the perceived elasticity of demand (and supply) is lower, which, through the standard mechanism, leads to lower markups and markdowns. Strategic complementarity means that this effect remains true in equilibrium. 
Technically, the result refers to equilibrium slopes, because it is the result available in higher generality. Adding more structure, it is possible to show more precise implications on the final price or welfare, as formalized in the following Corollary.

\begin{cor}

	\label{localteo}
Suppose for simplicity that $\boldsymbol{f}_{L}=0$. 
Suppose that the consumer only consumes one good, say good 0: $\mathcal{C}=\{0\}$, then the price of the final good $p_0$ is smaller in with local or unilateral market power than with multilateral.

In the supply chain with layers, the welfare is lower with multilateral market power.


\end{cor}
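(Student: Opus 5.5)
The plan is to derive both claims from Theorem \ref{thm:main_comparative} part 1, together with a monotonicity property of the pricing function $\pv=M^{-1}\overline{\boldsymbol{A}}$ of Lemma \ref{pricingfunction}. With $\boldsymbol{f}_L=0$ we have $M_f=0$. Hence $\overline{\boldsymbol{A}}=\hat{\boldsymbol{A}}$, which is supported only on the consumed good $0$, and the final price is
\[
p_0=[M^{-1}]_{00}A_0 .
\]
Prices in all three models are determined by the same market-clearing conditions (Definition \ref{def_generalized}, point 2). The models differ only through the equilibrium slopes. So it suffices to show that $[M^{-1}]_{00}$ is decreasing in each slope $\overline{B}_i$ and then invoke $\overline{B}^{\text{local}},\overline{B}^{\text{unilateral}}>\overline{B}^{\text{multilateral}}$ componentwise.

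\textbf{Monotonicity step.} Since $M=\sum_j \overline{B}_j\hat{\vv}_j\hat{\vv}_j'+\hat{B}_c$, each $\overline{B}_i$ enters through a rank-one positive semidefinite term. Hence $\partial_{\overline{B}_i}M^{-1}=-M^{-1}\hat{\vv}_i\hat{\vv}_i'M^{-1}$. This gives
\[
\partial_{\overline{B}_i}[M^{-1}]_{00}=-\bigl([M^{-1}\hat{\vv}_i]_0\bigr)^2\le 0,
\]
so $M^{-1}$ is decreasing in the p.s.d.\ order in every slope (this is the same argument used for Lemma \ref{lemma_priceimpact}). Because the equilibrium slope vectors are ordered componentwise, I would integrate along the segment joining $\overline{B}^{\text{multilateral}}$ to $\overline{B}^{\text{local}}$ (respectively $\overline{B}^{\text{unilateral}}$) and obtain $p_0^{\text{local}}\le p_0^{\text{multilateral}}$.

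\textbf{Main obstacle: strictness.} The inequality is strict provided some $i$ has $[M^{-1}\hat{\vv}_i]_0\neq 0$. I would pick a firm $i$ producing good $0$; it exists because consumers must be supplied. Then $[M^{-1}\hat{\vv}_i]_0$ is the derivative of $p_0$ with respect to a shift in $i$'s schedule intercept, which is nonzero by invertibility of $M$ and the connectivity assumption. If establishing nonvanishing in full generality is delicate, I would settle for the weak inequality and note strictness in nondegenerate cases.

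\textbf{Welfare in the layered chain.} By symmetry, all firms within a layer choose the same slope. With $\vv_j=(1,-1)$, every layer trades the same quantity $q$ as the final good, and $q=A-B_cp_1$. Welfare equals consumer surplus plus profits. Since labor enters quasi-linearly and intermediate transfers cancel, this is $W(q)=$ utility of $q$ minus the total labor cost of producing $q$ through the chain. The function $W(q)$ is concave and maximized at the competitive quantity. Every equilibrium has price at or above marginal cost, since markups and markdowns are nonnegative by \eqref{eq:markup} with $\Lambda_i$ p.s.d. So equilibrium quantities lie below the efficient level, where $W$ is increasing. By the first part, $p_1$ is higher under multilateral market power, so $q$ is lower, and therefore welfare is lower. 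The one thing to verify here is that in every model $q$ does not exceed the efficient quantity, which follows from the sign of the markups.
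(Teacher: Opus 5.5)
Your proposal is correct and follows the paper's proof. The final price $p_0=A_0[M^{-1}]_{00}$ decreases in every slope because $M$ increases in the p.s.d.\ order, and Theorem \ref{thm:main_comparative} part 1 orders the slopes. In the layered chain, welfare is concave in $Q$ and increasing up to the competitive quantity, which no equilibrium exceeds. Your nonnegative aggregate markup $\vv_i'\Lambda_i\vv_i\ge 0$ is the same fact as the paper's $\overline{B}_i\le k_i$.

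For the strictness you flagged, you do not need a particular firm. Every slope rises strictly and the $\hat{\vv}_j$ span $\mathbb{R}^m$. If $[M^{-1}]_{00}$ were equal at the two profiles, the p.s.d.\ difference of the two inverses would annihilate $\ev_0$. Then $M^{-1}\ev_0$ would be the same at both profiles, hence orthogonal to every $\hat{\vv}_j$, i.e.\ zero, which is impossible.
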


Part 2 shows that, furthermore, in the unilateral case, surplus \enquote{moves upstream} with respect to the multilateral case. This follows from the asymmetry of the pass-through when input prices are taken as given. Theorem \ref{thm:markups} showed that, if the vertical connection between $i$ and $j$ is \enquote{more important} than the horizontal one (as made precise by the assumptions of part 1 of that theorem), then
markups and markdowns tend to move in opposite directions when moving upstream the supply chain. With unilateral market power, the result on the ranking of markups remains true. However, since markdowns are now zero, there is no contrasting force, so the result translates to profits: the upstream firm has higher profit too.



\begin{example}{\textbf{The division of surplus in the layered supply chain}}

The supply chain with homogeneous layers $n_i=n^*$ dramatically shows how the assumption of unilateral market power can completely change the surplus distribution in the supply chain.
Each pair of consecutive layers $i$ and $j:=i-1$ satisfies the assumptions of the second part of Theorem \ref{thm:main_comparative}. So, we can conclude that profits and markups are increasing going upstream: $\pi_i>\pi_{i-1}$, $\mu^{out}_i>\mu_{i-1}^{out}$. The order is diametrically opposite if we consider unilateral market power on inputs, as the Proposition below formalizes. Consider the difference with the result of Proposition \ref{prop:markup} showing that, with multilateral market power, all firms have the same profit. In that context, the markups are increasing upstream as with unilateral market power, but the presence of the markdowns balances the asymmetry, so that all firms have the same profit. 
\begin{prop}
Consider the supply chain with homogeneous layers: $n_i=n^*$ of Proposition \ref{prop:markup}.	
\begin{enumerate}
    \item 	If firms take the \emph{input} price as given (the price impact is as in \eqref{chain_output} below), then for all $i$: $\mu_i^{out}>\mu_{i-1}^{out}$ and $\pi_i>\pi_{i-1}$.
    
  \item  If firms take the \emph {output} price as given, then for all $i$: $\mu_i^{in}<\mu_{i-1}^{in}$ and $\pi_i<\pi_{i-1}$.
\end{enumerate}

\label{prop:markup_generalized}	
\end{prop}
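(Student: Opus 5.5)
The plan is to reduce both parts to one-dimensional recursions along the chain, solved layer by layer. The model has homogeneous layers, $f_{j,L}=0$, $\vv_j=(1,-1)$, and consumers buy only good $1$. By symmetry within layers, take all $n^*$ firms of layer $i$ to use a common slope $\overline{B}_i$. Market clearing in the intermediate goods $i\ge 2$ has no consumer term, and every technology coefficient equals $1$. Hence every firm in every layer produces the same quantity $q=q_1/n^*>0$. By \eqref{eq:markup} and \eqref{profit_markup}, markups, markdowns and profits are then ranked exactly as $\Lambda_i^{out}$, $\Lambda_i^{in}$ and $1/\overline{B}_i$ respectively. So in both parts it suffices to rank the equilibrium slopes across layers and to show that the relevant price impact moves opposite to the slope.

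\textbf{Part 1 (input prices taken as given).} I will compute $\Lambda_i^{\text{unilateral}}$ from \eqref{impact_unilateral} as a ``series'' formula. Let $S_i$ be the slope of the total demand for good $i$ coming from everything downstream of layer $i$, holding downstream schedules fixed and internalizing the pass-through. Then $S_1=B_c$, and
\[
S_i=\left(\dfrac{1}{n^*\overline{B}_{i-1}}+\dfrac{1}{S_{i-1}}\right)^{-1},\quad i\ge 2,
\]
exactly as in the vertical-economy computation of Section \ref{sec:pass-through}. Since the firm takes its input price as given, $\Lambda_i^{out}=(S_i+(n^*-1)\overline{B}_i)^{-1}$, and the input entries are zero. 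I will verify that this coincides with $[(M-\hat{B}_i)^{-1}_{\mathcal{N}^{down}(i)}]_{out(i),out(i)}$ by Gaussian elimination on the tridiagonal block. I expect this identification to be the main bookkeeping obstacle, though it is routine.

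The system is then triangular: $S_i$ depends only on layers $<i$. So I can solve from layer $1$ upward. For given $S_i$, the equilibrium slope of layer $i$ solves the symmetric fixed point
\[
\overline{B}=\left(\dfrac{1}{S_i+(n^*-1)\overline{B}}+\dfrac{1}{k}\right)^{-1}.
\]
The right-hand side is increasing and concave in $\overline{B}$, positive at $0$, and bounded by $k$. Hence the fixed point is unique, and it is strictly increasing in $S_i$, which is the strategic complementarity of Proposition \ref{teo_comparative}. The series formula gives $S_i<S_{i-1}$, so $\overline{B}_i<\overline{B}_{i-1}$. Then $\Lambda_i^{out}=1/\overline{B}_i-1/k>\Lambda_{i-1}^{out}$, which gives $\mu_i^{out}>\mu_{i-1}^{out}$. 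Finally, $\pi_i=(1/\overline{B}_i-1/(2k))q^2>\pi_{i-1}$.

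\textbf{Part 2 (output prices taken as given).} The argument is the mirror image. Now $\Lambda_i=\mathrm{diag}(0,\Lambda_i^{in})$, so layer $N$ has zero price impact and $\overline{B}_N=k$. Let $U_i$ be the slope of the supply of good $i+1$ faced by layer $i$, generated by everything upstream. Then $U_{N-1}=n^*k$, and
\[
U_i=\left(\dfrac{1}{n^*\overline{B}_{i+1}}+\dfrac{1}{U_{i+1}}\right)^{-1},\quad i<N-1,
\]
with $\Lambda_i^{in}=(U_i+(n^*-1)\overline{B}_i)^{-1}$. I will solve from the top layer downward with the same fixed-point argument. This gives $U_{i-1}<U_i$, hence $\overline{B}_{i-1}<\overline{B}_i$ and $\Lambda_{i-1}^{in}>\Lambda_i^{in}$. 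Therefore $\mu_i^{in}<\mu_{i-1}^{in}$ and $\pi_i<\pi_{i-1}$. At the boundary, layer $N$ has no input; I will treat its markdown as zero and check its profit comparison directly from $\overline{B}_N=k$, which is the largest possible slope.
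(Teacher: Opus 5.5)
Your proof is correct. Part 2 is essentially the paper's argument. The paper writes $\overline{B}_i=\overline{BR}(\overline{\Lambda}^{in}_i,n^*,k)$ with $\overline{\Lambda}^{in}_i=\sum_{j>i}1/(n^*\overline{B}_j)$, which is exactly your $U_i^{-1}$. It notes that $\overline{\Lambda}^{in}_i$ falls going upstream, and uses that $\overline{BR}$ is decreasing in $\overline{\Lambda}$ (Lemma \ref{lemma:layer_best_reply}). Your backward recursion makes the same comparison.

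For Part 1 your route is genuinely different. The paper computes nothing there. It derives Part 1 from part 2 of Theorem \ref{thm:main_comparative}: start from the symmetric multilateral equilibrium and iterate the unilateral best-reply map up to the minimal unilateral equilibrium. Lemma \ref{lemma_BR_unilateral} then guarantees that the ranking $BR^{\text{uni}}_i<BR^{\text{uni}}_{i-1}$ survives each iteration. That machinery is more general, since it does not use the chain's structure. But it only speaks about the minimal unilateral equilibrium, and it delivers the slope ranking only weakly in the limit. It also needs Lemma \ref{lemma_BR_unilateral} in its weak form $\overline{B}_i\le\overline{B}_{i-1}$, because here the multilateral slopes are all equal.

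Your argument instead uses the fact that $\Lambda^{\text{unilateral}}_i$ in \eqref{chain_output} depends only on layers $\le i$, so the system is triangular. This buys two things. You get uniqueness of the unilateral equilibrium and strict inequalities directly. And Part 1 becomes the exact mirror of Part 2: $\overline{B}_i=\overline{BR}(\overline{\Lambda}^{out}_i,n^*,k)$ with $\overline{\Lambda}^{out}_i=S_i^{-1}$ increasing upstream.

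The one step you assert rather than prove is symmetry within layers. It follows in one line from your setup. Given $S_i$, firm $a$'s best reply solves $x_a=\bigl(k^{-1}+(S_i+T-x_a)^{-1}\bigr)^{-1}$, where $T$ is the layer total. For fixed $T$ this has a unique solution in $x_a$, so all firms in a layer coincide in any equilibrium. With that line added, your uniqueness claim covers asymmetric profiles too.
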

Moreover, part 1 is also true in the classic Sequential Cournot model.
\begin{prop}
   In the layered supply chain of Example \ref{ex:line}, the Sequential Cournot model is a Generalized SDFE where, for all $i<N$: 
\[
\Lambda_i^{sequential}=\begin{pmatrix}
    \overline{\Lambda}_i^{out} & 0\\
    0 & 0
\end{pmatrix}
\]
Moreover, the markups and profits are increasing upstream.

\label{sequential_cournot}
\end{prop}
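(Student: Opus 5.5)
The plan is to solve the Sequential Cournot game by backward induction along the chain, layer $1$ (the most downstream) first. I then show that the first-order conditions at every stage coincide with the best-reply equation \eqref{bestreply} for the constant price impact matrix $\Lambda_i^{sequential}$. In the Sequential Cournot model, layer $i$ firms choose quantities after the upstream layers have moved and take their input price $p_{i+1}$ as given. They do internalize the derived (inverse) demand for good $i$ generated by the equilibrium play of layers $i-1,\ldots,1$ and the consumers. This is the same logic as Example \ref{sequential_monopoly}: the zero input block reflects price-taking on inputs, and the output block is the slope of the derived demand.

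\textbf{Step 1 (induction on derived demand).} I will prove by induction on $i$ that the inverse demand for good $i$ faced by layer $i$ is linear, $p_i=a_i-\overline{\Lambda}_i Q_i$, where $Q_i$ is the aggregate quantity of layer $i$. The base case is $\overline{\Lambda}_1=1/B_c$, from consumer demand. For the inductive step, a symmetric Cournot firm in layer $i$ has first-order condition
\[
p_i-p_{i+1}-\overline{\Lambda}_i q_i-\tfrac{1}{k}q_i=0 .
\]
Hence $q_i=\overline{B}_i(p_i-p_{i+1})$ with $\overline{B}_i=(\overline{\Lambda}_i+1/k)^{-1}$, which is exactly \eqref{bestreply} with $\vv_i'\Lambda_i^{sequential}\vv_i=\overline{\Lambda}_i$. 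Since $f=1$, the demand for good $i+1$ equals $Q_i=n_i\overline{B}_i(p_i-p_{i+1})$. Substituting $p_i=a_i-\overline{\Lambda}_iQ_i$ and solving for $p_{i+1}$ gives the recursion
\[
\overline{\Lambda}_{i+1}=\overline{\Lambda}_i+\frac{1}{n_i\overline{B}_i}=\overline{\Lambda}_i+\frac{\overline{\Lambda}_i+1/k}{n_i},
\]
with a new intercept $a_{i+1}$ that is irrelevant for the slopes. Layer $N$ uses only labor, and its first-order condition has the same form with $p_{N+1}$ replaced by marginal cost $0$.

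\textbf{Step 2 (Generalized SDFE).} The functions $\Lambda_i^{sequential}$ are constant in $\overline{B}_{-i}$, so they are continuous and (weakly) decreasing, and they are positive semidefinite. Definition \ref{def_generalized} therefore applies. It remains to check that the Cournot outcome coincides with the market-clearing outcome \eqref{mktclear_linear} under the schedules with slopes $\overline{B}_i$. I will argue this as in Example \ref{sequential_monopoly}. Each stage's best response $q_i(p_i,p_{i+1})$ is linear and equals the schedule $\overline{B}_i(p_i-p_{i+1})$, and the derived demands are built from exactly the market-clearing conditions. Consequently the subgame-perfect prices and quantities solve \eqref{sub_mktclear}.

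\textbf{Step 3 (monotonicity).} The recursion shows that $\overline{\Lambda}_{i+1}>\overline{\Lambda}_i$ strictly, so output price impacts increase going upstream. In the homogeneous chain $n_i=n^*$ (the setting of the surrounding discussion) with $f=1$, aggregate quantities are equal across layers, so every firm produces the same $q$. The markup is then $\mu_i^{out}=\overline{\Lambda}_iq$, which is increasing upstream, while markdowns are zero. By \eqref{profit_markup} with zero markdown, $\pi_i=(\overline{\Lambda}_i+1/(2k))q^2$, which is also increasing upstream.

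The main obstacle I expect is Step 2: making rigorous that subgame perfection in the sequential game yields the same allocation as the simultaneous schedule-clearing formulation. In particular, off-path input prices must be handled consistently, so that the derived demand used by layer $i+1$ is the one induced by the schedules. For heterogeneous $n_i$, the markup comparison in Step 3 would additionally need $q$ to vary less than $\overline{\Lambda}$ does, so I would state the monotonicity for the homogeneous case.
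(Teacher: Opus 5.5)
Your proposal is correct and follows essentially the same route as the paper. Both arguments run backward induction from layer $1$, show that each layer's Cournot first-order condition is \eqref{bestreply} with a zero input block and zero competitor slopes, and note that the linear derived demand passed upstream has slope $\overline{\Lambda}_{i+1}^{out}=\overline{\Lambda}_i^{out}+1/(n_i\overline{B}_i)$ before iterating up the chain. Your explicit markup and profit comparison via equal per-firm quantities goes beyond the paper, which only asserts this part in the main text, and your caveat that it needs homogeneous layers $n_i=n^*$ (the context in which the paper states it) is correct.
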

To understand the intuition, compare with the Generalized SDFE with unilateral market power. Equation \eqref{impact_unilateral}, in the case of the supply chain, becomes:
\begin{align}
\Lambda^{unilateral}_i=\begin{pmatrix}
   \left( (\overline{\Lambda}_i^{out})^{-1}+(n_i-1)\overline{B}_i\right)^{-1}  & 0 \\
0 & 0
\end{pmatrix}
\label{chain_output}
\end{align}
The sequential Cournot model combines the assumption of unilateral market power (because input prices are taken as given) with setting to zero the slope of competitors, as in the Cournot model in Equation \eqref{priceimpact_cournot}. The markups are $\mu_i^{out}=\overline{\Lambda}_i^{out}q_i$, and are increasing upstream with analogous intuition. The markdowns are zero, so the profits are increasing upstream.

\end{example}

\begin{example}{\textbf{Local market power and the size of distortions}}

Theorem \ref{teo_comparative} is a qualitative result. Again, the distortion is larger when the vertical network dimension is more important, because the distinction between local and multilateral market power bites more. Indeed, the next Proposition shows that, in the case of a supply chain with $N$ layers, the relative welfare of the two models can be unbounded. The proof is in Appendix \ref{proof_relativeloss}.
\begin{prop}

Consider a layered production chain of $N$ layers, with $2$ firms per layer. Denote $W^{\text{multilateral}}$ the welfare in the standard SDFE, and $W^{\text{local}}$ the welfare in the model with local market power. 

If $N$ goes to infinity, the ratios of quantities $\dfrac{Q^{\text{local}}}{Q^{\text{multilateral}}}$ and of welfares $\dfrac{W^{\text{local}}}{W^{\text{multilateral}}}$ go to $\infty$.

\label{relativeloss}
\end{prop}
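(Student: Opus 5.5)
The plan is to reduce both models to a one-dimensional ``series of springs'' formula for the final quantity. Then I would show that the local slopes stay bounded away from zero uniformly in $N$, while the multilateral slopes collapse in the interior of the chain. Throughout I take the normalization of Example \ref{ex:line} ($f_{j,L}=0$, $k_j=k$, $\vv_j=(1,-1)$, $\mathcal{C}=\{1\}$), with scalar demand $A-B_cp_1$. I use symmetric equilibria within each layer: the multilateral one is unique by Theorem \ref{thm:existence}, and for the local one I work with the minimal equilibrium of Proposition \ref{teo_comparative}. Let $\overline{B}_i$ denote the common slope in layer $i$.

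\textbf{Step 1 (reduction).} Market clearing gives $p_i-p_{i+1}=Q/(2\overline{B}_i)$ for $i<N$, and $p_N=Q/(2\overline{B}_N)$, where $Q$ is the common quantity. Summing,
\[
p_1=Q/S,\qquad \frac{1}{S}=\sum_{i=1}^N\frac{1}{2\overline{B}_i},\qquad Q=\frac{A}{B_c^{-1}+S^{-1}}\,B_c^{-1}.
\]
Hence $Q$ is increasing in $S$, and $Q\approx A S$ when $S$ is small. Welfare equals consumer utility minus labor, and each firm's labor cost is $Q^2/(4k)$. So $W$ is an explicit function of $Q$ and $N$ which, for $Q\to 0$, behaves like $A B_c^{-1} Q$ up to lower-order terms. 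The total labor term is $NQ^2/(4k)$, and it is negligible as long as $NQ^2\ll Q$, i.e.\ $NQ\to 0$ or $NQ$ bounded, which I will check. It therefore suffices to prove $S^{\text{local}}/S^{\text{multilateral}}\to\infty$, together with enough control on $NQ$ that the welfare ratio inherits the divergence of the quantity ratio.

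\textbf{Step 2 (local slopes are uniformly bounded below).} By Proposition \ref{thm:specialcases}, $\Lambda_i^{\text{local}}=([M-\hat B_i]_{\mathcal{N}(i)})^{-1}$. For a firm in layer $i$, this $2\times 2$ block has diagonal entries $\overline{B}_i+2\overline{B}_{i-1}$ (respectively $\overline{B}_i+2\overline{B}_{i+1}$; the boundary layers get $B_c$ or nothing in place of the missing neighbour) and off-diagonal entries $-\overline{B}_i$. I would compute $\vv_i'\Lambda_i\vv_i$ explicitly and show that it is bounded above by a constant $c(k,B_c)$ independent of $N$ and $i$ whenever all slopes lie in a fixed interval $[b,k]$. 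Starting the monotone best-reply iteration from the bottom element and using strategic complementarity (Proposition \ref{teo_comparative}), I then get $\overline{B}^{\text{local}}_i\ge b:=(c+1/k)^{-1}>0$ for all $i$. This yields $1/S^{\text{local}}\le N/(2b)$, so $Q^{\text{local}}\gtrsim 1/N$.

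\textbf{Step 3 (multilateral slopes collapse; the main obstacle).} For the multilateral SDFE, I would compute $\Lambda_i$ for a firm in layer $i$ by collapsing the rest of the network onto goods $i,i+1$. Downstream there is an effective slope $d_i$: consumers in series with layers $1,\dots,i-1$, each of aggregate slope $2\overline{B}_\ell$. Upstream the effective slope is $u_i$, coming from layers $i+1,\dots,N$ in series. The competitor in layer $i$ adds $\overline{B}_i$ on both goods and $-\overline{B}_i$ off the diagonal. Inverting the resulting $2\times2$ matrix gives
\[
\vv_i'\Lambda_i\vv_i=\frac{d_i+u_i+4\overline{B}_i}{\overline{B}_i(d_i+u_i)+d_iu_i},
\]
so the best reply \eqref{bestreply} implies $\overline{B}_i\le (d_i+u_i)/4+o(\cdot)$. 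The harder part is turning this into smallness. For interior layers I expect $d_i\lesssim 2\big(\sum_{\ell<i}1/\overline{B}_\ell\big)^{-1}$ and $u_i\lesssim 2\big(\sum_{\ell>i}1/\overline{B}_\ell\big)^{-1}$. Writing $h_i=1/\overline{B}_i$, I then get a discrete inequality of the form $h_i\ge 2\big(H_{<i}^{-1}+H_{>i}^{-1}\big)^{-1}$, with $H$ the partial sums. My first attempt would be to show by induction from the ends that this forces $\sum_i h_i$ to grow superlinearly in $N$ (I expect exponentially), for example by showing $H_{\le i}\ge (1+\delta)H_{<i}$ for a fixed $\delta>0$ on the first half of the chain. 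Using the symmetric structure of Proposition \ref{prop:markup} should help here, since it tells us how slopes vary across layers.

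Once $1/S^{\text{multilateral}}/N\to\infty$ is established, Steps 1--2 give $Q^{\text{local}}/Q^{\text{multilateral}}\to\infty$. Since $NQ^{\text{local}}$ stays bounded, the welfare comparison of Step 1 then delivers the same divergence for $W^{\text{local}}/W^{\text{multilateral}}$.
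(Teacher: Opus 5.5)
Your architecture is the paper's: a series formula for $Q$, a uniform lower bound on local slopes, vanishing multilateral slopes, and a welfare ratio that inherits the quantity ratio. The gap is Step 3, which carries the entire multilateral side. It is not established, and the route you sketch cannot work, because your price-impact formula has a sign error. With $\vv_i=(1,-1)$ and the block $\begin{pmatrix} d_i+\overline{B}_i & -\overline{B}_i\\ -\overline{B}_i & u_i+\overline{B}_i\end{pmatrix}$, the off-diagonal entries enter $\vv_i'\Lambda_i\vv_i$ with a minus sign, so
\[
\vv_i'\Lambda_i\vv_i=\frac{d_i+u_i}{\overline{B}_i(d_i+u_i)+d_iu_i}=\left(\overline{B}_i+\overline{\Lambda}_i^{-1}\right)^{-1},\qquad \overline{\Lambda}_i:=d_i^{-1}+u_i^{-1}=B_c^{-1}+\sum_{j\neq i}\frac{1}{2\overline{B}_j},
\]
which is exactly the paper's \eqref{bestreply_chain}. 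With the correct formula, the bound $\overline{B}_i\le(\vv_i'\Lambda_i\vv_i)^{-1}$ is vacuous; all the information sits in the $1/k$ term. So $\overline{B}_i\le(d_i+u_i)/4$ does not follow, and neither does $h_i\ge 2(H_{<i}^{-1}+H_{>i}^{-1})^{-1}$. Both in fact fail at the true equilibrium for interior layers once $N\ge 4$. The picture guiding your induction is also wrong. Slopes do not collapse only in the interior, and $\sum_i 1/\overline{B}_i$ grows quadratically, not exponentially. The target $H_{\le i}\ge(1+\delta)H_{<i}$ with fixed $\delta>0$ is false, since with the equal equilibrium slopes shown below this ratio is $i/(i-1)$.

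The missing idea is homogeneity, which is how the paper's Lemma \ref{lemma:limit} proceeds, using the equal slopes established in the proof of Proposition \ref{prop:markup}. Each firm's best reply depends on the other layers only through $\overline{\Lambda}_i$. Hence a constant profile solves every best-reply equation, and by uniqueness (Theorem \ref{thm:existence}) the multilateral equilibrium is $\overline{B}_i=B^*$ in all layers. The scalar equation $1/B^*=1/k+1/(B^*+h)$ is equivalent to $B^*(B^*+h)=kh$. Since $h=\overline{\Lambda}^{-1}\le 2B^*/(N-1)$, this gives $(B^*)^2\le 2kB^*/(N-1)$, so $B^*\le 2k/(N-1)$. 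Therefore $1/S^{\text{multilateral}}=N/(2B^*)\ge N(N-1)/(4k)$, and with your Step 2, $S^{\text{local}}/S^{\text{multilateral}}\ge b(N-1)/(2k)\to\infty$. Once the formula is corrected, even symmetry is unnecessary: the same inequality at the firm with the largest slope gives $B_{\max}^2\le 2kB_{\max}/(N-1)$.

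Two smaller repairs are also needed:
\begin{itemize}
\item In Step 2, argue at the smallest slope of a given equilibrium rather than iterating from the zero profile. At the zero profile the local price impacts are undefined, so that iteration does not move.
\item In the welfare step, bounded $NQ^{\text{local}}$ does not make the labor term negligible. What suffices is that $W$ is a concave quadratic in $Q$ with $W(0)=0$, maximized at the competitive quantity $Q^*\ge Q^{\text{local}}$. This yields $W^{\text{local}}\ge \frac{A}{2B_c}Q^{\text{local}}$ and $W^{\text{multilateral}}\le \frac{A}{B_c}Q^{\text{multilateral}}$.
\end{itemize}
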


\end{example}

\begin{oss}[\textbf{The general network interpretation}]

    In terms of the network interpretation of Section \ref{sec:network}, unilateral market power can be understood as the situation where only the subnetwork containing the output good is considered. 
    Local market power amounts to the situation in which only the subnetwork containing direct inputs and the output is considered.
    
\end{oss}


\subsection{Application: welfare impact of vertical mergers}
\label{sec:applications}

This section illustrates an applications in which multilateral market power qualitatively affects the welfare evaluation of vertical mergers. The previous general results compared multilateral market power with the Generalized SDFE with unilateral market power, for a clean comparison. Since the most commonly used model is the sequential Cournot, in this section, I compare the multilateral market power to both the unilateral model and the sequential Cournot.


\label{ex:vertical_merger}    

Consider the of the supply chain with layers of Example \ref{ex:line}. In particular, suppose that $N=2$, $n_2=1$, so that there is only one firm in the upstream layer. Suppose that there is a merger between the upstream firm and one of the downstream firms. Suppose further that the merged firm does not sell its intermediate good to others, but it keeps it all to produce the final output, so that the economy becomes a monopoly. The pre and post-merger situations are illustrated in Figure \ref{Fig:vertical}. 

\begin{prop}
Consider a layered supply chain with $N=2$, $n_2=1$, $f_{1,L}=f_{2,L}=0$, $k_2\to \infty$ and $k_1=k$. In the described merger setting, there is an interval $(n_{\ast},n^{\ast})$ such that if $n_1 \in (n_{\ast},n^{\ast})$ the merger is welfare-increasing with multilateral market power, but welfare-decreasing with unilateral market power. The same happens comparing multilateral market power and Sequential Cournot.
\label{prop_vertical}
\end{prop}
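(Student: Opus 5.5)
Since every ingredient is a one-dimensional fixed point, I will compute everything in closed form as a function of $n_1$ and $k$. Then I will compare the signs of the welfare changes across the models.

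\emph{Pre-merger, multilateral SDFE.} By symmetry all $n_1$ downstream firms choose a common slope $B_1$, and the monopolist upstream chooses $B_2$ with $k_2\to\infty$. I will use the consumer demand of Example \ref{ex:line}, so consumers buy only good 1, and $n_1\ge 2$ so that the upstream market is traded by at least three agents. Lemma \ref{pricingfunction} gives the prices, and the best reply equation \eqref{bestreply} gives the slopes. The price impacts follow from the pass-through computation of Section \ref{sec:pass-through}:
\[
\Lambda_2=\Big(n_1\big(B_1^{-1}+\Lambda_1^{out}\big)^{-1}\Big)^{-1},\qquad \Lambda_1^{out}=\big(B_c+(n_1-1)B_1\big)^{-1}.
\]
$\Lambda_1^{in}$ is the inverse residual supply slope $(B_2^{-1}+\ldots)$ obtained by removing firm $1a$. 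The best replies are $B_2=\Lambda_2^{-1}$ and $B_1=(\Lambda_1^{out}+\Lambda_1^{in}+1/k)^{-1}$. Solving this system gives $p_1$, $Q$ and total welfare $W^{pre}_{multi}=U-\text{labor cost}$. One must check that the upstream good has enough traders for a nontrivial equilibrium; if it does not, I use the limit obtained from $k_2\to\infty$ as in Section \ref{sec:simple}.

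\emph{Unilateral and Sequential Cournot.} I replace $\Lambda_1$ by \eqref{chain_output}, which sets the markdown to zero, and by the sequential Cournot impact of Proposition \ref{sequential_cournot}. I solve again for the pre-merger welfare. Theorem \ref{thm:main_comparative} and Corollary \ref{localteo} already guarantee $W^{pre}_{uni}>W^{pre}_{multi}$, because the final price is lower.

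\emph{Post-merger.} The merged firm is a vertically integrated producer of good 1 with costs $\frac{1}{2k}q^2$, since upstream marginal cost is zero. It competes with the remaining $n_1-1$ downstream firms, which are now cut off from the input, so they cannot produce. The economy is therefore a monopoly with quadratic cost, and it is the same in all three models, because no input market remains. Post-merger welfare $W^{post}$ is thus common and explicit: the monopolist sets $q=(\Lambda_c+1/k)^{-1}(A\Lambda_c)$.

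\emph{Comparison and the main obstacle.} The claim reduces to showing that $W^{post}-W^{pre}_{multi}$ and $W^{post}-W^{pre}_{uni}$ (respectively $W^{pre}_{seqC}$) have opposite signs on an interval of $n_1$. Equivalently, $W^{pre}_{multi}<W^{post}<W^{pre}_{uni}$ there. My plan is to show that both pre-merger welfares are continuous and increasing in $n_1$. At $n_1$ small (near $2$), double marginalization makes both lie below $W^{post}$. As $n_1\to\infty$, downstream competition drives both toward the upstream-monopoly outcome, and both then exceed $W^{post}$ thanks to the downstream decreasing returns. Because $W^{pre}_{uni}>W^{pre}_{multi}$ pointwise, the crossing point $n_\ast$ of the unilateral curve with $W^{post}$ lies strictly before the crossing point $n^\ast$ of the multilateral curve. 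This yields the interval $(n_\ast,n^\ast)$.

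The hard step is this monotonicity and crossing argument for the multilateral fixed point, which is a quadratic or cubic system. I would first try strategic complementarity (Proposition \ref{teo_comparative}): increasing $n_1$ lowers each price impact, which raises the slopes and hence the quantity. If that fails, I would fall back on explicit asymptotics in $n_1$ together with continuity. The Sequential Cournot comparison repeats the argument, since it is also a Generalized SDFE with smaller price impact.
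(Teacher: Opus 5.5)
\textbf{The lower endpoint fails.} Your argument needs that, for $n_1$ near $2$, double marginalization puts both pre-merger welfares below $W^{post}$. With total welfare this is false in general. The merged firm produces in a single plant with cost $q^2/(2k)$, whereas before the merger output is spread over $n_1$ plants with aggregate cost $Q^2/(2n_1k)$. So a lower final price need not mean higher welfare.

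Take $A=B_c=1$, $k=0.1$, $n_1=2$. The multilateral fixed point reduces to $4\overline{B}_1^2+3\overline{B}_1-2k=0$, so $\overline{B}_1\approx 0.0616$, $Q^{pre}\approx 0.0548$ and $W^{pre}_{multi}\approx 0.0458$. The merged monopoly has $q=1/12$ and $W^{post}\approx 0.0451$, and the unilateral value is even higher, $\approx 0.0594$. The final price falls (from about $0.945$ to $0.917$), yet total welfare falls in both models. Since pre-merger welfare only rises with $n_1$, there is no $n_1\ge 2$ at which your crossing argument produces an interval.

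The paper's proof works only with the final price, so ``welfare-increasing'' there effectively means the final price falls. Its key observation, which you miss, is that firm 2 has no inputs and no competitors. Hence its best reply $\overline{B}_2=(B_c^{-1}+(n_1\overline{B}_1)^{-1})^{-1}$ is identical in the multilateral, unilateral and Sequential Cournot models. This gives $p^{pre}=A\big(B_c+(B_c^{-1}+2/(n_1\overline{B}_1))^{-1}\big)^{-1}$, so the merger raises the price iff $n_1\overline{B}_1>2k$. That condition fails automatically at $n_1=2$ because $\overline{B}_1<k$. The thresholds are the solutions of $n_1\overline{B}_1(n_1)=2k$, and Theorem \ref{thm:main_comparative} ($\overline{B}_1^{uni}>\overline{B}_1^{multi}$ at each $n_1$) orders them, with no closed forms needed. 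For the total-welfare version you would need an extra parameter restriction (for instance $k$ not too small relative to $B_c$), or you should adopt the price criterion.

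\textbf{The Sequential Cournot step is also wrong.} $\Lambda_1^{seq}=\mathrm{diag}(B_c^{-1},0)$ is not below the multilateral price impact, because Cournot conjectures set the competitors' slopes to zero. With $c=(n_1-1)\overline{B}_1>0$, the output entry of $\Lambda_1^{multi}$ is $(\overline{B}_2+c)/(B_c\overline{B}_2+c(B_c+\overline{B}_2))<B_c^{-1}$, so $\Lambda_1^{seq}\not\le\Lambda_1^{multi}$. Indeed, for large $n_1$ one has $\overline{B}_1^{multi}\to k$ while $\overline{B}_1^{seq}=(k^{-1}+B_c^{-1})^{-1}<k$, so the pointwise ranking you rely on fails. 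The paper gets the needed ordering of thresholds only for $B_c$ large, where $\overline{B}_1^{seq}\to k$ and $\overline{B}_1^{multi}\to\frac{2n_1-2}{2n_1-1}k$.

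\textbf{Several formulas are off} and would derail the explicit computation:
\begin{itemize}
\item $\Lambda_2=B_c^{-1}+(n_1B_1)^{-1}$. The pass-through runs through aggregate downstream supply $n_1B_1$, not through $n_1$ separate individual residual demands.
\item Firm 1's price impact is not diagonal, because competitors couple its input and output markets. The correct aggregate is $\vv_1'\Lambda_1\vv_1=\big((n_1-1)B_1+(B_c^{-1}+B_2^{-1})^{-1}\big)^{-1}$.
\item The monopoly quantity is $A\Lambda_c/(2\Lambda_c+1/k)$, not $A\Lambda_c/(\Lambda_c+1/k)$.
\end{itemize}
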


The formal proof is in Appendix \ref{proof_prop_vertical}. A vertical mergers can be welfare improving or not, depending on the standard trade-off between decreasing double marginalization and decreasing competition through foreclosure \citep{belleflamme2015industrial}. Multilateral market power affects the balance of the two forces. Both the unilateral market power model and the sequential Cournot model share the feature that input prices are taken as given, leading to underestimate the extent of double marginalization. As a consequence both unilateral SDFE and (in some cases) the sequential Cournot tend to overestimate the welfare loss by a vertical merger with respect to the multilateral SDFE.



\begin{figure}
    
    \begin{subfigure}{0.5\textwidth}
        
        \begin{tikzpicture}[mycircle/.style={
                circle,
                draw=black,
                fill=gray,
                fill opacity = 0.3,
                text opacity=1,},
            node distance=  1.5cm,
            every newellipse node/.style={inner sep=0pt}
            ]
            
            \tikzset{cross/.style={cross out, draw=black, fill=none, minimum size=20pt, inner sep=0pt, outer sep=0pt}, cross/.default={2pt}}
            
            \node [mycircle] (1) {2};
            
            \node [mycircle, draw=black, below of =1] (0B) {1b};
            \node [mycircle, draw=black, left of =0B] (0A) {1a};
            \node [mycircle,draw=black, right of =0B] (0C) {$1,n_1$};	
            
            \draw (0B) node[cross,red, rotate=8] {};
            \draw (0C) node[cross,red, rotate=8] {};
            
            \path let
            \p1 = ($(0A.south west)-(1.north east)$),
            \n1 = {veclen(\p1)}
            in
            (0A)--(1)	 	
            node[ellipse, draw, sloped, midway, blue, minimum width=\n1+2pt, minimum height=\n1/2, inner sep=2pt]  {}; 	
            
            \node [ below of =0B] (c) {Consumers};

            \foreach \i/\j in {
                1/0A,
                1/0B,
                1/0C,
                0A/c,
                0B/c,
                0C/c}
            \draw [->] (\i) -- node[above] {} (\j);

        \end{tikzpicture}

    \end{subfigure}
    \begin{subfigure}{0.5\textwidth}
        \begin{tikzpicture}[mycircle/.style={
                circle,
                draw=black,
                fill=gray,
                fill opacity = 0.3,
                text opacity=1,},
            node distance=  1.5cm
            ]
            
            \node [mycircle, draw=blue] (1) {2+1a};
            
            \node [above of =1, node distance=1cm] (text) {Merged firm};
            \node [ below of =1] (c) {Consumers};

            \foreach \i/\j in {
                1/c}
            \draw [->] (\i) -- node[above] {} (\j);

        \end{tikzpicture}	
        
    \end{subfigure}
    \caption{Left: pre-merger economy. The blue circle indicated the merging firms 2 and $1a$. Right: the economy after the merger: $1b$ and $1c$ are driven out of the market because the merged firm does not sell them the necessary input anymore, and the merged firm becomes a monopolist.}
    \label{Fig:vertical}
    
\end{figure}
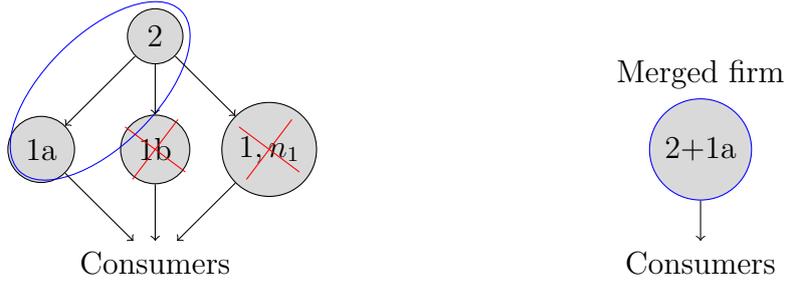

\section{Conclusion}
\label{conclusion}

This paper shows that competition in schedules provides a tractable way to model oligopoly in general equilibrium, and provides a unified framework for different competition models. Moreover, simplifying assumptions on market power can affect systematically both the magnitude and the distribution of surplus. This suggests that, when modeling complex networks of large firms with market power, a model with endogenous market power may be desirable.

\appendix

\section*{Appendix}

\section{Proofs of Section \ref{solution}}

\subsection{Proof of Lemma \ref{pricingfunction}}
\label{proof_pricingfunction}

For this and the following proof, we need the following Lemma.
\begin{lemma}
    The viability and consumer connectedness assumptions imply that for any $i$, the matrix $M-\hat{B}_i=\sum_{j\neq i}\overline{B}_j\hat{\vv}_j\hat{\vv}_j'+\hat{B}_c$ is positive definite. 

    \label{lemma-invertibility}
\end{lemma}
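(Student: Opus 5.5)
Since each $\overline{B}_j\hat{\vv}_j\hat{\vv}_j'$ with $\overline{B}_j\ge 0$ and $\hat{B}_c$ (lifted from a positive definite matrix) are positive semidefinite, $M-\hat{B}_i$ is positive semidefinite. The plan is to show its kernel is trivial. Take $\boldsymbol{x}\in\mathbb{R}^m$ with $\boldsymbol{x}'(M-\hat{B}_i)\boldsymbol{x}=0$. Each term is nonnegative, so every term vanishes: $\hat{\boldsymbol{x}}_c'B_c\hat{\boldsymbol{x}}_c=0$ gives $x_g=0$ for all $g\in\mathcal{C}$, because $B_c$ is positive definite. Also $\overline{B}_j(\vv_j'\boldsymbol{x}_j)^2=0$ for each $j\neq i$. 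For firms with $\overline{B}_j>0$ this gives $x_{out(j)}=\sum_{h}f_{jh}x_h$: the output coordinate is a linear combination of the input coordinates.

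Next, propagate zeros from the consumer goods upstream along the connectedness paths. If $x_{out(j)}=0$ and the relation holds, then $\sum_h f_{jh}x_h=0$. This does not by itself force each $x_h=0$, since terms could cancel. Viability must be used to rule out such cancellations, so I would run an argument on the whole vector rather than good by good.

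Concretely, let $\pv>0$ be a viability vector with $p_{out(j)}>\sum_h f_{jh}p_h$, and consider $t^*=\max_g x_g/p_g$, attained at some good $g^*$. Suppose $t^*>0$. If $g^*$ is the output of some firm $j$ whose relation holds, then $x_{g^*}=\sum_h f_{jh}x_h\le t^*\sum_h f_{jh}p_h<t^*p_{g^*}$, a contradiction. So a maximizing good cannot be the output of such a firm; it must be a pure labor good or the output only of firms with zero slope. Running the same argument on $-\boldsymbol{x}$ handles the minimum.

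The main obstacle is closing the remaining gap. First, zero-slope firms and firm $i$ itself impose no constraint. Second, the maximizer might be an upstream good whose value is not pinned down by any output relation. Upstream goods appear in the quadratic form only as inputs, and consumer connectedness must do the work here. Following the path from $g^*$ to $\mathcal{C}$, the next good $g_2=out(j)$ with $g^*\in\mathcal{N}^{in}(j)$ satisfies $x_{g_2}=\sum_h f_{jh}x_h$. I would try to show the extremality transfers downstream along the path, ending at a consumer good where $x=0$. This forces $t^*\le 0$, and symmetrically the minimum is $\ge 0$, so $\boldsymbol{x}=0$.

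Because of this extremality-transfer difficulty, I expect the lemma is meant for a restricted setting. If the argument fails, I would use the equilibrium fact $\overline{B}_j>0$ for all $j$ and treat the removed firm $i$ separately. The key configuration is a good traded only by $i$ and one other firm, which may need the standing assumption of at least three traders per good. In that case I would restate the claim as positive definiteness on the relevant subspace of goods traded by firms other than $i$ or consumers, which is all that the invertibility of $[M-\hat{B}_i]_{\mathcal{N}(i)}$-type objects in Lemma \ref{lemma_priceimpact} requires.
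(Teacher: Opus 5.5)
There is a genuine gap. Your argument stops where the real work begins, and the step you propose to finish it would not work.

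First, the positivity assumption. Assume, as the paper's proof tacitly does, that $\overline{B}_j>0$ for all $j\neq i$. You are right that the statement fails otherwise: in the vertical economy with $\mathcal{C}=\{D\}$, $i=D$ and $\overline{B}_U=0$, the matrix $M-\hat{B}_D$ is singular. Under this assumption, every good other than possibly $out(i)$ is the output of a firm $j\neq i$ satisfying $x_{out(j)}=\sum_h f_{jh}x_h$. Your \enquote{pure labor good} exception is not real, since a labor-only producer forces $x_g=0$. So your maximum-principle step already shows that a positive maximum of $x_g/p_g$ can sit only at $out(i)$, and only if $i$ is its sole producer.

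The planned transfer of extremality downstream, however, is false. For the next good on the path, $x_{g_2}\le t^*\sum_h f_{jh}p_h<t^*p_{g_2}$, so $g_2$ is never a maximizer and nothing reaches the consumer good. The fallback to three traders per good, or to a restricted subspace, is unnecessary. Once slopes of firms $j\neq i$ are positive, the lemma holds as stated.

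The missing idea is to use the maximum and the minimum together to obtain a sign, and then propagate zeros \emph{upstream} from the consumer good.
\begin{itemize}
\item \textbf{Sign.} If $\boldsymbol{x}\neq 0$, replace $\boldsymbol{x}$ by $-\boldsymbol{x}$ if necessary so that $t^*=\max_g x_g/p_g>0$; this is attained at $out(i)$. A negative minimum could likewise be attained only at $out(i)$, where $x$ is positive. Hence $\boldsymbol{x}\ge 0$.
\item \textbf{Path.} Take a shortest path $out(i)=g_1,\dots,g_k\in\mathcal{C}$, where $g_{l+1}$ is produced by some $j_l\neq i$ using $g_l$ with $f_{j_l g_l}>0$.
\item \textbf{Propagation.} Then $0=x_{g_k}=\sum_h f_{j_{k-1}h}x_h\ge f_{j_{k-1}g_{k-1}}x_{g_{k-1}}\ge 0$, so $x_{g_{k-1}}=0$. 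The cancellation you worried about cannot occur once all terms are nonnegative. Backward induction gives $x_{out(i)}=0$, a contradiction.
\end{itemize}

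Completed this way, your route is more elementary and handles several producers per good directly. The paper argues differently. It selects one producer per good, so viability makes $I-\overline{F}$ a nonsingular M-matrix. Then $\boldsymbol{x}\perp\hat{\vv}_j$ for all $j\neq i$ forces $\boldsymbol{x}$ to be a multiple of the $i$-th column of $(I-\overline{F})^{-1}=\sum_k\overline{F}^k$. Consumer connectedness makes that column strictly positive at some consumer good, which contradicts $\boldsymbol{x}_{\mathcal{C}}=0$.
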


Using the schedules \eqref{linear_schedule}, the market-clearing conditions \eqref{sub_mktclear} can be written as:
\[
\left(\sum_i \hat{B}_i+\hat{B}_c\right)\pv-\sum_i \hat{B}_{i,f}=\boldsymbol{A}
\]
So, Equation \eqref{mktclear_linear} follows, with the definitions of $M$ and $M_f$ in the text. In particular, using \eqref{coefficient_equation} we get: $M=\sum_i\overline{B}_i\hat{\vv}_i\hat{\vv}_i'+\hat{B}_c$.

Now since $\hat{B}_i=\overline{B}_i\hat{\vv}_i\hat{\vv}_i'$ is positive semidefinite, Lemma \ref{lemma-invertibility} implies that $M=M-\hat{B}_i+\hat{B}_i$ is positive definite, and so invertible. \qed

 \subsection{Proof of Lemma \ref{lemma_priceimpact}}
\label{proof_lemma_priceimpact}

By Lemma \ref{lemma-invertibility}, $M-\hat{B}_i$ is positive definite, so the inverse exists and is positive definite. Then, $\Lambda_i$ is positive definite because it is a principal submatrix of a positive-definite matrix.
Concerning the monotonicity, $\hat{B}_j=\overline{B}_j\hat{\vv}_j\hat{\vv}_j'$, so $\hat{B}_j$ is increasing in the coefficient $\overline{B}_j$. Moreover, $(M-\hat{B}_i)^{-1}=(\sum_{j\neq i} \hat{B}_j+\hat{B}_c)^{-1}$ is decreasing in each $\hat{B}_j$, and passing to a principal submatrix preserves the positive semidefinite ordering, so $\Lambda_i$ is decreasing in each $\hat{B}_j$.  \qed

\subsection{Proof of Theorem \ref{thm:existence}}
\label{existence_proof}

Now, consider the game $\mathcal{G}'=(\mathcal{N},(X_i,U_i)_{i\in\mathcal
N})$ with payoffs $U_i=\ln \pi_i(e^{x_1},\ldots,e^{x_n})$, where $X_i=\mathbb{R}$. Since both $\ln(\cdot)$ and $exp(\cdot)$ are monotone, a profile $\overline{B}$ is a pure Nash equilibrium of $\mathcal{G}$ if and only if the profile $x=(\ln \overline{B}_1, \ldots, \ln \overline{B}_n)$ is a pure Nash equilibrium of  $\mathcal{G}'$. It follows that the Nash equilibrium of $\mathcal{G}'$ is unique if and only if the Nash equilibrium of $\mathcal{G}$ is unique. Moreover, since the log is order-preserving, the game $\mathcal{G}$ is also supermodular. In the following, we analyze the game $\mathcal{G}'$.

\paragraph{Existence}
 
The best reply equation \eqref{bestreply} shows that $\overline{B}_i\in [0,k_i]$ and is continuous. So, by Brouwer's fixed-point theorem, there exists an equilibrium.


\paragraph{Potential}

To show that the modified game $\mathcal{G}'$ is a potential game, we show that the second cross-derivatives of the payoffs are equal. To compute the derivative of the payoffs, we must differentiate the matrix $M^{-1}$:
\begin{equation}
    \dfrac{\partial}{\partial \overline{B}_i} M^{-1}=-M^{-1}\dfrac{\partial}{\partial \overline{B}_i} \left( \sum_j \overline{B}_j\boldsymbol{\hat{\vv}}_j\boldsymbol{\hat{\vv}'}_j +\hat{B}_c\right)M^{-1}=-M^{-1}\boldsymbol{\hat{v}}_i\boldsymbol{\hat{v}'}_iM^{-1}
\label{derivative_rule}
\end{equation}
Since $M_f=\sum_j \hat{B}_{j,f}$ and $B_{j,f}=f_{j,L}\overline{B}_j\vv_j$, we have: $M_f=\sum_j \overline{B}_j\hat{\vv}_j f_{j,L}=\sum_j \overline{B}_j\hat{\vv}_j \hat{\ev}_j\boldsymbol{f}_{L}$, where $\hat{\ev}_j$ is the $j$-th canonical basis vector in $\mathbb{R}^n$. So:
\[
\dfrac{\partial}{\partial \overline{B}_i} M_f=\dfrac{\partial}{\partial \overline{B}_i} \sum \overline{B}_j \hat{\vv}_j\hat{\ev}_j'\boldsymbol{f}_L=\hat{\vv}_i\hat{\ev}_i'\boldsymbol{f}_L
\]
Moreover, $\hat{\vv}_i'\pv=\hat{\vv}_i'M^{-1}(\boldsymbol{A}+M_f)$. So:
\begin{align}
\dfrac{\partial }{\partial \overline{B}_i}\hat{\vv}_j'\pv&=-\hat{\vv}_j'M^{-1}\hat{\vv}_i\hat{\vv}'_iM^{-1}(\boldsymbol{A}+M_f)+\hat{\vv}_j'M^{-1}\hat{\vv}_i\hat{\ev}_i'\boldsymbol{f} \nonumber\\
&=-\hat{\vv}_jM^{-1}\hat{\vv}_i(\hat{\vv}'_i\pv-f_{i,L})\nonumber 
\end{align}

Using this,  the derivative of the profit is:
\begin{align*}
  \dfrac{\partial\pi_{i}}{\partial\overline{B}_{i}}		&=\overline{B}_{i}\left(1-\dfrac{1}{2k_i}\overline{B}_{i}\right)\left(\hat{\vv}_i'\pv-f_{i,L}\right)^{2}\left(\dfrac{1-\overline{B}_{i}/k_i}{\overline{B}_{i}\left(1-\dfrac{1}{2k_i}\overline{B}_{i}\right)}-2\hat{\vv}_{i}'M^{-1}\hat{\vv}_{i}\right)
\end{align*}
Notice that with our reparameterization $\overline{B}_i=e^{x_i}$ and $\dfrac{\partial U_{i}}{\partial x_{i}}=\dfrac{\partial\ln\pi_{i}}{\partial\ln\overline{B}_{i}}$. Since $\pi_i=\overline{B}\left(1-\dfrac{1}{2k_i}\overline{B}_{i}\right)\left(\hat{\vv}_i\pv-f_{i,L}\right)^{2}$, the derivative of $U_i$ becomes: 
\begin{align*}
    \dfrac{\partial U_{i}}{\partial x_{i}}
	&=1-\dfrac{1}{2k_i}\dfrac{e^{x_i}}{\left(1-\dfrac{1}{2k_i}e^{x_i}\right)}-2e^{x_i}\hat{\vv}_{i}'M^{-1}\hat{\vv}_{i}=1-\dfrac{1}{2k_i}\dfrac{\overline{B}_{i}}{\left(1-\dfrac{1}{2k_i}\overline{B}_{i}\right)}-2\overline{B}_{i}\hat{\vv}_{i}'M^{-1}\hat{\vv}_{i}
    \end{align*}
    Differentiating again, and using \ref{derivative_rule}:
    \begin{align}
\dfrac{\partial^2 U_{i}}{\partial x_{j}\partial x_{i}}	&=\begin{cases}
2\overline{B}_{i}\overline{B}_{j}\hat{\vv}_{i}'M^{-1}\hat{\vv}_{j}\hat{\vv}_{j}'M^{-1}\hat{\vv}_{i}=2\overline{B}_{i}\overline{B}_{j}\left(\hat{\vv}_{i}'M^{-1}\hat{\vv}_{j}\right)^2& i\neq j\\
 =-\dfrac{1}{2k_i}\dfrac{\overline{B}_{i}}{\left(1-\dfrac{1}{2k_i}\overline{B}_{i}\right)^{2}}-2\overline{B}_{i}\hat{\vv}_{i}'M^{-1}\hat{\vv}_{i}  +2\overline{B}_{i}^2\left(\hat{\vv}_{i}'M^{-1}\hat{\vv}_{i}\right)^{2}  & i=j
\end{cases}
\label{hessian}
\end{align}
Since $\dfrac{\partial^2 U_{i}}{\partial x_{j}\partial x_{i}}=\dfrac{\partial^2 U_{j}}{\partial x_{i}\partial x_{j}}$, the game is a potential game. This means that there exists a twice differentiable function $\Psi$ such that: $\dfrac{\partial \Psi}{ \partial x_{i} }=\dfrac{\partial U_i}{\partial x_{i} }$ for each $i$ and each profile $x$. In particular, this means that, even without knowing the expression of $\Psi$, we know its Hessian matrix $H$, we have $H_{ij}=\dfrac{\partial^2\Psi}{\partial x_{i}\partial x_j}=\dfrac{\partial^2U_i}{\partial x_{i}\partial x_j}$.

\paragraph{Uniqueness}

Now, we prove that the potential is strictly concave. This proves that the game can have at most one Nash equilibrium. To prove it, we prove that the Hessian matrix $H$ is negative definite, by proving that $-H$ is strictly diagonally dominant. Sum the off-diagonal entries:
\begin{align*}
\sum_{i\neq j}\left|H_{ij}\right|&=\sum_{i\neq j}H_{ij}\\
    &=\sum_{i\neq j}2\overline{B}_{j}\overline{B}_{i}\left(\hat{\vv}_{i}'M^{-1}\hat{\vv}_{j}\right)^{2}	\\
&=2\overline{B}_{j}\hat{\vv}_{j}'M^{-1}\left(\sum_{i\neq j}\overline{B}_{i}\hat{\vv}_{i}\hat{\vv}_{i}'\right)M^{-1}\hat{\vv}_{j}\\
&=2\overline{B}_{j}\hat{\vv}_{j}'M^{-1}\left(M-\hat{B}_{c}-\overline{B}_{j}\hat{\vv}_{j}\hat{\vv}_{j}'\right)M^{-1}\hat{\vv}_{j}	\\
&=2\overline{B}_{j}\hat{\vv}_{j}'M^{-1}MM^{-1}\hat{\vv}_{j}-2\overline{B}_{j}\hat{\vv}_{j}'M^{-1}\left(\hat{B}_{c}+\overline{B}_{j}\hat{\vv}_{j}\hat{\vv}_{j}'\right)M^{-1}\hat{\vv}_{j}	\\
&<2\overline{B}_{j}\hat{\vv}_{j}'M^{-1}\hat{\vv}_{j}-2\overline{B}_{j}\hat{\vv}_{j}'M^{-1}\left(\overline{B}_{j}\hat{\vv}_{j}\hat{\vv}_{j}'\right)M^{-1}\hat{\vv}_{j},
\end{align*}
where the strict inequality is because $\hat{B}_{c}$ is positive semidefinite, and there must be at least a path from each firm $j$ to the consumer, so that $\left[M^{-1}\hat{\vv}_{j}\right]_{c}\neq0$. Comparing with the expression for $H_{jj}$ in Equation \eqref{hessian}, we conclude that  $\sum_{i\neq j}\left|H_{ij}\right|<-H_{jj}$. This implies that $-H$ is strictly diagonally dominant and so $H$ is negative definite. Hence, $\Psi$ is strictly concave and the Nash equilibrium is unique. Since $H$ is negative definite, in particular, it has negative diagonal: this means that the payoffs are strictly concave, so the FOCs \eqref{bestreply} are necessary and sufficient for the equilibrium. Since the payoffs are concave and the cross derivatives are positive, the game is also a supermodular game, and we can conclude that the unique Nash equilibrium is also the unique rationalizable action profile. 

\qed

\section{Proofs of Section \ref{sec:network}}	

\subsection{Proof of Theorem \ref{thm:markups}}
\label{proof:markups}

Because of the assumption that goods can be partitioned in two disconnected sets, the matrix $M-\hat{B}_i$ must be block diagonal: one block $M_{\mathcal{N}^{down}_i}$, corresponding to $\mathcal{N}^{down}_i$, and the other $M_{\mathcal{N}^{up}_i}$ to  $\mathcal{N}^{up}_i$. The same happens for the matrix $M-\hat{B}_j$. Then, also the price impact matrices are block diagonal, with one block $\Lambda_i^{out}$ corresponding to the output, and one block $\Lambda_i^{in}$, corresponding to the inputs.

We need the following Lemmas, proven in the Supplemental Appendix. Since firm $j$ has a unique input, denote it as: $in(j)$. The vector  $\boldsymbol{e}^{down}_j$ is the vector of the canonical basis with length $|\mathcal{N}^{down}(j)|$, with 1 in the position $out(j)$.  The vector  $\boldsymbol{e}^{up}_i$ is the vector of the canonical basis with length $|\mathcal{N}^{up}_i|$, with 1 in the position $in(i)$.

\begin{lemma}
Firm $j$ satisfies (actually, also $i$):
\begin{align}
\Lambda_{j}^{out}	&=\left(\left(M_{\mathcal{N}^{\text{down}}(j)}^{-1}\right)_{out(j),out(j)}^{-1} - \overline{B}_{j}\right)^{-1}\\
\Lambda_{i}^{in}&	=\left(\left(M_{\mathcal{N}^{\text{up}}(i)}^{-1}\right)_{in(i),in(i)}^{-1} - \overline{B}_{i}\right)^{-1}
\label{downstream_j}
\end{align}

\label{linear_algebra}
\end{lemma}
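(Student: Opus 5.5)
The plan is a Schur-complement argument. Both identities have the same structure: a diagonal entry of the inverse of the block $M$ with firm $j$ (resp. $i$) removed is related to the same entry of the inverse of the full block, which contains firm $j$'s (resp. $i$'s) own rank-one term. We use that $\hat{B}_j=\overline{B}_j\hat{\vv}_j\hat{\vv}_j'$ is rank one, together with the block-diagonal structure established just before the Lemma.

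\textbf{Step 1: reduction to a rank-one update on one block.} Under Assumption 1, $M-\hat{B}_j$ is block diagonal with respect to $(\mathcal{N}^{down}(j),\mathcal{N}^{up}(j))$. Hence
\[
\Lambda_j^{out}=\big[(M-\hat{B}_j)_{\mathcal{N}^{down}(j)}^{-1}\big]_{out(j),out(j)}.
\]
Restricted to the downstream block, $\hat{B}_j$ contributes only $\overline{B}_j$ in the $(out(j),out(j))$ position, since the output coordinate of $\vv_j$ equals $1$. Its cross terms with the input lie in the off-diagonal block $\mathcal{N}^{down}\times\mathcal{N}^{up}$, which we discard when passing to the diagonal block. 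So the downstream block satisfies
\[
M_{\mathcal{N}^{down}(j)}=(M-\hat{B}_j)_{\mathcal{N}^{down}(j)}+\overline{B}_j\,\boldsymbol{e}^{down}_j(\boldsymbol{e}^{down}_j)'.
\]
Here $M_{\mathcal{N}^{down}(j)}$ should be read as this diagonal block of $M$ with $j$'s own output term included. I will check that this is the paper's convention, since it is the only reading that makes the formula consistent.

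\textbf{Step 2: Sherman--Morrison on the diagonal entry.} Write $X=(M-\hat{B}_j)_{\mathcal{N}^{down}(j)}$, which is positive definite by Lemma \ref{lemma-invertibility}, and $e=\boldsymbol{e}^{down}_j$. Sherman--Morrison gives
\[
e'(X+\overline{B}_j ee')^{-1}e=\frac{e'X^{-1}e}{1+\overline{B}_j\,e'X^{-1}e}.
\]
Equivalently, $\big(e'(X+\overline{B}_jee')^{-1}e\big)^{-1}=(e'X^{-1}e)^{-1}+\overline{B}_j$. Solving for $e'X^{-1}e=\Lambda_j^{out}$ gives
\[
\Lambda_j^{out}=\Big(\big(M_{\mathcal{N}^{down}(j)}^{-1}\big)^{-1}_{out(j),out(j)}-\overline{B}_j\Big)^{-1}.
\]
This is the first identity. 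Positivity of the bracket follows because it equals $(e'X^{-1}e)^{-1}>0$.

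\textbf{Step 3: the input identity for $i$.} Apply the same argument to $i$ on the upstream block, with the single input $in(i)$. The only change is that the relevant coordinate of $\vv_i$ is $-f_{i,in(i)}$, so the rank-one term on the upstream block is $\overline{B}_if_{i,in(i)}^2\,\boldsymbol{e}^{up}_i(\boldsymbol{e}^{up}_i)'$. To obtain the stated formula with $\overline{B}_i$ exactly, I would either use the normalization $f=1$, which Assumption 1 imposes on the link considered, or note that the coefficient becomes $f^2\overline{B}_i$ in general. The remark \enquote{actually, also $i$} is then the same computation.

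\textbf{Main obstacle.} The main obstacle is bookkeeping rather than mathematics. One must make sure that the block-diagonality of $M-\hat{B}_j$ legitimately lets us drop the off-diagonal part of $\hat{B}_j$. This works because we only need diagonal blocks of $M$ itself, not of the inverse of $M$, which is not block diagonal. One must also fix the correct reading of $M_{\mathcal{N}^{down}(j)}$ as that diagonal block of $M$.
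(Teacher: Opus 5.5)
Your proof is correct and is essentially the paper's argument. The paper also passes to the diagonal block $M_{\mathcal{N}^{down}(j)}$ and applies Sherman--Morrison to the rank-one term $\overline{B}_j\boldsymbol{e}^{down}_j(\boldsymbol{e}^{down}_j)'$, but as a downdate of $M_{\mathcal{N}^{down}(j)}$. You instead apply it as an update of $(M-\hat{B}_j)_{\mathcal{N}^{down}(j)}$ and solve for $\Lambda_j^{out}$, which has the small advantage that the denominator $1+\overline{B}_j\,e'X^{-1}e$ is automatically positive.

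One correction on the input side: Assumption 1 normalizes $f_{j,out(i)}$, not $f_{i,in(i)}$. So the second identity as stated needs $f_{i,in(i)}=1$ as an extra normalization, which the paper's proof also uses tacitly. Otherwise $\overline{B}_i$ must be replaced by $f_{i,in(i)}^2\overline{B}_i$, exactly as your second option says.
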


\begin{lemma}
   Consider any profile of slopes $\overline{B}$, and assume that $i$ and $j$ satisfy the assumptions of Theorem \ref{thm:markups} part 1. The elements relative to the output block of the (diagonal) price impact matrices are:
\begin{align}   
\Lambda_{i}^{out}	&=\begin{pmatrix}
M_{\mathcal{N}^{\text{down}}(j)} & -\overline{B}_{j}\boldsymbol{e}^{down}_{j}\\
-\overline{B}_{j}(\boldsymbol{e}^{down}_{j})' & \overline{B}_{j}+\Lambda_{-j}^{-1}
\end{pmatrix}_{out(i),out(i)}^{-1}=\left(\Lambda_{-j}^{-1}+\overline{B}_{j}-\overline{B}_{j}^{2}\left(M_{\mathcal{N}^{\text{down}}(j)}\right)_{out(j), out(j)}^{-1}\right)^{-1} \label{downstream_i}
\end{align}
where $0\le \Lambda_{-j}^{-1}<\overline{B}_{-j}$.

If they satisfy the assumptions of Part 2, then:
\begin{align}   
\Lambda_{j}^{in}	&=\begin{pmatrix}  M_{\mathcal{N}^{\text{up}}(i)} & -\overline{B}_{i}\boldsymbol{e}^{up}_{i}\\
-\overline{B}_{i}(\boldsymbol{e}^{up}_{i})' & \overline{B}_{i}+\Lambda_{-i}^{-1}
\end{pmatrix}_{in(j),in(j)}^{-1}=\left(\Lambda_{-i}^{-1}+\overline{B}_{i}-\overline{B}_{i}^{2}\left( M_{\mathcal{N}^{\text{up}}(i)} \right)_{in(i), in(i)}^{-1}\right)^{-1} ,
\end{align}
where $0\le \Lambda_{-i}^{-1}<\overline{B}_{-i}$.

    \label{lemma_downstream}
\end{lemma}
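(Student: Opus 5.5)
The plan is to compute the relevant diagonal entry of $(M-\hat{B}_i)^{-1}$ directly, using the block structure from Assumption 1 and two Schur complements. For part 1, Assumption 1 makes $M-\hat{B}_i$ block diagonal across $\mathcal{N}^{down}_i$ and $\mathcal{N}^{up}_i$. Hence
\[
\Lambda_i^{out}=\left[\left((M-\hat{B}_i)_{\mathcal{N}^{down}_i}\right)^{-1}\right]_{out(i),out(i)},
\]
and only the downstream block matters. By the extra hypothesis of part 1, $\mathcal{N}^{down}_i$ splits into three pieces: the singleton $\{out(i)\}$, the set $\mathcal{N}^{down}(j)$ (which contains $out(j)$ and everything connected to it without passing through $out(i)$), and the set $\mathcal{N}_{i,-j}$ of goods reached through the other customers. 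No firm other than $i$ links $\mathcal{N}^{down}(j)$ and $\mathcal{N}_{i,-j}$, so these two blocks interact only through the row and column of $out(i)$.

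Next I would write the block entries explicitly, using $M-\hat{B}_i=\sum_{k\neq i}\overline{B}_k\hat{\vv}_k\hat{\vv}_k'+\hat{B}_c$.
\begin{itemize}
\item Firm $j$ has $\vv_j$ with entries $1$ at $out(j)$ and $-f_{j,out(i)}=-1$ at $out(i)$. It therefore contributes $\overline{B}_j$ to the $(out(i),out(i))$ entry and $-\overline{B}_j$ to the $(out(j),out(i))$ entries.
\item Its contribution at $(out(j),out(j))$, together with all other firms and consumers trading goods in $\mathcal{N}^{down}(j)$, is exactly $M_{\mathcal{N}^{down}(j)}$, because $i$ does not trade in those goods.
\item The other customers and consumers of good $out(i)$ contribute a positive semidefinite block coupling $out(i)$ with $\mathcal{N}_{i,-j}$.
\end{itemize}
I would eliminate $\mathcal{N}_{i,-j}$ by a Schur complement. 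This leaves a scalar term at $(out(i),out(i))$, which I call $\Lambda_{-j}^{-1}$. It equals $\overline{B}_{-j}$ minus a correction of the form $c'N^{-1}c$ with $N$ positive definite, so $0\le \Lambda_{-j}^{-1}\le\overline{B}_{-j}$. The lower bound holds because the Schur complement of a positive semidefinite matrix is positive semidefinite. The upper bound is strict whenever some other customer firm exists, since its $-\overline{B}_k$ coupling makes the correction positive. In the degenerate case where only consumers buy $out(i)$ besides $j$, equality holds, and I would flag this case. After this elimination the matrix is exactly the bordered matrix displayed in the statement.

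Then the inverse formula for a bordered matrix gives the $(out(i),out(i))$ entry as the reciprocal of the Schur complement of $M_{\mathcal{N}^{down}(j)}$:
\[
\Lambda_i^{out}=\left(\overline{B}_j+\Lambda_{-j}^{-1}-\overline{B}_j^2\,(\boldsymbol{e}^{down}_j)'M_{\mathcal{N}^{down}(j)}^{-1}\boldsymbol{e}^{down}_j\right)^{-1},
\]
which is the claimed expression. Invertibility of every block follows from Lemma \ref{lemma-invertibility}.

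Part 2 is the mirror image. Assumption 1 applied to $j$ makes $M-\hat{B}_j$ block diagonal, so only its upstream block matters. The set $\mathcal{N}^{up}_j$ splits into $\{in(j)\}=\{out(i)\}$, the set $\mathcal{N}^{up}(i)$ (which contains $in(i)$, unique by hypothesis), and the set $\mathcal{N}_{j,-i}$ from the other suppliers. Since $j$ is $i$'s only customer and $\vv_i$ has entries $1$ at $out(i)$ and $-f_{i,in(i)}$ at $in(i)$, the same two Schur complement steps produce the second displayed formula, with $\overline{B}_i$ and $M_{\mathcal{N}^{up}(i)}$ in place of $\overline{B}_j$ and $M_{\mathcal{N}^{down}(j)}$. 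Here I would need to check that the coupling coefficient is $-\overline{B}_i$, which requires the technology coefficient to be normalized to one as in Assumption 1.

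I expect the main obstacle to be the bookkeeping of which slope terms sit in which diagonal entry, in particular confirming that the $(out(j),out(j))$ entry is exactly the one in $M_{\mathcal{N}^{down}(j)}$ (including $\overline{B}_j$). The second delicate point is the strictness of $\Lambda_{-j}^{-1}<\overline{B}_{-j}$, which depends on the presence of other customer firms rather than consumers alone.
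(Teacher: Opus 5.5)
Your proposal is correct and follows essentially the same route as the paper: block-diagonality of $M-\hat{B}_i$ from Assumption 1, a split of the downstream block into $\mathcal{N}_{i,-j}$, $\mathcal{N}^{down}(j)$ and $\{out(i)\}$ (coupled only through $out(i)$), and a Schur-complement inversion in which $\Lambda_{-j}^{-1}=\overline{B}_{-j}-c'N^{-1}c\ge 0$. Your caveats also hold. First, $\Lambda_{-j}^{-1}\le\overline{B}_{-j}$ is only weak in general: in the paper's own vertical economy $\Lambda_{-j}^{-1}=\overline{B}_{-j}=1$, and the proof of Theorem \ref{thm:markups} only uses the weak version. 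Second, in part 2 the coupling is $-\overline{B}_i f_{i,in(i)}$, so the normalization actually needed is $f_{i,in(i)}=1$, not the $f_{j,out(i)}=1$ of Assumption 1.
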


\textbf{Part 1}

We want to prove that $\mu_i^{out}>\mu_j^{out}$. Since $q_i^{out}\ge q_j^{out}$, it is sufficient to prove that $\Lambda_i^{out}>\Lambda_j^{out}$. Using Equation \eqref{downstream_j} to solve for $\left(M_{\mathcal{N}^{\text{down}}(j)}^{-1}\right)_{out(j),out(j)}^{-1}=(\Lambda_j^{out})^{-1}+\overline{B}_j$, and substituting into \eqref{downstream_i}, we can express $\Lambda_i^{out}$ as function of $\Lambda_j^{out}$:
\[
\Lambda_i^{out}=\left(\Lambda_{-j}^{-1}+\left(\dfrac{1}{\overline{B}_j}+\Lambda_j^{out}\right)^{-1}\right)^{-1}
\]
From which it follows that $\Lambda_i^{out}>\Lambda_j^{out}$ if $\Lambda_{-j}^{-1}$ is small enough. Since $\Lambda_{-j}^{-1}\le \overline{B}_{-j}$, this is true if $\overline{B}_{-j}$ is small enough.

\textbf{Part 2}

Using the second part of Lemma \ref{lemma_downstream}, if $\overline{B}_{-i}$ is small enough all the reasoning can be repeated, obtaining $\Lambda_i^{in}<\Lambda_j^{in}$.
\qed

\subsection{Proof of Proposition \ref{prop:markup}
}
\label{proof_markups}

The proof follows from the following lemmas, proven in the Supplementary Appendix.

\begin{lemma}
    Define $\overline{BR}(\overline{\Lambda},n,k)$ as the unique positive solution of:
\[
X=\left(k^{-1}+(\overline{\Lambda}^{-1}+(n-1)X)^{-1}\right)^{-1}
\]
Then, $\overline{BR}$ is increasing in $n$ and $k$, and decreasing in $\overline{\Lambda}$.

Moreover, define $\overline{\Lambda}_i(\overline{B}_{-i}):=\overline{\Lambda}_i^{out}+\overline{\Lambda}_i^{in}$. The best reply equations \eqref{bestreply_chain} can be expressed as, for all $i$:
\[
\overline{B}_i=\overline{BR}(\overline{\Lambda}_i(\overline{B}_{-i}),k_i,n_i)
\]
 \label{lemma:layer_best_reply}
\end{lemma}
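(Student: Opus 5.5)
The lemma has two parts: (a) monotonicity of $\overline{BR}$ in $(n,k,\overline{\Lambda})$, and (b) the claim that in the layered chain every best reply can be written in this form. I would prove (b) first, since it fixes what $\overline{\Lambda}$ means, and then (a) by implicit differentiation.

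For (b), I start from the best reply \eqref{bestreply}, $\overline{B}_i=(\vv_i'\Lambda_i\vv_i+1/k_i)^{-1}$, with $\vv_i=(1,-1)$. The plan is to isolate, inside the residual schedule of a firm in layer $i$, the contribution of its $n_i-1$ same-layer rivals, each with slope $X=\overline{B}_i$ (symmetric equilibrium within the layer). These rivals trade exactly goods $i$ and $i+1$ with coefficient $\vv_i\vv_i'$, so
\[
M-\hat{B}_i=(n_i-1)X\hat{\vv}_i\hat{\vv}_i'+R_i,
\]
where $R_i$ collects all other layers and consumers. I set $\overline{\Lambda}_i:=\vv_i'[R_i^{-1}]_{\mathcal{N}(i)}\vv_i$; in the chain this equals $\overline{\Lambda}_i^{out}+\overline{\Lambda}_i^{in}$. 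The reason is that removing layer $i$ entirely disconnects the upstream and downstream goods, so $[R_i^{-1}]_{\mathcal{N}(i)}$ is diagonal.

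A Sherman--Morrison rank-one update then gives
\[
\vv_i'\Lambda_i\vv_i=\frac{\overline{\Lambda}_i}{1+(n_i-1)X\overline{\Lambda}_i}=\bigl(\overline{\Lambda}_i^{-1}+(n_i-1)X\bigr)^{-1}.
\]
Substituting this into \eqref{bestreply} yields exactly the fixed-point equation for $X$, with $\overline{\Lambda}_i$ depending only on the slopes of the other layers.

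For existence and uniqueness of the positive root, I would clear denominators. Writing $a=\overline{\Lambda}^{-1}$, the equation $X^{-1}=k^{-1}+(a+(n-1)X)^{-1}$ becomes a quadratic,
\[
(n-1)X^2+\bigl(a-(n-1)k\bigr)X-ak=0 .
\]
Its constant term $-ak$ is negative, so for $n\ge2$ it has exactly one positive root.

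For (a), I define
\[
\Phi(X,\overline{\Lambda},n,k)=X^{-1}-k^{-1}-\bigl(\overline{\Lambda}^{-1}+(n-1)X\bigr)^{-1}.
\]
At the positive root, $\partial_X\Phi<0$: the quadratic crosses zero from below there, and equivalently $X^{-1}$ falls faster than the other term rises. The partial derivatives with respect to the parameters are:
\begin{itemize}
\item $\partial_n\Phi>0$;
\item $\partial_k\Phi=k^{-2}>0$;
\item $\partial_{\overline{\Lambda}}\Phi=-\overline{\Lambda}^{-2}\bigl(\overline{\Lambda}^{-1}+(n-1)X\bigr)^{-2}<0$.
\end{itemize}
By the implicit function theorem, $dX/d\theta=-\partial_\theta\Phi/\partial_X\Phi$. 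This gives $X$ increasing in $n$ and $k$ and decreasing in $\overline{\Lambda}$.

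The main obstacle is the sign of $\partial_X\Phi$. I would verify it by evaluating the quadratic's derivative at its positive root: it equals the positive square root of the discriminant. I would also be careful that the treatment of same-layer rivals is legitimate before symmetry is imposed, and restrict the statement to symmetric within-layer profiles.
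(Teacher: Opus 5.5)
Your proposal is correct and essentially matches the paper's proof. The paper also clears denominators to get a quadratic with leading coefficient $n-1>0$ and constant term $-\overline{\Lambda}^{-1}k<0$, and it applies the implicit function theorem to the same function $\Phi$. It shows $\partial_X\Phi<0$ directly for every $X>0$ using $n\ge 2$, rather than via the discriminant. One slip: the linear coefficient is $\overline{\Lambda}^{-1}-(n-2)k$, not $a-(n-1)k$. This is harmless, since your argument only uses the signs of the leading and constant terms.

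For the second part, the paper simply reads the claim off \eqref{bestreply_chain}. It obtains that equation in the proof of Proposition \ref{prop:markup} by inverting the $2\times 2$ Jacobian of the direct residual schedule. Your Sherman--Morrison update after separating the $n_i-1$ same-layer rivals is an equivalent way of computing $\vv_i'\Lambda_i\vv_i$.
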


\begin{lemma}

\label{compstat}
In equilibrium, each $\overline{B_i}^*$ is increasing in each $n_j$.
Moreover, if $k_i=k$ for all $i$, then for all $i,j$ $n_i \ge n_j$ implies $\overline{B_i}^* \ge \overline{B_j}^*$. 

\end{lemma}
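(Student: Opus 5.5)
The plan is to work in the reduced per-layer system of Lemma \ref{lemma:layer_best_reply}. There, the equilibrium is a fixed point of
\[
\overline{B}_i=\overline{BR}\big(\overline{\Lambda}_i(\overline{B}_{-i};\boldsymbol{n}),k,n_i\big),\qquad i=1,\ldots,N,
\]
where I make explicit that $\overline{\Lambda}_i=\overline{\Lambda}_i^{out}+\overline{\Lambda}_i^{in}$ also depends on the vector $\boldsymbol{n}$ of layer sizes.

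\textbf{Step 1 (explicit price impact).} I will write $\overline{\Lambda}_i$ in the chain explicitly. By the block-diagonal (series) structure used in Lemma \ref{lemma_downstream}, the market for good $i$ is a two-sided market:
\begin{itemize}
\item on the sell side there are the $n_i-1$ competitors of the firm, with aggregate slope $(n_i-1)\overline{B}_i$;
\item on the buy side there is an \emph{effective downstream slope} $E_i$, defined recursively by $E_1=B_c$ and $E_i=\big((n_{i-1}\overline{B}_{i-1})^{-1}+(E_{i-1}+(n_{i-1}-1)\overline{B}_{i-1})^{-1}\big)^{-1}$, modulo the exact bookkeeping of the own-layer competitors.
\end{itemize}
Symmetrically, the input market of layer $i$ is governed by an \emph{effective upstream slope} $U_{i+1}$, built recursively from layer $N$ downward. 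This gives
\[
\overline{\Lambda}_i=\big((n_i-1)\overline{B}_i+E_i\big)^{-1}+\big(n_{i+1}\overline{B}_{i+1}\ \text{series } U\text{-term}\big)^{-1}.
\]
Each effective slope is a series/parallel combination of terms $n_\ell\overline{B}_\ell$. Such combinations are increasing in every $n_\ell$ and every $\overline{B}_\ell$. Hence $\overline{\Lambda}_i$ is decreasing in $\overline{B}_{-i}$, consistent with Lemma \ref{lemma_priceimpact}, and also decreasing in every $n_\ell$.

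\textbf{Step 2 (first claim, monotone comparative statics).} Combining Step 1 with Lemma \ref{lemma:layer_best_reply}, where $\overline{BR}$ is decreasing in $\overline{\Lambda}$ and increasing in $n$, the map
\[
T(\overline{B};\boldsymbol{n})_i=\overline{BR}\big(\overline{\Lambda}_i(\overline{B}_{-i};\boldsymbol{n}),k,n_i\big)
\]
is increasing in $\overline{B}$ and increasing in each $n_j$. It also maps the lattice $\prod_i[0,k]$ into itself. By Tarski's theorem and the standard Milgrom--Roberts comparative statics, the largest and smallest fixed points of $T$ are increasing in $n_j$. By Theorem \ref{thm:existence} the equilibrium is unique, so these fixed points coincide, and therefore $\overline{B}^*_i$ is increasing in each $n_j$.

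\textbf{Step 3 (second claim, ranking across layers) — the main obstacle.} Here the two layers sit at different positions in the chain, so their price impacts are not directly comparable: the chain has consumers at one end and pure labor at the other. I would argue by contradiction. Suppose $n_i\ge n_j$ but $\overline{B}_i^*<\overline{B}_j^*$. Since $\overline{BR}$ is increasing in $n$, and by the strict monotonicity of $\overline{BR}$ in $\overline{\Lambda}$, this forces $\overline{\Lambda}_i>\overline{\Lambda}_j$.

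I would then try to contradict this using the recursive formulas for $E$ and $U$. Each of $\overline{\Lambda}_i$ and $\overline{\Lambda}_j$ is a sum of an output term and an input term. The homogeneous case $n_\ell=n^*$, where the equilibrium is symmetric ($\overline{B}_\ell=B^*$), suggests the structure: there, the output term increases upstream exactly as much as the input term decreases, so the sum $\overline{\Lambda}$ is constant across layers. The plan is therefore:
\begin{enumerate}
\item prove the invariance of $\overline{\Lambda}$ across layers in the homogeneous case, via a reflection identity relating $E_i$ and $U_i$;
\item use a perturbation/induction argument on the layer sizes, raising $n_i$ one unit at a time starting from a homogeneous chain. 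Step 2 then shows that the effect on $\overline{B}_i$ (which enters $\overline{BR}$ directly through $n_i$) dominates the indirect effect on $\overline{B}_j$ (which comes only through $\overline{\Lambda}_j$).
\end{enumerate}

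If the reflection identity fails because of the consumer/labor asymmetry at the two ends of the chain, my fallback is different. I would compare $\overline{B}_i^*$ and $\overline{B}_j^*$ through the potential of Theorem \ref{thm:existence}, restricted to layer-symmetric profiles. I would show that exchanging the values of $\overline{B}_i$ and $\overline{B}_j$ weakly increases the potential when $n_i\ge n_j$ and $\overline{B}_i<\overline{B}_j$. By strict concavity of the potential, this contradicts uniqueness of the maximizer.
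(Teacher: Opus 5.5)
Your Step 2 is essentially the paper's argument for the first claim: the per-layer best reply is increasing in $\overline{B}_{-i}$ and in every $n_j$, so supermodularity plus uniqueness makes the equilibrium rise with each $n_j$. Run it on the largest fixed point of $T$, since $T$ extended to the boundary can also fix $\overline{B}=0$.

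The gap is in the second claim, and it starts with Step 1, whose formulas are wrong.

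\begin{itemize}
\item All $n_{i-1}$ firms of layer $i-1$ are customers of a layer-$i$ firm. So the effective downstream slope satisfies $E_i^{-1}=E_{i-1}^{-1}+(n_{i-1}\overline{B}_{i-1})^{-1}$, with no $(n_{i-1}-1)\overline{B}_{i-1}$ term.
\item The own-layer competitors must not be placed inside $\overline{\Lambda}_i$. They already enter through the $(n-1)X$ term of $\overline{BR}$: they sit in parallel with the series combination of downstream and upstream, not inside the output term.
\end{itemize}

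Done correctly, this is \eqref{priceimpacts}, and it gives the fact your Step 3 is missing:
\[
\overline{\Lambda}_i=\overline{\Lambda}_i^{out}+\overline{\Lambda}_i^{in}=\frac{1}{B_c}+\sum_{l\neq i}\frac{1}{n_l\overline{B}_l}.
\]
The chain is a series circuit, so a layer's aggregate price impact does not depend on where the layer sits. The consumer/labor asymmetry is irrelevant and no reflection identity is needed. For the pair $(i,j)$ one has $\overline{\Lambda}_i=\mathcal{L}+(n_j\overline{B}_j)^{-1}$ and $\overline{\Lambda}_j=\mathcal{L}+(n_i\overline{B}_i)^{-1}$, with a common $\mathcal{L}\ge B_c^{-1}>0$.

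Even with this formula, your contradiction hypothesis only yields $\overline{\Lambda}_i>\overline{\Lambda}_j$, i.e.\ $n_i\overline{B}_i>n_j\overline{B}_j$. That is perfectly compatible with $n_i\ge n_j$ and $\overline{B}_i<\overline{B}_j$, so a second inequality is needed, and neither of your routes supplies it.

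\begin{itemize}
\item Plan (2) asserts that the direct effect of $n_i$ on $\overline{B}_i$ dominates its indirect effect on $\overline{B}_j$. Monotone comparative statics only sign these effects; they do not rank their magnitudes.
\item The fallback is unproven. The potential of $\mathcal{G}'$ is $\sum_f\ln\big(\overline{B}_f(1-\overline{B}_f/2k)\big)-2\ln\det M$, and in the chain $\det M=B_c\prod_l n_l\overline{B}_l\,\big(B_c^{-1}+\sum_l(n_l\overline{B}_l)^{-1}\big)$. The swap raises the first part by $(n_i-n_j)\big[g(\overline{B}_j)-g(\overline{B}_i)\big]\ge 0$, where $g(B)=\ln(B(1-B/2k))$. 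But it also raises $\sum_l(n_l\overline{B}_l)^{-1}$, hence lowers the $-2\ln\det M$ part. The net sign cannot be settled without the equilibrium conditions.
\end{itemize}

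The paper closes this step with Lemma \ref{compstat-bestrep}. That lemma ranks the two best replies at a common value $X\le\overline{BR}(\overline{\Lambda}_j,n_j,k)$ of the partner layer's slope. It does so by a continuation argument along a path moving $(\overline{\Lambda}_j,n_j)$ to $(\overline{\Lambda}_i,n_i)$, and then derives the contradiction from the monotonicity of best replies.

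A shorter closure also works once you have the formula above. Let $y_l=n_l\overline{B}_l$, $\phi(B)=kB/(k-B)+B$ and $h(y)=y/(1+\mathcal{L}y)$. The two fixed-point equations read $\phi(\overline{B}_i)=y_i+h(y_j)$ and $\phi(\overline{B}_j)=y_j+h(y_i)$. Suppose $\overline{B}_i<\overline{B}_j$.
\begin{itemize}
\item Subtracting the equations gives $y_i<y_j$, because $y-h(y)$ is increasing.
\item Writing each equation as $\overline{B}_l\big(\tfrac{k}{k-\overline{B}_l}+1-n_l\big)=h(y_{\text{other}})$ (both brackets positive) and using $n_i\ge n_j$ gives $y_j<y_i$.
\end{itemize}
These contradict each other, so $\overline{B}_i\ge\overline{B}_j$.
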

\begin{lemma}
   If $N=1$ or $N\ge 2$ and $n_i\ge 2$, then the equilibrium slopes are nonzero.
    
    \label{lemma:nontrivial}
\end{lemma}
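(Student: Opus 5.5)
The plan is to use the supermodular structure from Theorem \ref{thm:existence} and avoid computing the equilibrium explicitly. The zero profile is not an equilibrium we need to worry about: equilibrium is unique, and best replies are increasing, so if a strictly positive profile lies \emph{below} its own best reply, the equilibrium lies above it. Concretely, I will look for a small $\varepsilon>0$ such that the uniform profile $b_\varepsilon:=(\varepsilon,\ldots,\varepsilon)$ satisfies $\overline{B}_i^{BR}(b_\varepsilon)\ge \varepsilon$ for every firm $i$. Iterating best replies from $b_\varepsilon$ then gives a monotone increasing sequence. It is bounded by $k$ and the best replies are continuous, so it converges to an equilibrium that is $\ge \varepsilon$. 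By uniqueness (Theorem \ref{thm:existence}) this is \emph{the} equilibrium, so all slopes are nonzero.

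The case $N=1$ is the standard supply function equilibrium. There the price impact is \eqref{price_impact_sfe}, $\Lambda_i=(B_c+\sum_{j\neq i}\overline{B}_j)^{-1}\le B_c^{-1}$. It is bounded uniformly in the other slopes, so \eqref{bestreply} gives $\overline{B}_i\ge (B_c^{-1}+k^{-1})^{-1}>0$ directly, without any iteration.

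For $N\ge2$ I will work with the reduced best reply of Lemma \ref{lemma:layer_best_reply}, $\overline{B}_i=\overline{BR}(\overline{\Lambda}_i,n_i,k)$, where $\overline{\Lambda}_i=\overline{\Lambda}_i^{out}+\overline{\Lambda}_i^{in}$. At the profile $b_\varepsilon$, every market $g$ has at least three traders with slope $\varepsilon$:
\begin{itemize}
\item the $n_g\ge2$ producers of $g$;
\item the buyers of $g$ (the layer $g-1$ firms, or the consumers for $g=1$).
\end{itemize}
Since $M-\hat B_i$ is linear in $\varepsilon$ on the firm part, the inverse price impacts scale like $\varepsilon$, up to the consumer term $B_c$, which only helps. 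I will write $(\Lambda_i^{out})^{-1}\ge \varepsilon\,a_i$ and $(\Lambda_i^{in})^{-1}\ge\varepsilon\,c_i$, where the constants $a_i,c_i$ count the other traders in each market, discounted by pass-through. For instance, Lemma \ref{lemma_downstream} shows that the downstream layer contributes $(\overline{B}^{-1}+\Lambda^{out}_{down})^{-1}$ to $(\Lambda_i^{out})^{-1}$. The condition $\overline{B}_i^{BR}(b_\varepsilon)\ge\varepsilon$ then reads $\varepsilon\big(\overline{\Lambda}_i+k^{-1}\big)\le 1$, which for small $\varepsilon$ reduces to requiring $a_i^{-1}+c_i^{-1}<1$ strictly.

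The main obstacle is exactly this strict inequality, uniformly in the layer. It is tight when $n_i=2$, since the competitor in the same layer alone contributes $1$ to each of $a_i$ and $c_i$. My first attempt is to bound $c_i$ for layers $i<N$ by counting on the input market $i+1$ both the other buyer in layer $i$ and the $n_{i+1}\ge2$ sellers. This gives roughly $c_i\ge 1+2\cdot(\text{pass-through factor})$. I then need $a_i>1$ strictly from the downstream demand term, which is positive by Lemma \ref{lemma-invertibility}; together these should yield $a_i^{-1}+c_i^{-1}<1$. The last layer has only the output term, and the first layer gains the consumer term, so those edge cases are easier.

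If the direct counting is too lossy, the fallback is to use the potential $\Psi$ from the proof of Theorem \ref{thm:existence}. Here I would show that $\partial U_i/\partial x_i>0$ along the diagonal ray $x_j=\ln\varepsilon$ as $\varepsilon\to0$, i.e. $2\varepsilon\,\hat\vv_i'M^{-1}\hat\vv_i<1$ asymptotically, which is the same inequality in the form $\varepsilon\,\hat\vv_i'M^{-1}\hat\vv_i\to$ a constant below $1/2$. That rules out the maximizer of the strictly concave $\Psi$ escaping to $-\infty$ in any coordinate.
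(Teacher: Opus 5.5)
Your overall frame is the paper's: start at the uniform profile $b_\varepsilon$, show it lies weakly below its best reply, iterate monotonically, and invoke uniqueness. Your $N=1$ argument via the uniform bound $\Lambda_i\le B_c^{-1}$ is correct. The gap is the $N\ge2$ estimate. Bounding $\vv_i'\Lambda_i\vv_i$ by $\Lambda_i^{out}+\Lambda_i^{in}$ drops the cross term $-2\Lambda_i^{out,in}$, which is nonzero here because same-layer competitors trade both goods. That cross term is exactly the effect that makes the lemma true, and your sufficient condition $a_i^{-1}+c_i^{-1}<1$ is false in general.

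Your premise that the competitor ``contributes $1$ to each of $a_i$ and $c_i$'' is wrong. A diagonal entry of $\Lambda_i$ holds $i$'s other quantity fixed, so the competitor can absorb extra output only by changing its input demand. Its contribution to $(\Lambda_i^{out})^{-1}$ is therefore the pass-through-discounted term $\big(((n_i-1)\varepsilon)^{-1}+\overline{\Lambda}_i^{in}\big)^{-1}$. Note also that Lemma \ref{lemma_downstream} is a diagonal-case result and does not apply once $n_i\ge2$.

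Concretely, take $n_j=2$, $N=5$, $i=3$. Up to $O(\varepsilon^2)$ terms, $\Lambda_3^{-1}=\varepsilon\begin{pmatrix}2&-1\\-1&2\end{pmatrix}$, so $a_3=c_3=3/2$ and $a_3^{-1}+c_3^{-1}=4/3>1$. In longer chains the middle-layer $a_i,c_i$ even fall below $1$, so ``$a_i>1$'' fails too. Yet $\vv_3'\Lambda_3\vv_3=2/(3\varepsilon)<1/\varepsilon$: the conclusion holds there, but your criterion cannot detect it. Your displayed condition $\varepsilon(\overline{\Lambda}_i+k^{-1})\le1$ is also off. It omits the competitor term entirely and would require $\sum_{j\neq i}n_j^{-1}<1$, which fails for example when $N=3$ and $n_j=2$.

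The missing idea is to compute the aggregate impact exactly rather than entrywise. The competitors' schedules depend only on the margin $\vv_i'\pv_i$, so inverting the $2\times2$ matrix in \eqref{eq:priceimpact_supplychain} gives
\[
\vv_i'\Lambda_i\vv_i=\Big(\big(\overline{\Lambda}_i^{out}+\overline{\Lambda}_i^{in}\big)^{-1}+(n_i-1)\overline{B}_i\Big)^{-1},
\]
where $\overline{B}_i$ is the slope of $i$'s same-layer competitors. They enter undiscounted, which is what \eqref{bestreply_chain} encodes. At $b_\varepsilon$ we have $\overline{\Lambda}_i^{out}+\overline{\Lambda}_i^{in}=B_c^{-1}+\varepsilon^{-1}S_i$ with $S_i=\sum_{j\neq i}n_j^{-1}>0$. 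Hence $\varepsilon\,\vv_i'\Lambda_i\vv_i\to(S_i^{-1}+n_i-1)^{-1}<1$ because $n_i\ge2$. The best reply then satisfies $\overline{B}_i'/\varepsilon\to S_i^{-1}+n_i-1>1$, since the $k^{-1}$ term vanishes after scaling by $\varepsilon$.

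This is the paper's computation, and it works for every $N$. That matters because the paper uses the layered chain with $n_i=2$ and $N\to\infty$ (Proposition \ref{relativeloss}). Your potential-based fallback does not close the gap. By Sherman--Morrison, $2\varepsilon\,\hat{\vv}_i'M^{-1}\hat{\vv}_i<1$ is equivalent to $\varepsilon\,\vv_i'\Lambda_i\vv_i<1$, the very inequality left unproved.
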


To arrive at the expression of the inverse residual schedule $\pv_i^r$, it is easier to start from the \emph{direct} residual demand and supply.
For firms in layer $1$, the slope of the (direct) residual demand is $B_{c,1}+(n_1-1)\overline{B}_1$. Firms in the upstream layer 2 face a demand $n_1\overline{B}_1(p_1-p_2)+(n_2-1)\overline{B}_2(p_2-p_3)$, where it is now necessary to solve the first layer equations for $p_1$ as a function of $p_2$. Proceeding iteratively, we can show that the direct residual demand and supply for firm $i$ are:
\begin{align}
q_i^{out}&=\underbrace{(\overline{\Lambda}_i^{out})^{-1}(A-p_i)}_{\substack{\text{Demand}\\\text{from customers}}}-\underbrace{(n_i-1)\overline{B}_i(p_i-p_{i+1}-f_{i,L})}_{\substack{\text{Supply of}\\\text{competitors}}} \nonumber\\
q_{i,i+1}&= \underbrace{(\overline{\Lambda}_i^{in})^{-1}p_{i+1}}_{\substack{\text{Supply}\\\text{from suppliers}}}-\underbrace{(n_i-1)\overline{B}_i(p_i-p_{i+1}-f_{i,L})}_{\substack{\text{Demand of}\\\text{competitors}}}
\label{residual_schedule_chain}
\end{align}
where: 
\begin{align}
\overline{\Lambda}_i^{out}&=\dfrac{1}{B_c}+\sum_{j<i} \dfrac{1}{n_j\overline{B}_j} \quad 
\overline{\Lambda}_i^{in}&=\sum_{j>i} \dfrac{1}{n_j\overline{B}_j}
\label{priceimpacts}
\end{align}
represent the slopes of the \emph{aggregate} demand for good $i$ and supply for good $i+1$. 

The right-hand sides of Equations \ref{residual_schedule_chain} constitute the (direct) residual demand and supply. Inverting this system, we obtain $\pv_i^r$ for firm $i$. To get the price impact we can just compute the Jacobian of the right-hand side and invert, i.e.:
\begin{align}
\Lambda_i&=\left(\begin{array}{cc}
(\overline{\Lambda}_i^{out})^{-1}+(n_i-1)\overline{B}_i & -(n_i-1)\overline{B}_i\\
-(n_i-1)\overline{B}_i &  (\overline{\Lambda}_i^{in})^{-1}+(n_i-1)\overline{B}_i
\end{array}\right)^{-1} \quad \text{ if } i<N \nonumber \\
\Lambda_N&=
\big((\overline{\Lambda}_N^{out})^{-1}+(n_N-1)\overline{B}_N\big)^{-1}, 
\label{eq:priceimpact_supplychain}
\end{align}
So, the markup-markdown vector is:
\begin{align}
    \boldsymbol{\mu}_{i}&=q_i^{out}\left(\begin{array}{c}
\dfrac{\overline{\Lambda}_{i}^{out}}{1+\left(\overline{\Lambda}_{i}^{in}+\overline{\Lambda}_{i}^{out}\right)(n_{i}-1)\overline{B}_{i}}\\
-\dfrac{\overline{\Lambda}_{i}^{in}}{1+\left(\overline{\Lambda}_{i}^{in}+\overline{\Lambda}_{i}^{out}\right)(n_{i}-1)\overline{B}_{i}}
\end{array}\right) \quad \text{if } i<N \nonumber \\
\boldsymbol{\mu}_{N}&=\mu_N^{out}=q_N^{out}\dfrac{\overline{\Lambda}_{N}^{out}}{1+\left(\overline{\Lambda}_{N}^{in}+\overline{\Lambda}_{N}^{out}\right)(n_{N}-1)\overline{B}_{N}}
 \label{markups_supplychain}
\end{align}
and the first order conditions \eqref{bestreply} reduce to:
\begin{equation}
\overline{B}_i=\left(\dfrac{1}{k_i}+\dfrac{1}{(n_i-1)\overline{B}_i+\left(\overline{\Lambda}_{i}^{in}+\overline{\Lambda}_{i}^{out}\right)^{-1}}\right)^{-1},
\label{bestreply_chain}
\end{equation}

\begin{enumerate}
\item Suppose $n_i=n_j:=n^*$. Denote $Q$ the quantity consumed by the consumer in equilibrium. By Lemma  \ref{compstat}, in equilibrium we have homogeneous slopes $\overline{B}_i=\overline{B}_j:=B^*$. By homogeneity and market clearing $q_i^{out}=q_j^{out}:=Q/n^*$ for any $i$, $j$.  As a consequence, by the expressions \eqref{priceimpacts}, we have that the quantity  $\overline{\Lambda}:=\overline{\Lambda}_i^{in}+\overline{\Lambda}_i^{out}=B_c^{-1}+(n^*)^{-1}\sum_{j\neq i}(B^*)^{-1}=B_c^{-1}+(n^*)^{-1}(N-1)(B^*)^{-1}$ is independent of $i$. 
From Equation \eqref{priceimpacts} we have that $\overline{\Lambda}_i^{out}$ is increasing in $i$ (i.e., increasing upstream) while $\overline{\Lambda}_i^{in}$ is decreasing in $i$ (i.e., increasing downstream). By Equations \eqref{markups_supplychain}, the same is true of, respectively, the markup and markdown.
    \item The aggregate profit in layer $i$ is $\Pi_i=n_i\dfrac{Q^2}{n_i^2}\left(\dfrac{1}{\overline{B}_i}-\dfrac{1}{2k}\right)=\dfrac{Q^2}{n_i}\left(\dfrac{1}{\overline{B}_i}-\dfrac{1}{2k}\right)$ so, by Lemma \ref{compstat}, $\Pi_i\le \Pi_j$ if and only if  $n_i \ge n_j$.  \qed

\end{enumerate}

\section{Proofs of Section \ref{sec:multilateral}}

\subsection{Proof of Proposition \ref{teo_comparative}}	
\label{proof_teo_comparative}	

The assumptions on $\Lambda_i$ and Equation \ref{bestreply} show that $\overline{B}_i\in [0,k_i]$ and $\Lambda_i$ is continuous, so the best reply map is also continuous. So, by Brouwer's fixed-point theorem, there exists an equilibrium.
Moreover,  the best reply is increasing in the profile of slopes of other firms $\overline{B}_{-i}$. By Topkis' Theorem, the equilibrium set is a lattice, so it has a maximal and minimal element.
\qed



\subsection{Proof of Proposition \ref{thm:specialcases}}
\label{proof:thm:specialcases}

Both the unilateral and the local versions are cases in which there is a subset of goods whose prices are taken as given. Define $\overline{\mathcal{N}}(i)\subseteq \mathcal{M}$ the set of goods whose price changes are not taken as given by firm $i$.  Local market power is the case $\overline{\mathcal{N}}(i)=\mathcal{N}(i)$. Unilateral market power is the case in which $\overline{\mathcal{N}}(i)=\mathcal{N}^{down}(i)$.

The problem of the firm is the analogue of  \eqref{BestReply}, except that the firm only internalizes the effect of quantity on prices of goods in the set $\overline{\mathcal{N}}(i)$: so, the firm only internalizes as constraints the subset of market-clearing conditions for goods with indices in $\overline{\mathcal{N}}(i)$:
\begin{equation}
    [\hat{\qv}_i+(M-\hat{B}_i)\pv]_{\overline{\mathcal{N}}(i)}=\overline{\boldsymbol{A}}_{\overline{\mathcal{N}}(i)}
    \label{market_clearing_restricted}
\end{equation}
In particular, since the $\pv_{\mathcal{N}(i)\setminus \overline{\mathcal{N}}(i)}$ are taken as given, the only prices that are allowed to vary are $\pv_{\mathcal{N}(i)\cap \overline{\mathcal{N}}(i)}$ Hence:
\begin{align*}
\max_{\qv_i,\pv_{\mathcal{N}(i)\cap \overline{\mathcal{N}}(i)}} & \qv_i'\pv_i-\ell_i\\
& \text{subject to: the restricted market-clearing } \eqref{market_clearing_restricted}, \nonumber\\
& \text{and the technology constraint } \eqref{tech_constraint_min}
\end{align*}
As in \eqref{BestReply}, we ignore the schedule constraint, showing that the optimum already satisfies it.
 We can express the equations more conveniently by decomposing the matrix $M$ as follows:
\[
M=\begin{pmatrix}
    M_{\overline{\mathcal{N}}(i)} & M_{\overline{\mathcal{N}}(i),\overline{\mathcal{N}}(i)^c}\\
    M_{\overline{\mathcal{N}}(i),\overline{\mathcal{N}}(i)^c}' & M_{\overline{\mathcal{N}}(i)^c}
\end{pmatrix}
\]
We solve for the prices in $\overline{\mathcal{N}}(i)$:
\begin{align*}
\pv_{\overline{\mathcal{N}}(i)}&=(M_{\overline{\mathcal{N}}(i)}-\hat{B}_{i,\overline{\mathcal{N}}(i)})^{-1} \times \\
&\left([\boldsymbol{\overline{A}}-\hat{\qv}_i]_{\overline{\mathcal{N}}(i)}-(M_{\overline{\mathcal{N}}(i),\overline{\mathcal{N}}(i)^c}-\hat{B}_{i,\overline{\mathcal{N}}(i),\overline{\mathcal{N}}(i)^c})\pv_{\overline{\mathcal{N}}(i)^c} \right)
\end{align*}
The vector $[\hat{\qv}_i]_{\overline{\mathcal{N}}(i)}$ only depends on quantities of goods in $\mathcal{N}(i)\cap \overline{\mathcal{N}}(i)$, all the other entries are zero.
So, selecting the vector of prices of traded goods $\pv_{\mathcal{N}(i)\cap \overline{\mathcal{N}}(i)}$, we can write the residual schedule as: 
\[
\pv^r_{\mathcal{N}(i)\cap \overline{\mathcal{N}}(i)}(\qv_i)=[(M_{\overline{\mathcal{N}}(i)}-\hat{B}_{i,\overline{\mathcal{N}}(i)})^{-1}]_{\mathcal{N}(i)\cap \overline{\mathcal{N}}(i)}(\boldsymbol{\overline{A}}_{\mathcal{N}(i)\cap \overline{\mathcal{N}}(i)}-\qv_{\mathcal{N}(i)\cap \overline{\mathcal{N}}(i)})+const
\]
where $const$ denotes terms that do not depend on quantities. So, the price impact on the relevant prices is:
\[
\dfrac{\partial \pv^r_{\mathcal{N}(i)\cap \overline{\mathcal{N}}(i)}}{\partial\qv_{\mathcal{N}(i)\cap \overline{\mathcal{N}}(i)}}=-[(M_{\overline{\mathcal{N}}(i)}-\hat{B}_{i,\overline{\mathcal{N}}(i)})^{-1}]_{\mathcal{N}(i)\cap \overline{\mathcal{N}}(i)}
\]
So, the $\Lambda_i$ matrix is:
\begin{equation}
\Lambda_i=-\dfrac{\partial \pv^r_i}{\partial \qv_i}=\begin{pmatrix}
    [(M_{\overline{\mathcal{N}}(i)}-\hat{B}_{i,\overline{\mathcal{N}}(i)})^{-1}]_{\mathcal{N}(i)\cap \overline{\mathcal{N}}(i)} & 0\\
     0 & 0
\end{pmatrix}
\label{priceimpact_constrained}
\end{equation}
where the zeros are present only if $\mathcal{N}(i)\setminus \overline{\mathcal{N}}(i)\neq 0$ (as in the unilateral case).
 \qed

\subsection{Proof of Theorem \ref{thm:main_comparative}}
\label{proof:main_comparative}

\textbf{Part 1}
First, we prove that for all $i$ and all profiles $\overline{B}$ we have, in the p.s.d. ordering:
\begin{align*}
    \Lambda^{\text{local}}_i(\overline{B}_{-i})&\le \Lambda^{\text{multilateral}}_i(\overline{B}_{-i})\\
    \Lambda^{\text{unilateral}}_i(\overline{B}_{-i})&\le \Lambda^{\text{multilateral}}_i(\overline{B}_{-i})
\end{align*}

Now, in both cases we can partition the goods in two sets $\overline{\mathcal{N}}(i)$ and $\overline{\mathcal{N}}(i)^c$, as in the proof of Lemma \ref{thm:specialcases}. Reordering, we can write $M-\hat{B}_i$ as a block matrix:
\[
M-\hat{B}_i=\begin{pmatrix}
      A_1 & A_2\\
    A_2' & A_3
\end{pmatrix}, \quad \text{so that } \Lambda_i^{multilateral}= \left[\begin{pmatrix}
      A_1 & A_2\\
    A_2' &  A_3
\end{pmatrix}^{-1}\right]_{\mathcal{N}_i}
\]
Using block inversion, the price impact in both constrained models can also be written as:
\begin{align}
    \Lambda_i=&\lim_{T\to \infty} \left[\begin{pmatrix}
      A_1 & A_2\\
    A_2' & T A_3
\end{pmatrix}^{-1}\right]_{\mathcal{N}_i}
\end{align}
Now note that, for $T>1$:
\[
\begin{pmatrix}
      A_1 & A_2\\
    A_2' &  TA_3
\end{pmatrix}- \begin{pmatrix}
      A_1 & A_2\\
    A_2' &  A_3
\end{pmatrix}=\begin{pmatrix}
     0 & 0\\
    0 & (T-1) A_3
\end{pmatrix}
\]
is positive semidefinite, so $\begin{pmatrix}
      A_1 & A_2\\
    A_2' &  TA_3
\end{pmatrix}\ge  \begin{pmatrix}
      A_1 & A_2\\
    A_2' &  A_3
\end{pmatrix}$ in the p.s.d. order. The inverse matrices have the opposite ranking. By passing to the limit
we obtain $\Lambda_i\le\Lambda_i^{multilateral}$, which is what we wanted to show.

Define $BR^{m}:\prod_i[0,k_i]\to \prod_i[0,k_i]$ the best reply map for the multilateral model and $BR$ the one for either the local or unilateral model.
From the results above, we have that for any profile $\overline{B}$ we have: $BR(\overline{B})>BR^{m}(\overline{B})$. Call $(\overline{B}^*)^m$ the unique equilibrium in the multilateral case
 and $\overline{B}^*$ the minimal equilibrium in either the unilateral or local model. We have:
\[
\overline{B}^*=BR(\overline{B}^*)>BR^m(\overline{B}^*)
\]
By Proposition \ref{teo_comparative}, the best reply is monotonic. So, iterating the best reply $BR^m$ starting from $B^*$ we eventually reach the equilibrium of the multilateral model, so:
\[
\overline{B}^*>BR^m(\overline{B}^*)>\cdots>(\overline{B}^*)^m
\]
which is what we wanted to show. 

\textbf{Part 2}
We need the following Lemma, proven in the Supplemental Appendix. Define $BR^{uni}$ the best reply map for the unilateral model. 

\begin{lemma}

 Consider any profile of slopes $\overline{B}$, and suppose that $i$, $j$ satisfy the assumptions of the Theorem, and in particular $\overline{B}_i\le \overline{B}_j$. Compute the best reply using the unilateral model $BR^{\text{uni}}_k(\overline{B}_{-k})$. We have that: $BR^{\text{uni}}_i(\overline{B}_{-i})<BR^{\text{uni}}_j(\overline{B}_{-j})$.
 \label{lemma_BR_unilateral}

\end{lemma}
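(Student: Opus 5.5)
The plan is to compare the two unilateral best replies through a single scalar equation. In the unilateral model firms take input prices as given, so by \eqref{impact_unilateral} the price impact of each firm $k$ is supported only on its output entry, $\vv_k'\Lambda_k^{\text{uni}}\vv_k=\Lambda_k^{\text{uni},out}$. The best reply \eqref{bestreply} therefore reduces to
\[
BR^{\text{uni}}_k(\overline{B}_{-k})=\left(\Lambda_k^{\text{uni},out}(\overline{B}_{-k})+\frac{1}{k_k}\right)^{-1}.
\]
The claim $BR^{\text{uni}}_i<BR^{\text{uni}}_j$ is thus equivalent to $\Lambda_i^{\text{uni},out}>\Lambda_j^{\text{uni},out}$, provided $k_i=k_j$ (or $k_i\le k_j$). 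I would read this as part of the standing assumptions of Theorem \ref{thm:main_comparative} part 2, as in the layered chain with $k_i=k$.

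Next I would compute the two output price impacts. In the unilateral model, $\Lambda^{\text{uni},out}_k=[(M-\hat B_k)^{-1}_{\mathcal{N}^{down}(k)}]_{out(k),out(k)}$, which is exactly the output block of the multilateral impact under the block-diagonal structure of Assumption 1. Hence Lemma \ref{lemma_downstream} (part 1) applies verbatim, with the same $\overline{B}$ profile, and gives
\[
\Lambda_i^{\text{uni},out}=\left(\Lambda_{-j}^{-1}+\left(\frac{1}{\overline{B}_j}+\Lambda_j^{\text{uni},out}\right)^{-1}\right)^{-1},
\]
where $0\le\Lambda_{-j}^{-1}<\overline{B}_{-j}$. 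It follows that $\Lambda_i^{\text{uni},out}>\Lambda_j^{\text{uni},out}$ iff $\Lambda_{-j}^{-1}+(\overline{B}_j^{-1}+\Lambda_j^{out})^{-1}<(\Lambda_j^{out})^{-1}$. Since $(\overline{B}_j^{-1}+\Lambda_j^{out})^{-1}<(\Lambda_j^{out})^{-1}$ strictly whenever $\overline{B}_j<\infty$, the inequality holds once $\overline{B}_{-j}$ is below the gap. That gap is
\[
(\Lambda_j^{out})^{-1}-(\overline{B}_j^{-1}+\Lambda_j^{out})^{-1}=\frac{\overline{B}_j^{-1}}{\Lambda_j^{out}(\overline{B}_j^{-1}+\Lambda_j^{out})},
\]
which is exactly the smallness condition of Theorem \ref{thm:markups} part 1. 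Note that in this case the hypothesis $\overline{B}_i\le\overline{B}_j$ is not even needed.

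The main obstacle is the variant with $n^*$ identical firms per good, where Assumption 1 fails because of same-good competitors. There I would instead use the layered formulas \eqref{chain_output}, $\Lambda_k^{\text{uni},out}=((\overline{\Lambda}_k^{out})^{-1}+(n^*-1)\overline{B}_k)^{-1}$, together with the recursion $\overline{\Lambda}_i^{out}=\overline{\Lambda}_j^{out}+1/(n^*\overline{B}_j)>\overline{\Lambda}_j^{out}$ from \eqref{priceimpacts}, adapted to the general partition. The hypothesis $\overline{B}_i\le\overline{B}_j$ is then exactly what makes the competitor term $(n^*-1)\overline{B}_i\le(n^*-1)\overline{B}_j$ cooperate. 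Combining the two inequalities gives $\Lambda_i^{\text{uni},out}>\Lambda_j^{\text{uni},out}$, hence the strict ranking of the best replies. The delicate point is checking that the aggregate $\overline{\Lambda}^{out}$ recursion still holds when other customers of $i$ are present, with their weight again controlled by $\overline{B}_{-j}$ being small.
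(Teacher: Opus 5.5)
Your argument is correct. It rests on the same block computation as the paper, but you organize it differently.

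The paper treats only the case in which the $n^*$ producers of $j$ are the only customers of $i$. It borders $M_{\mathcal{N}^{down}(j)}$ by the row of $out(i)$ and takes one Schur complement. With $x:=\big((M_{\mathcal{N}^{down}(j)})^{-1}\big)_{out(j),out(j)}$, this reduces the claim to $1-\big((n^*-1)\overline{B}_i+(n^*+1)\overline{B}_j\big)x+(n^*\overline{B}_jx)^2>0$. Using $\overline{B}_i\le\overline{B}_j$, that expression is bounded below by the perfect square $(n^*\overline{B}_jx-1)^2$, and strictness is argued separately.

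Your termwise comparison is the same inequality in better coordinates. Set $\overline{\Lambda}_j^{out}:=(x^{-1}-n^*\overline{B}_j)^{-1}$. Then the two sides of the paper's inequality are exactly $(n^*-1)\overline{B}_i+\big(\overline{\Lambda}_j^{out}+1/(n^*\overline{B}_j)\big)^{-1}$ and $(n^*-1)\overline{B}_j+(\overline{\Lambda}_j^{out})^{-1}$. So the pass-through effect and the competitor effect separate, and strictness only needs $(\overline{\Lambda}_j^{out})^{-1}>0$.

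This also settles the step you call delicate. The recursion in a general downstream network is just that one Schur complement, not something specific to the layered formulas. Under the $n^*$ clause there are no other customers of $i$ whose weight needs controlling.

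Your single-firm case, via Lemma \ref{lemma_downstream} with the explicit threshold $\big(\Lambda_j^{\text{uni},out}(1+\overline{B}_j\Lambda_j^{\text{uni},out})\big)^{-1}$ on $\overline{B}_{-j}$, covers something the paper's proof does not. The paper assumes $j$ is the only customer of $i$, i.e.\ $\overline{B}_{-j}=0$.

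One caveat there: Theorem \ref{thm:main_comparative} applies the lemma at every iterate of the unilateral best reply, so this profile-dependent threshold must hold along the whole iteration, not just at the multilateral equilibrium. This can be ensured uniformly. The threshold stays bounded away from zero, since $\overline{B}_j\le k_j$ and $\Lambda_j^{\text{uni},out}$ only falls as slopes rise. Meanwhile $\overline{B}_{-j}$ is bounded by $B_{c,i}$ plus the other customers' $k$'s.

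Your remark that $k_i\le k_j$ is needed to pass from price impacts to best replies is also correct; the paper leaves it implicit.
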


We want to prove that in the unilateral equilibrium $\pi_{i}>\pi_{j}$, which is true if and only if $\overline{B}_{i}<\overline{B}_{j}$.

Consider the unique equilibrium with multilateral market power: $\overline{B}^{m}$. By Lemma \ref{thm:specialcases}, for every firm $k$ $\Lambda_k^{\text{uni}}(\overline{B}_{-k}^{m})\le \Lambda_k^{m}(\overline{B}_{-k}^{m})$. So, it follows $BR^{\text{uni}}_k(\overline{B}_{-k}^{m})\ge BR^{m}_k(\overline{B}^{m}_{-k})=\overline{B}^{m}_{k} $.  Applying iteratively the operator   $BR^{\text{uni}}_k$ , we obtain an increasing sequence of profiles, that must necessarily converge to the minimal equilibrium with unilateral market power $\overline{B}^{\text{uni}}$.

Moreover, Lemma \ref{lemma_BR_unilateral} proves that $BR^{\text{uni}}_i(\overline{B}_{-i}^{m})< BR^{uni}_j(\overline{B}^{m}_{-j}) $ remains true at every iteration: so, it must be (weakly) true in the minimal equilibrium with unilateral market power: $\overline{B}^{\text{uni}}_i\le \overline{B}^{uni}_j$. \qed 


\subsection{Proof of Corollary \ref{localteo}}


If there is a unique final good, say good $0$, then the vector $\boldsymbol{A}$ has just one nonzero entry, corresponding to good $0$. So, $\boldsymbol{A}'\pv=A_0p_0$, and as a consequence:
\[
p_0=\dfrac{1}{A_{0}}\boldsymbol{A}'M^{-1}\boldsymbol{A}
\]
Since $M$ is increasing in each matrix $B_i$ (or, equivalently, each coefficient $\overline{B}_i$), $p_0$ is decreasing in each $B_i$. 

In the supply chain with layers, we can explicitly compute total welfare using the utility function, and the fact that $Q=n_iq_i$:
\begin{align}
W&= \frac{A_c}{B_c}Q-\frac{1}{2B_c}Q^2-\frac{1}{2}\sum_in_i\dfrac{q_i^2}{k_i} =Q\left(\frac{A_c}{B_c}-\frac{1}{2B_c}Q-\frac{1}{2}\left(\sum_i\frac{1}{n_ik_i}\right)Q\right),
\label{welfare_chain}
\end{align}
where:
\[
Q=A\left(B_c\left(\sum_i \dfrac{1}{n_i\overline{B}_i}\right)+1\right)^{-1}
\]
The welfare is increasing in $Q$ for each $Q\le Q^*$, where $Q^*=A\left(B_c\left(\sum_i \dfrac{1}{n_ik_i}\right)+1\right)^{-1}$, which is the quantity in the perfect competition benchmark. So, for any parameters of a Generalized SDFE, welfare is increasing in $Q$, which means it is increasing in each equilibrium slope $\overline{B}_i$.
\qed

\subsection{Proof of Proposition \ref{relativeloss}}
\label{proof_relativeloss}

We need the following Lemma, proven in the Supplemental Appendix.
\begin{lemma}
    In the equilibrium with multilateral market power and $n_i=2$ for each $i$, the equilibrium slope $B^*$ satisfies $\lim_{N\to \infty}B^*=0$. 

    With local market power, there is a positive number $\underline{B}$, independent of $N$, such that each equilibrium slope $\overline{B}_i$ satisfies $\overline{B}_i>\underline{B}$.
    \label{lemma:limit}
\end{lemma}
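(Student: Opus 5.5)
Both claims concern the layered chain with $n_i=2$ for every layer, so the equilibrium slopes are homogeneous, $\overline{B}_i=B^*$, by Lemma \ref{compstat}. The plan is to use the explicit best-reply equation \eqref{bestreply_chain} together with the aggregate price impact \eqref{priceimpacts}.

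\textbf{Multilateral part.} In the homogeneous equilibrium, $\overline{\Lambda}:=\overline{\Lambda}_i^{in}+\overline{\Lambda}_i^{out}=B_c^{-1}+(N-1)/(2B^*)$ is the same for every $i$. Substituting into \eqref{bestreply_chain} gives
\[
B^*=\left(\frac{1}{k}+\frac{1}{B^*+\overline{\Lambda}^{-1}}\right)^{-1}\le B^*+\overline{\Lambda}^{-1}.
\]
This is not yet informative. The better route is to argue by contradiction. Suppose $B^*\ge \epsilon>0$ along a subsequence $N\to\infty$. Since $\overline{\Lambda}\ge (N-1)/(2B^*)$, we get $\overline{\Lambda}^{-1}\le 2B^*/(N-1)$. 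Writing $x=B^*$, the fixed point equation becomes
\[
\frac{1}{x}=\frac{1}{k}+\frac{1}{x+\overline{\Lambda}^{-1}}.
\]
Rearranging gives $\frac{1}{x}-\frac{1}{x+\overline{\Lambda}^{-1}}=\frac{1}{k}$, that is, $\overline{\Lambda}^{-1}=x(x+\overline{\Lambda}^{-1})/k\ge x^2/k$. Combined with $\overline{\Lambda}^{-1}\le 2x/(N-1)$, this yields $x\le 2k/(N-1)$. Hence $B^*\to 0$, indeed at rate $O(1/N)$, with no need for the contradiction argument.

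\textbf{Local part.} Here I need the local price impact from Proposition \ref{thm:specialcases}, $\Lambda_i^{local}=([M-\hat{B}_i]_{\mathcal{N}(i)})^{-1}$. I would compute it explicitly for the chain. Restricted to goods $\{i,i+1\}$, the matrix $M-\hat B_i$ has diagonal entries equal to the sum of slopes of all firms trading each good, excluding $i$: for good $i$ these are the two layer-$(i-1)$ buyers, the other firm of layer $i$, and (if $i=1$) $B_c$. The off-diagonal entry is $-\overline{B}_{i'}$, where $i'$ is the competitor in layer $i$. Hence $v_i'\Lambda_i v_i$ is bounded by a function of the slopes of adjacent layers only, with no dependence on $N$. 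The key observation is monotonicity: $\Lambda_i^{local}$ is decreasing in the other slopes and bounded above by the value obtained when all other slopes equal some lower bound. I would then show that the map $b\mapsto$ (best reply when all neighbours have slope $b$) has a fixed point $\underline{B}>0$ independent of $N$. At $b=0$ the matrix is singular for interior goods, so the first attempt is to use the competitor's own term: with two firms per layer, $\Lambda_i$ involves $(2b+b)$-type sums, giving $v'\Lambda v\approx c/b$. Then $BR(b)=(c/b+1/k)^{-1}$, which is linear in $b$ near zero with slope $1/c$. If $c<1$ this yields a positive fixed point $\underline{B}$; the boundary layers ($1$ and $N$), which have the extra $B_c$ or a missing input, only increase the best reply. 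Supermodularity (Proposition \ref{teo_comparative}) then implies that the minimal equilibrium, hence every equilibrium, dominates $\underline{B}$, since starting iteration from $\underline{B}\,\mathbf 1$ gives an increasing sequence. I expect this step, verifying that the linear coefficient $1/c$ exceeds one for the local $2\times2$ price impact, to be the main obstacle. It needs care with the off-diagonal competitor term that makes $[M-\hat B_i]_{\mathcal N(i)}$ nearly singular. If it fails, I would instead use the bound $v_i'\Lambda_i v_i\le \Lambda^{out}_{ii}+\Lambda^{in}_{ii}+2|\Lambda^{out,in}|$ with explicit $2\times 2$ inversion.
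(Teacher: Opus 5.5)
Your multilateral argument is correct, and it is tidier than the paper's. The paper uses $B^*\le k$ to get $\overline\Lambda^{-1}\to 0$. It then notes that the limiting equation $B^*=(k^{-1}+((n^*-1)B^*)^{-1})^{-1}$ has no positive solution when $n^*=2$. That passes to the limit in an equation that degenerates exactly in the case of interest. Your identity $\overline\Lambda^{-1}=x(x+\overline\Lambda^{-1})/k\ge x^2/k$, together with $\overline\Lambda^{-1}\le 2x/(N-1)$, gives the explicit rate $B^*\le 2k/(N-1)$ directly. You can drop both the contradiction set-up and the opening uninformative inequality.

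For the local part you take the same route as the paper: the local price impact involves only the competitor and the two adjacent layers, and you compare with a uniform profile. However, you stop before the key step, and one claim about the boundary is wrong. The step you flag as the main obstacle is a one-line $2\times 2$ inversion. With $i'$ the competitor,
\[
[M-\hat B_i]_{\mathcal N(i)}=\begin{pmatrix}2\overline B_{i-1}+\overline B_{i'} & -\overline B_{i'}\\ -\overline B_{i'} & 2\overline B_{i+1}+\overline B_{i'}\end{pmatrix},\qquad v_i'\Lambda_i^{\text{local}}v_i=\Bigl(\overline B_{i'}+\overline\Lambda_i^{-1}\Bigr)^{-1},\quad \overline\Lambda_i=\frac{1}{2\overline B_{i-1}}+\frac{1}{2\overline B_{i+1}},
\]
which is the paper's $\overline\Lambda_i$. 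At a uniform profile $b$ this equals $1/(2b)$, so $c=1/2$, and $BR(b)=(1/(2b)+1/k)^{-1}>b$ if and only if $b<k/2$.

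This yields $k/2$, not the paper's $2k/3$. The latter solves $\varepsilon=(k^{-1}+(3\varepsilon)^{-1})^{-1}$, which is the last layer's equation, since $\Lambda_N^{\text{local}}=1/(3b)$. Iterating $BR$ from $k\mathbf 1$ shows the middle layer's slope is at most $4k/7$ once $N\ge 5$, so your constant is the one that actually works.

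Layer $1$, however, does not automatically have a larger best reply. There $2\overline B_0$ is replaced by $B_c$, which helps only when $b\le B_c/2$. You therefore need $\underline B=\min\{k/2,B_c/2\}$, which is still independent of $N$; layer $N$ is fine.

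The closing supermodularity step is a genuine logical gap. Iterating $BR$ upward from $\underline B\mathbf 1$ produces \emph{some} equilibrium above $\underline B\mathbf 1$. It says nothing about the minimal equilibrium, which is the one used in Theorem \ref{thm:main_comparative}, or about every equilibrium.

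Replace it with a minimum argument. At an equilibrium with positive slopes, let $m=\min_j\overline B_j$, attained at firm $i$. If $m<\underline B$, monotonicity of the local best reply gives $\overline B_i=BR_i(\overline B_{-i})\ge BR_i(m\mathbf 1)>m$, a contradiction. This uses $BR_i(b\mathbf 1)>b$ on the whole interval $(0,\underline B)$, which the explicit formulas deliver. It also requires excluding zero slopes separately, because the zero profile is a degenerate fixed point: the block $[M-\hat B_i]_{\mathcal N(i)}$ is singular there. The paper's own proof leaves this caveat implicit as well.
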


Write for brevity $Q^l:=Q^{\text{local}}$, $Q^m:=Q^{\text{multilateral}}$ and the same for welfare. Using the expression for welfare derived in \eqref{welfare_chain}, we can write the ratio of welfare in the two models as:
\[
\frac{W^m}{W^l}=\frac{Q^m}{Q^l}\frac{\frac{A_c}{B_c}-\frac{1}{2B_c}Q^m-\frac{1}{2}\frac{N}{2}Q^m}{\frac{A_c}{B_c}-\frac{1}{2B_c}Q^l-\frac{1}{2}\frac{N}{2}Q^l},
\]
where:
\[
Q^l=A\left(B_c\left(\sum_i \dfrac{1}{2\overline{B}_i^{\text{local}}}\right)+1\right)^{-1},\quad Q^m=A\left(B_c\left(\dfrac{N}{2B^*}\right)+1\right)^{-1}
\]
Both go to zero as $N\to \infty$, so that $\lim_{N\to \infty} W^m/W^l=\lim_{N\to \infty} Q^m/Q^l$. Now, using the lower bound for the local slope:
\[
\lim_{N\to \infty} \frac{W^m}{W^l}=\lim_{N\to \infty}\frac{Q^m}{Q^l}\le \lim_{N\to \infty}\frac{B_c\left( \dfrac{N}{2\underline{B}}\right)+1}{B_c\left(\dfrac{N}{2B^*}\right)+1}
=\lim_{N\to \infty} 2 B^* \frac{B_c \frac{1}{2 \underline{B}}+1/N}{B_c+2B^*/N} =0
\]
and so also the ratio of welfares goes to 0. \qed

{
\small

\bibliographystyle{chicago}
\bibliography{biblio.bib}
}

\pagebreak

\section*{Supplementary Appendix of \\\enquote{Multilateral market power in input-output networks}}

\section{Additional Proofs}
\label{app_proofs}

\subsection{Proof of Lemma \ref{lemma-invertibility}}

We want to prove that $\xv'\left(M-\hat{B}_i\right)\xv=0$ implies $\xv=0$: so $M-\hat{B}_i$ is positive definite, hence invertible. Now, the quadratic form can be decomposed as: $\xv'\left(M-\hat{B}_i\right)\xv=\xv'\hat{B}_c\xv+\sum_{j\neq i}\overline{B}_j\xv'\hat{\boldsymbol{v}}_{j}\hat{\boldsymbol{v}}_{j}'\xv$. For this to be zero, it must be that $\xv_{\mathcal{C}}=0$ and, if $\xv\neq 0$, $\xv$ is orthogonal to all $\hat{\boldsymbol{v}}_j$ for $j\neq i$.

Now, we show that the set of vectors $V=\{\hat{\vv}_j\}\subset \mathbb{R}^m$, $j=1,\ldots,n$ must contain a basis of $\mathbb{R}^m$. Select a subset of firms $\mathcal{N}_0$ such that each firm produces a distinct good. Then, select the set of vectors $V_0=\{\hat{\vv}_j:\, j \in \mathcal{N}_0\}$. Each vector in this set has a 1 in a different position (the one corresponding to the output, which is distinct for all firms by construction). Now consider the submatrix $\overline{F}$ by selecting from $F$ only the rows corresponding to firms in $\mathcal{N}_0$. Then, $\overline{F}\in \mathbb{R}^{m \times m}$. 
Each vector $\hat{\vv}_j'$, with $j \in \mathcal{N}_0$, is a row of the matrix $I-\overline{F}$.  
By the assumption of viability, $I-\overline{F}$ is an M-matrix, and in particular is invertible \citep{horn1994topics}: it follows that the set $V_0$ (and so, $V$) contains a basis of the whole space. 

We have that $\xv$ must be orthogonal to $V_0\setminus\{\hat{\vv}_i\}$. Now there are two cases: either $V\setminus\{\hat{\vv}_i\}$ contains a basis, or it spans an $m-1$-dimensional space. In the first case, $\xv=0$ and the proof is concluded. In the second case it means that $xv$ must belong to the orthogonal orthogonal complement of $V\setminus\{\hat{\vv}_i\}$, which is one-dimensional. The orthogonal complement is the subspace parallel to the $i-$th column of the inverse matrix $L:=(I-\overline{F})^{-1}$. Now, the assumption of connectedness ensures that $L_{gi}>0$ if $g\in \mathcal{C}$. In fact, since $\overline{F}$ is an M-matrix, we have the series representation: $L=\sum_k [\overline{F}^k]_{ig}$. By the connectedness assumption, if $g\in \mathcal{C}$, then there is at least a path from $i$ to $g$: so, it follows $L_{gi}\neq 0$. Since $\xv$ is parallel to $L_i$, it must be $x_{ig}\neq 0$ for some $g\in \mathcal{C}$: but this contradicts $\xv_{\mathcal{C}}=0$. So, it must be that $\xv=0$. This proves that $M-\hat{B}_i$ is positive definite. \qed

\subsection{Proof of Lemma \ref{linear-residual}}
\label{proof_linear_residual}

I prove the result under the more general assumptions that the other firms $j\neq i$ use general linear schedules of the form \eqref{linear_schedule}. The fact that this does not affect the result justifies the restriction of the game to \eqref{schedule_expression}. 
 
\begin{lemma}
Suppose all players except $i$ use general linear schedules as in \eqref{linear_schedule}, such that $M-\hat{B}_i$ is invertible. Consider the monopoly problem \eqref{BestReply}, which I rewrite here:
\begin{align}
    &\max_{\qv_i, \ell_i,\pv_i}\boldsymbol{q}_{i}'\pv_{i}-\ell_i     \nonumber\\ 
\text{subject to:  }    &  \pv_i=\pv_i^r(\qv_i)=[(M-\hat{B}_i)^{-1}]_{\mathcal{N}(i)}(\boldsymbol{\tilde{A}}_i-\qv_i) \label{eq:res_schedule}\\
    &\text{the technology constraints \eqref{tech_constraint_min}}  \nonumber 
\end{align}
where $\boldsymbol{\tilde{A}}_i$ is a function of the schedule coefficients of other firms.

Then, if this problem has a solution, it is equivalent to the best-reply problem of player $i$, in the sense formalized in the main text.

\end{lemma}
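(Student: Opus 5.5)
The plan is to show both directions of the equivalence directly. The problem \eqref{BestReply}, with the residual schedule \eqref{eq:res_schedule} and technology \eqref{tech_constraint_min}, is a problem about an outcome point. The best-reply problem is about a schedule whose intersection with the other firms' schedules produces that point.

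\textbf{Step 1: the schedule only matters through the clearing point it produces.} Fix the (general linear) schedules of $j\neq i$ with $M-\hat{B}_i$ invertible. By the market-clearing system \eqref{sub_mktclear} (as in Lemma \ref{pricingfunction}), for any schedule of firm $i$ the realized prices and quantities must satisfy $(M-\hat{B}_i)\pv=\hat{\boldsymbol{A}}+\sum_{j\neq i}\hat{B}_{j,f}-\hat{\qv}_i$. Solving this for $\pv$ and restricting to $\mathcal{N}(i)$ gives exactly $\pv_i=\pv_i^r(\qv_i)$ in \eqref{eq:res_schedule}, with $\boldsymbol{\tilde{A}}_i$ collecting the terms that do not depend on $\qv_i$.

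Hence every outcome reachable by some schedule of $i$ is feasible in \eqref{BestReply}, and the best-reply payoff is bounded by the value of \eqref{BestReply}. Conversely, for any feasible point $(\qv_i,\pv_i^r(\qv_i))$, a schedule passing through that point produces it as the market outcome. This uses that the clearing solution is unique by invertibility.

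\textbf{Step 2: solve the monopoly problem.} Substitute $\qv_i=q_i^{out}\vv_i$ and reduce to the scalar problem \eqref{FOC_restricted}. The objective is strictly concave because $\Lambda_i=[(M-\hat{B}_i)^{-1}]_{\mathcal{N}(i)}$ is positive definite (Lemma \ref{lemma_priceimpact}), so the first-order condition characterizes the unique optimum. Rearranging the first-order condition gives $q_i^{out}=\overline{B}_i(\vv_i'\pv_i^r-f_{i,L})$ with $\overline{B}_i$ as in \eqref{bestreply}. This is \eqref{optimality_condition}.

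\textbf{Step 3: the optimal outcome is implemented by a schedule of the form \eqref{schedule_expression}.} Define the candidate schedule $\mathcal{S}_i(\pv_i)=\overline{B}_i(\vv_i'\pv_i-f_{i,L})\vv_i$, that is, $B_i=\overline{B}_i\vv_i\vv_i'$ and $B_{i,f}=\overline{B}_i\vv_if_{i,L}$ as in \eqref{coefficient_equation}. By \eqref{optimality_condition}, this schedule passes through the optimal point. I then need the clearing solution under this schedule to be that point. The total matrix $M$ equals $(M-\hat{B}_i)+\hat{B}_i$, which is invertible as a positive definite matrix plus a positive semidefinite one (Lemma \ref{lemma-invertibility}), so the clearing outcome is unique. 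The optimal point solves the clearing equations by construction, so it is that unique outcome. Step 1 then shows this schedule is a best reply.

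Conversely, suppose a linear schedule is a best reply. Its clearing point must attain the maximum of \eqref{BestReply}, which is unique, so the point satisfies \eqref{optimality_condition}. Within the restricted strategy space of the game (slopes $\overline{B}_i$), this pins down $\overline{B}_i$ as in \eqref{bestreply}. The range $\overline{B}_i\in[0,k_i]$ holds because $\vv_i'\Lambda_i\vv_i\ge 0$.

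\textbf{Main obstacle.} The delicate step is Step 3 when firms $j\neq i$ use general, non-rank-one linear schedules. There I must make sure the combined matrix stays invertible without relying on the specific rank-one structure of the others' schedules. The statement's hypothesis that $M-\hat{B}_i$ is invertible has to be upgraded to invertibility of $M$. My first attempt is a Schur-complement argument on the rank-one update $\overline{B}_i\hat{\vv}_i\hat{\vv}_i'$: $M$ is invertible if and only if $1+\overline{B}_i\vv_i'\Lambda_i\vv_i\neq 0$, which holds whenever $\vv_i'\Lambda_i\vv_i\ge 0$. That sign condition follows from the existence of a solution of the concave problem, since the second-order condition requires $2\vv_i'\Lambda_i\vv_i+1/k_i>0$.

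The second subtlety is that with general schedules, any schedule through the optimal point is a best reply. The equivalence is therefore stated in the sense of the main text: optimum of \eqref{BestReply} if and only if the schedule satisfies \eqref{optimality_condition}.
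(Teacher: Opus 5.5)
Your proposal is correct and follows the paper's route. In both, the best-reply problem is the monopoly problem plus the schedule constraint, market clearing with $\qv_i$ held fixed yields the residual schedule, and the first-order condition shows the relaxed optimum already lies on a schedule with slope \eqref{bestreply}, so that constraint is redundant. You are more explicit than the paper about two points: uniqueness of the clearing outcome under the candidate schedule, and the converse direction.

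One small correction in your \emph{Main obstacle} paragraph: the strict second-order condition $2\vv_i'\Lambda_i\vv_i+1/k_i>0$ does not imply $\vv_i'\Lambda_i\vv_i\ge 0$. You do not need it, though. With $\overline{B}_i=(\vv_i'\Lambda_i\vv_i+1/k_i)^{-1}$ one gets directly $1+\overline{B}_i\vv_i'\Lambda_i\vv_i=(2\vv_i'\Lambda_i\vv_i+1/k_i)/(\vv_i'\Lambda_i\vv_i+1/k_i)>0$, so $M$ is invertible.
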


The version in the main text follows by noting that, when the schedules of other players are restricted to the functional form \eqref{schedule_expression}, then Lemma \ref{pricingfunction} guarantees that $M-\hat{B}_i$ is invertible and $\Lambda_i$ is positive definite. These imply that the above assumption is satisfied, and the optimization has a solution because is concave.

\paragraph{Proof}

The market-clearing equations \eqref{mktclear_linear} can equivalently be rewritten \enquote{fixing} quantities $\qv_i$, as:
 \begin{align}
	& \hat{\qv}_i+(M-\hat{B}_i)\pv=\overline{\boldsymbol{A}}-\hat{B}_{i,f}, \label{eq:clearing} \\
    & \qv_i= B_i\pv_i-B_{i,f} \label{eq:schedule}
\end{align}
where $\overline{\boldsymbol{A}}:=\hat{\boldsymbol{A}}+M_f$. 
Since these implicitly define the pricing function, we can rewrite the optimization of \eqref{profits} by adding the above equations \eqref{eq:schedule} and \eqref{eq:clearing} as constraints:
\begin{align}
\max_{\qv_i,\pv_i, \overline{B}_i} \,& \qv_i'\pv_i-\ell_i\\
\text{subject to: }\quad 
&\qv_i=\overline{B}_i(\vv_i\pv_i-f_{i,L})\vv_i \label{eq:schedule3} \\
&\hat{\qv}_i+(M-\hat{B}_i)\pv=\overline{\boldsymbol{A}}-\hat{B}_{i,f} \nonumber \\
&\text{the technology constraint } \eqref{tech_constraint_min} \nonumber
\end{align}
To show that this is equivalent to \eqref{BestReply}, we have to show that Equation \eqref{eq:clearing} is the residual schedule $\pv_i^r(\cdot)$, and that the schedule constraint \eqref{eq:schedule3} is redundant. Now, the equation \eqref{eq:clearing} can solved for prices $\pv$:
\begin{equation}
\pv=(M-\hat{B}_i)^{-1}(\overline{\boldsymbol{A}}-\hat{B}_{i,f}-\hat{\qv}_i)
\label{prices_total}
\end{equation}
Now, reordering the equations so to have all the rows associated with inputs and outputs of $i$ first, the blocks of the matrix $M$ are:
\[
M-\hat{B}_i=\left(\begin{array}{cc}
	M_{\mathcal{N}(i)}-B_i & M_{\mathcal{N}(i),\mathcal{N}(i)^c}\\
	M_{\mathcal{N}(i),\mathcal{N}(i)^c}' & M_{\mathcal{N}(i)^c}
\end{array}\right)
\]
We can partially solve for the subset $\pv_i$:
\begin{align}
\pv_i& =  [(M-\hat{B}_i)^{-1}]_{\mathcal{N}(i)}\left([\overline{\boldsymbol{A}}-\hat{B}_{i,f}]_{\mathcal{N}(i)}-\qv_i-M_{\mathcal{N}(i),\mathcal{N}(i)^c}\pv_{-i}\right) \nonumber\\
&=[(M-\hat{B}_i)^{-1}]_{\mathcal{N}(i)}\left([\overline{\boldsymbol{A}}-\hat{B}_{i,f}]_{\mathcal{N}(i)}-\qv_i-M_{\mathcal{N}(i),\mathcal{N}(i)^c}(M_{\mathcal{N}(i)^c})^{-1}[\overline{\boldsymbol{A}}-\hat{B}_{i,f}]_{\mathcal{N}(i)^c}\right),\nonumber
\end{align}
where in the last step we use \eqref{prices_total} to solve express $\pv_{-i}$.
Now, defining $\boldsymbol{\tilde{A}}_i:=[\overline{\boldsymbol{A}}-\hat{B}_{i,f}]_{\mathcal{N}(i)}-M_{\mathcal{N}(i),\mathcal{N}(i)^c}(M_{\mathcal{N}(i)^c})^{-1}[\overline{\boldsymbol{A}}-\hat{B}_{i,f}]_{\mathcal{N}(i)^c}$, the right-hand side above becomes the expression of the residual schedule $\pv_i^r(\qv_i;\overline{B}_{-i})$ in  \eqref{eq:res_schedule}.

It remains to show that the schedule constraint is redundant. To do so, note that the FOC remains exactly the same as in the main text, implying the schedule equation  \eqref{eq:schedule3}.
In particular, if a solution of \eqref{BestReply} exists, it must satisfy \eqref{eq:schedule3}, which is the same as \eqref{schedule_expression}: so, the optimization \eqref{BestReply} is indeed equivalent to the best reply problem. \qed 

\subsection{Proof of Lemma \ref{linear_algebra}}
\label{proof_lemma_lambdas}

Both identities are actually true for any firm $i$ such that the price impact matrix is diagonal. In this case, in the unilateral SDFE the block relative to output is: $\Lambda_{i}^{out}= 
\left(M_{\mathcal{N}^{\text{down}}(i)} - \overline{B}_{i} [\hat{\boldsymbol{v}}_{i}\hat{\boldsymbol{v}}_{i}']_{\mathcal{N}^{\text{down}}(i)}\right)_{out(i),out(i)}^{-1}$. Now, the only good both in $\mathcal{N}^{\text{down}}(i)$ and $\mathcal{N}(i)$ is the output of $i$, so $[\hat{\boldsymbol{v}}_{i}]_{\mathcal{N}^{\text{down}}(i)} =\ev_i^{down}$, the relevant canonical basis vector. Using this notation, we can also rewrite:
\begin{align*}
\Lambda_i^{out}&=\left(M_{\mathcal{N}^{\text{down}}(i)} - \overline{B}_{i} \ev_i^{down}(\ev_i^{down})'\right)_{out(i),out(i)}^{-1} \\
&=(\ev_i^{down})'\left(M_{\mathcal{N}^{\text{down}}(i)} - \overline{B}_{i} \boldsymbol{e}_{i}^{down}(\ev_i^{down})'\right)^{-1}\ev_i^{down}
\end{align*}
Now for simplicity call $A:=M_{\mathcal{N}^{\text{down}}(i)} $, $x:=\overline{B}_{i}$ and $\boldsymbol{e}_i:= \ev_i^{down}$.
Using the Sherman-Morrison formula: 
\begin{align}
\left(A - x \boldsymbol{e}_{i}\boldsymbol{e}_{i}'\right)^{-1}_ {ii}&= \ev_i'\left(A - x \boldsymbol{e}_{i}\boldsymbol{e}_{i}'\right)^{-1}\ev_i  \\
&=
\ev_i'\left[ A^{-1}
+x A^{-1}\boldsymbol{e}_{i}
\left(1-x \boldsymbol{e}_{i}'A^{-1} \boldsymbol{e}_{i}\right)^{-1}
\boldsymbol{e}_{i}'A^{-1}
\right]\ev_i \\
&=
\ev_i'\left(A^{-1}\right)\ev_i
\left(1 + 
\dfrac{x\,\boldsymbol{e}_{i}' A^{-1} \boldsymbol{e}_{i}}
{1 - x\,\boldsymbol{e}_{i}' A^{-1} \boldsymbol{e}_{i}}
\right) \\
&=
\dfrac{\left(A^{-1}\right)_{ii}}
{1 - x\left(A^{-1}\right)_{ii}}\\
&=\left(\left(A^{-1}\right)_{ii}^{-1}-x\right)^{-1}.
\end{align}
For the equation on the input side, repeat the same calculations with:
\[
\Lambda_i^{in}=\left(  M_{\mathcal{N}^{\text{up}}(i)}-\overline{B}_{i}\boldsymbol{e}^{up}_{i}(\boldsymbol{e}^{up}_{i})'\right)_{in(i),in(i)}^{-1}
\]
\qed

\subsection{Proof of Lemma \ref{lemma_downstream}}

The fact that $\Lambda_i$ is diagonal means that the matrix $M-\hat{B}_i$ is block diagonal:
\[
   M-\hat{B}_i=\begin{pmatrix}
       M^{down}_i-\overline{B}_i\ev_i^{down}(\ev_i^{down})' & O \\
       O & M_i^{up}-\overline{B}_i\ev_i^{up}(\ev_i^{up})'
   \end{pmatrix}
   \]
so, to compute the output price impact we simply have to compute the inverse of the block $M^{down}_i-\overline{B}_i\ev_i^{down}(\ev_i^{down})'$.

The fact that $\Lambda_i$ is diagonal means that there is a partition $(\mathcal{N}^{up}(i),\mathcal{N}^{down}(i))$ of the set of goods $\mathcal{M}$ such that $out(i)\in \mathcal{N}^{down}(i)$ and $\mathcal{N}^{in}(i)\subseteq \mathcal{N}^{up}(i)$, and for any pair of goods $g \in \mathcal{N}^{down}(i)$ and $h \in \mathcal{N}^{up}(i)$, if $g,h\in \mathcal{N}(k)$ for some firm $k$, then $k=i$. The same must be true for $j$, so the partitions have to be nested: $ \mathcal{N}^{down}(j)\subseteq\mathcal{N}^{down}(i)$ and $ \mathcal{N}^{up}(i)\subseteq\mathcal{N}^{up}(j)$. 
So, the submatrix $M_i^{down}$ has the form:
\[
M^{down}_i=\begin{pmatrix}
M_{-j}^{down} & M_{-j,j}^{down} & M_{-j,i}   \\
  (M_{-j,j}^{down})'  & M_{\mathcal{N}^{\text{down}}(j)} & -\overline{B}_{j}\boldsymbol{e}_{j} \\
(M_{-j,i})'  &  -\overline{B}_{j}\boldsymbol{e}_{j}' & \overline{B}_i+ \overline{B}_j+\overline{B}_{-j} 
\end{pmatrix}
\]
where for brevity I call $M_{-j}^{down}:= M_{\mathcal{N}^{\text{down}}(i)\setminus\mathcal{N}^{\text{down}}(j)}$, $M_{-j,i}:=M_{\mathcal{N}^{\text{down}}(i)\setminus\mathcal{N}^{\text{down}}(j), \,out(i)}$ and $M_{-j,j}^{down}:=M_{\mathcal{N}^{\text{down}}(i)\setminus\mathcal{N}^{\text{down}}(j),\, \mathcal{N}^{\text{down}}(j)}$. 
To obtain Equation \eqref{downstream_i}, we do block inversion:
\begin{align*}
&(M_{\mathcal{N}^{\text{down}}(i)}-\hat{B}_i)^{-1}_{out(i),out(i)}=\\
&\left(\begin{pmatrix}
 M_{\mathcal{N}^{\text{down}}(j)} & -\overline{B}_{j}\boldsymbol{e}_{j}^{down} \\
-\overline{B}_{j}(\boldsymbol{e}_{j}^{down})' & \overline{B}_j+\overline{B}_{-j} 
\end{pmatrix}-\begin{pmatrix}
  M_{-j,j}^{down} \\
M_{-j,i}  
\end{pmatrix}(M_{-j}^{down})^{-1} (M_{-j,j}^{down},M_{-j,i} )  \right)^{-1}_{out(i),out(i)}\\
=& \left(\Lambda_{-j}^{-1}+\overline{B}_{j}-\overline{B}_{j}^{2}(\boldsymbol{e}_{j}^{down})'M_{\mathcal{N}^{\text{down}}(j)}^{-1}\boldsymbol{e}_{j}^{down}\right)^{-1} \\
=& \left(\Lambda_{-j}^{-1}+\overline{B}_{j}-\overline{B}_{j}^{2}\left(M_{\mathcal{N}^{\text{down}}(j)}\right)_{out(j), out(j)}^{-1}\right)^{-1},
\end{align*}
where $\Lambda_{-j}^{-1}=\overline{B}_{-j}-(M_{-j,i})'(M_{-j}^{down})^{-1} M_{-j,i} $. This is a diagonal element of the inverse of the submatrix relative to only $i$ and "not $j$" subnetwork:
\[
\sum_{k\in \mathcal{N}^{down}(i)\setminus \mathcal{N}^{down}(j)\,k\neq i} \overline{B}_k[\hat{\vv}_k\hat{\vv}_k']_{\mathcal{N}^{down}(i)\setminus \mathcal{N}^{down}(j)}
\]
Since this is by construction positive definite, it follows that $\Lambda_{-j}^{-1}\ge 0$.

The proof of the identities for the second part is analogous, taking the proper subnetworks into account.
\qed

\subsection{Proof of Lemma \ref{lemma:layer_best_reply}}

  The equation $X=\left(k^{-1}+(\overline{\Lambda}^{-1}+(n-1)X)^{-1}\right)^{-1}$ is equivalent to the quadratic equation:
  \begin{equation}
	(n-1)X^2+(\overline{\Lambda}^{-1}-(n-2)k)X-\overline{\Lambda}_i^{-1}k=0
\end{equation}
\begin{align*}
\Delta&= \left((n-2)k-\overline{\Lambda}^{-1}\right)^2+4 (n-1)\overline{\Lambda}^{-1}k>0
\end{align*}
The quadratic formula gives:
\begin{equation}
X=\frac{\left((n-2)k-\overline{\Lambda}^{-1}\right) \pm\sqrt{\Delta} }{2 (n-1)}
\label{quadratic}
\end{equation}
and since $n\ge 2$, we have that $\Delta>\left((n-2)k-\overline{\Lambda}^{-1}\right)^2$, so this has a positive and a negative root. So, the positive root is unique and the function $\overline{BR}$ is well-defined.

The monotonicity follows from the implicit function theorem applied to the function $F(\overline{BR},n,k):=\overline{BR}^{-1}-\left(k^{-1}+\left(\overline{\Lambda}^{-1}+(n-1)\overline{BR}\right)^{-1}\right)$.
In particular, since:
\begin{align*}
    \dfrac{\partial F}{\partial \overline{BR}}&= -\overline{BR}^{-2} +(n-1)\left(\overline{\Lambda}^{-1}+(n-1)\overline{BR}\right)^{-2}\\
    &=\overline{BR}^{-2}\left(-1+\dfrac{1}{n-1}\dfrac{(n-1)^2\overline{BR}^2}{\left(\overline{\Lambda}^{-1}+(n-1)\overline{BR}\right)^2}\right)<0,
\end{align*}
we have:
\begin{align*}
    \dfrac{\partial \overline{BR}}{\partial k} & =  \dfrac{k^{-2}}{\overline{BR}^{-2} -(n-1)\left(\overline{\Lambda}^{-1}+(n-1)\overline{BR}\right)^{-2}}>0\\
   \dfrac{\partial \overline{BR}}{\partial n }  & =  \dfrac{\left(\overline{\Lambda}^{-1}+(n-1)\overline{BR}\right)^{-2}\overline{BR}}{\overline{BR}^{-2} -(n-1)\left(\overline{\Lambda}^{-1}+(n-1)\overline{BR}\right)^{-2}} >0\\
   \dfrac{\partial \overline{BR}}{\partial \overline{\Lambda} } &= -\dfrac{\left(\overline{\Lambda}^{-1}+(n-1)\overline{BR}\right)^{-2}\overline{\Lambda}^{-2}}{\overline{BR}^{-2} -(n-1)\left(\overline{\Lambda}^{-1}+(n-1)\overline{BR}\right)^{-2}} <0
\end{align*}
from which the monotonicities follow.  \qed 

\subsection{Proof of Lemma \ref{compstat}}

The proof needs the following Lemma.
 
\begin{lemma}

Suppose $k_i=k$ for all layers $i$. Consider a profile $B$. 
If $n_i>n_j$ and $X\le \overline{BR}(\overline{\Lambda}_j(X,B_{-i,j}),n_j,k)$, then $\overline{BR}(\overline{\Lambda}_i(X,B_{-i,j}),n_i,k)> \overline{BR}(\overline{\Lambda}_j(X,B_{-i,j}),n_j,k)$.

\label{compstat-bestrep}
\end{lemma}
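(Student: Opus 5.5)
The plan is to reduce the comparison to a single scalar inequality inside the one-dimensional fixed-point equation defining $\overline{BR}$ in Lemma \ref{lemma:layer_best_reply}, and then to verify that inequality from the explicit formulas \eqref{priceimpacts}.

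\textbf{Step 1: explicit price impacts.} By \eqref{priceimpacts}, for every layer $l$ we have $\overline{\Lambda}_l=\overline{\Lambda}_l^{out}+\overline{\Lambda}_l^{in}=B_c^{-1}+\sum_{h\neq l}(n_h\overline{B}_h)^{-1}$. I read $\overline{\Lambda}_i(X,B_{-i,j})$ as the value when layer $j$ has slope $X$ and the other layers $h\neq i,j$ keep their slopes from $B$; symmetrically, $\overline{\Lambda}_j(X,B_{-i,j})$ sets the slope of layer $i$ to $X$. Write $C:=B_c^{-1}+\sum_{h\neq i,j}(n_h\overline{B}_h)^{-1}$, and note $C\ge B_c^{-1}>0$. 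Then
\[
\overline{\Lambda}_i=C+\frac{1}{n_jX},\qquad \overline{\Lambda}_j=C+\frac{1}{n_iX}.
\]
Since $n_i>n_j$, this gives $\overline{\Lambda}_i>\overline{\Lambda}_j$, so layer $i$ has the \emph{worse} price impact. The larger $n_i$ must therefore compensate.

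\textbf{Step 2: reduction to one inequality.} Let $Y:=\overline{BR}(\overline{\Lambda}_j,n_j,k)$ and, for each layer $l$, define
\[
\phi_l(x):=\left(k^{-1}+\left(\overline{\Lambda}_l^{-1}+(n_l-1)x\right)^{-1}\right)^{-1}.
\]
From the quadratic in the proof of Lemma \ref{lemma:layer_best_reply}, $\phi_l(x)-x$ is positive for $x$ between $0$ and the positive root $\overline{BR}_l$, and negative above it. Hence, if I show $\phi_i(Y)>Y=\phi_j(Y)$, then $Y<\overline{BR}(\overline{\Lambda}_i,n_i,k)$, which is the claim. Because $\phi_l$ is increasing in $\overline{\Lambda}_l^{-1}+(n_l-1)x$, it suffices to prove
\[
\overline{\Lambda}_i^{-1}+(n_i-1)Y>\overline{\Lambda}_j^{-1}+(n_j-1)Y,
\quad\text{i.e.}\quad (n_i-n_j)\,Y>\overline{\Lambda}_j^{-1}-\overline{\Lambda}_i^{-1}.
\]

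\textbf{Step 3: verifying the inequality.} This is the key step, and it is where the hypothesis $X\le Y$ enters. I compute
\[
\overline{\Lambda}_j^{-1}-\overline{\Lambda}_i^{-1}=\frac{\overline{\Lambda}_i-\overline{\Lambda}_j}{\overline{\Lambda}_i\overline{\Lambda}_j}=\frac{n_i-n_j}{n_in_jX\,\overline{\Lambda}_i\overline{\Lambda}_j}.
\]
Dividing by $n_i-n_j>0$, it remains to show $Y>\left(n_in_jX\,\overline{\Lambda}_i\overline{\Lambda}_j\right)^{-1}$. Since $C>0$, we have the strict bounds $\overline{\Lambda}_i>(n_jX)^{-1}$ and $\overline{\Lambda}_j>(n_iX)^{-1}$. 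Therefore
\[
\left(n_in_jX\,\overline{\Lambda}_i\overline{\Lambda}_j\right)^{-1}<X\le Y.
\]

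The only delicate points are the sign analysis of $\phi_l(x)-x$ and the positivity of $X$ and $Y$. Both follow from the two-root structure established in Lemma \ref{lemma:layer_best_reply}, using $n\ge2$ (together with Lemma \ref{lemma:nontrivial} for nonzero slopes). If $n_j=1$, the quadratic degenerates to a linear equation, but the same ordering argument applies because $\phi_j$ is still increasing with a unique positive fixed point.
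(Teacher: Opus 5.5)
Your proof is correct, and it takes a genuinely different route from the paper's.

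The paper connects the two configurations by a continuous path. With $\Delta n=n_i-n_j$, it sets $\overline{\Lambda}(h)=\mathcal{L}+1/((n_i-h\Delta n)X)$ and $F(h)=\overline{BR}(\overline{\Lambda}(h),n_j+h\Delta n,k)$, treating $n$ as a real parameter. It then uses the implicit-function derivatives from Lemma~\ref{lemma:layer_best_reply} to show $F'(h)>0$ whenever $F(h)>X/((n_i-h\Delta n)X\mathcal{L}+1)^2$, which is implied by $X\le F(h)$. Finally, a supremum/continuation argument carries $X\le F(h)$ from $h=0$ to $h=1$.

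You instead make a single discrete comparison. You evaluate layer $i$'s fixed-point map at layer $j$'s fixed point $Y$ and use the single crossing of $\phi_l(x)-x$ read off the quadratic. This reduces everything to $(n_i-n_j)Y>\overline{\Lambda}_j^{-1}-\overline{\Lambda}_i^{-1}$, which collapses to $(n_in_jX\overline{\Lambda}_i\overline{\Lambda}_j)^{-1}<X\le Y$. That inequality is the finite-difference counterpart of the paper's pointwise derivative condition.

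\textbf{What your route buys.} It is shorter and fully elementary: no differentiation in $n$, no homotopy, and no continuation step, which the paper states somewhat informally. It also makes the source of strictness explicit. Your constant $C$ contains the consumer term $B_c^{-1}$, so $C>0$ always. The paper's $\mathcal{L}=\sum_{k\neq i,j}1/(n_kB_k)$, as written, omits that term, and its appeal to $\mathcal{L}>0$ would fail with only two layers.

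\textbf{What the paper's route buys.} It yields a slightly stronger intermediate fact: the best reply is monotone along the entire interpolation. It also reuses comparative-statics derivatives already computed, a template that could adapt to best-reply maps lacking your clean single-crossing structure. For this lemma, though, your direct comparison is the tighter argument.
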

\begin{proof}
By definition we have:
\[
\overline{\Lambda}_i(X,B_{-i,j})=\mathcal{L}+\frac{1}{n_jX},\quad \overline{\Lambda}_j(X,B_{-i,j})=\mathcal{L}+\frac{1}{n_iX}
\]
where $\mathcal{L}=\sum_{k\neq i,j}\dfrac{1}{n_kB_k}$. So, $\overline{\Lambda}_i(X,B_{-i,j})\ge \overline{\Lambda}_j(X,B_{-i,j})$ if and only if $n_i\ge n_j$. We want to prove that $\overline{BR}(\overline{\Lambda}_i(X,B^*_{-i,j}),n_i,k)>\overline{BR}(\overline{\Lambda}_j(X,B^*_{-i,j}),n_j,k)$.
Now define $\Delta n=n_i-n_j$, $\overline{\Lambda}(h):=\mathcal{L}+\frac{1}{(n_i-h\Delta n)X}$, and the function $F(h)$ as:
\[
F(h):=\overline{BR}\left(\overline{\Lambda}(h),n_j+h\Delta n,k\right)
\]
For $h=0$, we have $F(0)=\overline{BR}(\overline{\Lambda}_j,n_j,k)$, while for $h=1$ we have $F(1)=\overline{BR}(\overline{\Lambda}_i, n_i,k)$. Now we are  going to prove that $F$ is increasing in $h$ when $n_i>n_j$ and $X\le F(0)$, so proving our thesis. So, assume that $n_i>n_j$, or $\Delta n>0$.
Using the calculations in Lemma \ref{lemma:layer_best_reply}:
\begin{align*}
    \dfrac{\partial F}{\partial h}&=\dfrac{\partial \overline{BR}}{\partial n}\Delta n+\dfrac{\partial \overline{BR}}{\partial \overline{\Lambda}}\dfrac{\Delta n}{\left(n_i-h\Delta n\right)^2X}\\
    &=\dfrac{\Delta n \left(\overline{\Lambda}(h)^{-1}+(n_j+h\Delta n-1)F(h)\right)^{-2}}{\left(F(h)^{-2}-\dfrac{(n_j+h\Delta n-1) }{\left(\overline{\Lambda}^{-1}(h)+(n_j+h\Delta n-1)F(h)\right)^2}\right)}\left(F(h)-\dfrac{\overline{\Lambda}(h)^{-2}}{\left(n_i-h\Delta n\right)^2X}\right)
\end{align*}
Because of Lemma \ref{lemma:layer_best_reply}, the denominator is positive, and so the first fraction is positive. Hence, the derivative is positive if and only if 
\begin{align}
F(h) & >\dfrac{\overline{\Lambda}(h)^{-2}}{\left(n_i-h\Delta n\right)^2X}\\
&=\left(\mathcal{L}+\dfrac{1}{(n_i-h\Delta n)X}\right)^{-2}\dfrac{1}{\left(n_i-h\Delta n\right)^2X}\\
    &=\dfrac{X}{\left((n_i-h\Delta n)X\mathcal{L}+1\right)^{2}}
\end{align} 
Since $\mathcal{L}>0$, this condition is always satisfied if $X \le  F(h)$. By assumption, we know that $X\le F(0)$. By the above reasoning, $F$ is strictly increasing in 0 and, by continuity of the derivative, it is increasing at least in an open interval containing 0.
Now call $h^*>0$ the supremum of the interval where $F$ is increasing. Since $F$ is increasing in 0, the supremum exists. Since $F$ is increasing between $0$ and $h^*$, it must be $F(h^*)> F(0)\ge X$. So, also $h^*$ satisfies the condition, with the consequence that $F$ is increasing in $h^*$. Since $F$ is continuous, there is an interval including $h^*$ where $F$ is increasing, so $h^*$ cannot be the supremum. Then, we proved that, if $X\le  F(0)$,  $F$ is increasing for any $h\ge 0$ and so we can conclude $F(0)<F(1)$, which is the thesis. 

\end{proof}

    Consider the equilibrium profile $B^*$. The best reply function \eqref{bestreply_chain} is increasing in $k_i$ and $n_i$ by Lemma \ref{lemma:layer_best_reply}. So, since the game is supermodular, it follows that in equilibrium each coefficient $\overline{B}_j$ is increasing in each $k_i$ and $n_i$. 

    To compare the equilibrium values of the coefficients, we apply the theory of monotone comparative statics to modified best reply functions. The argument will follow Lemma 1 of \cite{lazzati2013comparison}, using the ranking of best replies provided in Lemma \ref{compstat-bestrep} above.\footnote{The argument is a slight modification of \cite{lazzati2013comparison}, because in our context we cannot assume, as in that paper, that $\overline{BR}_i(X,B_{-i,j})>\overline{BR}_j(X,B_{-i,j})$ for \emph{all} $X$ and $B_{i,j}$ (this is, in general, false). It turns out that the proof works with the weaker assumption that $\overline{BR}_i(X,B^*_{-i,j})>\overline{BR}_j(X,B^*_{-i,j})$ for all $X\le B_i^*,B_j^*$, that is what Lemma \ref{compstat-bestrep} provides.}

Consider the unique equilibrium profile $B^*$. All firms in the same layer are identical, and use an identical coefficient: so, we index the profile directly with the layers. Assume $n_i> n_j$. 
Suppose by contradiction that $B_i^*\le B_j^*$. Now using Lemma \ref{compstat-bestrep}
and letting $X=B_i^*$, we have $\overline{BR}(\overline{\Lambda}_j(X,B^*_{-i,j}),n_j,k)=B_j^*$, and so the assumption $X\le  \overline{BR}(\overline{\Lambda}_j(X,B^*_{-i,j}),n_j,k)$ is satisfied by assumption. So, we can conclude that:
\begin{equation}
\overline{BR}(\overline{\Lambda}_i(B_i^*,B^*_{-i,j}),n_i,k)>\overline{BR}(\overline{\Lambda}_j(B_i^*,B^*_{-i,j}),n_j,k)
\label{contradiction}
\end{equation}
Moreover, since the best reply is increasing we get: 
\[
B_i^*=\overline{BR}(\overline{\Lambda}_i(B_j^*,B^*_{-i,j}),n_i,k)\ge \overline{BR}(\overline{\Lambda}_i(B_i^*,B^*_{-i,j}),n_i,k)
\]
Using again the assumption, we also conclude that:
\[
B_j^*=\overline{BR}(\overline{\Lambda}_j(B_i^*,B^*_{-i,j}),n_j,k)\ge B_i^*
\]
Combining the last two inequalities we find: 
\[
\overline{BR}(\overline{\Lambda}_j(B_i^*,B^*_{-i,j}),n_j,k)\ge \overline{BR}(\overline{\Lambda}_i(B_i^*,B^*_{-i,j}),n_i,k),
\] that contradicts Equation \eqref{contradiction}. Hence, it cannot be that $B_i^*\le B_j^*$ and so we conclude that $B_i^*> B_j^*$. An analogous reasoning proves that the same is true when $n_i=n_j=n^*$ and we vary $k_i$.
\qed

\subsection{Proof of Lemma \ref{lemma:nontrivial}}

To see that the equilibrium slopes are nonzero, consider the profile in which for all $i$ $\overline{B}_i=\varepsilon$. In this case, $\overline{\Lambda}_i^{out}+\overline{\Lambda}_i^{in}=\dfrac{1}{\varepsilon}\sum_{j\neq i} \dfrac{1}{n_j}+\dfrac{1}{B_c}$. So, the best reply is:
\begin{align*}
\overline{B}_i'&=\left(k_i^{-1}+\left(\left(\dfrac{1}{\varepsilon}\sum_{j\neq i} \dfrac{1}{n_j}+\dfrac{1}{B_c}\right)^{-1}+(n_i-1)\varepsilon\right)^{-1}\right)^{-1}\\
&=\varepsilon \left(\varepsilon k_i^{-1}+\left(\left(\sum_{j\neq i} \dfrac{1}{n_j}+\dfrac{\varepsilon}{B_c}\right)^{-1}+(n_i-1)\right)^{-1}\right)^{-1}
\end{align*}
If $N=1$, then $\sum_{j\neq i} \dfrac{1}{n_j}=0$, so $\overline{B}'_i\to (k_i^{-1}+B_c^{-1})^{-1}$.

If $N>1$:
\[
\lim_{\varepsilon\to 0}\dfrac{\overline{B}_i'}{\varepsilon}=\left(\sum_{j\neq i} \dfrac{1}{n_j}\right)^{-1}+(n_i-1)
\]
and if $n_i\ge 2$ then the above is larger than 1, meaning that $\overline{B}_i'>\varepsilon$. Iterating the best reply, we obtain an increasing sequence that must converge to the equilibrium $\overline{B}_i$. So, under our assumptions there always exists an $\underline{\varepsilon}>0$ such that for all $i$ $\overline{B}_i>\underline{\varepsilon}$, and so the equilibrium is nonzero.  \qed

\subsection{Proof of Lemma \ref{lemma_BR_unilateral}}

Using Lemma \ref{lemma_downstream}, specifically Equation \eqref{downstream_j}, we have that for any $i$, $j$, $\Lambda_{i}^{down}>\Lambda_{j}^{down}$ if and only if 
\begin{equation}
\left(M_{\mathcal{N}^{down}(i)}^{-1}\right)_{out(i),out(i)}^{-1}-\overline{B}_{i}<\left(M_{\mathcal{N}^{down}(j)}^{-1}\right)_{out(j),out(j)}^{-1}-\overline{B}_{j}.
\label{eq_downstream}
\end{equation}
Now, by definition, the set $\mathcal{N}^{\text{down}}(i)$ only includes the goods produced using directly or indirectly $g$, 
but not their inputs and their supply chain. Moreover, $j$ is the only customer of $i$, and so in $\mathcal{N}^{\text{down}}(i)$ the only good connected to the output of $i$ is the input of $j$. So, the matrices satisfy:
\begin{align}
M_{\mathcal{N}^{\text{down}}(i)} =
\begin{pmatrix}
(n^*-1)\overline{B}_{i} + n^*\overline{B}_{j} & -n^*\overline{B}_{j}\boldsymbol{e}_{1}' \\
-n^*\overline{B}_{j}\boldsymbol{e}_{1} & M_{\mathcal{N}^{\text{down}}(j)}
\end{pmatrix}.
\end{align}
where $\ev_1$ is the first canonical basis vector. 

\medskip

Now we can compute $\left(M_{\mathcal{N}^{\text{down}}(i)}^{-1}\right)_{out(i),out(i)}$ using block inversion:
\begin{align}
(M^{-1}_{\mathcal{N}^{\text{down}}(i)})_{out(i),out(i)}
&= 
\left( n^{*}\overline{B}_{i} + n^{*}\overline{B}_{j} - n^{*2}\overline{B}_{j}^{2}(M^{-1}_{\mathcal{N}^{\text{down}}(j)})_{out(j),out(j)} \right)^{-1}.
\end{align}

So:
\begin{align}
\left( (M^{-1}_{\mathcal{N}^{\text{down}}(i)})_{out(i),out(i)} \right)^{-1} - \overline{B}_{i}
&= n^{*}\overline{B}_{i} + n^{*}\overline{B}_{j} - (n^{*})^{2}\overline{B}_{j}^{2}(M^{-1}_{\mathcal{N}^{\text{down}}(j)})_{out(j),out(j)} - \overline{B}_{i} \\
&= (n^{*}-1)\overline{B}_{i} + n^{*}\overline{B}_{j} - (n^{*})^{2}\overline{B}_{j}^{2}(M^{-1}_{\mathcal{N}^{\text{down}}(j)})_{out(j),out(j)}.
\label{inverse_down}
\end{align}
Hence, the inequality \eqref{eq_downstream} becomes:
\begin{align}
(n^{*}-1)\overline{B}_{i} + n^{*}\overline{B}_{j} - (n^{*})^{2}\overline{B}_{j}^{2}(M^{-1}_{\mathcal{N}^{\text{down}}(j)})_{out(j),out(j)}
<
\left( (M^{-1}_{\mathcal{N}^{\text{down}}(j)})_{out(j),out(j)} \right)^{-1} - \overline{B}_{j}.
\end{align}
Let $x = (M_{\mathcal{N}^{\text{down}}(j)}^{-1})_{out(j),out(j)}$. Then the above inequality is equal to:
\begin{align}
(n^{*}-1)\overline{B}_{i} + (n^{*}+1)\overline{B}_{j} - (n^{*})^{2}\overline{B}_{j}^{2}x < \frac{1}{x}.
\end{align}

Rearranging,
\begin{align}
0 < 1 - \left( (n^{*}-1)\overline{B}_{i} + (n^{*}+1)\overline{B}_{j} \right)x + \left( n^{*}\overline{B}_{j} \right)^{2}x^{2}.
\end{align}

Now, if we use $\overline{B}_{i} \le  \overline{B}_{j}$, we get:
\begin{align}
&1 - \left( (n^{*}-1)\overline{B}_{i} + (n^{*}+1)\overline{B}_{j} \right)x + \left( n^{*}\overline{B}_{j} \right)^{2}x^{2}
\ge \\
&1 - 2 n^{*}\overline{B}_{j} x + \left( n^{*}\overline{B}_{j} \right)^{2}x^{2}=
\left( n^{*}\overline{B}_{j}x - 1 \right)^{2} > 0.
\end{align}
which proves that inequality \eqref{eq_downstream}
 is satisfied. It is satisfied with a strict inequality, because $x\overline{B}_j=1$ by Equation \eqref{inverse_down} would imply $\Lambda^{down}_i=((n^*-1)\overline{B}_i)^{-1}$, that is impossible because it could be only if the price of $out(i)$ where constant, that is not the case in the unilateral model.\qed

\subsection{Proof of Proposition \ref{prop:markup_generalized}}

Part 1 follows directly from part 2 of Theorem \ref{thm:main_comparative}.   For part 2, the price impact is the analogous to Equation \eqref{chain_output}, but on inputs:
\begin{align}
	\Lambda^{unilateral}_i=\begin{pmatrix}
		0  & 0 \\
		0 & \left( (\overline{\Lambda}_i^{in})^{-1}+(n_i-1)\overline{B}_i\right)^{-1}
	\end{pmatrix}
	\label{chain_input}
\end{align}
Markups are zero, while the markdowns are:
\[
\mu_i^{in}=\dfrac{Q}{n}
\dfrac{\overline{\Lambda}_{i}^{in}}{1+\overline{\Lambda}_{i}^{in}\overline{B}_{i}(n-1)}=\dfrac{Q}{n}\left(\overline{B}_i^{-1}-k^{-1}\right)
\]
Plugging the price impact \eqref{chain_input} in the best reply equation we get:
\begin{equation}
\overline{B}^{unilateral}_i=\left(k^{-1}+\dfrac{1}{(\overline{\Lambda}_{i}^{in})^{-1}+\overline{B}_{i}(n-1)}\right)^{-1}=\overline{BR}(\overline{\Lambda}_i^{in},k,n)
\label{bestreply_chain_unilateral}
 \end{equation}
 where $\overline{BR}$ is the same function defined in Lemma \ref{lemma:layer_best_reply}.
By Lemma \ref{lemma:layer_best_reply} and the fact that $\overline{\Lambda}_i^{in}$ is decreasing upstream we conclude that, in equilibrium, $\overline{B}_i$ is increasing upstream. As a consequence, both markdowns and profits are increasing downstream.\qed

\subsection{Proof of Proposition \ref{sequential_cournot}}

Consider the setting of the layered supply chain of Example \ref{ex:line}. Solve it as a standard Sequential Oligopoly as in \cite{belleflamme2015industrial}: firms in layer 1 play a Cournot game on outputs, taking the input price $p_2$ as given. Then, compute the demand for good 2 from layer 1 implied by the Cournot equilibrium between downstream firms. Finally, firms in layer 2 play a Cournot game in outputs. If there are more layers, Firms take as given the input price $p_{3}$, and so on for all the layers.

Consider first the downstream layer, 1. The inverse demand is: $P_1(\sum_j q_{j,1})=\dfrac{1}{B_c}(A-\sum_jq_{j,1})$ The FOC is:
\[
\dfrac{\partial}{\partial q_i}\left((P_1-P_2)q_{i,1} -\dfrac{1}{2k_i}q_{i,1}^2 \right)=P_1-P_2-\dfrac{1}{B_c}q_{i,1}-\dfrac{1}{k_1}q_{i,1}=0
\]
So, the quantity and the price have the relation:
\[
q_i=\left(\dfrac{1}{B_c}+\dfrac{1}{k_i}\right)^{-1}(P_1-P_2)
\]
This is a linear schedule with slope $\left(\dfrac{1}{B_c}+\dfrac{1}{k_i}\right)^{-1}$, which is exactly Equation \eqref{bestreply}, where the price impact is modified so that $\overline{\Lambda}^{in}=0$ and the slope of competitors are set to 0, so that the price impact is:
\[
\Lambda_i^{sequential}=\begin{pmatrix}
    \overline{\Lambda}_i^{out} & 0\\
    0 & 0
\end{pmatrix}
\]

Finally, the demand from layer 1 to the upstream layer can be obtained by solving the market-clearing conditions:
\begin{align}
 P_1\left(\sum_j q_{j,1}\right)&=\dfrac{1}{B_c}\left(A-\sum_jq_{j,1}\right)\\
 q_i&=\left(\dfrac{1}{B_c}+\dfrac{1}{k_1}\right)^{-1}(P_1-P_2)
\end{align}
which, since they are linear, have the same solution as the market clearing in the competition in schedules, and it is linear:
\[
P_2=A_2-\overline{B}_2\sum_j q_{j,2}
\]

Since the demand has the same form as the first layer, just with different coefficients, the reasoning can be repeated for all layers.
\qed

\subsection{Proof of Lemma \ref{lemma:limit}}
\label{proof_lemma:limit}
Consider first the multilateral case. By Proposition \ref{prop:markup}, we know that in equilibrium all slope coefficients are the same: $\overline{B}_i=B^*$, from which it follows that $\overline{\Lambda}_i=(N-1)/(n^*B^*)+B_c^{-1}$. Since $B^*$ is bounded, in the limit for $N\to \infty$ we have that $\overline{\Lambda}_i^{-1}\to 0$. The best reply equation \eqref{bestreply_chain} becomes:
\[
B^*=(k^{-1}+((n^*-1)B^*)^{-1})^{-1}
\]
and collecting the $B^*$ terms, this can be solved analytically, yielding $B^*\to k\frac{n^*-2}{n^*-1}$. If $n^*=2$, then $B^*\to 0$. In this case, the above best reply equation contains a division by zero, but it is still satisfied in the limit, because $(B^*)^{-1}\to \infty$ and so the right-hand side goes to zero.

In the local case, instead, we show that in equilibrium there is a $\underline{B}$ that is a lower bound to all slope coefficients. The best reply Equation remains of the form \eqref{bestreply_chain}
, with the difference that the coefficients $\overline{\Lambda}_i$ satisfy:
\[
\overline{\Lambda}_i=\dfrac{1}{2B_{i-1}}+\dfrac{1}{2B_{i+1}}
\]
with the convention that $B_0=B_c/2$, and $B_{N+1}=\infty$. Since the first and last do not have the same functional form, the equilibrium is not homogeneous. However, for $\varepsilon$ small enough, we can consider $\varepsilon<\min_i B_i$, and so for all $i$, $\overline{\Lambda}_i>(2\varepsilon)^{-1}$. Moreover, there is an $\varepsilon$ satisfying the best reply equation:
\[
\varepsilon=\left(\dfrac{1}{k}+\dfrac{1}{2 \varepsilon+\varepsilon}\right)^{-1}
\]
which is a lower bound on the equilibrium profile: $\overline{B}^{\text{local}}>\varepsilon$. Solving the above equation we get that the lower bound is $\overline{B}:=k\dfrac{2}{3}$, which is strictly positive and independent of $N$. \qed

\subsection{Proof of Proposition \ref{prop_vertical}}

\label{proof_prop_vertical}

After the merger, by assumption the merged firm forecloses the other downstream rivals, so only one firm survives, and is a monpolist. Since the upstream firm has constant marginal cost ($k_{2}\to\infty$) and the downstream firm has quadratic marginal cost with curvature $k$, the monopolist now has a technology with marginal cost with curvature k.

The monopoly price in the after-merger setting is:
\[
p^{post}=A\left(B_c+B_M \right)^{-1}
\]
where $B_M$ is the equilibrium coefficient of the supply of the only firm.

In the pre-merger equilibrium instead the final price is:
\[
p^{pre}=\frac{A}{B_c+\left(\dfrac{1}{n_1\overline{B}_1}+\dfrac{1}{\overline{B}_2}\right)^{-1}}
\]
These expressions are valid under any Generalized SDFE, because they are agnostic on the slopes.

Now, in the post-merger case, since there is only one firm, the price impact is $\Lambda_M=1/B_c$. The firm has no inputs, nor competitors so, multilateral, unilateral and sequential Cournot all give the same “best reply”, which is equivalent to the monopolist optimization: $B_{M}=(k^{-1}+\Lambda_{M})^{-1}=(k^{-1}+B_{c}^{-1})^{-1}$. So, the price is:
\[
p^{post}=A\left(B_c+B_M \right)^{-1}=A\left(B_c+\frac{1}{1/B_c+k^{-1}} \right)^{-1}
\]
In the pre-merger case, since firm 2 has no suppliers, the price impact on inputs, $\overline{\Lambda}_2^{in}$, is zero both with unilateral and multilateral market power. Moreover, since it is the only firm upstream, the “slopes of competitors” do not matter. The price impact on outputs is $\overline{\Lambda}_2^{out}=B_c^{-1}+(n_1\overline{B}_1)^{-1}$. So, using \eqref{bestreply_chain}, the best reply equation for $\overline{B}_{2}$ is the same with multilateral, unilateral market power, or Sequential Cournot. (remember that $k_2\to \infty$):
\begin{equation}
\overline{B}_2=\left(B_c^{-1}+(n_1\overline{B}_1)^{-1}\right)^{-1}
\label{BR_2}
\end{equation}
This implies that the price is:
\[
p^{pre}=A\left(B_c+\frac{1}{1/B_c+\frac{2}{n_1\overline{B}_1}} \right)^{-1}
\]
Comparing the two equations, we find that the price is higher after the merger if and only if $2k <n_1 \overline{B}_1$.

Now in all cases, $n_1\overline{B}_1(n_1)$ is increasing and unbounded, so the equation $2k=n_1\overline{B}_1(n_1)$ has a unique solution. Moreover, if $n_1=2$ the RHS is lower, while for $n_1$ sufficiently large the RHS is higher. Define $n^{\ast}$ as the solution in case of multilateral market power, and $n_{\ast}$ as the solution with unilateral market power.
 By Theorem \ref{thm:main_comparative}, for any $n_1$, $\overline{B}_1^{\text{unilateral}}(n_1)>\overline{B}_1^{\text{multilateral}}(n_1)$, so that $n_{\ast}<n^{\ast}$.  Hence, it follows that there is a range of $n_1 \in (n_{\ast},n^{\ast})$ the merger is welfare-increasing with multilateral market power, but welfare-decreasing with unilateral market power.
 
 Moreover, the same is true for Sequential Cournot if $B_c$ is sufficiently large. To see it, first note that by Equation \eqref{BR_2}, in all considered equilibria $\overline{B}_2\to n_1\overline{B}_1$ if $B_c\to \infty$. 
  The equation for $\overline{B}_1$ in the Sequential Cournot is:
 \[
 \overline{B}_1^{\text{Cournot}}=\left(\dfrac{1}{k}+\dfrac{1}{B_c}\right)^{-1}\to k \text{ for } B_c\to \infty 
 \]
 In the multilateral case, instead, the best reply Equation \eqref{bestreply_chain} converges to:
 \[
\overline{B}_{1}^{\text{multilteral}}=\left(\dfrac{1}{k}+\dfrac{1}{\overline{B}_{2}+(n_{1}-1)\overline{B}_{1}^{\text{multilteral}}}\right)^{-1}
 \]
 Now using the fact that, in the limit, $\overline{B}_2^{\text{multilateral}}= n_1\overline{B}_1^{\text{multilateral}}$, we find that the above equation is solved by $\overline{B}_1^{\text{multilateral}}=\dfrac{2n_1-2}{2n_2-1}k<k$. It follows that, if $B_c$ is sufficiently large, $\overline{B}_1^{\text{Cournot}}(n_1)>\overline{B}_1^{\text{multilateral}}(n_1)$.
 \qed

\section{General model: non-linear schedules and uncertainty}
\label{sec:full_blown}

In this section I show that the linear equilibrium analyzed in the main text remains an equilibrium even if firms are free to choose arbitrary (possibly non-linear) schedules. Moreover, it remains an equilibrium also introducing uncertainty in the intercept of the consumer demand $\boldsymbol{A}$ and the labor cost parameters $\boldsymbol{f}_L$. In particular, the linear schedule becomes the unique best reply, and the equilibrium is ex-post, extending the \cite{klemperer1989supply} selection argument to the input-output network setting.

\subsection{The general game}

The technology is the one defined by the Equations \eqref{tech_constraint_min}. 
The key difference is that the vectors $\boldsymbol{A}$ and $\boldsymbol{f}_L$ are stochastic. Formally, assume that the vector $\eps:=(\boldsymbol{A},\boldsymbol{f}_L)$ has a joint distribution $F_{\eps}$ with finite mean, and that $\eps$ has support $\mathcal{E}$, which we leave generic for now. The support of $f_{i,L}$ is simply the projection of $\mathcal{E}$ and is denoted $\mathcal{E}_i$. For the uniqueness result in Lemma \ref{expost_general}, we need $\mathcal{E}$ such that the full price space is covered. The easiest way to do so is assuming $\mathcal{E}=\mathbb{R}^{n+m}$, and maintain the assumption of the main text that negative prices and quantities are allowed and simply represent trade flowing in the opposite direction. 
The schedules chosen by the firms $\mathcal{S}_i$ now are maps from $\mathbb{R}^{d_i}\times \mathcal{E}_i\to \mathbb{R}^{d_i}$, mapping $\pv_i,f_{i,L}$ to $\mathcal{S}_i(\pv_i,f_{i,L})$, and the schedules are chosen \emph{before} the realization of the vectors $\boldsymbol{A}$ and $\boldsymbol{f}_L$. Firms can condition their schedule on their value $f_{i,L}$: this can be equivalently interpreted by saying that firms are able to observe their own realization of the cost $f_{i,L}$. In this context, the market-clearing conditions are analogous to \eqref{sub_mktclear}, remembering that schedules are also functions of the realization of $f_{i,L}$, and the consumer demand a function of $\boldsymbol{A}$:
\begin{align}
	\sum_{i\,:\,g \in \mathcal{N}_i}\mathcal{S}_{ig}(\pv_i,f_{i,L})&=D_{c,g}(\pv_{c},\boldsymbol{A})\quad \forall g \in \mathcal{C} \quad \sum_{i\,:\,g \in \mathcal{N}_i}\mathcal{S}_{ig}(\pv_i,f_{i,L})=0 \quad \forall g \in \mathcal{M}\setminus \mathcal{C}
    \label{mktclear_general}
\end{align}

The set of feasible schedules for firm $i$ is denoted $\mathcal{A}_i$, and $\mathcal{A}=\prod_{i \in \mathcal{N}}\mathcal{A}_i$. The set $\mathcal{A}_i$ is the set of schedules that:
\begin{enumerate}
    
    \item satisfy Equations \eqref{tech_constraint_min};

    \item are such that the market-clearing conditions \eqref{mktclear_general} uniquely define a pricing function.\footnote{Here we do not state explicitly sufficient conditions for this to be the case: in the parallel paper \cite{bizzarri2025general}, the reader can find a set of sufficient conditions.}
    
\end{enumerate}

Formally, we have the following definition.
\begin{defi}[Pricing function and payoffs]
	

A function $\pv:\mathcal{E}\times \mathcal{A}\to \RR^m$ that solves \eqref{mktclear_general} is called a pricing function. 

	
The payoff of firm (player) $i$ is the mapping from supply and demand schedules in $\mathcal{A}_i$ to real numbers defined by the expected profits:
	\begin{align*}
		\pi_{i}(\mathcal{S}_{i}, \mathcal{S}_{-i})&=\EE_F\big(\pv_i'(\eps)\mathcal{S}_i(\pv_i(\eps),f_{i,L})-\mathcal{S}_{\ell,i}(\pv_i(\eps),f_{i,L})\big)
	\end{align*} 
	\label{pricing}	
	
\end{defi}

\subsection{Equilibrium}

\paragraph{Linear equilibrium}

We look for an equilibrium in which the schedules are linear, namely they have the expression of Equation \eqref{linear_schedule}:
\begin{equation}
\mathcal{S}_i(\pv_i,f_{i,L})=B_i\pv_i-B_{i,f}f_{i,L}
\label{schedules}
\end{equation}
with the difference that $f_{i,L}$ is factored out of the intercept vector, because it is interpreted as a stochastic variable realized after the schedule is chosen.

\paragraph{Market-clearing and residual demand}

The best reply problem of firm $i$ is:
\begin{equation}
	\label{BestReply_general}
	\max_{\mathcal{S}_i\in \mathcal{A}_i} 	\pi_{i}(\mathcal{S}_{i},\mathcal{S}_{-i})
\end{equation}	
If the schedules of other players $\mathcal{S}_{-i}$ are linear as in Equation \eqref{schedules},
 then the market-clearing equations and the residual schedule follow, respectively Equation \eqref{mktclear_linear}, \eqref{residual_schedule} of the main text, with the only difference that the intercept $\boldsymbol{\tilde{A}}_i$ is stochastic. 

The key result is the following, which generalizes Lemma \ref{linear-residual}. It shows that the linear schedules represent the unique best reply among all the schedules satisfying the assumptions for which the game is well defined. The reason is the same as in \cite{klemperer1989supply}: with support large enough, uncertainty forces the firms to choose a specific locus of prices and quantities, and all the choices are relevant for some realization of the stochastic parameters.

\begin{lemma}
The schedule $\mathcal{S}^*_i$ solves the best reply problem \eqref{BestReply_general}
 if and only if for any $\eps$ the quantity vector $\qv_i^*=\mathcal{S}^*_i(\pv_i(\eps),f_{i,L})$ solves:
\begin{equation}
    \max_{\qv_i, \ell_i}\boldsymbol{q}_{i}'\boldsymbol{p}^r_{i}(\qv_i, \eps;B_{-i})-\ell_i
    \label{expost_general}
\end{equation}
subject to the technology constraint \eqref{tech_constraint_min}. Moreover, it is the unique such schedule if the support $\mathcal{E}$ is such that for all $i$, $[M^{-1}\overline{\boldsymbol{A}}]_i$ spans $\mathbb{R}^{d_i}$.

\label{equivalent}
\end{lemma}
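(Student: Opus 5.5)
The plan is to reduce the functional best-reply problem \eqref{BestReply_general} to a family of pointwise (ex-post) problems, one for each realization $\eps$, and then show that the pointwise optimizers trace out a schedule. Since $\mathcal{S}_{-i}$ is linear as in \eqref{schedules}, fixing $\qv_i$ and solving the market-clearing conditions \eqref{mktclear_general} gives, exactly as in the proof of Lemma \ref{linear-residual}, the residual schedule
\[
\pv_i^r(\qv_i,\eps;B_{-i})=\Lambda_i\big(\boldsymbol{\tilde{A}}_i(\eps)-\qv_i\big),
\]
with $\Lambda_i=[(M-\hat{B}_i)^{-1}]_{\mathcal{N}(i)}$ deterministic and only the intercept $\boldsymbol{\tilde{A}}_i(\eps)$ stochastic. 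Any feasible schedule $\mathcal{S}_i$ induces, for each $\eps$, some point $(\qv_i,\pv_i)$ on this residual schedule. Hence the realized profit is at most the value of \eqref{expost_general} at that $\eps$, and the expected profit is at most $\EE_F$ of the pointwise maxima. This upper bound is attained if and only if the schedule hits the pointwise optimizer for almost every $\eps$.

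Next I solve \eqref{expost_general} explicitly. Substituting the technology constraint $\qv_i=q_i^{out}\vv_i$ gives a strictly concave problem in $q_i^{out}$, since $\Lambda_i$ is positive definite by Lemma \ref{lemma_priceimpact}. The FOC, as in the main text, is
\[
q_i^{out}=\overline{B}_i\big(\vv_i'\pv_i^r-f_{i,L}\big),\qquad \overline{B}_i=\big(\vv_i'\Lambda_i\vv_i+k_i^{-1}\big)^{-1},
\]
which does not depend on $\eps$. The linear schedule $\mathcal{S}_i^*(\pv_i,f_{i,L})=\overline{B}_i(\vv_i'\pv_i-f_{i,L})\vv_i$ therefore passes through the pointwise optimum for every $\eps$ simultaneously. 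It also belongs to $\mathcal{A}_i$, because with all schedules linear the pricing function is unique by Lemma \ref{pricingfunction}. So $\mathcal{S}_i^*$ attains the upper bound and is a best reply, which proves sufficiency. Necessity follows from the same bound: a schedule that misses the pointwise optimum on a positive-measure set of $\eps$ earns strictly less, and since profits are continuous in $\eps$ this upgrades to "every $\eps$" in the support.

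For uniqueness, note that the ex-post optimum pins down $\mathcal{S}_i$ only at the realized pairs $(\pv_i(\eps),f_{i,L})$. Under the spanning condition, as $\eps$ ranges over $\mathcal{E}$, the realized prices $[M^{-1}\overline{\boldsymbol{A}}]_i$ together with $f_{i,L}$ cover all of $\mathbb{R}^{d_i}\times\mathcal{E}_i$. Any optimal schedule must then coincide with $\mathcal{S}_i^*$ on its whole domain.

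The main obstacle is the necessity/uniqueness step. There I must (i) argue that an arbitrary feasible $\mathcal{S}_i$ still generates prices lying on the same linear residual schedule, which holds because the other firms' schedules are fixed and linear and market clearing is linear given $\qv_i$; and (ii) pass from "almost every $\eps$" to "every point of the domain" using the full-support assumption $\mathcal{E}=\mathbb{R}^{n+m}$ together with continuity. For (ii), my first approach would be to observe that $\eps\mapsto(\pv_i(\eps),f_{i,L})$ is affine and surjective under the spanning hypothesis, so a positive-measure set of $\eps$ maps onto a positive-measure set of price points.
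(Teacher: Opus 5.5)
Your proposal follows the paper's proof. The paper packages your upper-bound/attainment step as Lemma \ref{lemma_pointwise}: ex-ante optimization over schedules is equivalent to ex-post pointwise optimization under full support. It then rewrites realized profit as $\qv_i'\pv_i^r(\qv_i,\eps)-\ell_i$ on the residual schedule, and observes that the unique pointwise optimizer is the feasible linear schedule \eqref{eq:general_schedule}.

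One step of yours does not go through: upgrading ``almost every $\eps$'' to ``every $\eps$'' via continuity of profits in $\eps$. Nothing in the definition of $\mathcal{A}_i$ makes a schedule continuous, so the realized profit of an arbitrary best reply need not be continuous in $\eps$. Your bound only shows that a best reply is pointwise optimal $F$-almost surely. That is also all the paper's argument delivers: the proof of Lemma \ref{lemma_pointwise} ends with ``except a null set''. So the ``only if'' direction and the uniqueness claim should be stated up to null sets, not patched with continuity. Likewise, passing from ``a.e.\ in $\eps$'' to ``a.e.\ on the domain of $\mathcal{S}_i$'' needs $F$ to charge every set of positive Lebesgue measure (e.g.\ a positive density); full support alone is not enough.

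On uniqueness the two routes differ slightly. The paper uses the spanning condition to show that the quantities reachable by feasible schedules, $\mathcal{F}_i(\mathcal{E})$, contain all technologically feasible ones, $\mathcal{T}_i$. Then the constrained pointwise problem coincides with the unconstrained one, and its unique solution pins down the quantity map $\eps\mapsto\qv_i(\eps)$. You use the condition in the Klemperer--Meyer way: realized price points sweep the whole domain, so $\mathcal{S}_i$ itself is pinned down. This is arguably the more direct route to uniqueness of the schedule as a function of $(\pv_i,f_{i,L})$. However, it requires the affine map $\eps\mapsto([M^{-1}\overline{\boldsymbol{A}}]_{\mathcal{N}(i)},f_{i,L})$ to be onto $\mathbb{R}^{d_i}\times\mathcal{E}_i$ jointly, which is a bit more than the stated hypothesis. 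It does hold for $\mathcal{E}=\mathbb{R}^{n+m}$ with positive equilibrium slopes: positive definiteness of $M-\hat{B}_i$ means the consumer directions and $\{\hat{\vv}_j\}_{j\neq i}$ span $\mathbb{R}^m$, so prices cover $\mathbb{R}^m$ for each fixed $f_{i,L}$.
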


The condition on the support is satisfied if, for example $\mathcal{E}=\mathbb{R}^{n+m}$, corresponding to the case in which all goods are directly sold to the consumer. However, this is not necessary. For example, for the Supply chain with layers of Example \ref{ex:line}, the condition is easier to satisfy: it is sufficient that the consumer intercept $A_0$ and the cost parameter for the last layer $f_N$ have full support.

Now the solution of \eqref{expost_general} is immediate, as in the main text, and yields:
\begin{equation}
\mathcal{S}_i(\pv_i,f_{i,L})=\overline{B}_i(\vv_i'\pv_i-f_{i,L})\vv_i
\label{eq:general_schedule}
\end{equation}
where the coefficient $\overline{B}_i$ satisfies \eqref{bestreply}.

So, we conclude that the linear equilibrium of the generalized model of this section is unique and identical to the equilibrium of the main text, characterized by Equations \eqref{bestreply} for all $i$. Moreover, since it solves \eqref{expost_general}, it is an ex-post equilibrium: firms would not want to revise their schedule even if they learned the realization of $\eps$.

\subsection{Proof of Lemma \ref{equivalent}}
The proof uses the following Lemma, stating in words that optimizing over a schedule ex-ante is equivalent to pointwise optimizing ex-post, provided the stochastic parameters span the whole domain of the schedule: this is the same property used in \cite{klemperer1989supply} and \cite{malamud2017decentralized}.
\begin{lemma}
	
	Suppose $F$ is a distribution on $\mathcal{B}(\mathcal{E})$. Consider the set of feasible schedules $\mathcal{F}$, a subset of measurable functions $f: \mathcal{E}\to  \RR^n$. Suppose $g:\mathbb{R}^n\times \mathcal{E}\to \mathcal{E}$ is another measurable function on the standard product space. Define two maximizations:
	\begin{equation}
\max_{f \in \mathcal{F}} \EE_{\varepsilon} g(f(\varepsilon),\varepsilon)
\label{global}
\end{equation}
\begin{equation}
\max_{y \in \mathcal{F}(\mathcal{E})}g(y,\varepsilon) 
\label{pointwise}
\end{equation}
Say that a function $f^{**}$ \enquote{maximizes} \eqref{pointwise} if $f^{**}(\varepsilon)$ maximizes \eqref{pointwise} for any $\varepsilon$. Assume that if $f^{**}$ maximizes \eqref{pointwise}, then $f^{**}\in \mathcal{F}$.

\begin{enumerate}

\item If $f^*$ maximizes \eqref{global} and the support of $F$ includes all $\mathcal{E}$, then it also maximizes \eqref{pointwise}. In particular, if the solution of \eqref{pointwise} is unique for any $\varepsilon$, then also the solution of \eqref{global} is unique.

\item If $f^{**}$ maximizes \eqref{pointwise}, then it also maximizes \eqref{global}.

\end{enumerate}


In words: optimizing over a function of $\varepsilon$ is equivalent to optimizing pointwise and ex-post for each realization of $\varepsilon$. 

	\label{lemma_pointwise}
	
\end{lemma}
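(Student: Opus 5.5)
The plan is to prove the two parts separately, working pointwise in $\varepsilon$ with the objective $g(y,\varepsilon)$.

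For part 2, suppose $f^{**}$ maximizes \eqref{pointwise}. By assumption $f^{**}\in\mathcal{F}$, so it is feasible for \eqref{global}. Take any $f\in\mathcal{F}$. For every $\varepsilon$, $f(\varepsilon)$ belongs to $\mathcal{F}(\mathcal{E})$, so it is a feasible point of \eqref{pointwise}. Hence
\[
g(f^{**}(\varepsilon),\varepsilon)\ge g(f(\varepsilon),\varepsilon)\quad \text{for all } \varepsilon.
\]
Taking expectations preserves this inequality, using measurability of $g(f(\cdot),\cdot)$ as a composition of measurable maps on the product space, together with integrability of the objective. So $f^{**}$ attains the maximum of \eqref{global}. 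I will state explicitly that expectations are assumed well defined (possibly $-\infty$ for bad $f$), since the statement leaves this implicit.

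For part 1, I argue by contradiction. Let $f^*$ maximize \eqref{global}. The argument needs the pointwise problem to admit a maximizer $f^{**}$, which I will assume, as in the application where \eqref{expost_general} is strictly concave. Part 2 then gives
\[
\EE g(f^{**},\varepsilon)\ge \EE g(f^*,\varepsilon),
\]
while optimality of $f^*$ gives the reverse inequality, so the two expectations are equal. The nonnegative function
\[
h(\varepsilon):=g(f^{**}(\varepsilon),\varepsilon)-g(f^*(\varepsilon),\varepsilon)
\]
therefore integrates to zero and vanishes $F$-almost surely. The main obstacle is upgrading ``almost surely'' to ``for every $\varepsilon$ in $\mathcal{E}$''. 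Full support alone does not suffice for a measurable $h$: a null set can be nonempty inside the support. I would first try a continuity route: if $\mathcal{F}$ consists of continuous schedules and $g$ is continuous, then $h$ is continuous. The set $\{h>0\}$ is then open, and having $F$-measure zero forces it to be empty by full support. This is the natural regularity in the schedule setting, since $\pv_i(\varepsilon)$ and the profit are continuous in $(\qv_i,\varepsilon)$. If that fails, I would weaken the claim to ``maximizes \eqref{pointwise} for $F$-a.e. $\varepsilon$'', which is the form used in \cite{klemperer1989supply}, and note that this suffices for the ex-post characterization.

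Uniqueness in part 1 then follows the same way. If the pointwise maximizer is unique for each $\varepsilon$, any two maximizers of \eqref{global} both coincide with $f^{**}$ at every $\varepsilon$ (or a.e. under the weaker version), so they are equal.
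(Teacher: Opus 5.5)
Your proof is correct and follows the paper's argument. Part 2 is identical, and in Part 1 both you and the paper compare $f^*$ with a pointwise maximizer $f^{**}$ through the integral of $g(f^{**}(\varepsilon),\varepsilon)-g(f^{*}(\varepsilon),\varepsilon)$; the only difference is that you get equality of expectations directly from Part 2 and the optimality of $f^*$, whereas the paper argues by contradiction on the set where $f^{**}$ does strictly better. Your caveat is well placed: the paper's proof also concludes optimality of $f^*$ only outside an $F$-null set, and its appeal to full support to obtain a positive-measure set of strict improvement is exactly the step that needs extra regularity, such as your continuity argument, to give the statement for \emph{every} $\varepsilon$.
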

\begin{proof}
	
\textbf{Part 1}

Suppose by contradiction that $f^*\in \arg\max_{f \in \mathcal{F}} \EE_{\varepsilon} g(f(\varepsilon),\varepsilon)$, but it does not solve \eqref{pointwise}. This means that if $f^{**}$ is a solution of \eqref{pointwise}, there is some set $A$ of positive measure such that $g(f^{**}(\varepsilon'),\varepsilon')>g(f^*(\varepsilon'),\varepsilon')$ in $A$. Hence, since the distribution has full support:
	\begin{align}
	\int g(f^{**}(\varepsilon),\varepsilon) \dd F(\varepsilon)&=\int_A g(f^{**}(\varepsilon),\varepsilon) \dd F(\varepsilon)+\int_{A^c} g(f^{**}(\varepsilon),\varepsilon) \dd F(\varepsilon)\\
&	=\int_A g(f^{**}(\varepsilon),\varepsilon) \dd F(\varepsilon)+\int_{A^c} g(f^*(\varepsilon),\varepsilon)\dd F(\varepsilon)\\
&> \int g(f^*(\varepsilon),\varepsilon)\dd F(\varepsilon)
\end{align}
	which is a contradiction, because $f^{*}$ is the maximizer of \eqref{global}. So, for all $\varepsilon$ except a null set, it must be $g(f^*(\varepsilon),\varepsilon)=g(f^{**}(\varepsilon),\varepsilon)$, and so $f^{**}$ is also a maximizer of \eqref{pointwise}.
	
	\textbf{Part 2}
By definition of $f^{**}$: $g(f^{**}(\varepsilon),\varepsilon)\ge g(y,\varepsilon)$ for all $y\in \mathcal{F}(\mathcal{E})$, $\varepsilon$. Hence, it follows that for any function $f\in \mathcal{F}$:
\[
\int g(f^{**}(\varepsilon),\varepsilon) \dd F(\varepsilon)\ge \int g(f(\varepsilon),\varepsilon)\dd F(\varepsilon)
\]
and so $f^{**}$ solves \eqref{global}.

\end{proof}

First, we observe that by the proof in the main text, if $\mathcal{S}_i(\pv_i,f_{i,L})=\qv_i$, then: 
\[
\boldsymbol{S}_{i}'\boldsymbol{p}_{i}(\mathcal{S}_i, \eps;\overline{B}_{-i})-\ell_i=\boldsymbol{q}_{i}'\boldsymbol{p}_{i}^r(\qv_i, \eps; \overline{B}_{-i})-\ell_i.
\]
So, the objective function in  \eqref{BestReply_general} can be substituted by the one above. Now, choosing a schedule $\mathcal{S}_i$ while being aware of the market-clearing conditions is equivalent to choosing a schedule of price-quantity pairs $\qv_i(\eps),\ell_i(\eps),\pv_i(\eps)$ as a function of $\eps$, under the constraint that $\pv_i=\pv_i^r(\qv_i,\eps)$. Indeed, since the prices are fully pinned down by the inverse demand, it is sufficient to choose a schedule of quantities.
 Call this schedule $S_i(\eps):=\qv_i(\eps),\ell_i(\eps)$. Not all schedules are feasible, because of the technology constraint, and because $B_i=\overline{B}_i\vv_i\vv_i'$. Call the set of feasible quantity schedules for $i$ $\mathcal{F}_i$.
Using the Lemma \ref{lemma_pointwise}, we obtain that the optimization \eqref{BestReply_general} is equivalent to the pointwise ex-post optimization:
\[
  \max_{\qv_i, \ell_i \in \mathcal{F}_i(\mathcal{E})}\boldsymbol{q}_{i}'\boldsymbol{p}^r_{i}(\qv_i, \eps;B_{-i})-\ell_i
\]
under the relevant constraints, and denoting $\mathcal{F}_i(\mathcal{E})$ the codomain of the schedule of firm $i$:
\[
\mathcal{F}_i(\mathcal{E})=\{\qv_i, \ell_i \mid \exists \eps \text{ s.t. } \mathcal{S}_i(\pv_i(\eps),f_{i,L})=\qv_i,\ell_i=\mathcal{S}_{i,\ell}(\pv_i(\eps),f_{i,L}), \text{and } \eqref{tech_constraint_min}\}
\]

Now, define as $\mathcal{T}_i$ the set of quantities satisfying the technology constraints \eqref{tech_constraint_min}. The calculation in the main text is the solution of the pointwise optimization:
\[
\max_{\qv_i, \ell_i \in \mathcal{T}_i}\boldsymbol{q}_{i}'\boldsymbol{p}^r_{i}(\qv_i, \eps; \overline{B}_{-i})-\ell_i
\]
where we removed the constraint on the quantities. This problem has a unique solution, which is the schedule 
\eqref{eq:general_schedule}. This schedule satisfies the functional form restriction, so the optimal schedule $S^{**}_i(\eps)$ belongs to the feasible set $\mathcal{F}_i$. To check that the two optimizations are equivalent, it remains to check that $\mathcal{F}_i(\mathcal{E})$ includes $\mathcal{T}_i$. For this to be the case, because of the technology constraints, it is sufficient that $q_i^{out}(\eps)$ spans $\mathbb{R}$. We have:
\[
q_i^{out}(\eps)=\overline{B}_i(\hat{\vv}_i'\pv(\eps)-f_{i,L})
\]
A sufficient condition for this to span $\mathbb{R}$ is that $\pv$ spans the price space. But this is true if $[M^{-1}\overline{\boldsymbol{A}}]_{\mathcal{N}(i)}$ spans $\mathbb{R}^{d_i}$, which is what we wanted to show.
\qed 

\section{Markups and markdowns}

We show that the markups defined in Equation \eqref{eq:markup} agree with the standard approach of computing the gap between price and marginal cost (markup), or price and marginal revenue products (markdowns). 

\begin{defi}
     The \emph{total cost} of firm $i$ is: $C_i(q_i^{out})=-\sum_{g\in \mathcal{N}(i)^{in}} p^r_g(\qv_i)q_{ig}+\ell_i$, where $\qv_i$ and $\ell_i$ have to be expressed as a function of $q_i^{out}$ using the technology constraints. Define the (absolute) \emph{markup} as $\mu_i:=p_i-\dfrac{\partial C_i}{\partial q_i^{out}}$. 
     
     The \emph{revenue product} of input $g$ is: $R_{ig}(q_{ig})=\sum_{h\in \mathcal{N}(i)\setminus\{ g\}}p_h^rq_{ih}-\ell_i$ where, again, $\qv_i$ and $\ell_i$ must be expressed as a function of $q_{ig}$ using the technology constraints. The \emph{markdown on input $g$} is: $\mu_{ig}:=\dfrac{\partial R_{ig}}{\partial (-q_{ig})}-p_g$.\footnote{The derivative is with respect to $-q_{ig}$ because with our convention input quantities are negative: $q_{ig}<0$}
\end{defi}

\begin{lemma}
    The vector $\boldsymbol{\mu}_i=(\mu_i,-\mu_{ig})$ satisfies: 
\begin{equation}
     \boldsymbol{\mu}_i=q_i^{out}\Lambda_i\vv_i=\Lambda_i\qv_i,
\end{equation} 

    \label{lemma:markup}
\end{lemma}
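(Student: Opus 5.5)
The plan is to compute the cost and revenue-product derivatives directly from the residual schedule $\pv_i^r(\qv_i)=\Lambda_i(\boldsymbol{\tilde A}_i-\qv_i)$ of Definition \ref{def_residual}. Throughout, the technology constraint \eqref{tech_constraint_min} is used to parameterize $\qv_i=q_i^{out}\vv_i$. The first-order condition of Lemma \ref{linear-residual} then closes the argument. The only inputs needed are that $\partial_{\qv_i}\pv_i^r=-\Lambda_i$, that $\Lambda_i$ is symmetric (it is a principal submatrix of the symmetric matrix $(M-\hat B_i)^{-1}$), and the FOC
\[
\vv_i'\pv_i^r-f_{i,L}-\tfrac{1}{k_i}q_i^{out}-\vv_i'\Lambda_i\vv_i\,q_i^{out}=0 .
\]

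For the markup, write $C_i(q_i^{out})=-\sum_{g\in\mathcal N^{in}(i)}p^r_g(q_i^{out}\vv_i)\,q_{ig}+\ell_i(q_i^{out})$ with $q_{ig}=-f_{ig}q_i^{out}$. Equivalently,
\[
C_i=p^r_{out(i)}q_i^{out}-q_i^{out}\vv_i'\pv_i^r(q_i^{out}\vv_i)+\ell_i .
\]
Differentiating the second term gives $-\vv_i'\pv_i^r+q_i^{out}\vv_i'\Lambda_i\vv_i$. Differentiating the first term gives $p_{out(i)}-q_i^{out}\ev_{out}'\Lambda_i\vv_i$. The derivative of $\ell_i$ is $f_{i,L}+q_i^{out}/k_i$. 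Substituting the FOC cancels $-\vv_i'\pv_i^r+f_{i,L}+q_i^{out}/k_i+q_i^{out}\vv_i'\Lambda_i\vv_i$, leaving
\[
\partial C_i/\partial q_i^{out}=p_{out(i)}-q_i^{out}[\Lambda_i\vv_i]_{out(i)} .
\]
Hence $\mu_i=q_i^{out}[\Lambda_i\vv_i]_{out(i)}$, which is the output entry of $\Lambda_i\qv_i$.

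For the markdown on input $g$, use $q_i^{out}=-q_{ig}/f_{ig}$, so that $\partial q_i^{out}/\partial(-q_{ig})=1/f_{ig}$. Write $R_{ig}=q_i^{out}\vv_i'\pv_i^r-\ell_i-p^r_gq_{ig}$, that is, total profit minus the input-$g$ term. By the chain rule, $\partial R_{ig}/\partial(-q_{ig})$ equals $f_{ig}^{-1}$ times the $q_i^{out}$-derivative of the profit, which is zero by the FOC, minus $f_{ig}^{-1}\,\partial(p^r_gq_{ig})/\partial q_i^{out}$. The last derivative is
\[
\partial(p_g^rq_{ig})/\partial q_i^{out}=-f_{ig}\,p_g-q_{ig}[\Lambda_i\vv_i]_g .
\]
So $\partial R_{ig}/\partial(-q_{ig})=p_g-f_{ig}^{-1}f_{ig}q_i^{out}[\Lambda_i\vv_i]_g=p_g-q_i^{out}[\Lambda_i\vv_i]_g$. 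Therefore $\mu_{ig}=-q_i^{out}[\Lambda_i\vv_i]_g$, and $-\mu_{ig}$ is the $g$-entry of $q_i^{out}\Lambda_i\vv_i$, as claimed.

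Stacking the two cases gives $\boldsymbol\mu_i=q_i^{out}\Lambda_i\vv_i=\Lambda_i\qv_i$. The main obstacle is sign and chain-rule bookkeeping rather than any real difficulty. Input quantities are negative, the markdown is defined with respect to $-q_{ig}$, and the symmetry of $\Lambda_i$ is what turns $\partial_{q^{out}}(\vv_i'\Lambda_i\qv_i)$ into $\vv_i'\Lambda_i\vv_i$. I would check the signs on the vertical economy of Section \ref{sec:simple}, where $\Lambda_D$ is diagonal as in \eqref{priceimpact_D}, before writing the general case. As a cross-check, the result should coincide with the multiplier formulation \eqref{pricecostmargin}.
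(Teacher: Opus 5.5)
Your proposal is correct and follows essentially the same route as the paper. You differentiate the cost and revenue-product functions along the ray $\qv_i=q_i^{out}\vv_i$ using $\partial\pv_i^r/\partial q_i^{out}=-\Lambda_i\vv_i$, then cancel the leftover terms with the first-order/best-reply condition \eqref{bestreply}. Writing $R_{ig}$ as profit minus the input-$g$ term, so that the FOC removes everything at once, is only a tidier packaging of the paper's term-by-term cancellation; note also that the symmetry of $\Lambda_i$ is never actually needed in your argument.
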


\subsection{Proof of Lemma \ref{lemma:markup}}

The total cost to produce $q_{i}^{out}$ is:
\begin{align}
	C_i(q_i^{out}):&=-\sum_{g\in \mathcal{N}(i)^{in}} p^r_g(\qv_i)q_{ig}+\ell_i\\
	&=\sum_{g\in \mathcal{N}(i)^{in}} p_g^r\left(q_i^{out}\vv_i\right)q_{i}^{out}f_{ig}+f_{i,L}q_i^{out}+\dfrac{1}{2k_i}(q_i^{out})^2
\end{align}
where everything is expressed as a function of $q_i^{out}$ using the technology constraints, including the prices where we write explicitly the argument of the residual inverse schedule: $\pv_i^r\left(q_i^{out}\vv_i\right)$. In particular, $\partial \pv_i^r\left(q_i^{out}\vv_i\right)/\partial q_i^{out}=-\Lambda_i \times \vv_i$.
Remember that if $g$ is an input of $i$ then $q_{ig}<0$ by convention, so the input quantity is $-q_{ig}$. Analogously, the (net) revenue product generated by input $g$ is:
\begin{align*}
R_{ig}(q_{ig})	&:=\sum_{h\in \mathcal{N}(i)\setminus\{ g\}}p_h^rq_{ih}-\ell_i=\boldsymbol{q}_{i}'\boldsymbol{p}^r_{i}-p^r_{g}q_{ig}-f_{i,L}q_{i}^{out}-\dfrac{1}{2k_i}\left(q_{i}^{out}\right)^{2}\\
&	=q_{i}^{out}\boldsymbol{v}_{i}'\boldsymbol{p}^r_{i}-p^r_{g}q_{ig}-f_{i,L}q_{i}^{out}-\dfrac{1}{2k_i}\left(q_{i}^{out}\right)^{2}\\
&	=\dfrac{(-q_{ig})}{f_{ig}}\boldsymbol{v}_{i}'\boldsymbol{p}^r_{i}((-q_{i,g})f_{i,g}^{-1}\vv_i)+p^r_{g}((-q_{i,g})f_{i,g}^{-1}\vv_i)(-q_{ig})-f_{i,L}\dfrac{(-q_{ig})}{f_{ig}}-\dfrac{1}{2k_i}\left(\dfrac{q_{ig}}{f_{ig}}\right)^{2},
\end{align*}
where again everything is expressed as a function of $q_{ig}$ using the technology constraint $q_i^{out}=f_{ig}^{-1}(-q_{ig})$.
Now, compute the marginal cost and the marginal revenue product:
\begin{align*}
    \dfrac{\partial C_i}{\partial q_{i}^{out}}&=\sum_{g}p^r_{g}f_{ig}-\sum_{g}\left[\Lambda_{i}\boldsymbol{v}_{i}\right]_{g}f_{ig}q_{i}^{out}+f_{i,L}+\dfrac{1}{k_i}q_{i}^{out}\\
    &=-\boldsymbol{v}_{i}'\boldsymbol{p}^r_{i}+p_i^{out}+\boldsymbol{v}_{i}'\Lambda_{i}\boldsymbol{v}_{i}q_{i}^{out}-\left[\Lambda_{i}\boldsymbol{v}_{i}\right]_{i}q_{i}^{out}+f_{i,L}+\dfrac{1}{k_i}q_{i}^{out}\\
   &=-\boldsymbol{v}_{i}'\boldsymbol{p}^r_{i}+f_{i,L}+p_i^{out}+\left(\boldsymbol{v}_{i}'\Lambda_{i}\boldsymbol{v}_{i}+\dfrac{1}{k_i}\right)q_{i}^{out}-\left[\Lambda_{i}\boldsymbol{v}_{i}\right]_{i}q_{i}^{out}\\
   &=-\left(\boldsymbol{v}_{i}'\boldsymbol{p}^r_{i}-f_{i,L}-\overline{B}_i^{-1}q_{i}^{out}\right)+p_i^{out}-\left[\Lambda_{i}\boldsymbol{v}_{i}\right]_{i}q_{i}^{out}=p_i^{out}-\left[\Lambda_{i}\boldsymbol{v}_{i}\right]_{i}q_{i}^{out}\\
    \dfrac{\partial R_{ig}}{\partial (-q_{ig})}&	=\dfrac{1}{f_{ig}}\boldsymbol{v}_{i}'\boldsymbol{p}_{i}-\dfrac{(-q_{ig})}{f^2_{ig}}\boldsymbol{v}_{i}'\Lambda_{i}\boldsymbol{v}_{i}-\dfrac{f_{i,L}}{f_{ig}}-\dfrac{(-q_{ig})}{k_if_{ig}^{2}}-\dfrac{1}{f_{ig}}\left[\Lambda_{i}\boldsymbol{v}_{i}\right]_{g}(-q_{ig})+p_{g}^r \\
    &=\dfrac{1}{f_{ig}}\left(\boldsymbol{v}_{i}'\boldsymbol{p}^r_{i}-f_{i,L}-\dfrac{(-q_{ig})}{f_{ig}}\left(\boldsymbol{v}_{i}'\Lambda_{i}\boldsymbol{v}_{i}+\dfrac{1}{k_i}\right)\right)-\dfrac{1}{f_{ig}}\left[\Lambda_{i}\boldsymbol{v}_{i}\right]_{g}(-q_{ig})+p_{g}^r\\
    &=\dfrac{1}{f_{ig}}\left(\boldsymbol{v}_{i}'\boldsymbol{p}^r_{i}-f_{i,L}-q_{i}^{out}\overline{B}_i^{-1}\right)-\dfrac{1}{f_{ig}}\left[\Lambda_{i}\boldsymbol{v}_{i}\right]_{g}(-q_{ig})+p_{g}^r	=-\dfrac{1}{f_{ig}}\left[\Lambda_{i}\boldsymbol{v}_{i}\right]_{g}q_{ig}+p_{g}^r,
\end{align*}
where, in both calculations, the terms with $\overline{B}_i$ disappear because of the best reply equation.
So, the markup and markdowns are:
\begin{align*}
\mu_i^{out}&=p_{i}^{out}-\dfrac{\partial C_i}{\partial q_{i}^{out}}	\\
&=p_{i}^{out}-\left(p_i^{out}-\left[\Lambda_{i}\boldsymbol{v}_{i}\right]_{i}q_{i}^{out}\right)\\	&=\left[\Lambda_{i}\boldsymbol{q}_{i}\right]_{i}\\
  \mu_{i,g}^{in}&=   \dfrac{\partial R_{ig}}{\partial (-q_{ig})}-p_g\\
  &=  \left(-\dfrac{(-q_{ig})}{f_{ig}}\left[\Lambda_{i}\boldsymbol{v}_{i}\right]_{g}+p_{g}\right)-p_{g}\\
    &=-\left[\Lambda_{i}\boldsymbol{v}_{i}\right]_{g}q_{i}^{out}=-\left[\Lambda_{i}\boldsymbol{q}_{i}\right]_{g}
\end{align*}
So, the (signed) markup-markdown vector is: $(\mu_i^{out},-\boldsymbol{\mu}_i^{in})=\Lambda_{i}\boldsymbol{q}_{i}$. \qed

\section{Substitute intermediate inputs}
\label{sec:substitutes}
The analysis in the main text assumed perfect complementarity among intermediate inputs, mainly to simplify notation and ensure linear-quadratic profit functions. This appendix shows that many results of the paper generalize to the case of intermediate inputs that are imperfect complements or substitutes.
To keep tractability, we need the equilibrium to be in linear schedules. This means that we must preserve the linear-quadratic nature of the objective function in \eqref{BestReply}. Here,
I introduce a parametric functional form for the technology that allows inputs to be imperfect complements or substitutes, while keeping a linear-quadratic expression for the profits. The technology of the main text is a limiting case in which inputs converge to be perfect complements, and the linear labor terms $f_{i,L}$ are set to zero.

To define the production function, define a \emph{labor allocation} as a subdivision of labor $\boldsymbol{\ell}_i=(\ell_{ij})_{j\in \mathcal{N}^{in}(i)}$. Consider the function $\boldsymbol{\phi}_i=(\phi_{ij})_{j \in \mathcal{N}^{in}(i)}:\RR_+^{d_i-1} \to \RR^{d_i-1}_+$ implicitly defined by the equations:
\begin{align}
	\ell_{ij}= \sum_{h}\Sigma_{i,jh}\phi_{ij}\phi_{ih}, \quad \forall j
	\label{implicit_substitutes}
\end{align}
where $\Sigma_i \in \mathbb{R}^{(d_i-1)\times (d_i-1)}$  is a diagonally dominant and positive definite matrix. Lemma \ref{technology2} below proves that Equations \eqref{implicit_substitutes} indeed define a function $\boldsymbol{\phi}_{i}$.
Denote $\qv_i^{in}=\qv_{\mathcal{N}^{in}(i)}$ the vector of input quantities of firm $i$. Denote $\boldsymbol{\omega}_i=(\omega_{ij})_{j \in \mathcal{N}^{in}(i)}$ a nonnegative parameter vector, representing the intensity of each input $j$ in the technology of firm $i$. 

\begin{description}
	
	\item[Assumption T: Technology]
 The production function of firm $i$ is:
\begin{align}
	\Phi(\qv_i^{in},\tilde{\ell}_i)=\max_{\boldsymbol{\ell}_i: \, \sum_j\ell_j+\ell_{i0}=\tilde{\ell}_i} \sum_{j \in \mathcal{N}^{in}(i)} \omega_{ij} \min \{ \phi_{ij}(\boldsymbol{\ell}_i), -q_{ij} \} +\alpha_{i} \sqrt{2\ell_{i0}}
	\label{production_function}
\end{align}
where $\phi_{ij}$ is defined implicitly by \eqref{implicit_substitutes}. Define the renormalized (residual) labor as: $\ell_i:=\sqrt{2\ell_{i,0}}$.
With these definitions, the quantity vector $\qv_i$ satisfies the technology constraints: 
\item \begin{align}
		\boldsymbol{u}_i'\qv_i&=\alpha_i\ell_i \nonumber\\
		\tilde{\ell}_i&=	(\qv_i^{in})'\Sigma_i\qv_i^{in}+\frac{1}{2}\ell^2_{i}
		\label{tech_constraint_substitution} 
	\end{align}	
\end{description}


Expression \eqref{tech_constraint_substitution} follows from the fact that if $\tilde{\ell}_i$ is high enough, the labor allocation must be such that $-q_{ij}=\phi_{ij}(\boldsymbol{\ell}_i)$ for each $j$ (as a reminder, $q_{ij}$ represents the net sales: since $j$ is an input, $-q_{ij}$ is the quantity bought). Then, summing \eqref{implicit_substitutes} across $j$, we obtain the expression for labor $\tilde{\ell}_i$. This can be interpreted as follows. The firm hires a number of workers $\tilde{\ell}_i$. The number of workers has to be divided into tasks: each group $\ell_{ij}$ deals primarily with a specific intermediate input $j$, and $\ell_{i0}$ workers deal with tasks unrelated to specific inputs. Equation \eqref{implicit_substitutes} specifies how many workers are needed to be allocated primarily on input $j$. If $\Sigma_{i}$ is diagonal, then this number depends only on the quantity of input $j$, and \eqref{implicit_substitutes} yields $\phi_{ij}(\ell_{ij})=-q_{ij}=\sqrt{2\ell_{ij}}$. In general, off-diagonal terms $\Sigma_{i,jh}$ represent the interaction among intermediate inputs: if $\Sigma_{i,jh}>0$ then for the same input prices using both inputs $j$ and $k$ has a higher cost than only using one of the two: this captures substitutability; on the contrary, $\Sigma_{i,jk}<0$ captures complementarity. Here, the labor cost has no linear term analogous to $f_{i,L}$ in the main text: it could be added by modifying the definition in Equation \eqref{implicit_substitutes}, but I avoid it for simplicity.

\begin{lemma}
	
There exists a function $\boldsymbol{\phi}_i$ defined by the relations \eqref{implicit_substitutes}.
	
	\label{technology2}
	
\end{lemma}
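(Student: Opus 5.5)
We need to show that the system $\ell_{ij}=\sum_h \Sigma_{i,jh}\phi_{ij}\phi_{ih}$, $j\in\mathcal{N}^{in}(i)$, defines $\boldsymbol{\phi}_i$ as a function of $\boldsymbol{\ell}_i\in\RR_+^{d_i-1}$. In compact form the system is $\boldsymbol{\ell}=\mathrm{diag}(\boldsymbol{\phi})\Sigma\boldsymbol{\phi}$, writing $\Sigma=\Sigma_i$ and $\boldsymbol{\phi}=\boldsymbol{\phi}_i$. The plan is to prove that the map $T(\boldsymbol{\phi}):=\mathrm{diag}(\boldsymbol{\phi})\Sigma\boldsymbol{\phi}$ is injective on a suitable cone $K\subseteq\RR_+^{d_i-1}$ and maps $K$ onto $\RR_+^{d_i-1}$. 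Then $\boldsymbol{\phi}_i:=(T|_K)^{-1}$ is the required function.

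\textbf{Reduction by homogeneity.} $T$ is positively homogeneous of degree $2$. The natural candidate cone is
\[
K=\{\boldsymbol{\phi}\ge 0:\ [\Sigma\boldsymbol{\phi}]_j\ge 0 \text{ for all } j\}.
\]
Because $\Sigma$ is diagonally dominant with positive diagonal, $\Sigma$ is a nonsingular M-matrix when its off-diagonal entries are nonpositive. In the general case I will use diagonal dominance directly: on the face where $\phi_j=0$, the $j$-th coordinate of $T$ vanishes, and elsewhere $T$ has the right sign. This lets us treat each support pattern separately.

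\textbf{Existence (surjectivity).} I would obtain existence variationally. Consider the convex function
\[
G(\boldsymbol{\phi})=\tfrac12\boldsymbol{\phi}'\Sigma\boldsymbol{\phi}-\sum_j \ell_{j}\ln\phi_j \quad\text{on } \RR_{++}^{d_i-1},
\]
which is strictly convex since $\Sigma$ is positive definite. For $\boldsymbol{\ell}\gg 0$ it is coercive: it tends to $+\infty$ both at the boundary and at infinity. So it has a unique minimizer, and the first-order condition $[\Sigma\boldsymbol{\phi}]_j=\ell_j/\phi_j$ is exactly $\ell_j=\phi_j[\Sigma\boldsymbol{\phi}]_j$. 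Hence every $\boldsymbol{\ell}\gg0$ has a preimage with $\boldsymbol{\phi}\gg0$ and $\Sigma\boldsymbol{\phi}\gg0$. Boundary values of $\boldsymbol{\ell}$, where some $\ell_j=0$, are handled by setting $\phi_j=0$ and applying the same argument to the principal submatrix of $\Sigma$ on the remaining indices. That submatrix is still positive definite and diagonally dominant, so the argument goes through unchanged.

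\textbf{Uniqueness.} Within $\RR_{++}$, any solution of $T(\boldsymbol{\phi})=\boldsymbol{\ell}$ is a critical point of the strictly convex $G$, so it is unique. The main obstacle is ruling out additional solutions that have $\phi_j>0$ but $[\Sigma\boldsymbol{\phi}]_j\le0$, or that mix support patterns. Here I would use diagonal dominance. If $\phi_j=\max_h\phi_h$ over the support, then $[\Sigma\boldsymbol{\phi}]_j\ge(\Sigma_{jj}-\sum_{h\neq j}|\Sigma_{jh}|)\phi_j>0$. For the remaining indices, I would restrict $\boldsymbol{\phi}_i$ to be the selection given by the minimizer of $G$. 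This restriction is enough, because the lemma only requires that the relations define \emph{a} function. Finally, continuity of $\boldsymbol{\phi}_i$ in $\boldsymbol{\ell}$ follows from the implicit function theorem: the Jacobian of the first-order condition is $\Sigma+\mathrm{diag}(\ell_j/\phi_j^2)$, which is positive definite.
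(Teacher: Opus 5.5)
Your existence argument is correct, and it takes a genuinely different route from the paper's.

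\textbf{The paper's route.} It writes the system as $\Phi_{ij}=-\ell_{ij}+\phi_{ij}[\Sigma_i\boldsymbol{\phi}_i]_j=0$ and applies the Poincaré--Miranda theorem on a box $[0,\overline{F}]^{d_i-1}$. On the faces $\phi_{ij}=0$ one has $\Phi_{ij}=-\ell_{ij}\le 0$. Row diagonal dominance makes $\Phi_{ij}\ge 0$ on the faces $\phi_{ij}=\overline{F}$; for this the needed bound is on $\overline{F}^{2}$, not on $\overline{F}$, in terms of $\max_j\ell_{ij}$. The argument yields a nonnegative zero for each $\boldsymbol{\ell}_i$, hence a selection, and nothing more. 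It needs only dominance with positive diagonal, not symmetry of $\Sigma_i$.

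\textbf{Your route.} You recast the system as the stationarity condition of the strictly convex, coercive barrier function $G$. This needs $\Sigma_i$ symmetric (otherwise $\nabla G$ involves $\tfrac12(\Sigma_i+\Sigma_i')$) and positive definite, but it never uses diagonal dominance. In exchange it gives more structure. For $\boldsymbol{\ell}\gg 0$, any nonnegative solution has $\phi_j>0$ wherever $\ell_j>0$, so it is interior and is the unique critical point of $G$. The nonnegative solution is therefore unique and, by the implicit function theorem, $C^1$ in $\boldsymbol{\ell}$. In particular, the ``obstacle'' in your uniqueness paragraph does not arise when $\boldsymbol{\ell}\gg 0$, and the max-index step is superfluous.

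\textbf{Boundary loose ends.} These do not affect the lemma, which (as in the paper) only needs some selection. But your boundary rule does not deliver the plan you announce (a bijection $T|_K$ onto $\RR_+^{d_i-1}$), and your continuity claim holds only for $\boldsymbol{\ell}\gg0$. Take $\Sigma=\begin{pmatrix}1&-a\\-a&1\end{pmatrix}$ with $0<a<1$ and $\boldsymbol{\ell}=(\ell_1,0)$.
\begin{itemize}
\item Both $(\sqrt{\ell_1},0)$ and $\sqrt{\ell_1/(1-a^2)}\,(1,a)$ are nonnegative solutions.
\item Your rule of setting $\phi_j=0$ picks the first, which lies outside $K$: the second entry of $\Sigma\boldsymbol{\phi}$ is $-a\sqrt{\ell_1}$.
\item The interior solutions converge to the second as $\ell_2\downarrow 0$, so your selection is discontinuous there.
\end{itemize}
The repair stays within your method. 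Minimize the still strictly convex and coercive function $\tfrac12\boldsymbol{\phi}'\Sigma\boldsymbol{\phi}-\sum_{j:\,\ell_j>0}\ell_j\ln\phi_j$ over $\boldsymbol{\phi}\ge 0$. Its KKT conditions are exactly $\boldsymbol{\phi}\ge 0$, $\Sigma\boldsymbol{\phi}\ge 0$ and $\phi_j[\Sigma\boldsymbol{\phi}]_j=\ell_j$ for all $j$. Strict convexity then makes $T|_K$ a bijection onto $\RR_+^{d_i-1}$, as you intended.

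Separately, the remark in your reduction paragraph that $T$ ``has the right sign'' away from the faces is false once some $\Sigma_{jh}<0$. For example, $\boldsymbol{\phi}=(\varepsilon,1)$ with $\varepsilon<a$ above gives $T_1<0$. Nothing in your argument depends on it.
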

The proof is in paragraph \ref{proof:technology2} below.

\subsection{The game}

The game is the same defined in Section \ref{sec:thegame}, with the only difference that the technology constraints are given by the Equations \ref{tech_constraint_substitution}, so that the coefficient matrices $B_i$ are not restricted to have rank 1.
We make the following assumption instead.

\begin{description}
	\item[Assumption A]: for all $i$, $B_i$ is symmetric and positive definite. 
\end{description}

Call $B=(B_i)_{i\in \mathcal{N}}$ the profile of coefficients chosen by firms, and $\mathcal{B}_i$ the set of matrices satisfying the above assumptions, so that the set of feasible action profiles is $\mathcal{B}=\prod_i \mathcal{B}_i$.

The definition of Generalized SDFE is the same as Definition \ref{def_generalized}, where the price impact function $\Lambda_i$ now has to be decreasing in each $B_j$ in the positive semidefinite ordering. 

\subsection{Results}

The following Lemma has the role of Lemma \ref{lemma-invertibility}. The proof is in \ref{proof:invertible_substitutes}.
\begin{lemma}
	The matrix  $M-\hat{B}_i$ is positive definite, so invertible.
	\label{invertible_substitutes}
\end{lemma}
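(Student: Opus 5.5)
The plan is to mirror the proof of Lemma \ref{lemma-invertibility}. First decompose the quadratic form of $M-\hat{B}_i=\sum_{j\neq i}\hat{B}_j+\hat{B}_c$:
\[
\xv'(M-\hat{B}_i)\xv=\xv_{\mathcal{C}}'B_c\xv_{\mathcal{C}}+\sum_{j\neq i}\xv_{\mathcal{N}(j)}'B_j\xv_{\mathcal{N}(j)} .
\]
Every term is nonnegative: $B_c$ is positive definite, and Assumption A makes each $B_j$ positive definite. So $M-\hat{B}_i$ is positive semidefinite, and what remains is to show that the form vanishes only at $\xv=0$.

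Suppose the form is zero. Then $\xv_{\mathcal{C}}=0$. Because each $B_j$ is positive definite (not merely rank one, as in the Leontief case), we also get $\xv_{\mathcal{N}(j)}=0$ for every $j\neq i$. This is where the argument is easier than in Lemma \ref{lemma-invertibility}, since no orthogonality-to-$\hat{\vv}_j$ reasoning is needed. Every good traded by some firm other than $i$ therefore has zero coordinate. The goods that can still carry nonzero coordinates are those traded only by firm $i$ and not consumed. These are at most its output, if $i$ is its unique producer with no other buyer, and inputs for which $i$ is the only buyer with no other seller.

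The main obstacle is ruling out this residual set. I would use the standing connectedness and viability assumptions. For the output, the requirement that every good be connected to consumers means there is a chain from $out(i)$ to $\mathcal{C}$. Either $out(i)\in\mathcal{C}$, so its coordinate is zero, or some firm $j\neq i$ uses it as an input, so $out(i)\in\mathcal{N}(j)$ and again the coordinate is zero. For an input $g$ of $i$, viability requires each good to have a producer. If firm $i$ were the only firm trading $g$, $g$ would be neither produced nor consumed, which is excluded by the network assumptions. So the producer of $g$ is some $j\neq i$ and $x_g=0$. Hence $\xv=0$, $M-\hat{B}_i$ is positive definite, and it is invertible.

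The delicate point is the case where $i$ produces a good that it alone also uses, i.e.\ a self-loop. If needed, I would handle it the way the original lemma uses the M-matrix structure: viability gives $f_{ii}<1$, and connectedness then forces another trader of that good. Finally, positive definiteness of $M=M-\hat{B}_i+\hat{B}_i$ follows at once, because $\hat{B}_i$ is positive semidefinite.
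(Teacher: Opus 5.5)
Your proof is correct and follows essentially the same route as the paper. If the quadratic form vanishes, positive definiteness of $B_c$ and of each $B_j$ forces $\xv_{\mathcal{C}}=0$ and $\xv_{\mathcal{N}(j)}=0$ for every $j\neq i$, and the standing network assumptions then rule out the remaining coordinates. Your case split is a bit more careful than the paper's one-line appeal to ``every good has a producer and a buyer'': you handle the output via connectedness to consumers and the inputs via existence of another producer. The connectedness step already covers the self-loop case, so the extra appeal to $f_{ii}<1$ is unnecessary, and it has no literal counterpart in the substitutes technology anyway.
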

Given this, the proof of Lemma \ref{linear-residual} is still valid, because it only uses the linearity of the schedules. In particular, this shows that the special case of the standard SDFE realizes for $\Lambda_i$ with the same expression as in Lemma \ref{linear-residual}. So, the best reply problem is analogous to \eqref{BestReply}, with different technology constraints. In particular, using the technology constraint \eqref{tech_constraint_substitution} to eliminate $\tilde{\ell}_i$, the best reply problem can be written as:
\begin{equation}
	\max_{\qv_i,\ell_{i}} (\pv^r_i)'\qv_i-(\qv_i^{in})'\Sigma_i\qv_i^{in}-\frac{1}{2}\ell^2_{i}
\end{equation}	
subject to $\boldsymbol{u}_i'\qv_i=\alpha_i \ell_i$.
Since $\alpha_i>0$, we can further eliminate the variable $\ell_i$ using the constraint: $\ell_i=\dfrac{1}{\alpha_i}(q_i^{out}+\oomega_i'\qv_i^{in})$, and rewrite the problem as:
\begin{equation}
	\max_{\qv_i,\ell_{i}} (\pv^r_i)'\qv_i-\dfrac{1}{2}\qv_i'C^{-1}_i\qv_i,
\end{equation}	
where we define:
\[
C_i:=\begin{pmatrix}
	\dfrac{1}{\alpha_i^2} & \dfrac{1}{\alpha_i^2}\oomega_i'\\
	\dfrac{1}{\alpha_i^2}\oomega_i &2\Sigma_i+\dfrac{1}{\alpha_i^2}\oomega_i\oomega_i'
\end{pmatrix}^{-1}
\]
Since $\Sigma_i$ is positive definite, also $C_i$ is.
The Hessian of the problem is $-2\Lambda_i-C_i^{-1}$, so the problem is strictly concave.
The FOCs are:
\begin{equation}
\pv_i^r-\Lambda_i\qv_i-C_i^{-1}\qv_i=0
\label{FOC_subs}
\end{equation}
So, we get:
\[
\mathcal{S}_i=(C_i^{-1}+\Lambda_i)^{-1}\pv_i
\]
So the coefficient matrix of firm $i$, in equilibrium, satisfies the expression:
\begin{equation}
	B_i=(C_i^{-1}+\Lambda_i)^{-1}
	\label{bestreply_substitution}
\end{equation}
Notice that $C_i$ is the matrix of coefficients of the schedules that represent the competitive equilibrium of the economy, because price-taking corresponds to $\Lambda_i=0$. Moreover, from \eqref{FOC_subs}, since $C^{-1}\qv_i$ is the marginal cost of labor, the markup-markdown vector still satisfies $\boldsymbol{\mu}_i=\Lambda_i\qv_i$. So, the characterization of markups in terms of centrality in the goods network is still valid: $\boldsymbol{\mu}_i=\Lambda_i\qv_i$, and also the centrality expression in Remark \ref{rmk_centrality}.

Existence of a solution of \eqref{bestreply_substitution} needs a different proof. In particular, action spaces are not unidimensional, which means that the game is not (necessarily) a potential game anymore. However, the best replies are still increasing, in the positive semidefinite ordering (by the assumption on $\Lambda_i$).
The following theorem states conditions for existence and a characterization of the equilibrium, generalizing Proposition \ref{teo_comparative}. The proof is in \ref{proof:existence_substitutes}.

\begin{teo}
	
	\begin{enumerate}

		
		\item  The best reply of firm $i$ is \eqref{bestreply_substitution}. Moreover, it is increasing in $B_{-i}$ in the positive semidefinite ordering.
		
		\item 	There exists a minimal and a maximal Nash equilibrium.
		
	\end{enumerate}
	\label{existence_substitutes}
\end{teo}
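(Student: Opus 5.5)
Part 1 comes straight from the first-order conditions. With the technology constraints \eqref{tech_constraint_substitution}, the best-reply problem has already been reduced to the strictly concave program $\max_{\qv_i}(\pv_i^r)'\qv_i-\tfrac12\qv_i'C_i^{-1}\qv_i$. Its Hessian is $-2\Lambda_i-C_i^{-1}$, which is negative definite because $\Lambda_i$ is positive definite by Lemma \ref{invertible_substitutes} and $C_i$ is positive definite. The first-order condition \eqref{FOC_subs} pins down the unique optimal quantity for each realization of the intercept. Following the argument of Lemma \ref{linear-residual} (or Lemma \ref{equivalent} with uncertainty), the only linear schedule passing through all these optima is the one with coefficient matrix $B_i=(C_i^{-1}+\Lambda_i)^{-1}$. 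This matrix is symmetric and positive definite, so Assumption A is satisfied.

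For monotonicity I will chain two order-reversing steps in the positive semidefinite ordering. If $B_{-i}\le B_{-i}'$, then $M-\hat B_i$ increases, since it is a sum of lifted $B_j$ and $\hat B_c$. Its inverse therefore decreases, and so does the principal submatrix $\Lambda_i$; this is the argument of Lemma \ref{lemma_priceimpact} with the rank-one $\hat B_j$ replaced by general p.s.d.\ $\hat B_j$. Hence $C_i^{-1}+\Lambda_i$ decreases. Because matrix inversion is order-reversing on positive definite matrices, $B_i$ increases. For a Generalized SDFE the same chain holds, since by assumption $\Lambda_i$ is decreasing in each $B_j$.

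For Part 2 I plan to use Tarski's fixed-point theorem on a complete lattice. The obstacle is that the positive semidefinite order on symmetric matrices is not a lattice, so Topkis cannot be applied verbatim as in Proposition \ref{teo_comparative}. My first attempt avoids a lattice structure and iterates monotonically instead. The construction has three steps.
\begin{enumerate}
\item I set a lower bound. Since $\Lambda_i\ge 0$, every best reply satisfies $B_i\le C_i$. Since $\Lambda_i(B_{-i})\le\Lambda_i(0)$, every best reply satisfies $B_i\ge \underline B_i:=(C_i^{-1}+\Lambda_i(0))^{-1}$. I define $\Lambda_i(0)$ by using $\hat B_c$ and taking limits where needed, so that it stays finite.
\item Starting from $\underline B$, I iterate the best-reply map. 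The resulting sequence is increasing in the p.s.d.\ order and bounded above by $C=(C_i)_i$.
\item A bounded monotone sequence of symmetric matrices converges, because each quadratic form $x'B^{(k)}x$ is a bounded monotone real sequence and polarization recovers the entries. By continuity of $\Lambda_i$, the limit is a fixed point $\underline B^*$.
\end{enumerate}
Symmetrically, iterating from $C$ gives a decreasing sequence that converges to a fixed point $\overline B^*$.

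Minimality and maximality then follow by induction. Any equilibrium $B^*$ lies in the order interval $[\underline B,C]$, so applying the monotone best-reply map $k$ times gives $\underline B^{(k)}\le B^*\le C^{(k)}$ for every $k$. Passing to the limit yields $\underline B^*\le B^*\le\overline B^*$.

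The step I expect to be hardest is guaranteeing positive definiteness and continuity at the lower end. At $B_{-i}=0$ the matrix $M-\hat B_i=\hat B_c$ may be singular, so $\Lambda_i(0)$ could be infinite on some directions. If that happens I will start the iteration from a small positive definite profile $\varepsilon I$ with $BR(\varepsilon I)\ge\varepsilon I$, as in Lemma \ref{lemma:nontrivial}. In that case minimality is only claimed among equilibria above $\varepsilon I$, and I then let $\varepsilon\to0$.
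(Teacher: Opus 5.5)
Part 1 matches the paper: strict concavity, the first-order condition \eqref{FOC_subs}, and the fact that matrix inversion reverses the positive semidefinite order. For Part 2 your route is essentially the paper's. Best replies lie below $C_i$ because $\Lambda_i\ge 0$, and one iterates the monotone best-reply map downward from $C$ and upward from the bottom of the order. You are more careful than the paper in three places:
\begin{itemize}
\item you note that the Loewner order is not a lattice, so Tarski/Topkis cannot simply be quoted;
\item you justify convergence of monotone bounded matrix sequences via quadratic forms;
\item your sandwich $BR^k(\underline B)\le B^*\le BR^k(C)$ is what actually proves extremality, which the paper only asserts.
\end{itemize}
Whenever $\Lambda_i(0)$ is finite (e.g.\ $\hat B_c$ positive definite, so every good is consumed), your first attempt is a complete proof.

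Your fallback for singular $\hat B_c$, however, does not work, and no choice of $\varepsilon$ rescues it. Take a firm $U$ that uses only labor and sells good $g$ only to a firm $D$, which turns $g$ into good $h$, the only consumed good.
\begin{itemize}
\item No suitable $\varepsilon$ exists. Here $\Lambda_U=[(\hat B_D+\hat B_c)^{-1}]_{gg}$ and $\hat B_c$ vanishes on $g$, so already $BR_U(\varepsilon I)=(C_U^{-1}+1/\varepsilon)^{-1}<\varepsilon$ for every $\varepsilon>0$.
\item No positive definite equilibrium exists at all. $\Lambda_D$ has entry $1/B_U$ on $g$, so positive definiteness of $C_D^{-1}$ gives $[BR_D(B_U)]_{gg}<B_U$. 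Hence $\Lambda_U\ge 1/[B_D]_{gg}>1/B_U$, and $BR_U(BR_D(B_U))<B_U$ for every $B_U>0$. Only the degenerate zero-slope limit survives; this is the two-trader phenomenon behind the footnote in Section \ref{sec:simple}.
\item Even where suitable $\varepsilon$ exist, letting $\varepsilon\to0$ need not give minimality. The least fixed points above $\varepsilon I$ decrease as $\varepsilon\downarrow 0$ and may converge to a singular profile, so without a uniform lower bound you do not get minimality over all equilibria.
\end{itemize}
The paper's proof has the same hole: it starts the iteration at $B^0=0$, where $\Lambda_i$ is undefined unless $\hat B_c$ is positive definite. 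The honest fix is to state Part 2 in one of two ways rather than through the $\varepsilon\to0$ patch. Either assume nondegeneracy, meaning a positive definite $\underline B$ lying below every equilibrium with $BR(\underline B)\ge\underline B$ (for instance $\Lambda_i(0)$ finite). Or work on the closed cone of positive semidefinite schedules, with $BR$ extended, and its continuity checked, at singular profiles.
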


Proposition \ref{thm:specialcases} is still valid. The proof of Proposition \ref{thm:specialcases}
 only uses the fact that $M-\hat{B}_i$ is invertible, which is still true by Lemma \ref{invertible_substitutes}. 

Part 1 of Theorem \ref{thm:main_comparative} is still valid, provided the comparison is done using the maximal equilibrium of the multilateral model, because now uniqueness is not necessarily guaranteed.
The proof relies on the fact that the price impact is higher in the positive semidefinite sense for the multilateral model, and that iterating the best reply we obtain a decreasing sequence converging to the higher equilibrium of the multilateral model. Both facts are still true, with essentially the same proofs: where we simply have to consider the equilibrium profile the profile of matrices $B^*$ rather than $\overline{B}^*$, and the inequalities have to be intended in the positive semidefinite ordering.

\subsection{Proof of Lemma \ref{invertible_substitutes}}
\label{proof:invertible_substitutes}

If $\hat{B}_c$ is positive definite (which is the case when $\hat{B}_c=B_c$, that is all goods are consumed by the consumer), the thesis follows immediately, because $M-\hat{B}_i\ge B_c$.
If $\hat{B}_c$ has some zero rows, consider a nonzero vector $\xv$ such that $\xv'(M-\hat{B}_i) \xv=0$. Since $B_c$ and $B_j$ are positive definite, it must be that $x_g=0$ for all $g \in \mathcal{C}$, and $\xv_j=0$ for every $j\neq i$.
Since every good has both a producer and a buyer, for all $g$ in $\mathcal{N}(i)$ there is another firm $j$ such that $g\in \mathcal{N}(j)$. So, by the previous reasoning, $x_g=0$ for any good. \qed

\subsection{Proof of Theorem \ref{existence_substitutes}}
\label{proof:existence_substitutes}

\paragraph{Part 1}
The payoffs are strictly concave, so Equation \eqref{bestreply_substitution} is necessary and sufficient for optimization. We have to show that a profile of matrices satisfying it exists.
The best reply map $BR_i$ defined by equation \eqref{bestreply_substitution} is continuous and increasing in the p.s.d. order, by the assumption on $\Lambda_i$. 

\paragraph{Part 2}
The best reply is continuous. Moreover, the best reply belongs to the set $\{B_i \in \mathcal{B}_i\mid \lVert B_i\rVert \le \lVert C_i \rVert\}$, so it is bounded. Because of this, we have that the profile of best replies to $(C_i)_{i\in \mathcal{N}}$ satisfies $BR_i(C_{-i})\le C_i$. Since the best reply is monotonic, starting from the profile $(C_i)_{i\in \mathcal{N}}$ and iterating the best reply, we obtain a decreasing sequence. So, it must converge to a profile $B^*_i$, that is the maximal equilibrium. Analogously, if $B_i^0=0$ for all $i$, we have $BR_i(B_{-i}^0)\ge B_i^0$, and iterating again we find that the sequence converges to the minimal equilibrium $B_{i,*}$. So, there are a maximal and a minimal equilibrium, possibly identical.

\subsection{Proof of Lemma \ref{technology2}}
\label{proof:technology2}

We want to show that there always exist
a function $\boldsymbol{\phi}_i=(\phi_{ij})_{j \in \mathcal{N}^{in}(i)}:\RR_+^{d_i-1} \to \RR_+^{d_i-1}$, implicitly defined by \eqref{implicit_substitutes}.
We can rewrite such expression as: $\Phi_{i}(\boldsymbol{\ell}_{i},\boldsymbol{\phi}_i)=0$, for all $j$, where $\Phi_i:\RR^{d_i-1}\times \RR^{d_i-1} \to \RR^{d_i-1}$ is the map:
\[
\Phi_{ij}(\boldsymbol{\ell}_{i},\boldsymbol{\phi}_i)=-\ell_{ij}+ \sum_{h}\Sigma_{i,jh}\phi_{ij}\phi_{ih}
\]
We want to prove that this function of $\boldsymbol{\phi}_i$ has a zero for any $\boldsymbol{\ell}_{i}$. We use the intermediate value theorem, in the generalized form of the Poincaré-Miranda Theorem, \cite{kulpa1997poincare}. To do this, consider the function $\Phi_{i}(\boldsymbol{\ell_i},\, \cdot)$ limited to the domain $[0,\overline{F}(\boldsymbol{\ell}_i)]^{d_i-1}$, for some positive value $\overline{F}(\boldsymbol{\ell}_i)$.
If $\phi_{ij}=0$, we have $\Phi_{ij}\le 0$ for any $\phi_{ik}$ with $k\neq j$. If $\phi_{ij}=\overline{F}(\boldsymbol{\ell}_i)$, since $\Sigma_i$ is strictly diagonally dominant, and $\Sigma_{i,jh}\ge -|\Sigma_{i,jh}|$, 
\[
\Sigma_{i,jj}\overline{F}(\boldsymbol{\ell}_i)+\sum_{h \neq j}\Sigma_{i,jh}\phi_{ih}\ge \Sigma_{i,jj}\overline{F}(\boldsymbol{\ell}_i)-\sum_{h \neq j}|\Sigma_{i,jh}|\phi_{ih}> \sum_{h \neq j} |\Sigma_{i,jh}|(\overline{F}(\boldsymbol{\ell}_i)-\phi_{ih})\ge 0.
\]
Then:
\[
\Phi_{ij}(\boldsymbol{\ell_i},\,\overline{F}(\boldsymbol{\ell}_i), (\phi_{ih})_{h\neq j})=-\ell_{ij}+\overline{F}(\boldsymbol{\ell}_i)(\Sigma_{i,jj}\overline{F}(\boldsymbol{\ell}_i)+\sum_{h \neq j}\Sigma_{i,jh}\phi_{ih})
\]
Now we want this to be nonnegative for any choice of $\phi_{ih}$ in $[0,\overline{F}(\boldsymbol{\ell}_i)]$. The configuration of $\phi_{ih}$ that minimizes $\Phi_{ij}(\boldsymbol{\ell_i},\,\overline{F}(\boldsymbol{\ell}_i), (\phi_{ih})_{h\neq j})$ is such that $\phi_{ih}=0$ if $\Sigma_{i,jh}\ge 0$, and $\phi_{ih}=\overline{F}(\boldsymbol{\ell}_i)$ if $\Sigma_{i,jh}<0$. So:
\[
\min_{\phi_{ih} \in [0,\overline{F}(\boldsymbol{\ell}_i)] \text{ for } h\neq i} \Phi_{ij}(\boldsymbol{\ell_i},\,\overline{F}(\boldsymbol{\ell}_i), (\phi_{ih})_{h\neq j})=-\ell_{ij}+\overline{F}(\boldsymbol{\ell}_i)^2\left(\Sigma_{i,jj}-\sum_{h \neq j,\, \Sigma_{i,jh}<0}|\Sigma_{i,jh}|\right)
\] 
By diagonal dominance, $(\Sigma_{i,jj}-\sum_{h \neq j,\, \Sigma_{i,jh}<0}|\Sigma_{i,jh}|)>0$. Then, if we set, for any fixed vector $\boldsymbol{\ell}_i$:
\[
\overline{F}(\boldsymbol{\ell}_i):=\dfrac{\max_j \ell_j}{\Sigma_{i,jj}-\sum_{h \neq j,\, \Sigma_{i,jh}<0}|\Sigma_{i,jh}|}
\]
then, by the Poincaré-Miranda Theorem it follows that $\Phi_{i}(\boldsymbol{\ell_i}\, \cdot)$ has a zero in $[0,\overline{F}(\boldsymbol{\ell}_i)]^{d_i-1}$. So, there exist a function satisfying the condition, which is what we wanted to show. \qed 

\end{document}